                                                        



\documentclass[twoside,openright]{report}
\pdfoutput=1 
\usepackage{amsmath,amssymb,array}
\usepackage{amsthm}
\usepackage{graphicx}    
\usepackage{pgf,pgfarrows,pgfautomata,pgfheaps,pgfnodes,pgfshade}
\usepackage{sfheaders}
\usepackage{fancyhdr}
\usepackage{titletoc}
\usepackage{mathrsfs}
\usepackage{pdfpages}
\newtheorem{theorem}{Theorem}[section]

\titlecontents{chapter}
[0pc] 
{}
{\\*\sffamily{ \slshape Chapter\ \thecontentslabel} - \large  \slshape  \sffamily     }
{\large \sffamily \slshape } 
{ \hfill \bfseries \contentspage}
[\addvspace{.5pc}]

\makeatletter
\def\cleardoublepage{\clearpage\if@twoside \ifodd\c@page\else
  \hbox{}
  \vspace*{\fill}
  \begin{center}
    
  \end{center}
  \vspace{\fill}
  \thispagestyle{empty}
  \newpage
  \if@twocolumn\hbox{}\newpage\fi\fi\fi}
\makeatother

\oddsidemargin .7in \evensidemargin .73in \topmargin 0.2in \textwidth 13cm


\newcommand\dd{\mathrm{d}}
\newcommand\mbb{\mathbb}
\newcommand\mc{\mathcal}
\newcommand\mf{\mathfrak}

\newcommand\nn{\nonumber}

\newcommand\be{\begin{equation}}
\newcommand\ee{\end{equation}}

\newcommand\g{\mathrm{G}}
\newcommand\gp{\mathrm{G}^{+}}
\newcommand\gpp{\mathrm{G}^{++}}
\newcommand\gppp{\mathrm{G}^{+++}}
\newcommand\ag{\mathfrak{g}}
\newcommand\agp{\mathfrak{g}^{+}}
\newcommand\agpp{\mathfrak{g}^{++}}
\newcommand\agppp{\mathfrak{g}^{+++}}


\newcommand\e{\mathZ{E}_{8(8)}}

\newcommand\eppp{\mathZ{E}_{11(11)}}

\newcommand\aep{E_{9(9)}}
\newcommand\aepp{E_{10(10)}}
\newcommand\aeppp{E_{11(11)}}

\newcommand\su{\mathrm{SU}(2,1)}
\newcommand\supp{\mathrm{SU}(2,1)^{++}}
\newcommand\suppp{\mathrm{SU}(2,1)^{+++}}
\newcommand\asu{\mathfrak{su}(2,1)}
\newcommand\asupp{\mathfrak{su}(2,1)^{++}}
\newcommand\asuppp{\mathfrak{su}(2,1)^{+++}}

\newcommand\m{\mu}
\newcommand\n{\nu}
\newcommand\p{\partial}
\newcommand\s{\sigma}

\newcommand\z{\zeta}
\newcommand\eps{\epsilon}
\newcommand\cE{{\cal{E}}}

\newcommand\mmp{{\widehat B}}

\newcommand\cvv{\raise 2pt\hbox{,}}  \newcommand\cvp{\raise 2pt\hbox{.}}

\DeclareFontFamily{OT1}{pzc}{}
\DeclareFontShape{OT1}{pzc}{m}{it}%
             {<-> s * [1.30] pzcmi7t}{}
\DeclareMathAlphabet{\mathZ}{OT1}{pzc}%
                                 {m}{it}
%

\begin{document}
\begin{titlepage}


\includepdf[page=1]{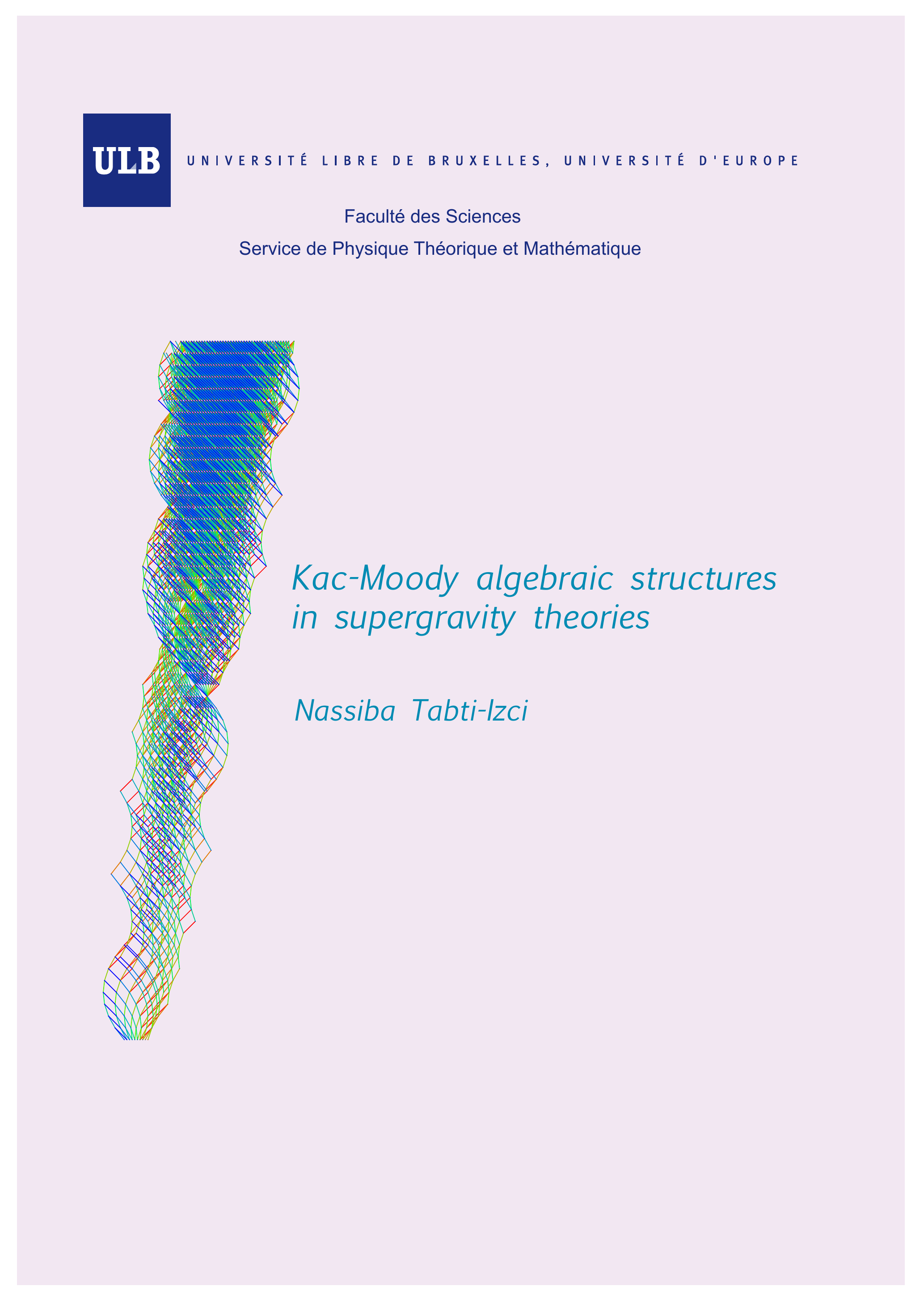}

\newpage

\thispagestyle{empty}
\mbox{}

\newpage

\thispagestyle{empty}
\mbox{}

\newpage

\thispagestyle{empty}
\mbox{} \\

\vspace{7cm}
\begin{flushleft}
\noindent The figure reproduced on the cover is the Hasse diagram \\
of the Kac-Moody algebra $E_{10}$.  This visualization of its\\
 root system is obtained by the program SimpLie created\\
  by Teake Nutma (http://code.google.com/p/simplie/).
\end{flushleft}

\newpage
\thispagestyle{empty}
\mbox{}

\begin{center} \vspace{2cm}
\hrule \vspace{0.5cm}
{\Huge{ \slshape \sffamily Kac-Moody algebraic structures}\\
\vspace{0.4cm}
\slshape \sffamily in supergravity theories}\\
\vspace{0.5cm} \hrule
\vspace{1cm}
\slshape \sffamily  {\LARGE Nassiba Tabti-Izci}\\ \vspace{0.3cm}
\slshape \sffamily  Aspirant F.R.I.A.\\
\vspace{3cm}
\slshape \sffamily   \large Th\`ese pr\'esent\'ee en vue de l'obtention du grade de Docteur en Sciences\\
\slshape \sffamily  Directeur de th\`ese: Laurent Houart\\ \vspace{3cm}  \slshape \sffamily Ann\'ee acad\'emique 2009-2010
\end{center}

\newpage
\thispagestyle{empty}
\mbox{}

\end{titlepage}
\pagenumbering{Roman}
\thispagestyle{plain}
 \begin{center}
\sffamily{ \textsl{ \Large{Kac-Moody algebraic structures in supergravity theories}}}
\vspace{0.5cm}

 \textit{Nassiba Tabti-Izci}\\ 
 \vspace{0.5cm}
 \textit{\large{Abstract}}
 \end{center}
 A lot of developments made during the last years show that Kac-Moody algebras play an important role  in the algebraic structure of some supergravity theories. These algebras would generate infinite-dimensional symmetry groups. The possible existence of such symmetries have motivated the reformulation of these theories as non-linear $\sigma$-models based on the Kac-Moody symmetry groups. Such models are constructed in terms of an infinite number of fields parametrizing the generators of the corresponding algebra. If these conjectured symmetries are indeed actual symmetries of certain  supergravity theories, a meaningful question to elucidate will be the interpretation of this infinite tower of fields. Another substantial problem is to find  the correspondence between the $\sigma$-models, which are explicitly invariant under the conjectured symmetries, and these corresponding space-time theories. The subject of this thesis is to address these questions in certain cases. \\
 
 This dissertation is divided in three parts.
 
 In Part I, we first review the mathematical background on Kac-Moody algebras required to understand the results of this thesis. We then describe the investigations of the underlying symmetry structure of supergravity theories.
 
 In Part II, we focus  on the bosonic sector of eleven-dimensional supergravity which would be invariant under the extended symmetry $\mathZ{E}_{11}$. We study the algebra $E_{10} \subset E_{11}$  and more precisely the real roots of its affine subalgebra $E_9$. For each positive real roots of $E_9$ we obtain a BPS solution of eleven-dimensional supergravity or of its exotic counterparts.  All these solutions are related by U-dualities which are realized via $E_9$ Weyl transformations.
 
 In Part III, we study the symmetries of pure $\mathcal N=2$ supergravity in $D=4$. As is known, the dimensional reduction of this  model  with one Killing vector is characterized by a non-linearly realized symmetry $\su$. We consider the BPS brane solutions of this theory preserving half of the supersymmetry and the action of $\su$ on them. Infinite-dimensional symmetries are also studied and we provide evidence that the theory exhibits an underlying algebraic structure described by the Lorentzian Kac-Mody group $\suppp$. This evidence arises  from the correspondence between the bosonic space-time fields of $\mathcal{N}=2$ supergravity in $D=4$ and a one-parameter sigma-model based on the hyperbolic group $\supp$. It also follows  from the structure of BPS brane solutions which  is neatly encoded in $\suppp$. As a worthy by-product of our analysis, we obtain a regular embedding of $\asuppp$ in $E_{11}$ based on brane physics.
 
 \newpage

\thispagestyle{empty}
\mbox{}
  \newpage

{\slshape Tout a commenc\'e un mois de septembre 2004 alors que je cherchais un sujet pour mon m\'emoire de licence dans les couloirs de l'aile O du sixi\`eme \'etage du b\^atiment  NO. Je n'avais encore aucune id\'ee de ce que je voulais \'etudier mais une chose \'etait s\^ure : je voulais  faire mon m\'emoire dans le service de Physique Th\'eorique et Math\'ematique.  C'est alors que Laurent Houart m'a prise sous son aile et m'a propos\'e de travailler sur les sym\'etries cach\'ees dans les th\'eories de gravitation coupl\'ee \`a la mati\`ere. Il me raconta alors, avec beaucoup de motivation et d'\'energie la fabuleuse histoire de ces sym\'etries \`a travers des diagrammes de Dynkin. Je ne vous cacherai pas que  ses premiers tableaux d\'ebordant de craie (et parfois sans aucune structure) m'ont fait un peu peur au d\'ebut. Mais au final, je me suis retrouv\'ee moi-m\^eme  \`a jouer avec des alg\`ebres de Kac-Moody. Je le remercie de m'avoir embarqu\'ee dans cette grande aventure qui \'etait en quelque sorte \'egalement nouvelle pour lui car j'\'etais sa premi\`ere doctorante.  Sans son soutien, ses pr\'ecieux encouragements et son humour d\'ebordant tout au long de ces cinq ann\'ees, je ne serais pas arriv\'ee \`a ce r\'esultat.  Je lui exprime toute ma gratitude pour son encadrement, sa disponibilit\'e et pour m'avoir fait travailler sur des sujets passionants.

Ce fut un immense honneur pour moi de travailler avec Fran\c{c}ois Englert durant les deux premi\`eres ann\'ees de ma th\`ese. Toute ma gratitude pour sa grande disponibilit\'e et sa gentillesse. Je remercie \'egalement Axel Kleinschmidt  qui a rejoint pour mon plus grand bonheur notre \'equipe de Bruxelles. Je le remercie pour nos diff\'erentes collaborations et pour avoir eu la patience de r\'epondre \`a mes  petites questions. Merci \'egalement aux deux Su\'edois hyper motiv\'es et passionn\'es  qui ont envahi le deuxi\`eme \'etage: Daniel Persson (thank you to have learned me the very useful Swedish word : acid acetylsal \dots suuura) et Josef Lindman H\"ornlund (I am very sorry that Noether is written with ÔoeÕ and not with your swedish  \"o)  pour notre derni\`ere fructueuse collaboration.  J'ai \'egalement eu la chance de collaborer \`a distance avec Hermann Nicolai. Je ne peux oublier  ma `grande soeur de Kac-Moody', Sophie de Buyl. Je la remercie pour ses encouragements et pour son grand soutien.  Malgr\'e son \'eloignement physique, ses corrections ont survol\'e l'oc\'ean Atlantique pour atterrir dans ma boite e-mail. Je la remercie pour sa lecture et sa re-lecture attentive de cette th\`ese.

Je voudrais  exprimer mes sentiments sinc\`eres de reconnaissance \`a la personne qui a du constamment me supporter durant ces quatre ann\'ees (euhh 8 ann\'ees) : Vincent Wens (Fasaa). J'ai eu l'immense chance de partager le m\^eme bureau que lui. Je le remercie pour son  aide si  pr\'ecieuse dans tous les domaines de la physique et des math\'ematiques, pour la clart\'e de ses explications, pour sa contribution durant la r\'edaction de cette th\`ese, pour l'int\'egral des Inconnus et pour son soutien dans les moments un peu plus stressants.

Quelle fut la chance pour moi de travailler dans le service tr\`es dynamique de Physique th\'eorique et math\'ematique. Je remercie les membres de notre service que j'ai pu croiser dans les couloirs du 7\`eme, du 6\`eme et du 2\`eme \'etage : Riccardo Argurio, Glenn Barnich, Frank Ferrari, Marc Henneaux, Axel Kleinschmidt et Christiane Schomblond. Je remercie \'egalement les secr\'etaires Fabienne de Neyn (pour ses encouragements et pour nos petites discussions entre deux couloirs), Isabelle Juif et Marie-France Rogge.\\

Je remercie Fran\c{c}ois Englert, Thomas Hambye, Marc Henneaux, Bernard Julia et Axel Kleinschmidt pour avoir accept\'e de faire partie de mon jury de th\`ese.\\

Je ne peux m'emp\^echer d'\'evoquer ici l'ambiance excellente dans laquelle j'ai pass\'e ces quatre ann\'ees. Je remercie pour cela tous les doctorants et post-doc avec qui jÕai pass\'e de si bons moments: Michael Bebronne, Pierre de Buyl (merci pour m'avoir sauv\'e plusieurs fois en informatique), Cyril Closset,  Fran\c{c}ois Jacques Albert Dehouck (libanais, turc, grec ?  la r\'eponse est turc), Julie Delvax, Jonathan Demaeyer, Cedric Troessaert, ainsi que Jean Sabin Mc Ewen.
Un merci particulier \`a Ella Jamsin, ma `petite soeur de Kac-Moody', pour sa lecture de quelques parties de ma th\`ese et pour avoir partag\'e avec moi la pr\'eparation de travaux pratiques. 

Je remercie \'egalement les ex-doctorants. Je pense d'abord \`a  Docteur Nathan Goldman (gr\^ace \`a lui  il n'y a plus que deux personnes entre moi et le pr\'esident des Etats-Unis). Je le remercie pour les petits breaks qui m'ont fait le plus grand bien, pour son soutien dans mes moments de stress intense et surtout pour m'avoir appris \`a compter les temps en musique.  Je remercie \'egalement les anciens et nouveaux doctorants et post-doctorants qui ont crois\'e ma route: Nazim Bouatta, Sandrine Cnockaert, Geoffrey Comp\`ere, St\'ephane Detournay, Laura Lopez-Honorez (ma marraine de 1\`ere candi), Luca Forte (I need the apple pie), Sergio Montanez Naz, Claire No\"el, Amitabh Virmani (thank you four very interesting discussions about India).\\

Pour m'avoir donn\'e  go\^ut aux Sciences Physique, je remercie mon professeur de physique du secondaire Bernadette Lambert. Je remercie \'egalement toutes mes amies qui m'ont toujours soutenue durant ces derni\`eres ann\'ees. Je ne vous en veux pas d'avoir confondu constamment les mots m\'emoire et th\`ese.  Mais maintenant il faudra m'appeler Docteur. Je remercie du fond du coeur: Noussaiba, Latifus, Fadila, Rabab, Nassiba, Karimus, Sohad, Soumaya, Fatima, Khadigea, Aya, Safiya, Anissa, Sarah, Rukich, Safa et Yasemine.  Je voudrais \'egalement remercier mes `khaltous' bien aim\'ees pour leurs encouragements : les Aichettes, Zazat, Khadra, Houria, Ghogho, Mariam, Salima, Meissa,  Amina, Mimi, Fatima Zohra, Selma, Rahma, Mehdia, Fadila et Khadigea. Un grand merci \'egalement \`a Salimus pour sa contribution non n\'egligeable \`a ce manuscrit.

Pour avoir jou\'e les clowns et pour m'avoir montrer que la vie est coooooool et qu'il ne faut jamais s'en faire, je remercie mes deux petits fr\`eres Zaid et Moshab. Je remercie \'egalement tous ceux qui m'ont apport\'e leur soutien dans ma famille : les OmarÕs, Latifa, Nourya, Abdel Hamid, Soumaya et Houria. Je n'oublie pas non plus ma famille qui se trouve en Alg\'erie. Je voudrais \'egalement remercier Dede pour tous ses encouragements et pour tout l'amour et la gentillesse qu'il d\'everse sur moi. Pour les bons petits plats et les `douas',  je remercie \'egalement ma belle-m\`ere Ane ainsi que toute ma belle famille. \\

Je voudrais avoir une petite pens\'ee pour mes grands-parents que j'ai perdus pendant mon doctorat : mes grand-m\`eres Hana Zolikha et Ebe et mon grand-p\`ere Baba Hbibi. Sans cesse, ils m'ont encourag\'ee \`a poursuivre le plus loin possible mes \'etudes. Vous me manquez \'enorm\'ement.\\

Je terminerai cette liste de remerciements par les trois personnes qui m'ont le plus soutenue et qui ont \'et\'e le plus directement touch\'ees par mes \'emotions parfois un peu d\'ebordantes et par mes diff\'erents caprices : mes parents et mon \'epoux. Sans vous, ce travail n'aurait pas pu voir le jour. Cette th\`ese vous est d\'edi\'ee.\\

Oumi, je te remercie pour ta douceur, pour avoir pris soin de moi de la meilleure des mani\`eres, pour tes bons petits plats, pour tes corrections orthographiques, pour tes encouragements et pour avoir \'et\'e la meilleure des mamans. 
 
Abi, je te remercie pour tes `douas',  pour ton attention et pour ta confiance en mon potentiel.
 
Je remercie enfin, mon Emir, mon askim pour s'\^etre occup\'e de tout depuis que je r\'edige cette th\`ese. Je te remercie pour ton soutien sans faille et pour tes bras toujours grand ouverts. }

\newpage

\thispagestyle{empty}
\mbox{}
\newpage
\thispagestyle{empty}
\mbox{}\\
 \vspace{4cm}

\begin{flushright}
\sffamily {\slshape \large A mes parents Oumi et Abi,\\ \vspace{0.3cm}
A mon Emir,}
\end{flushright}

\tableofcontents

\chapter*{Introduction and discussion \markboth{Introduction and discussion}{Introduction and discussion}}
\addcontentsline{toc}{chapter}{Introduction and discussion}

\hrule
\vspace{2cm}
 \pagenumbering{arabic}
 
\pagestyle{fancy}
\renewcommand{\chaptermark}[1]{\markboth{\chaptername\   \thechapter \ -    #1}{}}
\renewcommand{\sectionmark}[1]{\markright{\thesection\ #1}}
\fancyhf{}
\fancyhead[LE,RO]{\slshape  \thepage}
\fancyhead[LO]{\slshape \sffamily \rightmark}
\fancyhead[RE]{\slshape \sffamily  \leftmark}
\renewcommand{\headrulewidth}{0.5pt}
\renewcommand{\footrulewidth}{0pt}
\addtolength{\headheight}{0.5pt}
\fancypagestyle{plain} { 
\fancyhead{} 
\renewcommand{\headrulewidth}{0pt}
}
\begin{flushright}
\emph{It is only slightly overstating the case to say\\ that physics is the study of symmetry.}\vspace{0.5cm}

Philip Warren Anderson \\
\footnotesize{Nobel Prize in Physics}
\end{flushright}

Major advances in the understanding of the physical world have been achieved by considering symmetries. Let us focus first on the word \emph{symmetry}. It comes from the Greek: $\sigma \upsilon \mu \mu \epsilon \tau \rho \iota \alpha $  which means `with measure'. It has two meanings. The first one is not really  precise and defines the   correct and pleasing proportion of the parts of a thing such that it reflects beauty or perfection. The second one is much less subjective: it is the concept of balance which can be demonstrated with the rigor of mathematics or physics. More precisely  in the mathematical context, symmetries are usually associated  with transformations that leave some properties of an object
 invariant. The set of such transformations forms a group. In physics, the meaning of symmetry is generalized to designate the invariance of physical observables under some transformations. The concept that  the laws of nature originate in symmetries plays a crucial role in theoretical physics and it constitutes  one of the most powerful tools for the  construction and study of a physical theory. 
 
 The exploitation of symmetry groups has accompanied the most successful theories of modern physics. On the one hand,  the Standard model of particle physics is a quantum field theory invariant under the gauge symmetry group $\mathrm{SU}(3) \times \mathrm{SU}(2) \times \mathrm{U}(1)$.
 On the other hand,   Einstein's general relativity lies on the principle of  general covariance which stipulates the independence of the physical laws with respect to the coordinate transformations.  
 However  gravity stands apart from the other three fundamental interactions namely the electromagnetism, the strong and the weak forces which are described by the Standard model. Indeed  the attempts to quantize Einstein's theory following the standard rules of quantum mechanics is problematic because of its non-renormalizability. But our understanding of the fundamental laws of  nature is surely incomplete until general relativity and quantum mechanics are successfully reconciled and unified. Moreover,  some extreme situations in physics such as the first moments of the Universe  
 require  gravity in its quantum regime. New ways must therefore be explored to find the theory which will reconcile general relativity and quantum mechanics. 
 
  
By replacing point-like particles by one-dimensional extended objects, known as strings,  whose specific oscillations modes describe the elementary particles, String theory overcomes the problem of perturbative non-renormalizability. This theory, which arose at the beginning   in an attempt to understand the strong nuclear force, is now devoted to a much more ambitious purpose: the construction of  a quantum theory that unifies the four fundamental forces of nature. 
 The inclusion of fermions in this theory introduces a further symmetry, called supersymmetry, that relates bosons to fermions. 
 There exist  five supersymmetric string theories, each requiring  10 dimensions: type I, type IIA, type IIB,  heterotic $\mathrm{SO}(32)$ and heterotic  $E_8 \times E_8$. The realization that there are five different superstring  theories was somewhat puzzling in the quest of only one theory of unification. The surprising discovering of new kind of transformations, known as dualities,  led to the following statement: what we viewed previously as five theories are in fact five different perturbative limits of a single underlying theory. Two kinds of dualities called T and  S-duality have been identified. 
 
 In the late 1980s it was realized that there is a property known as T-duality that relates the two type II and the two heterotic theories, so  they should not really be regarded as distinct theories. T-duality can be illustrated by considering a compact dimension consisting of a circle of radius $R$.  In this case there are two kinds of excitations to consider. The first one is the Kaluza-Klein momentum excitation on the circle which contributes by  $(n/R)^2$ to the energy squared, where $n$ is an integer. The second one is  the winding-mode excitations due to a closed string winding $m$ times around the circle. T-duality exchanges these two kinds of excitations by the mapping $m\leftrightarrow n$ and $R \leftrightarrow \ell_s^2/R$, where $\ell_s$ is the string length.   
 
Another kind of duality called S-duality was discovered in the mid 1990s. S-duality relates the string couplings constant $g_s$ to $1/g_s$. The two basic examples relate the type I superstring theory to the $\mathrm{SO}(32)$ heterotic string and the type IIB to itself. Therefore with S-duality, a strongly coupled  theory is equivalent to a weakly one. 
The combination of S-duality with T-duality transformations led to  the U-duality group which unifies all the discrete symmetry groups of string theories.

 S-duality explains how three of the five original superstring theories behave  at strong coupling. The understanding of how the other  type IIA and heterotic $E_8\times E_8$  string theories behave when $g_s$ is large came as quite a surprise. In each of these cases there is an eleventh dimension that becomes large at strong coupling and  the elusive eleven-dimensional description of the underlying theory is called M-theory. The relations between M-theory and the two superstring theories, together with T and S dualities, imply a web of dualities connecting the five superstring theories.  We do not have yet a satisfactory description of M-theory but we know that at low energies it is approximated by a classical field theory called eleven-dimensional supergravity.   

Essential elements are lacking in the quest of the unified theory of all fundamental interactions. It is therefore  of interest to inquire into the symmetries which would underlie the M-theory project, using as a guide symmetries rooted in its classical low energy limit, namely eleven-dimensional supergravity.  In this context, the study and the exploitation of hidden symmetries, exhibited by dimensional reduction, could be very useful and would allow a better understanding of the structure of the unified theory. These hopes have been encouraged  by some developments from recent years that certain types of Kac-Moody algebras (which are roughly speaking  infinite-dimensional generalizations of Lie algebras) occur in the algebraic structure of some supergravity theories.  We will now describe how these kind of algebras have made their way in the big project of the quest of  symmetries of eleven-dimensional supergravity and also of other supergravity theories and M-theory.



\section*{ Some history: from a conjecture to some checks}
\addcontentsline{toc}{section}{Some history: from a conjecture to some checks}

Many supergravity theories exhibit continuous global symmetries. These are of importance both for the generation of solutions of the field equations and also for the study of quantization. The symmetries are either inherent in the formulation of the theory, as in the case of type IIB supergravity in $D=10$ which admits an $\mathrm{SL}(2,\mathbb{R})$ symmetry, or act on certain subclasses of solutions admitting Killing vectors. This was first noticed by Ehlers in 1957 in the case of non-supersymmetric $D=4$ gravity with one Killing vector  \cite{Ehlers:1957zz} where it is also  $\mathrm{SL}(2,\mathbb{R})$ that acts on the set of solutions. In 1970, Geroch is the first who discovered infinite-dimensional symmetries appearing in gravity \cite{Geroch:1970nt,Geroch:1972yt}. He demonstrated the existence of an infinite-dimensional symmetry, known as the Geroch group which acts on solutions of Einstein's equations with two commuting Killing vectors (axisymmetric stationary solutions). The group structure was later shown to be  the affine extension of the Ehlers $\mathrm{SL}(2, \mbb R)$ symmetry group \cite{Julia:1980gr,Julia:1982gx,Breitenlohner:1986um}. 

The story received new impetus with the discovery at the end of the seventies by Cremmer and Julia  of hidden symmetries in supergravities. The most famous being ${\cal N}=8$ supergravity in $D=4$ possessing an exceptional $\mathZ E_{7(7)}$ \footnote{The notation $E_{n(n)}$ is used to denote the split real form of the complex exceptional group $E_n$.  In this case, the associated group $\mathZ E_{n(n)}$ is a maximally non-compact group (for more details see Section \ref{sec:realkm}). } symmetry~\cite{Cremmer:1978ds,Cremmer:1979up}. This symmetry can also be viewed as a symmetry of the solutions of ${\cal N}=1$ supergravity in $D=11$ admitting seven commuting space-like Killing vectors, in agreement with the construction of the ${\cal N}=8$ theory in $D=4$ by dimensional reduction of the $D=11$ theory on a seven-torus $T^7$.  Therefore the toroidal compactification of eleven-dimensional supergravity to $11-n$ dimensions reveals a chain of global symmetries described by the Lie groups $\mathZ{E}_{n(n)}$, culminating with the largest exceptional simple Lie group $\mathZ{E}_{8(8)}$ in three dimensions \cite{Marcus:1983hb}. Reduction to dimensions below three is more complicated but it was conjectured by Julia in \cite{Julia:1980gr,Julia:1982gx} that the affine symmetry group $\mathZ{E}_{9(9)}$ (which can be seen as a generalization of the Geroch group) appears in two dimensions, the hyperbolic group $\mathZ{E}_{10(10)}$ in one dimension and the Lorentzian group $\mathZ{E}_{11(11)}$ in zero dimension.  Following the arguments of \cite{Hull:1994ys}, it is expected that the U-duality group of M-theory on the torus $T^n$ is the discrete group $\mathZ{E}_{n(n)}(\mbb Z)$ \cite{Obers:1998rn}.

Subsequently, part of this conjecture was verified through the discovery of a global $\mathZ E_{9(9)}$ symmetry of the space of solutions of $\mathcal N=16$ supergravity in two dimensions \cite{Nicolai:1987kz,Nicolai:1987vy,Nicolai:1988jb}.   These results also provided a direct link to the integrability of these theory in the reduction to two dimensions.

This lead to the question: these conjectured Kac-Moody symmetries are they underlying symmetries of supergravity and M-theory?\\
 
  This idea has received new impetus in a modified form over the last years.
 A new approach was developed by West in 2000 to answer this question.  First, he showed using old results which formulated gravity as non-linear realisation \cite{Ogievetsky:1973ik,Borisov:1974bn} that the entire bosonic sector of eleven-dimensional supergravity and the ten-dimensional II supergravity theory could be formulated as a non-linear realisation \cite{West:2000ga}. In \cite{West:2001as}, he attempted to reformulate eleven-dimensional supergravity as a non-linear realization based on the Lorentzian group $\boldsymbol{\mathZ{E}_{11(11)}}$ (jointly with the conformal group). This led him to conjecture that the full group $\mathZ{E}_{11(11)}$ should be a symmetry of M-theory itself.  This proposal was further elaborated  in \cite{West:2003fc,Kleinschmidt:2003jf,West:2004iz,West:2004wk} according to which it is $\mathZ{E}_{11(11)}$ that should be viewed as the fundamental symmetry. Note that  type IIA  and type IIB supergravities have  also been reformulated as non-linear realisation based on $\mathZ{E}_{11(11)}$ in \cite{Schnakenburg:2001ya,Schnakenburg:2002xx,West:2004st}.\\  

The study of gravitational theories close to a space-like singularity has provided further evidence for the existence of infinite algebraic symmetry structures. It has been found that in this limit, known as BKL (Belinskii-Khalatnikov-Lifshitz) limit,  the dynamics of eleven-dimensional supergravity can be described as a billiard motion in a region of hyperbolic space bounded by hyperplanes which turn out to be the fundamental Weyl chamber of the Kac-Moody algebra $E_{10(10)}$ \cite{Damour:2000hv} (see References \cite{Damour:2002et} and \cite{Henneaux:2007ej} for reviews on this subject).  
 This discovery led Damour, Henneaux and Nicolai to conjecture that the full group $\mathZ{E}_{10(10)}$ should be realized as a symmetry of M-theory \cite{Damour:2002cu}. To explore the possible fundamental significance of these huge
symmetries, they proposed in 2002 a Lagrangian formulation 
  explicitly invariant under $\mathZ{E}_{10(10)}$ \cite{Damour:2002cu}. It was
constructed as a reparametrisation invariant $\sigma$-model of fields
depending on one parameter $t$, identified as a time parameter, living
on the coset space $\mathZ{E}_{10(10)}/\mathrm{K}_{10}^+$ where  $\mathrm{K}_{10}^+$ is the maximal compact subgroup of $\mathZ E_{10(10)}$ invariant  under the Chevalley involution.  This
$\sigma$-model contains an infinite number of fields and is built in a 
recursive way  by the introduction of a gradation called  level decomposition of the adjoint representation of $E_{10(10)}$ with respect to its
subalgebra $A_9$ \cite{Damour:2002cu, Nicolai:2003fw}. 
This level decomposition has the advantage to slice the adjoint representation of $E_{10(10)}$ such that at each level a finite number of representations of the subalgebra $A_9$ appear \footnote{The level of an
irreducible representation of $A_9$ occurring in the decomposition of
the adjoint  representation of $E_{10(10)}$  counts the number of  times
the simple root $\alpha_{11}$ not pertaining to the $A_9$ subalgebra
appears in 
the decomposition (see Figure \ref{fig:kmintro}).}.
The $\mathZ{E}_{10(10)}/\mathrm{K}_{10}^+$ $\sigma$-model, limited to the roots up to
level 3 and height 29, reveals  a perfect  match with the bosonic
equations of motion of eleven-dimensional supergravity in the vicinity of
 spacelike singularity, where fields depend
only on time.  It was therefore possible to establish a  dictionary which relates the truncated fields of the coset  $\mathZ{E}_{10(10)}/\mathrm{K}_{10}^+$ and the degrees of freedom of eleven-dimensional supergravity. This correspondence  was later extended to massive type IIA and type IIB supergravity theories in \cite{Kleinschmidt:2004dy,Kleinschmidt:2004rg,Henneaux:2009ee}. \\


Coset symmetries do not only occur  in  the dimensional reduction of
11-dimensional supergravity \cite{Cremmer:1978km}  but  appeared also in other
theories. They have been the subject of much study, and some classic
examples are given in \cite{Ferrara:1976iq, Cremmer:1977tt, Cremmer:1978ds,Julia:1980gr, Julia:1981wc,Schwarz:1983wa}. More precisely,  the toroidal dimensional reduction of supergravities give rise in three dimensions to theories whose bosonic sectors are described purely in terms of scalar degrees of freedom, which parametrise $\sigma$-model coset spaces. The reduction on purely spacelike torus give rise to a coset space $\mathrm{G}/\mathrm{K}$ where $\mathrm K$ is the maximal compact subgroup of $\mathrm{G}$. For instance, as explain before, the reduction of eleven-dimensional supergravity to three dimensions gives rise to an $\mathZ E_{8(8)}/ \mathrm{SO}(16)$ coset Lagrangian.  We can now try to find all the theories containing gravity suitably coupled to forms and dilatons which may exhibit upon dimensional reduction down to $3$ dimensions a coset structure $\mathrm{G}/\mathrm K$.  This  leads to the procedure of oxidation \cite{Julia:1980gr,Cremmer:1999du} which is roughly speaking the inverse of the dimensional reduction. In this context, we call \emph{maximally oxidised theory}  a Lagrangian theory defined in the highest possible space-time dimension $D$ (namely which is itself not obtained by dimensional reduction) whose dimensional reduction down to three dimensions exhibits a coset $\mathrm{G}/\mathrm K$.
 
  The maximally oxidised theories have been listed for each coset space $\mathrm G/\mathrm K$ in \cite{Cremmer:1999du,Keurentjes:2002vx,Keurentjes:2002xc} when $\g$ is a maximally non-compact simple Lie group (the associated Lie algebra $\mf g$ is a split real form) and the extension to compact groups was discussed (when $\mf g$ is a non-split real form) in \cite{Keurentjes:2002rc}. In particular, we get that  the reduced Lagrangian in $3$ dimensions is invariant under transformations of $\mathrm{SL}
(D-2, \mbb R)/ \mathrm{O}(D-2)$ for  gravity in $D$ dimensions. Note that all maximally oxidised theories do not have necessarily supersymmetry extension such as the low energy effective action of the bosonic string in $26$ dimensions without tachyon  which exhibits the coset $\mathrm{O}(24,24)/(\mathrm O(24)\times \mathrm O(24))$ upon dimensional reduction down to three dimensions.\\

 
 \begin{figure}[h]
\begin{center}
{\scalebox{0.17}
{\includegraphics{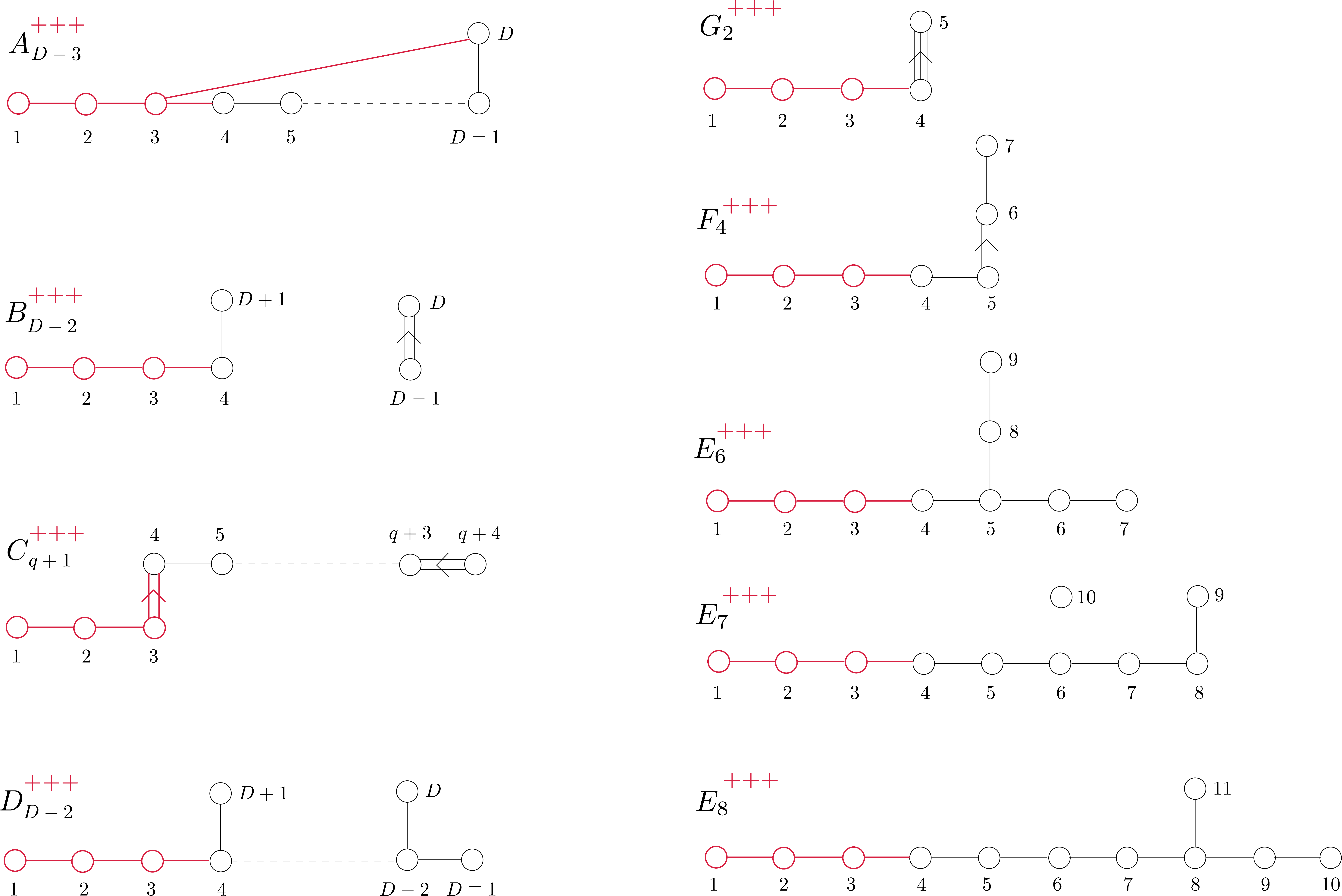}}}
\caption{\small \textsl{Dynkin diagrams of the Kac-Moody algebras $\agppp$ obtained by extension of the finite simple Lie algebras $\ag$ with the nodes labelled by $1,2,3$. The horizontal line starting at $1$ defines the gravity line which is the Dynkin diagram of a $A_{D-1}$ subalgebra. }}
\label{fig:kmintro}
\end{center}
\end{figure}

 The conjecture discussed above on the presence of infinite-dimensional symmetries underlying eleven-dimensional supergravity has been generalized for other gravitational theories and it has been suggested that such actions possess a much larger symmetry than the one
revealed by their dimensional reduction to three space-time dimensions in which all fields, except  $(2+1)$-dimensional gravity itself,  are scalars. The first generalizations were done for the effective action of the bosonic string in \cite{West:2001as} and  for pure D-dimensional gravity where the proposed underlying symmetry algebras is thought to be $A_{D-3}^{+++}= \mathfrak{sl}(D-2)^{+++}$ \cite{Lambert:2001gk}. There is a general construction for extending the symmetry group $\mathrm G$ to an affine group $\mathrm G^+$, a hyperbolic or Lorentzian group $\mathrm G^{++}$ and a Lorentzian group $\mathrm G^{+++}$~\cite{Gaberdiel:2002db} and in the discussion above $\mathZ E_{10(10)}$ and $\mathZ E_{11(11)}$ have to be replaced by $\mathrm G^{++}$ and $\mathrm G^{+++}$ to obtain a set of more general conjectures for a wider class of theories. The corresponding extended algebras $\agp$, $\agpp$ and $\agppp$ are obtained respectively by adding one, two and three nodes  to the Dynkin diagram of the corresponding finite $\ag$ algebra (see Figure \ref{fig:kmintro}).  It has been suggested in \cite{Englert:2003zs} that all maximally oxidised theories (or possibly some unknown
extensions of them) which generate in 3 dimensions a coset $\mathrm G/\mathrm K$ where $\mathrm G$ is in its non-compact form  possess  the very-extended symmetry $\gppp$.  It has also been verified that for all simple $\mathrm G$ the extended symmetry group contains the correct field content to make the conjecture work~\cite{Kleinschmidt:2003mf} but a full dynamical confirmation of the conjectures is still an open problem. 

 The work on cosmological billiards described  above was also extended to the oxidised theories for any  $\mathrm G$  and it was found that the one-dimensional motion takes place in the Weyl chamber of the overextend algebra  $\mathfrak{g}^{++}$ if $\ag$ is a split real form   \cite{Damour:2001sa,Damour:2002fz}. When the algebra $\ag$ is non-split, it is  the Weyl chamber of  its maximal split algebra whose controlled the billiard  \cite{HenneauxJulia}. Note that  in the reference \cite{deBuyl:2003ub}, the billiard approach gives useful informations on the procedure of oxidation of both split and non-split group.                                    


A hybrid approach for uncovering the symmetries of M-theory or other supergravity theories combining the ideas in   \cite{West:2001as} and \cite{Damour:2002cu} has been adopted in \cite{Englert:2003py}.  In this reference, an action explicitly invariant under $\gppp$ is constructed. This action $S_{\gppp}$ is defined in a reparametrisation invariant way on a world-line, a priori
unrelated to space-time, in terms of a infinity of fields $\phi(\xi)$ living in
a coset $\gppp/\mathrm K^{*+++}$ where $\xi$ spans the world-line. The subgroup $\mathrm K^{*+++}$  is invariant under a modified Chevalley involution known as \emph{temporal involution} 
which ensures that the action is $\mathrm{SO}(1,D-1)$ invariant.  This temporal involution allows the identification of the index 1 as a time coordinate.  The action $S_{\gppp}$ is built by the use of a level
decomposition\footnote{ Level decomposition of the very extended $\agppp$ algebras in terms of the
  subalgebra $A_{D-1}$ have been considered in
  \cite{West:2002jj,Nicolai:2003fw,Kleinschmidt:2003mf}.}  of $\agppp$ with respect to the subalgebra $A_{D-1}= \mathfrak{sl}(D,\mbb R)$  where $D$ is identified to the space-time dimension.



To make connection between this new formalism and the covariant space-time theories, it is interesting to analyze several actions invariant under overextended $\gpp$ group. The $\gpp$ content of the $\gppp$ -invariant actions $S_{\gppp}$ has been analysed in reference \cite{Englert:2004ph} where it was
shown that two distinct actions invariant under the overextended
Kac-Moody algebra $\gpp$ exist. The first one $S_{\gpp_C}$ is
constructed from $S_{\gppp}$ by performing a truncation putting
consistently to zero some fields. The corresponding $\gpp$  algebra
is obtained from $\gppp$ by deleting the node labelled 1 from the
Dynkin diagram of $\gppp$ depicted in Figure \ref{fig:kmintro}. This theory carries a
Euclidean signature and is the generalisation to all $\gpp$ of
the $\mathZ{E}_{8(8)}^{++}=\mathZ{E}_{10(10)}$ invariant action of reference
\cite{Damour:2002cu} proposed in the context of M-theory and
cosmological billiards. The parameter $\xi$ is then identified
with the time coordinate and the action restricted to a defined
number of lowest levels is equal to the
corresponding maximally oxidised theory in which the fields depend
only  on this time coordinate. This model is called cosmological model.

A second $\gpp$-invariant action
$S_{\gpp_B}$ is obtained from $S_{\gppp}$ by performing the same
consistent truncation but  after  a  Weyl
reflection of the gravity line  of $\gppp$. The
non-commutativity of the temporal involution with the Weyl
reflection \cite{Keurentjes:2004bv,Keurentjes:2004xx} implies that this second action is
inequivalent to the first one. Contrary to $S_{\gpp_C}$, the action $S_{\gpp_B}$ carries various Lorentzian signatures related by Weyl reflexions \cite{Englert:2004ph}. In $S_{\gpp_B}$, $\xi$ is identified
with a space-like direction and this theory admits for lowest level fields exact solutions
which are identical to those of the corresponding maximally
oxidised theory describing intersecting extremal brane
configurations smeared in all directions but one
\cite{Englert:2004ph,Englert:2003py,Englert:2004it}.  This model is then called brane model. Furthermore the intersection rules  are neatly encoded in the $\gpp$ algebra through orthogonality conditions between the real positive roots corresponding to the
branes  configuration \cite{Argurio:1997gt,Englert:2004it}. 

The precise analysis of the different possible signatures has been  first performed for $\gpp_B=\mathZ{E}_{8(8)}^{++}=\mathZ{E}_{10(10)}$. 
The signatures found in the analysis of references \cite{Keurentjes:2004bv,Keurentjes:2004xx} and in the context of $S_{\gpp_B}$ in \cite{Englert:2004ph} match perfectly with the signature changing dualities and the exotic
phases of M-theories discussed in \cite{Hull:1998vg,Hull:1998fh,Hull:1998ym}.  This analysis p
of signatures was extended in \cite{deBuyl:2005it} and we find for all the
$\gpp_B$-theories all the possible signature $(t,s)$, where $t$
(resp. $s$) is the number of time-like (resp. space-like)
directions, related by Weyl reflections of $\gpp$ to the signature
$(1,D-1)$ associated to the theory corresponding to the
traditional maximally oxidised theories.  Similar results was obtained independently in \cite{Keurentjes:2005jw}.\\


All the results above are restricted to the bosonic sector of supergravity theories.  Considerations over the last years of the fermionic sector of supergravities  have provided more evidence for the presence of a hidden symmetry $\mathZ{E}_{10(10)}$. 
The first analysis included parts of  fermionic degrees of freedom was treated in the reference \cite{Kleinschmidt:2004dy} in the context of the study of type IIA supergravity \footnote{In this reference, a level decomposition of $E_{10(10)}$ is performed under its $D_9$ subalgebra.}. Then,  the one-dimensional $\sigma$-model $\mathZ{E}_{10(10)}/\mathrm{K}_{10}^+$ of reference \cite{Damour:2002cu} was extended to include fermionic degrees of freedom\footnote{In the reference \cite{Damour:2007dt}, it was shown that the coset dynamic truncated at level $\ell \leqslant 3$ can be consistently restricted by requiring the vanishing of a set of constraints which are in one-to-one correspondence with the canonical constraints of supergravity.} in \cite{Damour:2005zs,deBuyl:2005mt,Damour:2006xu}.  This extension requires that the fermionic degrees of freedom are assigned to spinorial representation of the compact subgroup $\mathrm{K}_{10}^+$ \cite{deBuyl:2005zy}. However, a supersymmetric extension of the $\sigma$-model $\mathZ{E}_{10(10)}/\mathrm{K}_{10}^+$ could not yet be built because only finite-dimensional, i.e. unfaithful, spinorial representation of $\mathrm{K}_{10}^+$ have been considered \footnote{ Note that in the reference \cite{Kleinschmidt:2006tm}, the decomposition of spinor representations under subgroup of $\mathrm{K}_{10}^+$  reproduces the right fermionic fields representation of type IIA and type IIB supergravity.}. The replacement of unfaithful representations by faithful infinite-dimensional one is then crucial.

The fermionic side of cosmological billiard was also recently studied  in \cite{Damour:2009zc} for eleven-dimensional supergravity near space-like singularity. In this reference, the structure of `super-billiard' that combines bosonic and fermionic billiards is also considered.\\


The Kac-Moody algebras $\agpp$ and $\agppp$ contain an infinite number of generators and consequently the field content of the corresponding $\mathrm G^{++}$ and $\mathrm G^{+++}$ non-linear realization contains infinitely many fields in addition to those of the corresponding space-time maximally oxidised theories. Indeed, if we focus on the case of $\mathZ E_{10(10)}$ and $\mathZ E_{11(11)}$  we have seen that only low level fields are related  to degrees of freedom of eleven-dimensional, type IIA or type IIB supergravity. If these Kac-Moody algebras are effectively  symmetries of these supergravity theories,  the interpretation of the infinite number of higher level fields must be found. Many proposal was developed these last year to understand the structure of these infinity of fields: 
\begin{itemize}
\item   It was conjectured in \cite{Damour:2002cu}   that  some higher level fields associated to affine representations   contain spatial derivatives of the lowest level fields. This proposition could provide an explication of the way space-like dimensions are encoded in this infinite tower of fields. This conjecture has not yet been proved and the mechanism relating  affine towers to space-time has not been exhibited.
\item The physical interpretation of the fields corresponding to imaginary roots, which are roots of zero or negative norm, is explored in \cite{Brown:2004jb}. They proposed that these fields can be matched with some brane configurations.
\item  In the reference \cite{Damour:2005zb}, the higher derivative quantum corrections  to the action of M-theory is considered.  They found that certain known corrections admit an interpretation in terms of the Kac-Moody structure and are associated to certain negative imaginary roots of $E_{10(10)}$.  This suggest that the $\mathZ E_{10(10)}/ \mathrm K_{10}^+$ $\sigma$-model may incorporate also quantum corrections   of M-theory. This work was extended to other (super-)gravity model (and then to other extended Kac-Moody algebras) in \cite{Damour:2006ez} (see also \cite{Lambert:2006he}).                      
\item  Another interpretation  of affine real roots of $E_{10(10)}$ is given in  \cite{Englert:2007qb}. We consider fields parametrizing the brane model $\mathZ E_{10(10)}/ \mathrm K_{10}^-$  and we obtain for each positive real roots of the affine algebra $E_{9(9)}$ a BPS solutions of eleven-dimensional supergravity, or of its exotic counterparts, depending on two non-compact variables. This analysis is developed in Part II.               
\item Recent studies were developed  to extend the correspondence that can exist between the fields of a non-linear realization and the physical degrees of freedom of some supergravity theories  to non-propagating supergravity fields.  Indeed many supergravity theories allow introduction of $(D-1)$-form potentials called \emph{de-form} and $D$-form potentials called \emph{top-form}. The de-forms correspond to  deformation of supergravity theory with a masslike parameter while the top-forms are gauge fields which couple to space-filling branes such as the $D9$-brane in $10$ dimensions.  Evidences was presented in \cite{Riccioni:2007au} and \cite{Bergshoeff:2007qi} that de-forms and top-forms (identified using previous observations of \cite{Riccioni:2006az}) occur in different decompositions of $E_{11(11)}$. Moreover, the reference   \cite{Bergshoeff:2007qi} argued that the de-forms are in one-to-one correspondence to the embedding tensors that classify the gaugings of all maximal gauged supergravities which was investigated in \cite{Nicolai:2001sv,deWit:2002vt,deWit:2003hr}. 
The same statement was extended to the possible gaugings and massive deformations of half-maximal supergravities in \cite{Bergshoeff:2007vb}. See also \cite{Riccioni:2007ni,deWit:2008ta,Bergshoeff:2008xv,Bergshoeff:2008qd,Riccioni:2009xr} for further references on the subject.
\end{itemize}
\section*{About this thesis}
\addcontentsline{toc}{section}{About this thesis}


\noindent This thesis is structured in 3 Parts:\\

\noindent $\boldsymbol{\star}$ \textbf{Part I}

\noindent In this part, we describe all the tools which will be useful in Parts II and III. As we recalled above, Kac-Moody algebras appears in various aspects of supergravity theories. It is then essential to review some aspects of these algebras. This is done in \textbf{Chapter \ref{chap:math}}. More precisely, we develop in this chapter the preliminary mathematical background materiel required to understand subsequent chapters.  We begin by reviewing the basic definitions and properties which underlie the theory of complex Kac-Moody algebras. Then we focus on particular classes of infinite-dimensional Kac-Moody algebras obtained by adding nodes in the Dynkin diagrams of  finite-dimensional Lie algebras (see Figure \ref{fig:kmintro}).  As  in this dissertation we are going to deal with split real forms in Part II and with non-split real forms in Part III,   we will also present a quick recap of the theory of real forms of complex Kac-Moody algebras. Finally, we will explain in details the process of level decomposition of a Kac-Moody algebra with respect to one of its regular subalgebra. This level decomposition is a very important tool for the construction of non-linear $\sigma$-model over infinite-dimensional coset spaces.\\

The \textbf{Chapter \ref{chap:sugrareformulatedth}} makes use of the content of the previous chapter to describe some recent developments devoted to investigations of the underlying symmetry structures of supergravity theories. We begin by describing how toroidal compactifications of gravity theories reveal hidden  symmetries of the reduced Lagrangian. We also discuss attempts to extend the finite-dimensional symmetry structures to infinite-dimensional ones when the reduction is performed below $3$ dimensions. We then recall the construction of  actions $\mathcal{S}_{\mathrm{G}^{++(+)}}$   explicitly invariant under $\mathrm G^{++(+)}$.  The study of $\gpp$-invariant actions: $\gpp_C$ corresponding to the cosmological model and $\gpp_B$ corresponding to the brane model allows to make connections with the corresponding space-time supergravity theory. We will also review the effect of Weyl reflections on space-time signature of $\mathrm G^{+++}$-invariant theory. This result was published previously in  \cite{deBuyl:2005it}.  \\

\noindent $\boldsymbol{\star}$ \textbf{Part II}

\noindent This part is based on our original work \cite{Englert:2007qb} where an infinite $E_{9(9)}$ multiplet of BPS states is constructed for eleven-dimensional supergravity. 

In \textbf{Chapter \ref{chap:basicbps}} we review the construction of the brane $\sigma$-model $\mathZ E_{10(10)}/K_{10}^-$ and its relation to the basic BPS solutions of eleven-dimensional supergravity and of its exotic counterparts namely: the KK-wave, the M2 brane, the M5 brane and the KK6-monopole.

In \textbf{Chapter \ref{chap:infiniteudualgroup}}, we classify {\em all} $E_{9(9)}$ generators in $A_1^+$
 subgroups with central charge in $E_{10(10)}$. We select  particular
 $A_1^+$ subgroups containing two infinite `brane' towers of
 generators, or one infinite `gravity' tower. The tower generators are
 recurrences of the generators defining the basic M2 and M5,  or the
 KK-wave and the KK6-monopole.  The other $A_1^+$ subgroups needed to
 span all $E_{9(9)}$ generators  are obtained from the chosen ones by  Weyl
 transformations in the $A_8\subset E_{9(9)}$ gravity line.               
 The fields characterising the basic BPS solutions smeared to two
 space dimensions are encoded as parameters in Borel
 representatives of $\mathZ E_{10(10)}/\mathrm{K}_{10}^-$: each basic solution is fully
 determined by a specific positive generator associated to a specific
 positive real root.                                        All
 $E_{9(9)}\subset E_{10(10)}$ real roots are related by Weyl
 transformations. We use sequences of Weyl reflexions to reach any
 positive real root from roots corresponding to basic BPS solutions.
 We then express through dualities and compensations the fields
 defined by a given root  in terms of the eleven-dimensional metric and
 the 3-form potential. We verify 
 that these  fields  yield  a new solution of eleven-dimensional  supergravity or of
 its exotic counterparts.  In this way we generate an infinite
 multiplet of $E_{9(9)}$  BPS solutions  depending on two space
 variables. 
 In the string theory context this constitutes an infinite
 sequence of U-dualities realised  as Weyl transformations of
 $E_{9(9)}$. 
 It is shown that the full BPS multiplet of states is
 characterised by group transformations preserving the analyticity of
 the Ernst potential originally introduced in the context of the
 Geroch symmetry of 4-dimensional gravity with one time and one space
 Killing vectors. 
 
\textbf{Chapter \ref{chap:dualfor}} discusses the nature of the different BPS states. One
introduces the dual formalism which proves a convenient tool to
analyse the charges and masses content of the $E_{9(9)}$ BPS states. The
masses are defined and computed in the string theory context. We show
that the $E_{9(9)}$ multiplet can be split into three different classes
according to the $A_9$ level $\ell$. For $0\le \ell \le 3$ one gets the
basic BPS states smeared to two non-compact space dimensions. For
levels 4, 5 and 6  the BPS states depending on two non-compact space
variables can not be `unsmeared' in higher space dimensions. We
qualify the eight remaining space dimensions, and the time dimension,
as longitudinal ones.  For $\ell >6$ all BPS states admit nine
longitudinal dimensions, including time, and we argue that they are
all compact. 

In this chapter, we will also show that $E_{10(10)}$ fields associated to real roots which
are not in $E_{9(9)}$ are BPS solutions of  $\mathcal{S}^{brane}$ and admit thus a
space-time description with one non-compact transverse space
dimension. They may not admit a direct description but the dual
description is still well defined. These facts are exemplified by a
level 4 field which yields the M9 brane namely the `uplifting' of the $D8$ brane of massive IIA supergravity. 

Appendices complements arguments of this part. \\

\noindent $\boldsymbol{\star}$ \textbf{Part III}

\noindent \noindent This part is based on our original work \cite{Houart:2009ed} where we studied the symmetries of pure $\mathcal{N}=2$ supergravity in $D=4$.  As is known, this theory reduced on one Killing vector is characterised by a non-linear realised symmetry $\su$. This symmetry group is not in its split real form (which would be $\mathrm{SL}(3, \mbb C)$) and one of our motivations for this part was to investigate whether the conjectures discussed above in this introduction (and in Chapter \ref{chap:sugrareformulatedth}) apply also in this case (see also \cite{HenneauxJulia,deBuyl:2003ub,Henneaux:2007ej,Riccioni:2008jz} for related work). Moreover since the symmetry $\su$ mixes the two gravitational charges one can study the question of gravitational dualities analogous to electromagnetic duality in this model. \\

In \textbf{Chapter \ref{chap:finitesymsu215}}, we  first review some facts about pure ${\cal  N}=2$ supergravity in $D=4$ and the group $\mathrm{SU}(2,1)$ acting on its solutions with one time-like or 
one space-like Killing vector in Section~\ref{sec:EinsteinMaxwell}. Then, we go on to study the action of the finite-dimensional $\mathrm{SU}(2,1)$ on the BPS solutions in Section~\ref{sec:GroupAction}. There we show that the four charges transform linearly under the non-compact subgroup $\mathrm{SL}(2,\mathbb{R})\times \mathrm{U}(1)$ of $\mathrm{SU}(2,1)$. In particular, we show that the moduli space of half-BPS solutions can be described as a certain coset space, in agreement with recent results in the literature, and discuss the extension to the quantum theory from a string theory perspective. 

In \textbf{Chapter \ref{chap:infinitedimsymsu21}}, we will consider the conjectured infinite-dimensional symmetries of  four-dimensional pure ${\cal N}=2$ supergravity. By analysing the Lie algebra of $\mathrm{SU}(2,1)^{+++}$ we then demonstrate that the field content of the extended symmetry group is correct in Section \ref{sec:su21+++}.  This requires understanding which generators are present in this particular real form of the Kac-Moody algebra. We recall that the study of the real form $\asuppp$ was performed in Section \ref{subsec:su21andextension}. Starting from this observation, one can construct a correspondence between the one-parameter cosmological model based on $\supp$ and ${\cal N}=2$ supergravity in exactly the same way as for $\mathZ E_{10(10)}$ and this is shown in Section~\ref{sec:su21++}. We demonstrate also how the algebraic structure of $\suppp$ captures the half-BPS solutions in Section~\ref{sec:su21++}. This provides a detailed study of the proposed infinite-dimensional symmetries of ${\cal N}=2$. The extremal BPS solutions that occur in ${\cal N}=2$ supergravity can be derived from intersecting brane construction in M-theory and this leads us to an embedding of the non-split $\mathrm{SU}(2,1)^{+++}$ in the split $\mathrm  E_{11(11)}$, which is described in Section \ref{sec:Embedding}, thus nicely unifying our analysis with existing results. Questions not addressed in this part are the supersymmetric deformations of $\mathcal{N}=2$ supergravity (e.g. adding a cosmological constant) and their consistency of the algebraic structure of $\suppp$ via higher rank forms~\cite{Bergshoeff:2007qi,Riccioni:2007au,Gomis:2007gb,Kleinschmidt:2008jj} as well as the coupling of the fermionic sector.\\

Our results can be taken as evidence that the conjectured $\mathrm{G}^{++}$ and $\mathrm{G}^{+++}$ also appear in situations when $\mathrm{G}$ is not in split real form. Their full verification is subject to the same restrictions regarding the correct interpretation of the infinity of their generators as in the case when $\mathrm{G}$ is split. One can establish a correspondence (or dictionary) between the cosmological coset model based on $\mathrm{G}^{++}$ and the supergravity equations at low levels and account for the algebraic structure of  half-BPS solutions in $\mathrm{G}^{+++}$. The finite $\mathrm{G}$ part of the symmetry acts as a solution generating group in $D=3$. In particular, there are non-linear transformations acting as gravitational dualities on BPS solutions. Furthermore, the construction of ${\cal N}=2$ supergravity as a truncation of the maximal ${\cal N}=8$ theory has an algebraic counterpart since $\mathfrak{su}(2,1)^{+++}$ is contained in $E_{11(11)}$ as a subalgebra. \\


\part{Kac-Moody algebras: an approach to supergravity theories}
  \chapter{Complex and real Kac-Moody algebras} \label{chap:math}

In this chapter, we will review some important aspects of Kac-Moody algebras. Roughly speaking Kac-Moody algebras are infinite-dimensional generalizations of finite-dimensional simple Lie algebras and  their intriguing apparition  in various aspects of   
supergravity theories provides a good motivation to study them. The aim of this chapter is not to re-derive all the theory of Kac-Moody algebras which is still not well understood but we will give all the tools necessary to follow the discussion of this dissertation.

We will first recall in Section \ref{sec:def} the basic definitions which underly the theory of complex Kac-Moody algebras. Although there is no classification of these algebras, we will focus in Section \ref{sec:classkm} on a particular class which is relevant for this work. The following section is devoted to the real forms of Kac-Moody algebras and we will end this chapter by studying a way to approach the infinite number of generators appearing in Kac-Moody algebras.

Section \ref{sec:def} and \ref{sec:classkm} are inspired by \cite{Modave,Danielth}. The Section \ref{sec:realkm} follows the concise presentation of the subject of real forms given in \cite{Henneaux:2007ej,deBuyl:2006gp,Real}. For good references, we recommend also \cite{Helgason:1978}. The last section of this chapter is based on \cite{Axelthesis,Nicolai:2003fw,Kleinschmidt:2003mf}. 
\section{Definitions}\label{sec:def}
\setcounter{equation}{0}
This section presents a survey of the theory of complex Kac-Moody algebras. It includes the definition of these algebras based on their generalized Cartan matrix and covers also the basic properties that we will exploit along this thesis. 

\subsection{Cartan Matrix and the Chevalley-Serre basis}\label{subsec:chevalley}
Let  $A= (A_{ij}) _{i,j=1, \ldots, n}$ be \emph{a generalized Cartan matrix}, i.e.  an $n\times n$ matrix which satisfies the following conditions:
\be \label{eqn:cartan} \begin{split} \begin{aligned}
A_{ii}&=2, \quad & & i=1, \ldots n , \\
A_{ij} &\in \mbb{Z}_{-}, \quad& & \mathrm{for}\  i\neq j,\\
A_{ij}&=0 \Longleftrightarrow A_{ji}=0 & & \mathrm{for}\  i\neq j .
\end{aligned}\end{split} \ee
The associated \emph{Kac-Moody algebra} $\mathfrak{g}(A)$ is a complex Lie algebra on $3n$ \emph{Chevalley generators} $\{h_i,e_i, f_i\}$ ($i=1, \ldots,n$) obeying the following \emph{Chevalley relations}
\be\label{eqn:chevalley} \begin{split} 
[h_i,h_j]&=0 ,\\
[h_i,e_j]&=A_{ij}\, e_j ,\\
[h_i,f_j]&=- A_{ij}\, f_j ,\\
[e_i,f_j]&=\delta_{ij}\,  h_j ,
\end{split} \ee
and subject to the \emph{Serre relations}
\be \label{eqn:serre} \begin{split}
(\mathrm{ad}\, e_i)^{1-A_{ij}} e_j &=0,\\
(\mathrm{ad}\, f_i)^{1-A_{ij}} f_j &=0,
\end{split}\ee
where $\mathrm{ad}\, e_i$ (resp. $\mathrm{ad}\, f_i$) denotes the adjoint action of the generator $e_i$ (resp. $f_i$) on $\mathfrak{g}$. The relations \eqref{eqn:serre} then read explicitly
\be \label{eqn:serreexp} \begin{split}
\underbrace{[e_i,[e_i,[e_i,\ldots,[e_i,e_j]\ldots ]]]}_{1-A_{ij}\ commutators}&=0, \\
\overbrace{[f_i,[f_i,[f_i,\ldots,[f_i,f_j]\ldots ]]]}&=0
\end{split}\ee
and impose restrictions on the generators belonging to $\mf g$. Using the Chevalley relations \eqref{eqn:chevalley} and the Jacobi identity, any multicommutator  can be reduced to a multicommutator involving only the generators $e_i$ or only the generators $f_i$ or is an element of $\{h_i, i=1, \ldots, n\}$ (or is zero). Therefore $\mathfrak{g}$ has the following \emph{triangular decomposition} as a vector space
\be \label{eqn:triangular}
\mathfrak{g}= \mf{n_{-}}\oplus \mf{h}\oplus \mf{n_{+}},
\ee
where the subspace $\mf{n_-}$ and $\mf{n_+}$ are respectively generated by the $f_i$'s and $e_i$'s  modulo the Serre relations \eqref{eqn:serre}. The subalgebra  $\mf{h}$ is the complex vector space spanned by the $h_i$
\be \label{eqn:cartansub} 
\mf{h}= \sum_{i=1}^{n} \mbb C\, h_i,
\ee
which acts diagonally on the Chevalley generators (see \eqref{eqn:chevalley}) and forms an abelian subalgebra of $\mf{g}$ called the \emph{Cartan subalgebra}. Its dimension $n$ is the \emph{rank} of the Kac-Moody algebra $\mf{g}$.

\subsection{Dynkin diagrams and classification of Kac-Moody algebras}

We have seen in the previous section that the structure of the algebra $\mf g(A)$ is completely encoded in its Cartan matrix $A$ whose entries determine the commutation relations between the generators. Therefore the Cartan matrix $A$ characterizes completely the algebra $\mf g$. A very convenient way to encode the Cartan matrix $A$ is by associating to it a \emph{Dynkin diagram} which is constructed as follows:
\begin{itemize}
\item to each triple of Chevalley generators $(h_i, e_i, f_i)$ one associates a white node $\circ$,
\item if $A_{ij} =0$, the nodes $i$ and $j$ are disconnected,
\item if $A_{ij}\neq 0$, the nodes $i$ and $j$ are connected by max$\{|A_{ij}|,|A_{ji}|\}$ lines and one draws an arrow from $j$ to $i$ if $|A_{ij}|>|A_{ji}|$.
\end{itemize}
The form of the multicommutators
\be
[e_{i_1},[e_{i_2},[e_{i_3},\ldots,[e_{i_{p-1}},e_{i_p}]\ldots ]]] \quad \mathrm{and} \quad [f_{i_1},[f_{i_2},[f_{i_3},\ldots,[f_{i_{p-1}},f_{i_p}]\ldots ]]],
\ee
is restricted by the Serre relations \eqref{eqn:serre}. These restrictions might render the algebra $\mf{g}(A)$ finite or infinite-dimensional depending on the properties of the Cartan matrix $A$. In fact, the algebras $\mf g(A)$ have been classified according to the properties of the eigenvalues of their respective Cartan matrix:
\begin{itemize}
\item  $\mf g$ is a \emph{finite-dimensional Lie algebra} if $A$ is positive definite. This kind of algebra are well-know and have been classified by Cartan and Killing. The classification of simple\footnote{A simple Lie algebra is a Lie algebra that contains no proper ideal meaning no proper subset of generators $\in \mf i$ such that $[\mf i, \mf g] \subset \mf i$.} Lie algebras boils down to a classification of Dynkin diagram displayed in Figure \ref{fig:liealgebras} which contains four infinite families: the classical algebras $A_n, B_n, C_n$ and  $D_n$ and  five exceptional cases: $G_2, F_4, E_6, E_7$ and $E_8$.
\item $\mf g$ is an \emph{affine Kac-Moody algebra} if $A$ is positive-semidefinite, i.e. det $A=0$ with one zero eigenvalue. Affine Kac-Moody algebras are infinite-dimensional and  have been also classified \cite{Kac:book}.
\item $\mf g$ is infinite-dimensional and called \emph{indefinite} if $A$ is not part of the two first classes. No general classification exists for this class but in this work we will focus on a particular subclass associated to a Cartan matrix $A$ which has one negative eigenvalue and $n-1$ positive ones. These algebras are called \emph{Lorentzian} by virtue of the signature $(-,+,+,\ldots,+)$ of $A$. A special subclass of the Lorentzian algebras, know as \emph{hyperbolic Kac-Moody algebras} have been classified. Hyperbolic Kac-Moody algebras are such that the deletion of any node from its Dynkin diagram gives a sum of finite or affine algebras. One can show that there exists no hyperbolic algebra with rank higher than $10$.
\end{itemize}

\begin{figure}[ht]
\begin{center}
{\scalebox{0.2}
{\includegraphics{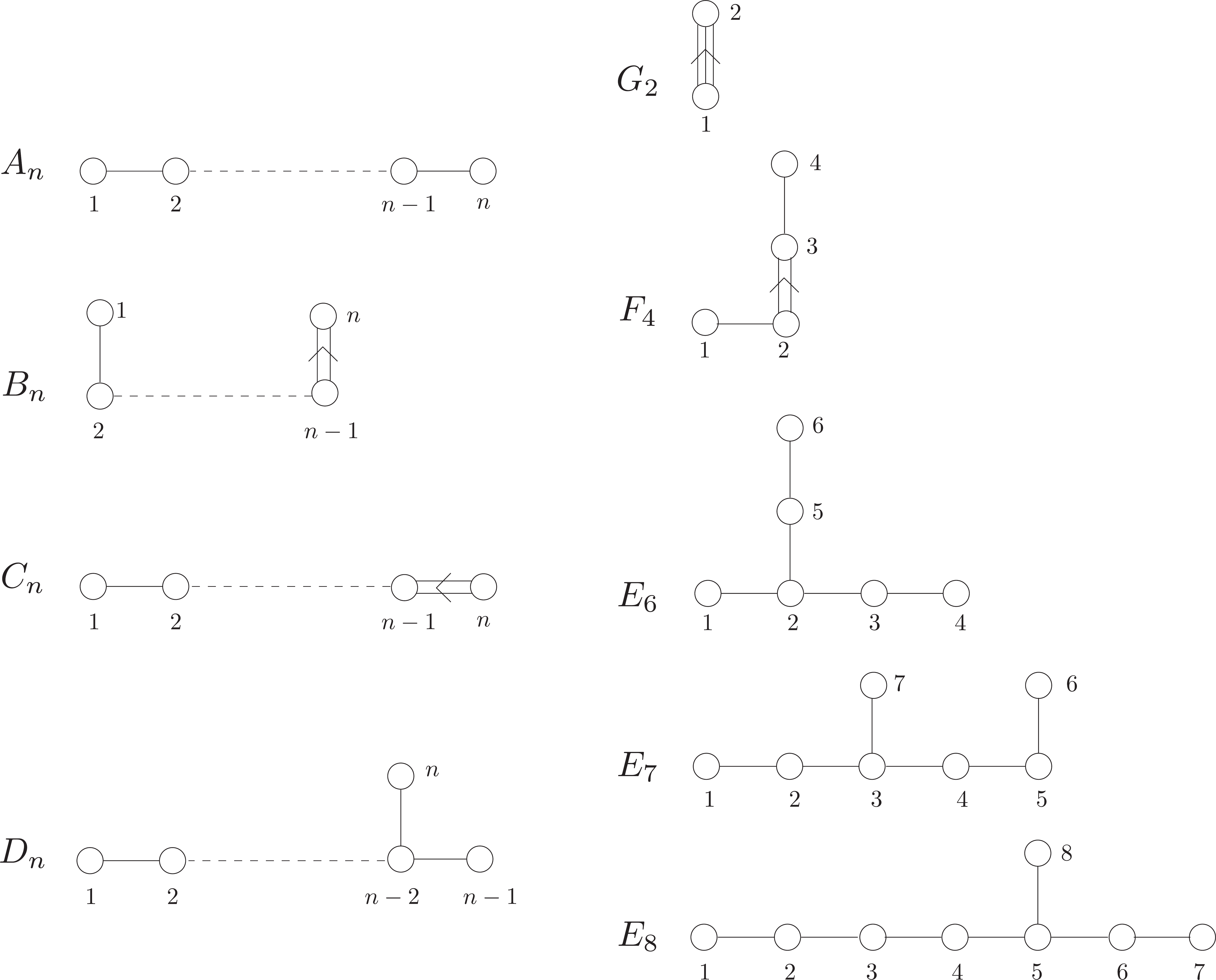}}}
\caption{\small \textsl{Dynkin diagrams of finite-dimensional Kac-Moody algebras. The subscript on the name gives the rank of the algebra. }}
\label{fig:liealgebras}
\end{center}
\end{figure}
\subsection{The root system}\label{subsec:rootsystem1}

A \emph{root} $\alpha$ of the algebra $\mf g$ is a non-zero linear form on the Cartan subalgebra $\mf h$ (i.e. an element of the dual $\mf h^*$) such that the \emph{root space} $\mf g_{\alpha}$ defined by
\be \label{eqn:rootspac}
\mf{g}_{\alpha}= \{ x \in \mf g(A)\, |\, [h,x]= \alpha(h)\, x,\ \forall h \in \mf h\},
\ee
is not empty. The (complex) dimension of the root space is called  \emph{the multiplicity} mult$(\alpha)$
\be\label{multipliroot}
\mathrm{mult}(\alpha)= \mathrm{dim}\, \mf g_{\alpha}.
\ee
The  particular roots denoted by $\alpha_i, \ i=1, \ldots n$ corresponding  to the eigenvalues of the adjoint action of $\mf h$ on the Chevalley generators $e_i$:
\be
[h,e_i] = \alpha_i(h)\, e_i,
\ee
are called \emph{simple roots}. According to the Chevalley relations \eqref{eqn:chevalley}, one gets that
\be \label{componentsiroot}
\alpha_i(h_j)= A_{ji}.
\ee
One also uses the notation $<.,.>$ for the standard pairing between $\mf h$ and its dual $\mf h^*$ 
\be
<\alpha, h>= \alpha(h).
\ee
In this notation the entries of the Cartan matrix can be written as
\be \label{eqn:cartbil}
<\alpha_i, h_j>= A_{ji}.
\ee
The basis of simple roots is denoted by $\Pi=\{\alpha_1, \ldots \alpha_n\}$. Any root $\alpha \in \mf h^*$ can be expressed as an integer linear combination of the simple roots
\be
\alpha= \sum_{i=1}^{n} m_i \alpha_i, \ m_i \in \mbb Z\, .
\ee
where ht$(\alpha)= \sum_{i=1}^{n} m_i $ defines the \emph{height} of the root $\alpha$.
In fact, the adjoint action of $\mf h$ on a multicommutators of $\mf n_{+}$ and $\mf n_{-}$ gives respectively
\be \begin{split}
\bigg[h, \big[e_{i_1},[ e_{i_2}, \ldots [e_{i_{p-1}},e_{i_{p}}] ]\big]\bigg]&=\ \ (\alpha_{i_1}+\alpha_{i_2}+\ldots +\alpha_{i_p} ) (h)\, [e_{i_1},[ e_{i_2}, \ldots \big[e_{i_{p-1}},e_{i_{p}}]]\big],\\
\bigg[h, \big[[[f_{i_p}, f_{i_{p-1}}]\dots,f_{i_2}],f_{i_1}\big]\bigg]
&= - (\alpha_{i_1}+\alpha_{i_2}+\ldots +\alpha_{i_p} ) (h)\, \big[[[f_{i_p}, f_{i_{p-1}}]\dots,f_{i_2}],f_{i_1}\big],
\end{split}\ee
If the generator $e_{\alpha} \equiv \big[e_{i_1},[ e_{i_2}, \ldots [e_{i_{p-1}},e_{i_{p}}] ]\big]$ is non-vanishing, then one says that $\alpha= (\alpha_{i_1}+\alpha_{i_2}+\ldots + \alpha_{i_p} )$ is a \emph{positive root} associated to $e_{\alpha}$ and that $-\alpha$ is a \emph{negative root} associated to $f_{\alpha} \equiv  \big[[[f_{i_p}, f_{i_{p-1}}]\dots,f_{i_2}],f_{i_1}\big]$. We take the convention that the negative generator associated to $-\alpha$ follows from the positive one by writing the multicommutator in the reverse order. From the triangular decomposition \eqref{eqn:triangular} it follows that the roots are either positive (i.e. linear combination of the simple roots $\alpha_i$ with integer non-negative coefficients) or negative (i.e. linear combination of the simple roots $\alpha_i$ with integer non-positive coefficients). The complete set of roots is called \emph{the root system} $\Delta$, and it splits into two disjoint sets
\be
\Delta= \Delta_+ \cup \Delta_-\, ,
\ee
where $\Delta_+$ are the set of positive roots and $\Delta_-$ are the set of negative roots. 

We also have that
\be
\mf h^*= \sum_{i=1}^n \mbb C \,\alpha_i.
\ee
In the Chevalley basis, the simple roots $\alpha_i(h)$ are always integers implying that all possible eigenvalues in the action of $\mf h$ lie on a $n$-dimensional lattice $Q$ spanned by the simple roots,
\be
\Delta \subset Q= \sum_{i=1}^{n} \mbb Z \, \alpha_i \in \mf h^*.
\ee
Using the Jacobi identity, one can see that the root lattice provides a grading of the Kac-Moody algebras $\mf g$. Considering the two root spaces $\mf g_{\alpha}$ and $\mf g_{\beta}$  associated to the roots $\alpha$ and $\beta$ $\in Q$ we have
\be
[\mf g_{\alpha}, \mf g_{\beta}] \subseteq \mf g_{\alpha+\beta}. 
\ee
The whole algebra $\mf g$ decomposes into a direct sum of eigenspaces:
\be \label{rootspacedec}
\mf g = \mf h \oplus \bigoplus _{\alpha \in \Delta} \mf g_{\alpha}.
\ee
This decomposition is called the \emph{root space decomposition}.

\subsection{The Chevalley involution} \label{subsec:chevalleyinvo2}
The  Chevalley-Serre presentation  indicates a symmetry between the positive and negative subalgebras $\mf n_+$ and $\mf n_-$ which leaves the Chevalley-Serre relations \eqref{eqn:chevalley} and \eqref{eqn:serre} invariant. This automorphism on $\mf g$ is known as the \emph{Chevalley involution} $\omega$ defined as follows on the Chevalley generators
\be \label{eqn:involutionchev}
\omega(h_i)= -h_i, \quad \omega(e_i)= -f_i, \quad \omega(f_i)= -e_i.
\ee
The action of $\omega$ can be extended to all generators of $\mf g$ in the standard way. For instance, on a generator  $e_{\alpha} \subset \mf n_+$ obtained by multicommutators of Chevalley generators i.e. $e_{\alpha} \equiv [e_{i_1},[ e_{i_2}, \ldots [e_{i_{p-1}},e_{i_{p}}]\ldots ]]$ one has
\be\begin{split} \label{eqn:ealphagen1}
\omega (e_{\alpha})&= \omega([e_{i_1},[ e_{i_2}, \ldots [e_{i_{p-1}},e_{i_{p}}]\ldots ]]),\\
&= [\omega(e_{i_1}),[\omega( e_{i_2}), \ldots [\omega(e_{i_{p-1}}),\omega(e_{i_{p}})]\ldots ]],\\
&=(-1)^{p}  [f_{i_1},[ f_{i_2}, \ldots [f_{i_{p-1}},f_{i_{p}}]\ldots ]], \\
&= - [[\ldots[f_{i_p}, f_{i_{p-1}}]\dots,f_{i_2}],f_{i_1}],\\
&= - f_{\alpha} \subset \mf n_{-}.
\end{split}\ee
The subset of $\mf g$ which is pointwise fixed under the Chevalley involution $\omega$ defines the \emph{maximal compact subalgebra}
\be
\mf k =\{ x \in \mf g \ | \ \omega(x)= x\}.
\ee
It is generated by the combinations of the Chevalley generators $(e_i-f_i),\, i=1, \ldots, n$. The Chevalley involution induces the following \emph{Cartan decomposition} of $\mf g$ in terms of vector spaces:
\be \label{eqn:cartandecomp124}
\mf g = \mf k \oplus \mf p,
\ee
where the complement $\mf p$ is the subspace of $\mf g$ which is pointwise anti-invariant under $\omega$
\be
\mf p =\{ x \in \mf g \ | \ \omega(x)=- x\}.
\ee
Note that $\mf p$ transforms in some representations of $\mf k$ but is not subalgebra of $\mf g$. The Cartan decomposition yields the following characteristic properties of a \emph{symmetric space}:
\be \label{symmetricspace}
[\mf p, \mf p] \subset \mf k, \quad [\mf k, \mf p] \subset \mf p, \quad [\mf k, \mf k] \subset \mf k.
\ee

For later reference, let us also give another useful decomposition of $\mf g$, known as the \emph{algebraic Iwasawa decomposition} which reads in terms of vector spaces
\be \label{eqn:iwasawa}
\mf g = \mf k \oplus \mf h \oplus \mf n_+\, .
\ee
In the finite-dimensional case this decomposition reduces to the familiar fact that any matrix can be decomposed into an orthogonal part, a diagonal part and an upper triangular part. The subset
\be \label{eqn:borelsubalgebra1}
\mf b_+ = \mf h \oplus \mf n_+\, ,
\ee
is known as the positive \emph{Borel subalgebra}. There is an alternative  Iwasawa decomposition which instead utilizes the negative nilpotent subspace $\mf n_-$
\be
\mf g = \mf k \oplus \mf h \oplus \mf n_-\, ,
\ee
with an associated negative Borel subalgebra
\be
\mf b_{-} = \mf h \oplus \mf n_-\, .
\ee
\subsection{Invariant bilinear form} \label{subsec:bilinear}
The Cartan matrix $A$ can be used to define a complex bilinear form on $\mf h^*$.  To proceed, we assume that the Cartan matrix is invertible and  \emph{symmetrizable} i.e. that there exists an invertible diagonal matrix $D= \mathrm{diag}\, (\epsilon_1, \ldots, \epsilon_n)$ and a symmetric matrix $S= (S_{ij})$, such that
\be \label{cartansymetri}
A= DS.
\ee
Under these assumptions, an \emph{invertible bilinear form }$(.|.)$ is easily defined in the dual $\mf h^*$ of the Cartan subalgebra as follows
\be\label{symbils}
S_{ij}\equiv (\alpha_i | \alpha_j),
\ee
for $\alpha_i$ and $\alpha_j$ $\in \Pi$. It follows from $A_{ii}=2$ that 
\be \label{eqn:eps}
\epsilon_i= \frac{2}{(\alpha_i | \alpha_i)} \cvp
\ee
The norm squared of a simple root $\alpha_i$ is then $\alpha_i^2=(\alpha_i|\alpha_i)= 2/\epsilon_i$. In most cases, the $\epsilon_i$'s can only take two distinct values and the roots corresponding to smaller norm are called \emph{short roots} and the other ones \emph{long roots}.
Using \eqref{symbils} and \eqref{eqn:eps}, one can see that the Cartan matrix (see \eqref{cartansymetri}) can be expressed in terms of bilinear form
\be\label{eqn:cartbil2}
A_{ij}= 2 \, \frac{(\alpha_i | \alpha_j)}{(\alpha_i | \alpha_i)}\,\cvp
\ee
We can now extend  this bilinear form on $\mf h^*$ to the whole algebra $\mf g$. Since the bilinear form is non-degenerate on $\mf h^*$ (since $A$ is), it provides an isomorphism $\mu: \mf h^* \rightarrow \mf h$ defined by
\be \label{eqn:iso1}
<\alpha ,  \mu({\beta})>= (\alpha|\beta),\qquad \alpha, \beta \in \mf h^*,\ \mu(\beta) \in \mf h\, .
\ee
This isomorphism induces a bilinear form on $\mf h$, also denoted by $(.|.)$ considering the inverse isomorphism $\mu^{-1}:\mf h \rightarrow \mf h^* $ such that
\be\label{eqn:iso2}
<\mu^{-1}(h), h'>= (h|h'), \qquad  h, h' \in \mf h,\  \mu^{-1}(h) \in \mf h^*           \, .
\ee
Using equations \eqref{eqn:cartbil}, \eqref{eqn:cartbil2} and \eqref{eqn:eps} and  the isomorphism equations \eqref{eqn:iso1} and \eqref{eqn:iso2}, we get 
\be
\mu^{-1}(h_i)= \epsilon_i\, \alpha_i\  \Leftrightarrow\ h_i= \epsilon_i \, \mu(\alpha_i).
\ee
Using this result, one finds the bilinear form on $\mf h$:
\be
(h_i|h_j)= \epsilon_i\, \epsilon_j\, S_{ij}.
\ee

At this point we have the bilinear form on the Cartan subalgebra $\mf h \subset \mf g$.  To extend this to the full Kac-Moody algebra, one exploits the \emph{invariance} of the bilinear form that it fulfills
\be \label{eqn:invariance}
([x,y] \,|\, z)= (x\,|\, [y,z]), \quad x, y, z \in \mf g.
\ee
Computing $(h_i | [h_k,e_j])= A_{kj} (h_i | e_j)$ and using the invariance property \eqref{eqn:invariance} and the Chevalley relations \eqref{eqn:chevalley}, we find
\be
(h_i|e_j)=0,
\ee
and similarly computing $(h_i | [h_k,f_j])$ one finds
\be
(h_i|f_j)=0.
\ee
In the same way, computing $(h_k | [e_i,f_j])= \delta_{ij} (h_k | h_i)$ one gets
\be
(e_i|f_j)=\epsilon_i\, \delta_{ij}.
\ee
We get all the bilinear forms implying Chevalley generators. They can be extended to the whole algebra using the gradation of $\ag$  \eqref{rootspacedec}, the invariance of the bilinear form \eqref{eqn:invariance} and the orthogonality property:
\be
(x_{\alpha} | x_{\beta})=0 \quad \mathrm{if} \, \alpha+\beta \neq 0,
\ee
where $x_{\alpha} \in \ag_{\alpha}$ and $x_{\beta} \in \ag_{\beta}$.
\subsection{Real and imaginary roots}

For finite-dimensional Lie algebras, the bilinear form $(.|.)$ coincides with the standard Killing form. Therefore, the bilinear form is of Euclidean signature and consequently, the root lattice $Q$ is a Euclidean lattice. In that case, all the roots have positive norm and are called \emph{real root}. The set of  real roots is denoted by $\Delta^{re}$:
\be
\Delta^{re}= \{\alpha \in \Delta \, | \, \alpha^2= (\alpha | \alpha)>0 \}.
\ee
When the algebra $\ag$ becomes infinite-dimensional, the invariant bilinear form does not have Euclidean signature. For the particular case of  Lorentzian algebras that we will study in this dissertation, the bilinear form is a flat metric with signature $(-,+,\ldots,+)$ and consequently $Q$ is Lorentzian lattice. Then besides real roots, this algebra has \emph{imaginary roots} i.e. roots with zero or negative norm. The set of imaginary roots is denoted by $\Delta^{im}$ :
\be
\Delta^{im}= \{\alpha \in \Delta \, | \, \alpha^2= (\alpha | \alpha)\leqslant 0 \}.
\ee
In this way the root system of a Kac-Moody algebra decomposes into two disjoint sets and we have
\be
\Delta = \Delta^{re} \cup \Delta^{im}.
\ee
While the real roots always have a multiplicity equal to one, this is not the case for imaginary roots. The multiplicities of imaginary roots are in general unknown except for affine algebras. For indefinite Kac-Moody algebras the multiplicity of imaginary roots grows exponentially with increasing height. It is a challenge to understand the structure of this type of algebras. 

Let us quickly review some other features of roots. If $\alpha \in \Delta^{re}$ then the only possible multiples of $\alpha$ are $\pm \alpha$. This is not so for imaginary roots, where all multiples of $\alpha$ can be root even if  $\alpha \subset \Delta^{im}$.

There are at most two different root lengths in the finite-dimensional case. In the case of infinite-dimensional Kac-Moody algebras, this is no longer true even for real roots. The algebra is called \emph{simply laced}\footnote{Simply laced algebras correspond to algebras associated to Dynkin diagram not carrying arrows indicating that these algebras have symmetric Cartan matrices. } if all real roots have the same length. In this thesis, we are dealing only with this kind algebra.
\subsection{The Weyl group} \label{subsec:weylgroupthese}
The Weyl group $\mc W$ of the Kac-Moody algebra $\mf g$ is the reflection group generated by the \emph{fundamental Weyl reflections} $s_i\ (i=1, \ldots , n)$ subject to the relations 
\be\label{coxeter1}
(s_i\, s_j)^{m_{ij}}=1, \qquad s_i^2=1,
\ee
where $m=(m_{ij})_{i,j=1, \ldots,n}$ is an $n \times n$ matrix satisfying
\be \label{coxeter2} \begin{split}
m_{ii}&=2,\\
m_{ij}&=2, 3, 4, 6\  \mathrm{or}\  \infty\,  \mathrm{according\ to} \ A_{ij}A_{ji}\ \mathrm{ is}\ 0,1,2,3\ \mathrm{or}\  \geqslant 4, \quad i\neq j,\\
m_{ij}&=m_{ji}.
\end{split} \ee
From the definitions \eqref{coxeter1} and \eqref{coxeter2} one can see that the Weyl group is a particular example of \emph{Coxeter group}. One can realize this group by associating a fundamental reflection $s_i$ to each simple root $\alpha_i, \ i=1, \ldots, n$, such that the action on any $\beta \in \mf h^{*}$ is given by
\be\label{weylreflection}
s_i: \beta \longrightarrow s_i(\beta)= \beta -2\, \frac{(\beta| \alpha_i)}{(\alpha_i | \alpha_i)}\, \alpha_i \,  \cvp
\ee
When acting on the simple roots themselves, the Weyl reflection becomes
\be
s_i(\alpha_j)= \alpha_j - \, A_{ij}\, \alpha_i.
\ee
Geometrically, the fundamental reflection $s_i$ realizes a reflection in the hyperplane orthogonal to the simple root $\alpha_i$. The hyperplane fixed by this reflection is defined by
\be \label{eqn:planinvweyl}
T_i=\{\gamma \in Q \, |\, (\gamma| \alpha_i)=0\}.
\ee
Let us now associate a general Weyl reflection $s_{\alpha}$ to any root $\alpha \in \Delta$ by taking a finite product of  fundamental reflections,
\be
s_{\alpha}= s_{i_1} s_{i_2}\ldots s_{i_k},
\ee
where the minimal number $k$ of fundamental reflections needed to describe $s_{\alpha}$ is called the \emph{lenght} of $s_{\alpha}$. Moreover, we can show that each general reflection $s_{\alpha}$ corresponds to a conjugation of a fundamental reflection $s_i$ by some element $w\in \mc W$:
\be
s_{\alpha}= w s_i w^{-1}.
\ee
The Weyl group has a number of interesting properties:
\begin{itemize}
\item it preserves the root lattice; $\mc W: Q \longrightarrow Q$, and the set of real and imaginary roots are separately invariant under the Weyl group: $\mc W . \Delta^{re}= \Delta^{re}$ and  $\mc W . \Delta^{im}= \Delta^{im}$,
\item it preserves the bilinear form $(.|.)$: $(w(\alpha)|w(\beta))=(\alpha|\beta)$,
\item two roots that are in the same orbit have identical multiplicities: mult$(w(\alpha))=$ mult$(\alpha)$.
\end{itemize}

\section{Classes of Kac-Moody algebras obtained by extensions of finite Lie algebras} \label{sec:classkm}
\setcounter{equation}{0}

Finite simple Lie algebras and affine Kac-Moody algebras are well studied and fully classified. It is not the case for other Kac-Moody algebras whose study constitutes still now a big challenge. However one of the classes of such Kac-Moody algebras that has been studied in some details and that has been classified is the class of  hyperbolic algebras. Their interest was motivated by the study of cosmological solutions in the vicinity of space-like singularity known as cosmological billiards \cite{Damour:2001sa,Damour:2002et,Henneaux:2007ej}. In this work, we consider a larger class of Lorentzian Kac-Moody algebras obtained by extending the Dynkin diagram of finite Lie algebra (displayed in Figure \ref{fig:liealgebras}) by adjoining additional nodes such that the associated Cartan matrix becomes indefinite \cite{Gaberdiel:2002db}. 

In this section, we will first review the case of affine Kac-Moody algebras which constitutes the first crucial step of the extension of finite dimensional Lie algebras. Then we will explain how we get over-extented and very-extended Kac-Moody algebras obtained by adding respectively one and two nodes to the affine extension. Note that in this section, $\mf g$ will denote a finite Lie algebra.

\subsection{Affine Kac-Moody algebras}\label{subsec:affine}

To every finite Lie algebra $\mf g$, we associate an affine extension $\mf g^{+}$ by adding to the Dynkin diagram of $\mf g$ an extra node $\alpha_0$ related to the highest root $\theta$.\footnote{We discuss here only the non-twisted affine extension.} The introduction of this particular simple root has the immediate effect of making the root system of $\agp$ infinite dimensional. We will briefly review how affine Lie algebras are obtained from simple Lie algebra.

We consider the generalization of $\mf g$ in which the elements of the algebra are Laurent polynomials in some variable $t$. This generalization is called the \emph{loop algebra} $\tilde{\mf g}$ and the Laurent mode expansion gives rise to an integer family of generators. Therefore if $\{J^a\}$ denotes a set of generators of the finite Lie algebra $\mf g$ satisfying the following commutations relations
\be
[J^a, J^b]= i\, f^{ab}_{\ \ c}\, J^c\, ,
\ee
then the generators $J^a_n$ of the loop algebra $\tilde{\ag}$ satisfy
\be
[J^a_n, J^b_m]= i\, f^{ab}_{\  \ c}\, J^c_{m+n}\, ,
\ee
where $a,b= 1, \ldots ,\mathrm{dim}\,  \mf g$ and $m, n \in \mbb Z$. This loop algebra becomes an \emph{affine Kac-Moody algebra} $\mf g^+$ by adjoining to $\tilde{\mf g}$ a \emph{central element} $c$
 and a \emph{derivation} $d$. 
 
 In fact an affine Kac-Moody algebra has a degenerate positive semi-definite Cartan matrix $A$ with one zero eigenvalue. This implies that 
the center (i.e. the set of the elements of the algebra which commute with all the algebra) is one-dimensional and is spanned by $c \subset \mf h$. Moreover, as the Cartan matrix $A$ is degenerate, the bilinear form as constructed in Section \ref{subsec:bilinear} is ill-defined. To resolve the difficulty, the algebra itself must be augmented by hand  by the addition of a new generator to the Cartan subalgebra $\mf h$, known as the derivation $d \subset \mf h$ such that $<\alpha_i, d>= \delta_{i0}, \ \ i=0, 1, \ldots n$. Therefore, a non-degenerate bilinear form now exists with the following properties
\be
(c|h_i)=0, \quad (c|c)= 0, \quad (c|d)=1.
\ee
The non-degeneracy of the bilinear form on $\mf h$ follows from the non-vanishing scalar product between the central element $c$ and the derivation $d$.

With the addition of these generators $c$ and $d$, the resulting algebra denoted by $\mf g^+$ can be decomposed as
\be
\mf g^+= \tilde{\mf g} \oplus \mbb C  c \oplus \mbb C d.
\ee
It turns out that the Dynkin diagram for the affine Kac-Moody algebras $\mf g^+$ is simply the one for $\mf g$ with one additional node connected to the Dynkin diagram of $\mf g$. This new node is related to the \emph{affine root} $\alpha_0$ which has the following form
\be\label{affineroot}
\alpha_0= \delta- \theta,
\ee
where $\delta$ is a null root $(\delta | \delta)=0$ and $\theta$ is the highest root of the finite subalgebra $\mf g \subset \mf g^+$. The introduction of the affine root $\alpha_0$ will render the algebra $\ag^+$ infinite-dimensional and will allow the apparition of imaginary roots (which do not exist in the finite case). The root system $\Delta$ of $\agp$ will split into its real and imaginary parts:      
\be
\Delta (\agp)= \Delta^{re}(\agp) \cup \Delta^{im}(\agp)\, 
\ee    
with
\be \begin{split}
\Delta^{re}(\agp) &= \{ \alpha +n\,\delta\ |\ \forall \alpha \in \Delta(\ag); \, n \in \mbb Z \},\\
\Delta^{im}(\agp) &= \{  n \, \delta , \, n \in \mbb Z \}.
\end{split}\ee
We will see in the former section how the affine root $\alpha_0$ connects to the Dynkin diagram of the finite algebra $\ag$.

\subsection{The extension process}

We will now explain the process of extension of finite Lie algebra $\ag $ step by step to obtain first affine, then over-extended and finally very-extended Kac-Moody algebras. 

Let $\Pi= \{\alpha_1, \ldots \alpha_n\}$ be a basis of simple roots for the finite Lie algebra $\mf g(A)$ of rank $n$. The associate root lattice $Q= \sum_{i=1}^n\mbb Z \alpha_i$ is Euclidean. We take $\mbb Z^{1,1}$ the  even two-dimensional unimodular Lorentzian lattice spanned by the vectors $u_1$ and $u_2$. The non-degenerate bilinear form of signature $(-,+)$ is defined on $\mbb Z^{1,1}$ by 
\be \label{scalar156}
(u_1|u_2)=1, \quad (u_1|u_1)=0, \quad (u_2|u_2)=0.
\ee
where these scalar products are induced from the standard Minkowski metric on $\mbb R^{1,1}$, taking $u_1= (1,0)$ and $u_2=(0,-1)$.

\subsubsection{The affine extension}

 We can now extend the Dynkin diagram of $\mf g$ with one node to obtain the Dynkin diagram of the affine algebra $\mf g^+$. This step was already discussed briefly in Section \ref{subsec:affine} and we recall that we obtained the affine Kac-Moody algebra $\mf g^{+}$ by augmenting the set of simple roots with the affine root $\alpha_0$ defined as
\be
\alpha_{0}\equiv u_1- \theta,
\ee
where $u_1 \in \mbb Z^{1,1}$ corresponds to the null root $\delta$ (see \eqref{affineroot}). We have that 
\be
(\alpha_i|u_1)=0, \quad i=1, \ldots,n.
\ee
implying that the scalar product between $\alpha_0$ and any simple root $\alpha_i$ of the finite algebra 
$\mf g$ is
\be \label{affinconec}
(\alpha_0| \alpha_i)= - (\theta| \alpha_i).
\ee
From \eqref{affinconec}, we can understand that the manner  to connect the affine root $\alpha_0$ to the other nodes of the Dynkin diagram of $\mf g$ depends of the highest root $\theta$. In fact, one can see in Figure \ref{fig:dynkinkmalgebras} that the affine root $\alpha_0$ connects differently depending on the finite algebra $\ag$.

 Defining new indices $I,J=(0,i)$, we can write the Cartan matrix of $\mf g^+$ in terms of scalars products between the new simple roots $\Pi^+= \{\alpha_0, \alpha_1, \ldots, \alpha_n\}$ as follows
\be
A^{+}_{IJ}= 2\, \frac{(\alpha_I|\alpha_J)}{(\alpha_I|\alpha_I)}\cvp
\ee


\begin{figure}[t]
\begin{center}
{\scalebox{0.17}
{\includegraphics{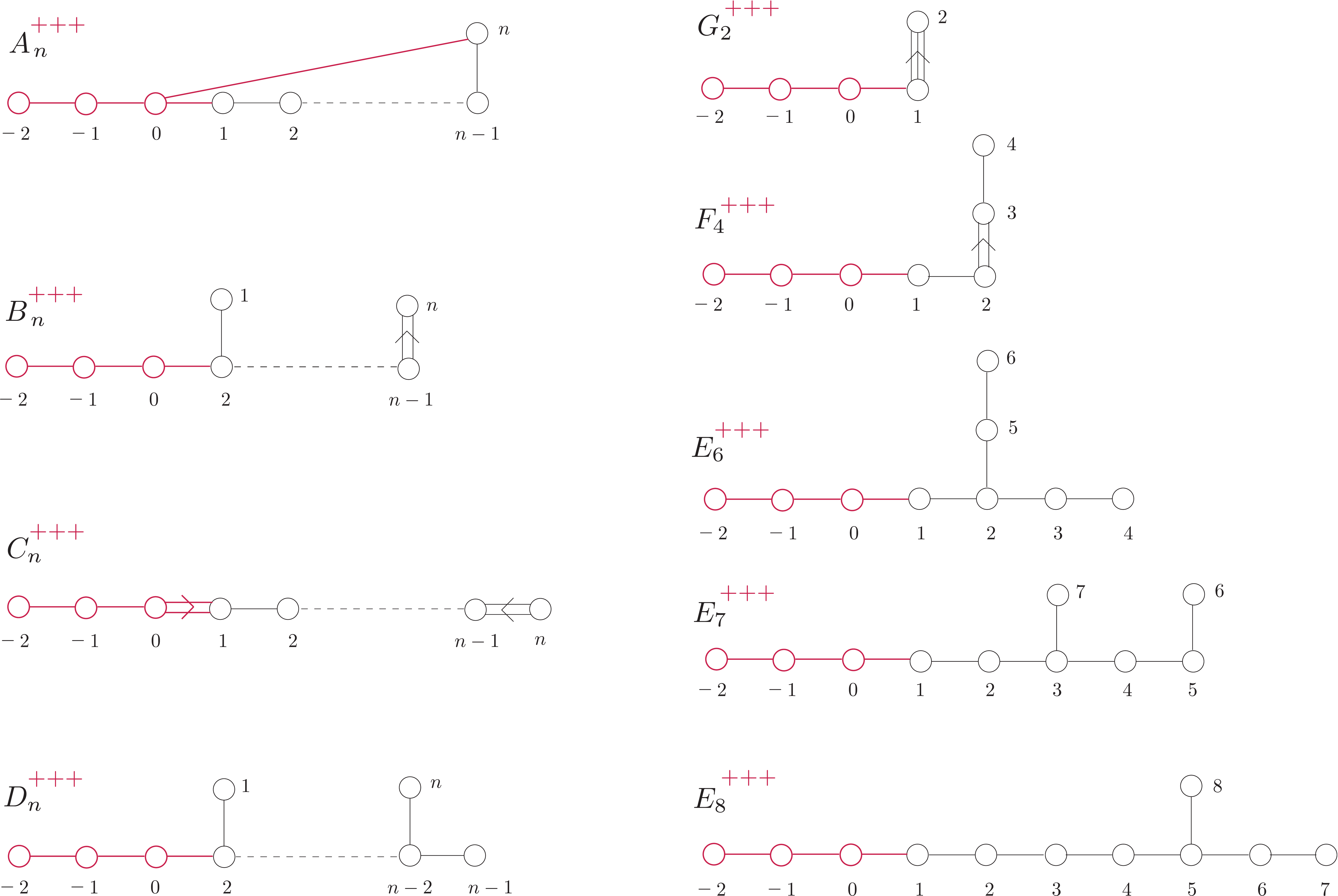}}}
\end{center}
\caption{ \small \textsl{Dynkin diagrams of extensions of finite simple Lie algebras. The nodes labeled by $0$ represent affine roots, nodes labeled by $-1$ represent over-extended roots and by $-2$ very-extended roots.}} \label{fig:dynkinkmalgebras}
\end{figure}

\subsubsection{The over-extension}
In order to obtain a double extension, we will include the second basis vector $u_2 \in \mbb Z^{1,1}$ and adding a new simple root $\alpha_{-1}$ known as the\emph{ over-extended root} defined by
\be
\alpha_{-1} \equiv -u_1-u_2\, ,
\ee
which has non-vanishing scalar product only with $\alpha_0$:
\be
(\alpha_{-1}|\alpha_0)=-1, \quad (\alpha_{-1}|\alpha_i)=0, \quad i=1, \ldots n.
\ee
As we have $(\alpha_{-1}|\alpha_{-1})=(\alpha_{0}|\alpha_{0}) =2$, the node associated to $\alpha_{-1}$ will always attach by a single line to the affine node $\alpha_0$. We can see this statement in the Figure \ref{fig:dynkinkmalgebras}. As before, we define new indices $M,N=(-1,0,i)$ and we write the Cartan matrix of $\agpp$ in terms of scalars products between the new simple roots $\Pi^{++}= \{\alpha_{-1}, \alpha_0, \ldots, \alpha_n\}$ as follows
\be
A^{++}_{MN}= 2\, \frac{(\alpha_M|\alpha_N)}{(\alpha_M|\alpha_M)}\cvp
\ee

\subsubsection{The very-extension}

In order to obtain a triple extension of the finite Lie algebra $\ag$, we introduce another two-dimensional lattice $\tilde{\mbb Z}^{1,1}$ spanned by the basis vectors $v_1$ and $v_2$ which  have the same scalar product as for $\mbb Z^{1,1}$ (see \eqref{scalar156} ). To ensure that the Lorentzian signature of the extended Kac-Moody algebra is preserved we have to introduce a spacelike vector of $\tilde{\mbb Z}^{1,1}$. Therefore, we will include $v_1+v_2$ because $(v_1+v_2| v_1+v_2)= 2$. We can now introduce the new  simple root known as \emph{very-extended root} as
\be
\alpha_{-2}= u_1 -(v_1+v_2).
\ee
Again this root is spacelike $(\alpha_{-2}|\alpha_{-2})=2$ and the associated node in the Dynkin diagram connects always with a single line to the node corresponding to the over-extended root $\alpha_{-1}$ (see Figure \ref{fig:dynkinkmalgebras}) by virtue of the following scalar product
\be
(\alpha_{-2}|\alpha_{-1})=-1, \quad (\alpha_{-2}|\alpha_{I})=0, \quad I=0, 1\ldots n.
\ee
Introducing the new indices $A,B=(-2,-1,0,i)$ we can once again define the Cartan matrix of $\agppp$ in terms of scalar product between the simple roots ${\Pi^{+++}=\{\alpha_{-2},\,  \alpha_{-1}, \, \alpha_{0},} \linebreak[1] {  \,  \alpha_1, \ldots ,  \alpha_n \} }$ as follows
\be
A^{+++}_{AB}= 2\, \frac{(\alpha_A|\alpha_B)}{(\alpha_A|\alpha_A)} \cvp
\ee
\section{Real forms of complex Kac-Moody algebras}\label{sec:realkm}
\setcounter{equation}{0}

In the previous sections, we studied Kac-Moody algebras over the field $\mbb C$ of complex numbers. But later, we are going to deal with Kac-Moody algebras defined on real numbers. One of the real forms $\ag_r$ of the complex algebra $\ag$ is trivially obtained such that  $\ag_r$ possesses the same Chevalley-Serre presentation as the complex algebra $\ag$ but with coefficients restricted to be real numbers. Other real forms of complex Kac-Moody algebras $\ag$ are obtained in a more complicate way.

We will present in this section a quick recap of the theory of real forms of Kac-Moody algebras insisting only on the tools necessary for the Part III. We refer the reader to \cite{Henneaux:2007ej,Helgason:1978} for more details on the theory of real forms of complex semi-simple Lie algebras and to \cite{Rousseau1989,Rousseau1995,BenMessaoud,Tripathy} for informations on real forms associated to infinite-dimensional Kac-Moody algebras.

\subsection{Definitions}

If $\ag$ is a complex Kac-Moody algebra, the subalgebra $\ag_{r}$ of $\ag$ is a \emph{real form} of $\ag$ if $\ag$ is the complexification of $\ag_{r}$. Therefore the complex Kac-Moody algebra $\ag$ can be decomposed when considered as a real algebra as
\be
\ag^{\mbb R}= \ag_{r} \oplus i\, \ag_{r}.
\ee
In other words, a real form of a complex algebra exists if and only if we may choose a basis of the complex algebra such that all the structure constants become real. Such a real form $\ag_r$ determines a mapping $\sigma: \ag \rightarrow \ag$: $x+iy \rightarrow x-iy, \ (x, y \in \ag_r)$ which has the following properties:
\begin{itemize}
\item $\sigma$ is semilinear, i.e. $\sigma(\lambda\,  x+ \rho\,  y )= \bar{\lambda}\,  \sigma(x) + \bar{\rho} \, \sigma(y)$ for $x, y \in \ag$ and $\lambda, \rho \in \mbb C$,
\item $\sigma$ is an involution i.e. $\sigma^2=1$,
\item $\sigma [x,y]= [\sigma(x), \sigma(y)]$ for $x, y \in \ag$.
\end{itemize}
A map $\sigma:\ag \rightarrow \ag $ with these properties is a bijection called \emph{semilinear involution} of $\ag$. Conversely any semilinear involution $\sigma$ of $\ag$ determines a real subalgebra $\ag_r$, known as a real form of $\ag$,  which corresponds to the fixed point subalgebra of $\sigma$:
\be \label{realformdef}
\ag_r= \{x \in \ag \,| \, \sigma\, (x)= x  \}. 
\ee
Therefore, on $\ag$, real forms and  the semilinear involution $\sigma$ are in one-to-one correspondence.

Let us focus first on complex finite Lie algebras. In this case there are two particular real forms associated to these algebras:  the \emph{split real forms} and the \emph{compact real forms}. To define them, we will complete the Chevalley generators defined in Section \ref{subsec:chevalley} into a full basis, the so-called \emph{Cartan-Weyl} basis. Let $\ag$ be the complex Lie algebra, $\mf h$ a Cartan subalgebra of $\ag$ and $\Delta$ the set of roots. It is possible to choose root vectors $x_{\alpha} \in \ag_{\alpha}$ for each $\alpha \in \Delta$ such that the following conditions hold for all $\alpha, \beta \in \Delta$
\be \label{cartanweylbas} \begin{split}
[h, x_{\alpha}]&= \alpha(h)\, x_{\alpha},\\
[x_{\alpha},  x_{\beta}]&= \Bigg{ \{ }\begin{matrix} h_{\alpha}&  \mathrm{if}\  \alpha +\beta=0, \\ N_{\alpha, \beta}\, x_{\alpha+ \beta}&  \mathrm{if}\ \alpha+\beta \in \Delta, \\ 0 & \ \mathrm{if}\ \alpha+\beta \notin \Delta, \end{matrix} 
\end{split}
\ee
and the constants $N_{\alpha, \beta}$ satisfy
\be \begin{split} 
N_{\alpha, \beta}&= - N_{-\alpha,- \beta} = - N_{  \beta, \alpha} ,\\
N_{\alpha, \beta}^{2}&= \frac{1}{2}\, q\, (p+1) (\alpha| \alpha),
\end{split}
\ee
where $p$ and $q$ are non negative integers such that the string of all vectors $\beta +n \alpha$ belongs to  $\Delta$ for $-p\leqslant n\leqslant q$.

We obtain a real form by defining
\be \label{cartanrealf}
\mf h_r= \{h \in \mf h\, |\, \alpha(h) \in \mbb R \ \forall \alpha \in \Delta \,\},
\ee
therefore the structure constants of the commutation relations \eqref{cartanweylbas} may all chosen real and we get a real Lie algebra $\mf s_r$ called split real form which reads explicitly
\be
\mf s_r= \mf h_r \oplus \bigoplus_{\alpha \in \Delta} \mbb R \, x_{\alpha}.
\ee

Another special real form which exists for every complex semi-simple Lie algebra is called \emph{compact real form}. A compact real form is a real form that is compact Lie algebra i.e. if the analytic group Int $\ag$ of inner automorphism is compact. Moreover a theorem states that  if the Killing form on $\mf u_r$ is negative definite, $\mf u_r$ is called  compact real form of $\mf g$. We construct it from the split real form as
\be\label{compactrealf}
\mf u_r= \bigoplus_{\alpha \in \Delta} \mbb R\, i \,h_{\alpha} \oplus \bigoplus_{\alpha \in \Delta} \mbb R \, (x_{\alpha} - x_{-\alpha}) \oplus \bigoplus_{\alpha \in \Delta}\mbb R\,  i \, (x_{\alpha} + x_{-\alpha}).
\ee 

The theory of real forms of complex Kac-Moody algebras generalizes the theory of real semi-simple Lie algebra. But there is  one important difference between them.  The non-conjugacy of the two Borel algebra $\mf b_+$ and $\mf b_-$ yields in the infinite-dimensional case to two different classes of real forms. Indeed, the standard upper triangular and lower triangular Borel subalgebras, $\mf b_+$ and $\mf b_-$, cannot be conjugated into one another \cite{PetersonKac} and depending on whether the semi involution $\sigma$ fixing the real form maps $\mf b_+ \rightarrow \mf b_+$ or $\mf b_- \rightarrow \mf b_-$ the real form is called almost split or almost compact \cite{Rousseau1989,Rousseau1995,BenMessaoud}. Almost split algebras are under better control.

\subsection{Cartan Involution and Cartan decomposition}

An involution $\theta$ of a real semi-simple Lie algebra $\mf g_r$ such that the symmetric bilinear form $B_{\theta}$ defined by $B_{\theta}(x, y) \equiv - B (x, \theta y) \quad \mathrm{for} \ x, y \in \mf g$ is positive definite, is called \emph{Cartan involution}. This involution defines an eigenspace decomposition in an eigenspace $\mf k_r$ to the eigenvalue $+1$ and an eigenspace $\mf p_r$ to the eigenvalue $-1$. The direct sum  decomposition as vector spaces which reads
\be\label{cartandecreal}
\mf g_r= \mf k_r \oplus \mf p_r,
\ee
is called \emph{Cartan decomposition} because the following bracket laws are satisfied (as in the complex case see \eqref{symmetricspace})
\be
 [\mf p_r, \mf p_r] \subset \mf k_r, \quad [\mf k_r, \mf p_r] \subset \mf p_r, \quad [\mf k_r, \mf k_r] \subset \mf k_r, 
\ee
and the bilinear form on $\ag_r$ is negative definite on $\mf k_r$ and positive definite on $\mf p_r$. We will see in Section \ref{subsec:classrealfo} that the involution $\theta$ can be related to the semi-involution $\sigma$.
\subsection{Restricted roots  and Iwasawa decomposition }\label{subsec:restiwa}

Let $\mf a$ be a maximal abelian subspace of $\mf p_r$. The set $\{ \mathrm{ad} h \,|\, h\in \mf a      \}$ is a commuting set of symmetric transformations that can be simultaneously diagonalized on $\mbb R$. Accordingly we may decompose $\mf g_r$ into a direct sum of eigenspaces labelled by elements of the dual space $\mf a^*$:
\be
\mf g_r = \bigoplus_{\lambda} \mf g_{\lambda}, \quad \mf g_{\lambda}= \{x \in \mf g_r\, |\, \mathrm{ad}h(x)= [h,x]= \lambda (h)\, x\}.
\ee
If $\ag_{\lambda} \neq 0$ and $\lambda \neq 0$, the label $\lambda$ is called \emph{restricted root} of $\ag_r$ and $\ag_{\lambda}$ defines \emph{the restricted root space} with its elements called \emph{restricted root vectors}. The set of restricted roots is denoted by $\Sigma$. Restricted roots and root spaces obey to the following properties
\be \begin{split}
&\ \bullet \ag_r = \ag_0 \oplus \bigoplus_{\lambda \in \Sigma}  \ag_{\lambda} \ \mathrm{is\ an\ orthogonal\ direct\ sum}, \\
&\ \bullet [\ag_{\lambda}, \ag_{\mu}] \subseteq \ag_{\lambda + \mu}, \\
& \ \bullet \theta \ag_{\lambda}= \ag_{- \lambda}, \\
& \ \bullet \lambda \in \Sigma \Rightarrow - \lambda \in \Sigma\, \\
&\ \bullet \ag_0 = \mf a\oplus \mf m    \ \mathrm{where} \ \mf m=\{ x \in \mf k_r  |\, [x,y]=0, \ \forall y \in \mf a           \}.
\end{split}
\ee
If $\mf t$ is a maximal abelian subalgebra of $\mf m$, the subalgebra $\mf h_r= \mf a \oplus \mf t$ is a Cartan subalgebra\footnote{$\mf h_r$ is Cartan subalgebra of the real form $\mf g_r$ iff its complexification is  a Cartan subalgebra of the complex algebra $\mf g$.} of the real algebra $\mf g_r$. If $\mf m =0$, then $\ag_r$ is a split real form of $\ag$. 

As for all root systems, we can choose a notion of positivity on $\mf a^*$ and find a way to split the restricted root system $\Sigma$ into positive and negative ones.  Let us consider $\Sigma^+$, the set of positive restricted roots, we can define the nilpotent subalgebra $\mf n_r$ by
\be
\mf n_r = \bigoplus_{\lambda \in \Sigma^+} \ag_{\lambda}.
\ee
Therefore the algebraic Iwasawa decomposition for the real algebra $\ag_r$ becomes
\be\label{Iwasawadecreal}
\mf g_r = \mf k_r \oplus \mf a \oplus \mf n_r.
\ee 

\subsection{Classification of the real forms and Tits-Satake diagrams} \label{subsec:classrealfo}
An interesting aspect in classifying real Kac-Moody algebras is the conjugacy classes of their Cartan subalgebras. In fact, for real Kac-Moody algebras, the Cartan subalgebras are not all conjugate to each other but each Cartan subalgebra $\mf h_r$ is conjugate via Int $\ag_r$ to a $\theta$-\emph{stable Cartan subalgebra} which is a Cartan subalgebra $\mf h_r$ of $\mf g_r$ such that $\theta(\mf h_r) \subset \mf h_r$. Therefore, it is sufficient to focus on $\theta$-stable Cartan subalgebras to study and classify the real Kac-Moody algebras. Let $\mf h_r$ a $\theta$-stable Cartan subalgebra of $\ag_r$, one can decompose $\mf h_r$ into compact and non-compact parts,
\be
\mf h_r= \mf t \oplus \mf a\quad  \mathrm{with} \quad \mf t= \mf h_r \cap \mf k_r, \ \mf a= \mf h_r \cap \mf p_r.
\ee
A $\theta$-stable Cartan subalgebra $\mf h_r$ is called \emph{maximally non compact} (resp. \emph{maximally compact}) if the dimension of its non-compact (compact) part $\mf a$ ($\mf t$) is as large as possible.\footnote{The Cartan subalgebra $\mf h_r$ introduced previously in Section \ref{subsec:restiwa} to define restricted roots, is a maximally non-compact $\theta$-stable subalgebra because its non-compact part $\mf a
$ was defined as the maximal abelian subspace of $\mf p_r$.} The \emph{real rank} of $\ag_r$ is the dimension of its maximally non-compact Cartan subalgebras.

If $\mf h_r$ is a $\theta$-stable Cartan subalgebra, its complexification $\mf h$ is also stable under the involutive automorphism $\theta$. It is possible to extend the action of $\theta$ from $\mf h$ to $\mf h^*$ by duality:
\be \label{thetaroot}
 \theta (\alpha)(h)= \alpha(\theta^{-1} (h))= \alpha(\theta (h)) \quad  \forall h \in \mf h \ \mathrm{and} \ \forall \alpha \in \mf h^*.
\ee
The  Cartan involution $\theta$ allows to define a closed subsystem\footnote{A system $\Delta$ is closed if $\alpha, \beta \in \Delta \Rightarrow -\alpha \ \mathrm{and} \ \alpha+\beta \in \Delta$. } $\Delta_0 \subset \Delta$
\be
\Delta_0= \{ \alpha \in \Delta \ | \ \theta (\alpha)= \alpha \}.
\ee
This condition is equivalent to ask that $\alpha$ vanish on $\mf a$, indeed if $x \in \mf a \Rightarrow \alpha (\theta(x)) = \alpha (-x)= - \alpha(x)$ and using \eqref{thetaroot} we get that $\alpha (x) = -\alpha(x)$ if $\theta (\alpha)= \alpha$. Then $\Delta_0$ project to zero when restricted to the maximally non-compact subalgebra.

 Therefore, from the knowledge of $\theta$, we may obtain the restricted root system by projecting the root space according to
\be \label{restalpha}
\Delta \rightarrow \bar{\Delta}: \alpha \rightarrow \bar{\alpha}= \frac{1}{2} (\alpha - \theta(\alpha)),
\ee
and restricting their action on $\mf a$ since $\alpha$ and $- \theta(\alpha)$ project on the same restricted root.

 Recall that we used Dynkin diagrams to classify some classes of complex Kac-Moody algebras, we will use something similar in the classification of real forms of Kac-Moody  algebras. One can use Vogan diagrams which rest on maximally compact $\theta$-stable subalgebra or in another way, one can use Tits-Satake diagrams that are based on maximally non-compact $\theta$-stable subalgebra.
 As the Iwasawa decomposition \eqref{Iwasawadecreal}, that we will intensely use in  this work, emphasizes the role of the maximally non-compact Cartan subalgebra $\mf a$, we will describe the Tits-Satake diagrams.
 
 A \emph{Tits-Satake diagram} of a real Kac-Moody algebra $\ag_r$  consists in the Dynkin diagram of the complex Kac-Moody algebra $\ag$ with  additional informations:
 \begin{itemize}
 \item Nodes painted in black $\bullet$ are associated to simple roots invariant under the Cartan involution $\theta$: $\theta(\alpha_i) = \alpha_i$. Therefore the Tits-Satake diagrams of compact real forms $\mf u_r$ are simply obtained by painting in black all the roots of the standard Dynkin diagram.
 \item White nodes $\circ$ stand for simple roots not invariant under the involution $\theta$.
 \item  The basis of simple roots $\Pi$ of the complex algebra $\ag$ can always be split into two subsets: $\Pi_0= \{\alpha_{r+1}\ldots \alpha_n\}= \Pi \cap \Delta_0$ whose elements are fixed by $\theta$ (corresponding to black nodes) and $\Pi \backslash \Pi_0= \{ \alpha_1, \ldots \alpha_r\} $ (corresponding to white nodes) such that
 \be \label{thetatits}
 \forall \alpha_k \in \Pi \backslash \Pi_0: \ \theta(\alpha_k)= -\alpha_{\pi(k)} -  \sum_{j=r+1}^{n} a_k^j \, \alpha_j,
 \ee
 where $\pi$ is an involutive permutation of the $r$ indices of the elements $\Pi \backslash \Pi_0$ and $a_k^j$ are non-negative integers which are obtained by solving the linear system given by 
 \be
 (\theta(\alpha_k)+ \alpha_{\pi(k)}\, |\, \alpha_q)= \sum_{j=r+1}^n a_k^j (\alpha_j | \alpha_q).
 \ee
  If $\pi(k)= i$, then a double arrow connects $\alpha_i$ to $\alpha_k$ 
\begin{figure}[ht]  
\begin{center}
\begin{pgfpicture}{0cm}{0cm}{1cm}{1cm}

\pgfnodecircle{Node1}[stroke]{\pgfxy(0,0)}{0.1cm}

\pgfnodecircle{Node2}[stroke]{\pgfxy(1.5,0)}{0.1cm}

\pgfnodebox{Node6}[virtual]{\pgfxy(0,-0.4)}{$\alpha_{i}$}{2pt}{2pt}
\pgfnodebox{Node7}[virtual]{\pgfxy(1.5,-0.4)}{$\alpha_{k}$}{2pt}{2pt}
\pgfline{\pgfxy(0.1,0)}{\pgfxy(0.3,0)} \pgfline{\pgfxy(0.4,0)}{\pgfxy(0.6,0)} \pgfline{\pgfxy(0.9,0)}{\pgfxy(1.1,0)} \pgfline{\pgfxy(1.2,0)}{\pgfxy(1.4,0)} \pgfline{\pgfxy(1.6,0)}{\pgfxy(1.8,0)}  \pgfline{\pgfxy(1.9,0)}{\pgfxy(2.1,0)} \pgfline{\pgfxy(-0.3,0)}{\pgfxy(-0.1,0)} \pgfline{\pgfxy(-0.6,0)}{\pgfxy(-0.4,0)}

\pgfsetstartarrow{\pgfarrowtriangle{4pt}}
\pgfsetendarrow{\pgfarrowtriangle{4pt}}
\pgfnodesetsepend{5pt}
\pgfnodesetsepstart{5pt}
\pgfnodeconncurve{Node2}{Node1}{100}{80}{1cm}{1cm}
\end{pgfpicture} 
\end{center} 
\end{figure}
meaning that
$\alpha_i- \theta(\alpha_i)= \alpha_k- \theta (\alpha_k)$.
 Accordingly to \eqref{restalpha}, this condition is equivalent to ask that $\alpha_i$ and $\alpha_k$ take the same value on $\mf a$ and then project on the same restricted root $\bar{\alpha_i}= \bar{\alpha_k} $.
 \end{itemize} 
 The Tits-Satake diagram determines completely a real form. In fact, from this diagram one can extract the action of $\theta$ on the simple roots. In order to determine the real form $\ag_r$, the action of $\sigma$ on the generators is needed because the real form corresponds to the fixed point subalgebra of $\sigma$ (see \eqref{realformdef}). It can shown that  $\sigma= \theta . \tau$ where $\tau$ is the  involution fixing the maximally compact real form $\mf u_r$ given by \eqref{compactrealf}. Note that  there is a theorem that states that given a real form $\ag_r$ of the complex algebra $\ag$, there   exists always a compact real form $\mf u_r$ associated with it such that the corresponding involution $\tau$ commutes with $\sigma$.
 
 Let $h_{\alpha}$ and $x_{\alpha}$ be respectively the Cartan and  the non zero vector associated with the root $\alpha$, it is possible to extend the action of the $\theta$ involution from $\mf h^*$ to the whole algebra using \eqref{cartanweylbas}  by
 \be \label{thetagenreal}\begin{split}
 \theta(h_{\alpha})&= h_{\theta(\alpha)},\\
 \theta(x_{\alpha})&= \rho_{\alpha}\, x_{\theta (\alpha)},
 \end{split}\ee
 where the constant $\rho_{\alpha}$ satisfies the following relations
 \be
 \rho_{\alpha} \rho_{\theta(\alpha)} =1, \quad \rho_{\alpha} \rho_{-\alpha}=1, \quad \rho_{\alpha} \rho_{\beta} N_{\theta(\alpha), \theta(\beta)}= \rho_{\alpha+ \beta}\, N_{\alpha, \beta}, \quad \rho_{\gamma}= \pm 1 \ (\mathrm{if} \ \gamma \in \Delta_0).
 \ee
 The commutativity of $\sigma$ and $\tau$ implies that
 \be \label{sigmawrttheta}  \begin{split}
 \sigma (\alpha)&= - \theta(\alpha)\\
 \sigma (h_{\alpha})&= h_{\sigma (\alpha)}\\
 \sigma (x_{\alpha})&= - \bar{\rho}_{\alpha}\, x_{\sigma(\alpha)}.
 \end{split} \ee
 
 All the finite real semi-simple Lie algebras have been classified \cite{Araki,Helgason:1978}. The Tits-Satake diagrams of all affine Kac-Moody algebras can be found in \cite{Tripathy} and a complete list of almost split real forms for hyperbolic Kac-Moody algebras is established in \cite{BenMessaoud}.
 
 Let us enclose the theory of real form by defining a notation to design a real form related to a complex algebra. To this end, we define the \emph{character} $\Xi$ of a real form as the number of non-compact generator minus the number of compact generators, namely
 \be
 \Xi = \mathrm{dim}\, \mf{g}_r - 2\, \mathrm{dim}\, \mf{k}_r.
 \ee
 Let $X_{n}$ the complex algebra of rank $n$, then we write $X_{n(n)}$ for  the associated split real form and $X_{n(\Xi)}$ for the other real forms   where the number in bracket is the character of the real form. For instance, $E_{8(8)}$ denotes the split real form associated to the complex Lie algebra $E_8$. 
 
\subsection{$\asu$: a non-split real form of $A_2$ and its extension $\asuppp$}\label{subsec:su21andextension}

Lets us now illustrate the theory of real forms by studying the real algebra $\asu$ which is one of the real forms of the complex algebra $A_2=\mf{sl}(3, \mbb C)$. In the following, this algebra and its associated group $\su$ will play an important role since $\su$ is the global symmetry group of Einstein-Maxwell theory in the presence of a Killing vector \cite{Kinnersley}.  

After reviewing basics features of $\mathfrak{sl}(3, \mbb C)$ and its different real forms, we  will fix the real algebra $\asu$ from its Tits-Satake diagram. Then we will give a complete list of the generators belonging to $\asu$ and to its maximal compact part $\mf k_r= \mf{su}(2) \oplus \mf u(1)$. We will also investigate the restricted root of $\asu$. Finally, we will discuss the very-extended real algebra $\asuppp$.\\
This section is based on \cite{Houart:2009ed}.

 \subsubsection{Generalities on the complex algebra $\boldsymbol{A_2=\mf{sl}(3, \mbb C)}$}
The complex algebra $A_2=\mf{sl}(3, \mbb C)$ is described by the following Cartan matrix:
\be
A [A_2]=(A_{ij})_{i,j =4,5} =  \left(\begin{array}{cc}2 & -1 \\-1 & 2\end{array}\right)\, 
\ee
corresponding to the Dynkin diagram displayed in Figure \ref{figa:su21}a. The rank of this algebra is $2$ and its Chevalley generators $\{h_i, e_i, f_i\}_{ i=4,5}$ satisfy the Chevalley relations \eqref{eqn:chevalley}. Note that the labeling of these generators is chosen to leave room for the extension of $A_2$ to the very-extended Kac-Moody algebra $A_2^{+++}$ done at the end of this section. Using the Serre relations \eqref{eqn:serre}, we get
\be
(\mathrm{ad}\, e_4)^2 e_5=0, \quad (\mathrm{ad}\, f_4)^2 f_5=0.
\ee 
This means that $[e_4, [e_4,e_5]]=0$ and $[f_4, [f_4,f_5]]=0$ and the only multicommutators which survive to the Serre relations are 
\be
e_{4,5}= [e_4,e_5], \quad f_{4,5}= - [f_4,f_5].
\ee
This algebra is then eight-dimensional and according to \eqref{eqn:triangular}, it can be decomposed as
 \be
 \mathfrak{sl} (3,\mathbb{C})= \bigg(\bigoplus_{k=4}^5 \mathbb{C}\, f_k \oplus \mathbb{C} \,f_{4,5}\bigg) \oplus \bigoplus_{k=4}^5 \mathbb{C}\, h_k \oplus  \bigg( \bigoplus_{k=4}^5 \mathbb{C}\, e_k \oplus \mathbb{C} \, e_{4,5} \bigg),
 \ee
 where the eight generators $\{h_i, e_i, f_i\}$ have the following matrix realization in the fundamental representation
 \begin{align} \label{eqn:chevgenA2}
 &h_4 = \left(\begin{array}{ccc}1 & 0 & 0 \\0 & -1 & 0 \\0 & 0 & 0\end{array}\right),  &&h_5= \left(\begin{array}{ccc}0 & 0 & 0 \\0 & 1 & 0 \\0 & 0 & -1\end{array}\right),  \\
 &e_4= \left(\begin{array}{ccc}0 & 1 & 0 \\0 & 0 & 0 \\0 & 0 & 0\end{array}\right), &&e_5=\left(\begin{array}{ccc}0 & 0 & 0 \\0 & 0 & 1 \\0 & 0 & 0\end{array}\right), \quad e_{4,5}= [e_4, e_5]= \left(\begin{array}{ccc}0 & 0 & 1 \\0 & 0 & 0 \\0 & 0 & 0\end{array}\right) \nn,
 \end{align}
 as well as 
 \be
 f_i=(e_i)^{T}. 
\ee 
It easy to check that these generators satisfy the Chevalley-Serre relations \eqref{eqn:chevalley} and \eqref{eqn:serre}.

Let us now focus on the roots of $A_2$. This algebra have two simple roots $\alpha_4$ and $\alpha_5$ and using \eqref{componentsiroot} we get the components of the simple roots vectors $\vec{\alpha}_i=(\alpha_i(h_4), \alpha_i(h_5))$
\be
\vec{\alpha_4}= (2, -1), \quad  \vec{\alpha_5}= (-1, 2).
\ee
In addition to the roots $\pm \alpha_4$ and $\pm \alpha_5$, the algebra $A_2$ have also the roots $\pm (\alpha_4+ \alpha_5)$ as they correspond to the generators $e_{4,5}$ and $f_{4,5}$ which survived to the Serre relations. The whole root system $\Delta$ of $A_2$ is then spanned by $\{\pm \alpha_4, \pm \alpha_5, \pm (\alpha_4 + \alpha_5)\}$ and it is displayed in Figure \ref{fig:restrictedrootdiag}.

Note that the complex Lie algebra $\mf{sl}(3,\mbb C)$ has three real forms:
\begin{itemize}
\item the split real form $\mf{sl}(3,\mbb R)$ obtained by restricting the base field to the real numbers. The corresponding Tits-Satake diagram is just the Dynkin diagram of $A_2$ displayed in Figure \ref{figa:su21}a;
\item the compact real form $\mf{su}(3)$. The associated semi-involution $\tau$ is defined as $\tau(X)=-X^{T}, \ X\in \mf{sl}(3,\mbb C)$ and the combinations of generators which are pointwise fixed by $\tau$ defines $\mf{su}(3)$ as 
\be \nn
\mf{su}(3)= \bigoplus_{k=4}^5 \mbb R (i h_k) \oplus \bigoplus_{k=4}^5 \mbb R(e_k-f_k) \oplus \mbb R(e_{4,5}-f_{4,5}) \oplus \bigoplus_{k=4}^5 \mbb R\, i\, (e_k+f_k) \oplus \mbb R\,  i\,  (e_{4,5}+f_{4,5})
\ee
The corresponding Tits-Satake diagram is the Dynkin diagram of $A_2$ where the two nodes are painted in black (see Figure \ref{figa:su21}b); 
\item the non-split real form $\asu$ whose Tits-Satake diagram is displayed in Figure \ref{figa:su21}c. 
\end{itemize}
\begin{figure}[ht]
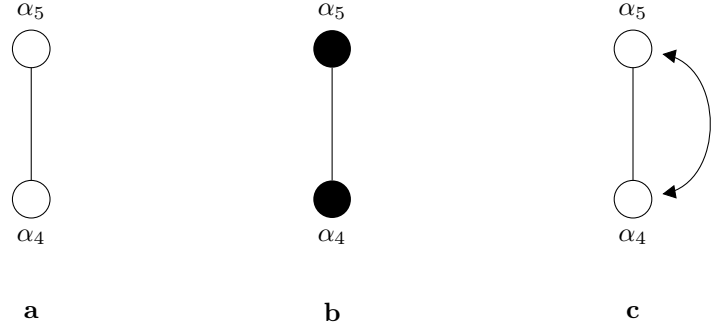

\begin{center}
\begin{pgfpicture}{0cm}{-1cm}{5cm}{4cm}

\pgfnodecircle{Node10}[stroke]{\pgfxy(-2,0.5)}{0.25cm}
\pgfnodecircle{Node20}[stroke]
{\pgfrelative{\pgfxy(0,2)}{\pgfnodecenter{Node10}}}{0.25cm}

\pgfnodebox{Node60}[virtual]{\pgfxy(-2,0)}{$\alpha_{4}$}{2pt}{2pt}
\pgfnodebox{Node70}[virtual]{\pgfxy(-2,3)}{$\alpha_{5}$}{2pt}{2pt}
\pgfnodeconnline{Node10}{Node20} 
\pgfnodebox{Node36}[virtual]{\pgfxy(-2,-1)}{\textbf{a}}{2pt}{2pt}

\pgfnodecircle{Node10}[fill]{\pgfxy(2,0.5)}{0.25cm}
\pgfnodecircle{Node20}[fill]
{\pgfrelative{\pgfxy(0,2)}{\pgfnodecenter{Node10}}}{0.25cm}

\pgfnodebox{Node60}[virtual]{\pgfxy(2,0)}{$\alpha_{4}$}{2pt}{2pt}
\pgfnodebox{Node70}[virtual]{\pgfxy(2,3)}{$\alpha_{5}$}{2pt}{2pt}
\pgfnodeconnline{Node10}{Node20} 
\pgfnodebox{Node36}[virtual]{\pgfxy(2,-1)}{\textbf{b}}{2pt}{2pt}

\pgfnodecircle{Node1}[stroke]{\pgfxy(6,0.5)}{0.25cm}
\pgfnodecircle{Node2}[stroke]
{\pgfrelative{\pgfxy(0,2)}{\pgfnodecenter{Node1}}}{0.25cm}

\pgfnodebox{Node6}[virtual]{\pgfxy(6,0)}{$\alpha_{4}$}{2pt}{2pt}
\pgfnodebox{Node7}[virtual]{\pgfxy(6,3)}{$\alpha_{5}$}{2pt}{2pt}
\pgfnodeconnline{Node1}{Node2} 

\pgfsetstartarrow{\pgfarrowtriangle{4pt}}
\pgfsetendarrow{\pgfarrowtriangle{4pt}}
\pgfnodesetsepend{5pt}
\pgfnodesetsepstart{5pt}
\pgfnodeconncurve{Node2}{Node1}{-10}{10}{1cm}{1cm}

\pgfnodebox{Node63}[virtual]{\pgfxy(6,-1)}{\textbf{c}}{2pt}{2pt}

\end{pgfpicture}
\caption {\label{figa:su21} \sl \small Tits-Satake diagrams of the real forms of $A_2$. \textbf{a}: Dynkin diagram of $A_2$ corresponding also to the Tits-Satake diagram of the split real form $\mf{sl}(3, \mbb R)$. \textbf{b}:  Tits-Satake diagram of the compact real form $\mf{su}(3)$. \textbf{c}:  Tits-Satake diagram of the non-split real from $\asu$. The labeling of nodes is chosen to leave room for the extension of  $A_2$ to the very-extended Kac-Moody algebra $A_2^{+++}$. } 
\end{center}
\end{figure}
\subsubsection{$\boldsymbol{\asu}$: definitions}
The real form $\asu$ is the Lie algebra of $3 \times 3$ complex traceless matrices X, subject to the constraint
\be
\eta \, X+ X^{\dag} \, \eta=0 \, ,
\ee
with
\be
\eta= \left(\begin{array}{ccc}0 & 0 & -1 \\0 & 1 & 0 \\-1 & 0 & 0\end{array}\right)\, .
\ee
The Tits-Satake diagram of $\asu$  displayed in Figure \ref{figa:su21}c determines completely this real form.
In fact, from this Tits-Satake diagram, we can extract the action of the involution $\theta$ on the simple roots of $A_2$. Following \eqref{thetatits}, the presence of the double arrow means that
\be
\alpha_4 -\theta(\alpha_4)= \alpha_5 -\theta(\alpha_5).
\ee
Since there are no black nodes, this implies
\be \label{thetaroots}
 \theta(\alpha_4) = - \alpha_5, \   \theta (\alpha_5) = - \alpha_4\, .
  \ee
This infers, using \eqref{sigmawrttheta}, that the action of the semi-involution $\sigma$ on the simple roots is
 \be \label{eqn:sigmaroots}
 \sigma (\alpha_4) = \alpha_5, \   \sigma (\alpha_5) = \alpha_4\, ,
  \ee
which implies that on the generators of $\mf{sl}(3, \mbb C)$ we have
\be \begin{split}
  \begin{aligned}\label{eqn:sigmagens}
\s (h_4)&=h_5, & \s (h_5)&= h_4,  \\
 \s (e_4)&=e_5, &  \s (e_5)&= e_4, & \s (e_{4,5})&=-  e_{4,5}\,, \\
 \s (f_4)&=f_5, & \s (f_5)&= f_4, & \s (f_{4,5})&=-  f_{4,5}\,. 
\end{aligned} \end{split} \ee
We choose according to \eqref{sigmawrttheta} that $\rho_{\alpha_4}= \rho_{\alpha_5}=-1$.
\noindent The generators of $\asu$ correspond to the ones which are fixed by $\sigma$ and they can be written in terms of the generators of $\mathfrak{sl}(3, \mathbb{C})$ (\ref{eqn:chevgenA2}) as

\be \label{eqn:su21gens}\begin{split} 
\begin{aligned}
\mathbf{h_4}&=h_4+ h_5 &&=  \left(\begin{array}{ccc}1 & 0 & 0 \\0 & 0 & 0 \\0 & 0 & -1\end{array}\right), & \mathbf{h_5}&= i (h_4-h_5)&&= \left(\begin{array}{ccc}i & 0 & 0 \\0 & -2 i & 0 \\0 & 0 & i\end{array}\right)\, , \\
\mathbf{e_4}&= e_4 +e_5&&= \left(\begin{array}{ccc}0 & 1 & 0 \\0 & 0 & 1 \\0 & 0 & 0\end{array}\right), & \mathbf{f_4}&= f_4 +f_5&&=\left(\begin{array}{ccc}0 & 0 & 0\\1 & 0 & 0 \\0 & 1 & 0\end{array}\right),  \\
\mathbf{e_5}&= i (e_4-e_5)&&= \left(\begin{array}{ccc}0 & i & 0 \\0 & 0 & -i \\0 & 0 & 0\end{array}\right) ,& \mathbf{f_5}&= i (f_4-f_5)&&= \left(\begin{array}{ccc}0 & 0 & 0 \\ i & 0 & 0 \\0 & -i & 0\end{array}\right) , \\
\mathbf{e_{4,5} }&=i e_{4,5}&&= \left(\begin{array}{ccc}0 & 0 & i \\0 & 0 & 0 \\0 & 0 & 0\end{array}\right)\, , & \mathbf{f_{4,5}}&=i f_{4,5}  &&=\left(\begin{array}{ccc}0 & 0 & 0 \\0 & 0 & 0 \\i & 0 & 0\end{array}\right), 
\end{aligned}
\end{split}
\ee
where $\mathbf{h_4}$, $\mathbf{h_5}$ are the generators of the Cartan subalgebra $\mathfrak{h}_r$,  $\mathbf{e_4},\, \mathbf{e_5}$ and $ \mathbf{e_{4,5} }$ are positive generators while $\mathbf{f_4},\,\mathbf{f_5}$ and $\mathbf{f_{4,5}}$ are negative ones. Note that we write here the generators of the real form in bold to distinguish them from the complex ones.

\subsubsection{The maximal compact subalgebra of $\boldsymbol{\asu}  : \ \boldsymbol{\mathfrak{k}_r= \mathfrak{su}(2)\oplus \mathfrak{u}(1)}$}
\label{app:k}
The  maximal compact subalgebra $\mf k_r$ of $\asu$ is given as the fixed point set under the Cartan involution $\theta$ 
\be
\mathfrak{k}_r = \big\{ x \in \asu : \, \theta(x)= x \big\}\, .
\ee
Using \eqref{thetagenreal}, we extend the action of $\theta$ involution on the simple roots \eqref{thetaroots} to the whole algebra
\be \begin{split}
  \begin{aligned}\label{eqn:thetagens}
\theta (h_4)&=- h_5, & \theta (h_5)&=-  h_4,  \\
\theta (e_4)&=-f_5, &  \theta (e_5)&= -f_4, & \theta (e_{4,5})&=  f_{4,5}\,, \\
\theta (f_4)&=-e_5, & \theta (f_5)&= -e_4, & \theta(f_{4,5})&=  e_{4,5}\,. 
\end{aligned} \end{split} \ee
On the Borel generators of $\asu$, this corresponds to
\be \begin{split} \begin{aligned} \label{eqn: invsu21}
\theta(\mathbf{h_4})&= -\, \mathbf{h_4},\ &  \theta(\mathbf{h_5})&= \mathbf{h_5}, \\ 
\theta(\mathbf{e_4})&= -\, \mathbf{f_4}, \ & \theta(\mathbf{e_5})&= \, \mathbf{f_5}, \ \theta(\mathbf{e_{4,5}})= \,\mathbf{f_{4,5}} .
\end{aligned} \end{split}
\ee
We find that the subalgebra $\mathfrak{k}_r = \mf{su}(2) \oplus \mf u(1)$ is generated by:
\be \begin{split}
\tilde{u}&= \frac{1}{2} \, (\mathbf{e_{4,5}}+\mathbf{f_{4,5}})+ \frac{1}{6} \mathbf{h_5}, \\
\tilde{t}_1&= \frac{1}{2} \, \big(\mathbf{h_5} - (\mathbf{e_{4,5}}+\mathbf{f_{4,5}})\big), \\
 \tilde{t}_2&= \frac{1}{\sqrt{2} } \, (\mathbf{e_4}-\mathbf{f_4}), \\
   \tilde{t}_3&= \frac{1}{\sqrt{2} }\,  (\mathbf{e_5}+\mathbf{f_5})\, ,
\end{split}
\ee
where $\tilde{u}$ is the $\mathfrak{u}(1)$ generator and the $\tilde{t}_i$ generate a $\mathfrak{su}(2)$ subalgebra.
According to \eqref{cartandecreal},  we get the following Cartan decomposition of $\asu$,
\be
\asu = \mathfrak{k}_r \oplus \mathfrak {p}_r= \mathfrak{su}(2) \oplus \mathfrak{u}(1) \oplus \mathfrak{p}_r\,,  
\ee
where $\mathfrak{p}_r$ is the subset of $\asu$ which is anti-invariant under $\theta$.

\subsubsection{The restricted root system of $\boldsymbol{\asu}$}
\label{app:restricted}

Let $\mathfrak{a}$ be the maximal abelian subalgebra of $\mathfrak{p}_r$ which can be diagonalized over $\mathbb R$. Then
\be
\mathfrak{a}= \mathfrak{p}_r \cap \mathfrak{h} = \mathbb{R}\, \mathbf{h_4} .
\ee

\noindent The eigenvalues under the adjoint action of $\bar{h}_4$ which is the only diagonalizable generator, are the following
\begin{align} \label{eqn:comute}
[\mathbf{h_4}, \mathbf{e_4}]&= \mathbf{e_4},\  && [\mathbf{h_4}, \mathbf{e_5}]= \mathbf{e_5} ,  &&[\mathbf{h_4}, \mathbf{e_{4,5}}]=2\,  \mathbf{e_{4,5}},\\  \label{eqn:comutf}
[\mathbf{h_4}, \mathbf{f_4}]&= - \mathbf{f_4}\ , \  &&[\mathbf{h_4}, \mathbf{f_5}]=-  \mathbf{f_5}\ , \ &&[\mathbf{h_4}, \mathbf{f_{4,5}}]=- 2\,  \mathbf{f_{4,5}}.
\end{align}
The generator $\mathbf{h_5}$ is not diagonalizable over $\mathbb{R}$. Indeed, we have the following commutations relations
\begin{align}
[\mathbf{h_5}, \mathbf{e_4}]&=3 \, \mathbf{e_5}, \ && [\mathbf{h_5}, \mathbf{e_5}]=-3 \, \mathbf{e_4}, \ &&[\mathbf{h_5}, \mathbf{e_{4,5}}]=0, \label{eqn:comuth5e}\\
[\mathbf{h_5}, \mathbf{f_4}]&= -3 \mathbf{f_5}, \  &&[\mathbf{h_5}, \mathbf{f_5}]=3\, \mathbf{f_4}, \ &&[\mathbf{h_5}, \mathbf{f_{4,5}}]=0. \label{eqn:comuth5f}
\end{align}
According to \eqref{eqn:comute} and \eqref{eqn:comutf}, we may decompose $\asu$ into a direct sum of eigenspaces labelled by elements of the dual space $\mathfrak{a}^*$
\be
\asu = \bigoplus_{\lambda} \mathfrak{g}_{\lambda} \ , \ \mathfrak{g}_{\lambda}=\{ x \in \asu :\forall h \in \mathfrak{a}, \mathrm{ad}h (x)= \lambda(h)\, x \}. 
\ee
The non-compact Cartan $\mathbf{h_4}$ then generates a $5$-grading
 of $\asu$ which is given by
 \be
 \asu= \mathfrak{g}_{(-2)} \oplus  \mathfrak{g}_{(-1)} \oplus \mathfrak{h}_r \oplus \mathfrak{g}_{(+1)} \oplus \mathfrak{g}_{(+2)}.
 \ee
\noindent One trivial subspace is $\mathfrak{h}_r$. The other nontrivial subspaces define the restricted root spaces of $\asu$ with respect to $\mathfrak{a}$ and the restricted roots are the $\lambda \in \mathfrak{a}^*$. It is now easy to determine the positive restricted root system $\sigma$ of $\asu$
\be \label{eqn:restricted}
\lambda_1(\mathbf{h_4})= 1 \quad, \quad \lambda_2(\mathbf{h_4})=  2\, \lambda_1= 2.
\ee 
Hence, the restricted root system displayed in Figure \ref{fig:restricted} consists of the restricted root $\lambda_1$ which has multiplicity $2$ and the highest reduced root $2 \lambda_1$, with multiplicity $1$. This can be identified with the non-reduced root system\footnote{The $(BC)_n$ root system is an irreducible non-reduced root system obtained by combining the root system of the algebra $B_n$ with the root system of the algebra $C_n$ such that the long roots of $B_n$ are the short roots of $C_n$. There are in this case three different root lengths. This root system is \emph{non reduced} because if $\alpha \in \Delta$, then it occurs in $\Delta$ an element of type $\pm 1/2 \alpha$ or $\pm 2 \alpha$. For \emph{reduced system} only $\pm \alpha$ can occur as multiple of $\alpha$ \cite{Henneaux:2007ej}.       } $(BC)_1$ \cite{Helgason:1978, HenneauxJulia,Henneaux:2007ej}.

\begin{figure}[ht]
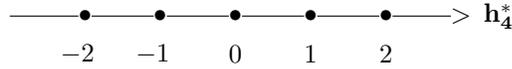

\begin{center}
\begin{pgfpicture}{5cm}{- 0.5cm}{1cm}{1.5cm}
\pgfnodebox{Node1}[virtual]{\pgfxy(1,0.5)}{$\bullet$}{0pt}{0pt}
\pgfnodebox{Node2}[virtual]
{\pgfrelative{\pgfxy(1,0)}{\pgfnodecenter{Node1}}}{$\bullet$}{0pt}{0pt}
\pgfnodebox{Node3}[virtual]
{\pgfrelative{\pgfxy(1,0)}{\pgfnodecenter{Node2}}}{$\bullet$}{0pt}{0pt}
\pgfnodebox{Node4}[virtual]
{\pgfrelative{\pgfxy(1,0)}{\pgfnodecenter{Node3}}}{$\bullet$}{0pt}{0pt}
\pgfnodebox{Node5}[virtual]
{\pgfrelative{\pgfxy(1,0)}{\pgfnodecenter{Node4}}}{$\bullet$}{0pt}{0pt}
\pgfnodebox{Node55}[virtual]
{\pgfrelative{\pgfxy(1,0)}{\pgfnodecenter{Node5}}}{$>$}{0pt}{0pt}
\pgfnodebox{Node0}[virtual]
{\pgfrelative{\pgfxy(-1,0)}{\pgfnodecenter{Node1}}}{}{0pt}{0pt}

\pgfnodebox{Node18}[virtual]{\pgfxy(6.5,0.5)}{$\mathbf{h_4^*}$}{0pt}{0pt}

\pgfnodebox{Node6}[virtual]{\pgfxy(0.9,0)}{$-2$}{2pt}{2pt}
\pgfnodebox{Node7}[virtual]{\pgfxy(1.9,0)}{$-1$}{2pt}{2pt}
\pgfnodebox{Node8}[virtual]{\pgfxy(3,0)}{$0$}{2pt}{2pt}
\pgfnodebox{Node9}[virtual]{\pgfxy(4,0)}{$1$}{2pt}{2pt}
\pgfnodebox{Node10}[virtual]{\pgfxy(5,0)}{$2$}{2pt}{2pt}

\pgfnodeconnline{Node1}{Node2} \pgfnodeconnline{Node2}{Node3}
\pgfnodeconnline{Node3}{Node4} \pgfnodeconnline{Node4}{Node5} 
\pgfnodeconnline{Node5}{Node55} \pgfnodeconnline{Node0}{Node1} 

\end{pgfpicture}
\caption { \sl \small The restricted root system of $\asu$ labeled by the eigenvalues of $\mathbf{h_4}$.} 
\label{fig:restricted}
\end{center}
\end{figure}
We can also read off restricted roots of $\asu$ directly from the root diagram of $A_2$ displayed in Figure \ref{fig:restrictedrootdiag}, by projecting all the roots of $A_2$ onto the horizontal line corresponding to the eigenvalues of the Cartan $\mathbf{h_4}=h_4+h_5$. We see that the positive roots $\alpha_4$ and $\alpha_5$ restrict to the same restricted root $\lambda_1$ and the root $\alpha_4+ \alpha_5$ restrict to the root $2 \lambda_1$. Hence the restricted root system of $\asu$ consists of the restricted roots $\pm \lambda_1$ which have multiplicity $2$ and the restricted root $\pm 2 \lambda_1$ with multiplicity one.


\begin{figure}[ht]
\begin{center}
{\scalebox{0.3}
{\includegraphics{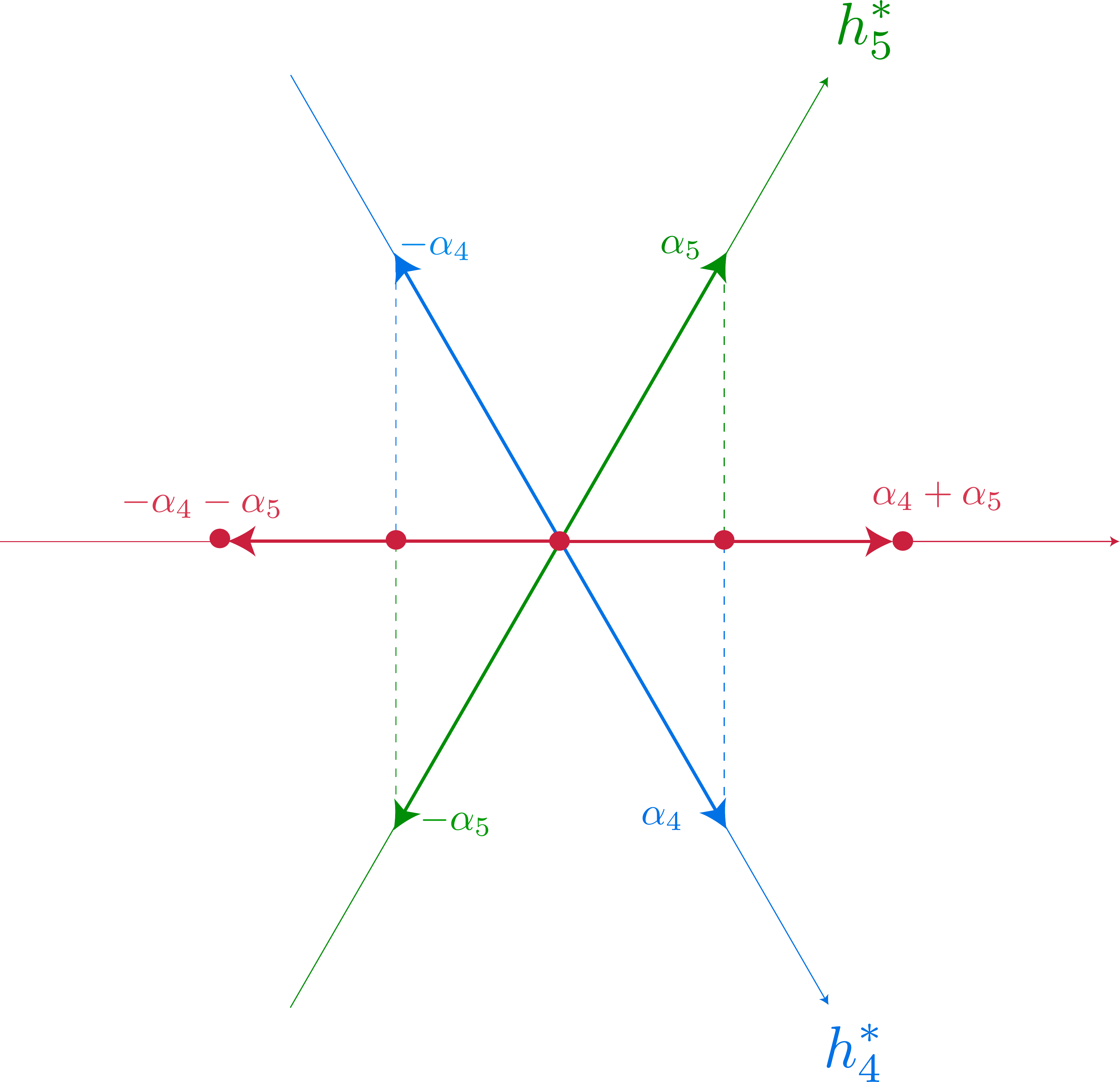}}}
\end{center}
\caption{ \small \textsl{The restricted root of $\asu$ obtained by projecting the root diagram of $A_2$ onto the horizontal red line spanned by $\mathbf{h_4^*}= h_4^* \oplus h_5^*$. The red nodes corresponds to the restricted roots and it consists to the non-degenerate roots $\pm (\alpha_4 + \alpha_5)$ such that $(\alpha_4+ \alpha_5) (\mathbf{h_4}) = 2$ and to the  twice degenerate roots $\pm \tfrac{1}{2} (\alpha_4+\alpha_5)$ such that $\tfrac{1}{2} (\alpha_4+\alpha_5)(\mathbf{h_4}) = \alpha_4(\mathbf{h_4})= \alpha_5(\mathbf{h_4})=1$.  }} \label{fig:restrictedrootdiag}
\end{figure}


Let $\Sigma $ be the subset of nonzero restricted roots and $\Sigma ^{+} $ the set of positive roots, we define a nilpotent subalgebra of $\asu$ as 
\be
\mathfrak{n}_+= \bigoplus _{\lambda \in \Sigma^+} \mathfrak{g}_{\lambda}. 
\ee
Then, the algebraic Iwasawa decomposition of the Lie algebra $\asu$ reads
 \be \label{eqn:Iwasawasu21} \begin{split}
 \begin{aligned}
\asu& =  \ \mathfrak{k}_r \oplus \mathfrak{a}  \oplus \mathfrak{n}_+,\\
 &= \big( \mathfrak{su}(2) \oplus \mathfrak{u}(1)\big) \oplus \mathbb{R} \mathbf{h_4} \oplus \big(\mathbb{R} \mathbf{e_4} \oplus \mathbb{R} \mathbf{e_5} \oplus \mathbb{R} \mathbf{e_{4,5}}\big) .
\end{aligned} \end{split} \ee
It is only $\mathfrak{a}$ that appears in the Iwasawa decomposition of $\asu$ (\ref{eqn:Iwasawasu21}) and not the full Cartan subalgebra $\mathfrak{h}$ since its compact part $\mathbf{h_5}$ belong to $\mathfrak{k}_r$. 

\subsubsection{ The non-compact subalgebra of $\boldsymbol{\asu}$: $\boldsymbol{\mathfrak{k}^*_r= \mathfrak{sl}(2,\mathbb{R})\oplus \mathfrak{u}(1)}$ } \label{app:k*}

 In Chapter \ref{chap:finitesymsu215}  we will consider the reduction on a time-like direction of the four dimensional pure $\mathcal N=2$ supergravity. The reduced Lagrangian will be  identified with a non-linear $\s$-model constructed on the coset  $\mathcal{C}^*= \su/ \mathrm{SL}(2,\mbb R) \times \mathrm{U}(1)$. The generators of the algebra $ \mathfrak{sl}(2,\mathbb{R})\oplus \mathfrak{u}(1)$ associated to the quotient group of this coset are invariant 
 under a $\Omega_4$-involution ( which will be defined later in Section \ref{PERTABsec:tempinv} ): 
\be
\mathfrak{k}^*_r = \big\{ x \in \asu :\, \Omega_4(x)= x \big\}\, ,
\ee
where this involution $\Omega_4$ acts on the generators
of the Borel subalgebra of $\asu$ as:
 \be \label{eqn:omega4} \begin{split} \begin{aligned}
\Omega_4(\mathbf{h_4})&= -\, \mathbf{h_4},\  & \Omega_4(\mathbf{h_5})&= \mathbf{h_5},\\
  \Omega_4(\mathbf{e_4})&= \, \mathbf{f_4}, \ &\Omega_4(\mathbf{e_5})&=- \, \mathbf{f_5}, \quad \Omega_4(\mathbf{e_{4,5}})= \,\mathbf{f_{4,5}}\, .
\end{aligned} \end{split}
\ee
The subalgebra $\mathfrak{k}^*_r$ is generated by:
\be \label{eqn:nocompactsub} \begin{split}
u&= - \frac{1}{2} \, (\mathbf{e_{4,5}} +\mathbf{f_{4,5}})+ \frac{1}{6} \mathbf{h_5},\\
  t_1&= \frac{1}{2} \, \big(\mathbf{h_5} + (\mathbf{e_{4,5}}+\mathbf{f_{4,5}})\big),\\
   t_2&= \frac{1}{\sqrt{2} } \, (\mathbf{e_4}+\mathbf{f_4}), \\
    t_3&= \frac{1}{\sqrt{2} }\,  (\mathbf{e_5} - \mathbf{f_5})\, ,
\end{split}\ee
where $u$ is the $\mathfrak{u}(1)$ generator and the $t_i$ generate a $\mathfrak{sl}(2, \mathbb{R})$ subalgebra.

\subsubsection{The very-extension of $\boldsymbol{\asu}$ = $\boldsymbol{\asuppp}$}
In Chapter \ref{chap:infinitedimsymsu21}, we provide evidence that pure $\mathcal{N}=2$ supergravity in $D=4$ exhibits an underlying algebraic structure described by the Lorentzian Kac-Moody group $\suppp$. The associated Kac-Moody algebra $\asuppp$ can be obtained by adding three nodes $\alpha_1, \alpha_2$ and $\alpha_3$ to the Tits-Satake diagram of $\asu$ displayed in Figure \ref{figa:su21}. The Tits-Satake diagram of this very-extended algebra is given is Figure \ref{fig:su21+++}b. 

In the same way that $\asu$ is a real form of the complex algebra $A_2= \mf{sl}(3, \mbb C)$, the Kac-Moody algebra $\asuppp$ is a real form of the complex algebra $A_2^{+++}$. In  Figure \ref{fig:su21+++}, we can see that the Tits-Satake diagram of $\asuppp$ differs from the $A_2^{+++}$ one by the extra decoration afforded by the double arrow connecting the nodes $\alpha_4$ and $\alpha_5$.   This implies that only the $A_2$-part of the diagram is transformed into the non-split real form $\asu$  \footnote{The fact  that we have an involution given by the double arrow in the Figure \ref{fig:su21+++}b, acting only on a finite-dimensional subalgebra ensures that we are constructing an almost split real form.}  of $A_2$ such that  (see \eqref{thetaroots} )
\be 
\theta(\alpha_4)=-\alpha_5, \quad \theta(\alpha_5)= -\alpha_4.
\ee
 Therefore, to construct $\asuppp$ we have just to  extend the Tits-Satake diagram of $\asu$ with three non-compact simple roots  $\alpha_j$ $(j=1, \ldots, 3)$ such that 
\be \label{eqn:sigmalphanc}
\theta (\alpha_j)= - \alpha_j, \quad \mathrm{or} \quad \s (\alpha_j)= \alpha_j \qquad j=1,2,3\, .
\ee
\begin{figure}[t]
\begin{center}
\begin{pgfpicture}{0cm}{-2cm}{3cm}{2.5cm}

\pgfnodecircle{Node12}[stroke]{\pgfxy(-4.5,0.5)}{0.25cm}
\pgfnodecircle{Node22}[stroke]
{\pgfrelative{\pgfxy(1.5,0)}{\pgfnodecenter{Node12}}}{0.25cm}
\pgfnodecircle{Node32}[stroke]
{\pgfrelative{\pgfxy(1.5,0)}{\pgfnodecenter{Node22}}}{0.25cm}
\pgfnodecircle{Node42}[stroke]
{\pgfrelative{\pgfxy(1.5,1)}{\pgfnodecenter{Node32}}}{0.25cm}
\pgfnodecircle{Node52}[stroke]
{\pgfrelative{\pgfxy(1.5,-1)}{\pgfnodecenter{Node32}}}{0.25cm}

\pgfnodebox{Node62}[virtual]{\pgfxy(-4.5,0)}{$\alpha_{1}$}{2pt}{2pt}
\pgfnodebox{Node72}[virtual]{\pgfxy(-3,0)}{$\alpha_{2}$}{2pt}{2pt}
\pgfnodebox{Node82}[virtual]{\pgfxy(-1.5,0)}{$\alpha_{3}$}{2pt}{2pt}
\pgfnodebox{Node92}[virtual]{\pgfxy(0,-1)}{$\alpha_{4}$}{2pt}{2pt}
\pgfnodebox{Node102}[virtual]{\pgfxy(0,2)}{$\alpha_{5}$}{2pt}{2pt}

\pgfnodeconnline{Node12}{Node22} \pgfnodeconnline{Node22}{Node32}
\pgfnodeconnline{Node32}{Node42} \pgfnodeconnline{Node32}{Node52} 
\pgfnodeconnline{Node42}{Node52}

\pgfnodebox{Node63}[virtual]{\pgfxy(-2.5,-1.5)}{\textbf{a}}{2pt}{2pt}

\pgfnodebox{Node64}[virtual]{\pgfxy(4,-1.5)}{\textbf{b}}{2pt}{2pt}

\pgfnodecircle{Node1}[stroke]{\pgfxy(2,0.5)}{0.25cm}
\pgfnodecircle{Node2}[stroke]
{\pgfrelative{\pgfxy(1.5,0)}{\pgfnodecenter{Node1}}}{0.25cm}
\pgfnodecircle{Node3}[stroke]
{\pgfrelative{\pgfxy(1.5,0)}{\pgfnodecenter{Node2}}}{0.25cm}
\pgfnodecircle{Node4}[stroke]
{\pgfrelative{\pgfxy(1.5,1)}{\pgfnodecenter{Node3}}}{0.25cm}
\pgfnodecircle{Node5}[stroke]
{\pgfrelative{\pgfxy(1.5,-1)}{\pgfnodecenter{Node3}}}{0.25cm}

\pgfnodebox{Node6}[virtual]{\pgfxy(2,0)}{$\alpha_{1}$}{2pt}{2pt}
\pgfnodebox{Node7}[virtual]{\pgfxy(3.5,0)}{$\alpha_{2}$}{2pt}{2pt}
\pgfnodebox{Node8}[virtual]{\pgfxy(5,0)}{$\alpha_{3}$}{2pt}{2pt}
\pgfnodebox{Node9}[virtual]{\pgfxy(6.5,-1)}{$\alpha_{4}$}{2pt}{2pt}
\pgfnodebox{Node10}[virtual]{\pgfxy(6.5,2)}{$\alpha_{5}$}{2pt}{2pt}

\pgfnodeconnline{Node1}{Node2} \pgfnodeconnline{Node2}{Node3}
\pgfnodeconnline{Node3}{Node4} \pgfnodeconnline{Node3}{Node5} 
\pgfnodeconnline{Node4}{Node5}
\pgfsetstartarrow{\pgfarrowtriangle{4pt}}
\pgfsetendarrow{\pgfarrowtriangle{4pt}}
\pgfnodesetsepend{5pt}
\pgfnodesetsepstart{5pt}
\pgfnodeconncurve{Node4}{Node5}{-10}{10}{1cm}{1cm}
\end{pgfpicture}
\caption {  \small \textbf{a}: {\sl Dynkin diagram of $A_2^{+++}$. }\textbf{b}: {\sl Tits-Satake diagram of $\asuppp$.} }
\label{fig:su21+++}
\end{center}
\end{figure}

The action of $\s$  and $\theta$
can be extended from the space of roots to the entire algebra. For $\theta$ this yields to 
\be \begin{split}\begin{aligned}
\label{eqn:thetasu21+++}
\theta (h_j)&=- h_j, \quad &&\theta (e_j)=- f_j, \quad  &&\theta (f_j)=- e_j, \quad & j=1, 2, 3,\\
\theta (h_4)&=- h_5, \quad &&\theta (e_4)=- f_5, \quad  &&\theta (f_4)=- e_5,& \\
\theta (h_5)&=- h_4, \quad &&\theta (e_5)=-f_4, \quad  &&\theta (f_5)=- e_4,&
\end{aligned} \end{split}\ee
while  for $\sigma$ this yields to 
\be \begin{split}\begin{aligned}\label{eqn:sigmasu21+++} 
\s (h_j)&=h_j, \quad &&\s (e_j)=e_j, \quad &&\s (f_j)=f_j, \quad &j=1, 2, 3,  \\
\s (h_4)&=h_5, \quad &&\s (e_4)=e_5, \quad  &&\s (f_4)=f_5,& \\
\s (h_5)&= h_4, \quad &&\s (e_5)=e_4, \quad  &&\s (f_5)=f_4,&
\end{aligned} \end{split}\ee
where $\{h_i, e_i, f_i\}$ are the Chevalley generators of the very-extended algebra $A_2^{+++}$.

The generators of $\asuppp$ then correspond to the subset of $A_2^{+++}$-generators left invariant under $\sigma$. 
They can be written
 in terms of the Chevalley generators of $A_2^{+++}$ as follows
\be \label{eqn:su21+++Chev}
\begin{split}
\begin{aligned}
\mathbf{e_1}&= e_1, \qquad&&\mathbf{f_1}=f_1, \qquad  &&\mathbf{h_1}= h_1\, ,\\
\mathbf{e_2}&= e_2, \qquad&&\mathbf{f_2}=f_2,  \qquad&&\mathbf{h_2}= h_2\, ,\\
\mathbf{e_3}&= e_3, \qquad &&\mathbf{f_3}=f_3, \qquad&&\mathbf{h_3}= h_3\, ,\\
\mathbf{e_4}&= e_4 + e_5, \qquad &&\mathbf{f_4}=f_4 +f_5,\qquad &&\mathbf{h_4}= h_4+ h_5\, ,\\
\mathbf{e_5}&=i\,( e_4 - e_5), \qquad &&\mathbf{f_5}=i\,(f_4 -f_5), \qquad && \mathbf{h_5}= i \,(h_4- h_5)\, .
\end{aligned}
\end{split}
\ee
We stress that for these generators there is no set of standard Chevalley--Serre relations defining the commutators between these elements.

The Cartan subalgebra of $\asuppp$ is given by
\be
\mathfrak{h}_r^{+++}= \bigoplus _{i=1}^{5} \mathbb{R} \ \mathbf{h_i},
\ee
of which $\mathbf{h_1}, \ldots, \mathbf{h_4}$ are non-compact, while $\mathbf{h_5}$ is compact. The generators $\mathbf{h_1}, \ldots, \mathbf{h_4}$ are diagonalizable over the real numbers, and generate the non-compact part $\mathfrak{a}^{+++}$ of the full Cartan subalgebra $\mathfrak{h}_r^{+++}$. 

\subsubsection{The restricted root system of $\boldsymbol{\asuppp}$}

The restricted root system of $\asuppp$ corresponds to the \emph{twisted}\footnote{We refer the reader to  \cite{Henneaux:2007ej} for  the theory of twisted extensions of finite-dimensional Lie algebras.} very-extension of the $(BC)_1$ root system :   $(BC)_1^{+++}= A_2^{(2)++}$. Its  Dynkin diagram is displayed in Figure \ref{fig:twistextension}. The twist appears in the affine extension of the $(BC)_1$ root system ($(BC)_1^+= A_2^{(2)}$) because it is a non-reduced root system and its highest root  $\theta$ is $2 \lambda_1$ instead of $\lambda_1$.  As a consequence the two simple roots of the affine extension $\lambda_1$ and $\alpha_0$ come with different lenghts (see Figure \ref{fig:twistextension}).


\begin{figure}[ht]
\begin{center}
{\scalebox{0.37}
{\includegraphics{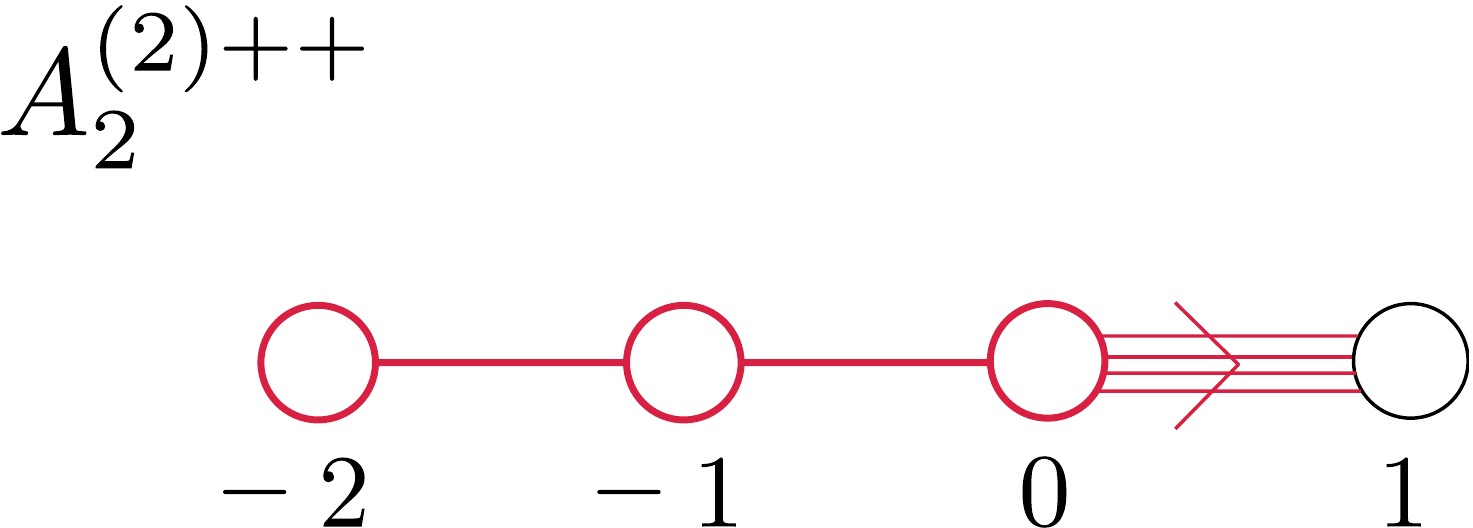}}}
\end{center}
\caption{ \small \textsl{ Dynkin diagram of $A_2^{(2)++}$ which corresponds to the  restricted root system of $\asuppp$.   The node labeled by $1$ represents the simple root $\lambda_1$ of $(BC)_1$. The nodes labeled by $0$, $-1$ and $-2$ are respectively the affine,  overextended and very-extended roots.  }} \label{fig:twistextension}
\end{figure}


\section{Level decomposition}\label{sec:leveldecomposition}
\setcounter{equation}{0}
We will develop in this section a way to approach an infinite-dimensional Kac-Moody algebra $\mf g$  by decomposing its adjoint representation with respect to one of its regular finite subalgebra $\mf{s}$. This decomposition, called \emph{level decomposition}, has the advantage to slice the adjoint representation of $\ag$ such that at each level we get a finite number of representations of the subalgebra $\mf s$. This slicing will be crucial subsequently  in this work because it gives a  recursive way   to construct $\sigma$-models invariant under $\ag$ \cite{Damour:2002cu}. This decomposition yields also to a precise understanding of the lowest levels which will have a physical interpretation. 

This section is organized as follows. We will first explain  the mechanism of the decomposition of a Kac-Moody algebra $\mf g$ with respect to one of its regular finite subalgebras $\mf s$ \cite{Damour:2002cu, Nicolai:2003fw,Kleinschmidt:2003mf}. Then we will recall how  this  level decomposition works  for the split real form $E_{11(11)}$ \cite{Damour:2002cu,Nicolai:2003fw}. This decomposition will be useful in Part II and III.  Finally we will apply this mechanism for a non-split real form $\asuppp$ \cite{Houart:2009ed} that we will use in Part III.

\subsection{The process of level decomposition with respect to a regular  algebra }

 A  \emph{regular} subalgebra $\mf s \subset \ag$ is a Kac-Moody algebra whose generalized Cartan matrix $\bar{A}$ is a proper principal submatrix of the Cartan matrix  $A$ of $\mf g$. At the level of Dynkin diagrams, $\mf s$ is obtained from the Dynkin diagram of $\mf g$ by removing a set of nodes $\mathscr{K}= \{i_1, \ldots ,i_k\}$ and lines connected to them. Let $\mathscr{I}= \{ 1, \ldots,n\}$ the set of indices belonging to the whole algebra $\mf g$ and $\mathscr{Q}= \mathscr{I} \backslash \mathscr{K}$ the set of indices pertaining to $\mf s$. Therefore the rank of $\mf s$ is $n-k$. The adjoint representation of $\mf s$ is viewed as a natural subset of the adjoint $\mf g$.  As the subalgebra $\mf s$ is regular and due to the triangular decomposition of $\mf g$ and to the adjoint action of $\mf s$ upon $\mf g$, it is possible to decompose $\mf g$ into a sum of lowest weight irreducible representations. 

Let us first recall that besides  the root lattice $Q$ there is also the \emph{weight lattice} $P$ which is spanned by the \emph{fundamental weights} $\Lambda_i$ such that $(\alpha_i \, |\, \Lambda_j)= - \tfrac{1}{2}\, \alpha_i^2\, \delta_{ij},\ i,j=1, \ldots, n$. \footnote{Note that we do not choose the standard convention $(\alpha_i \, |\, \Lambda_j)= \tfrac{1}{2} \alpha_i^2 \delta_{ij}$ such  that the fundamental weights point in the positive direction.} As we will always deal with simply laced algebras, this definition becomes $(\alpha_i \, |\, \Lambda_j)= - \delta_{ij}$. Then any root $\alpha \in Q$ can be expressed in the basis of simple roots $\alpha_i \in \Pi$ as
\be
\alpha_i = \sum_{i=1}^n m_i \alpha_i,
\ee
or in the basis of fundamental weights as
\be \label{basefundaweig}
\alpha = \sum_{i=1}^n p_i \Lambda_i,
\ee
where the coefficients $p_i$ are called the \emph{Dynkin labels}. Using the Cartan matrix, we can convert the  coefficients $m_i$ to the $p_i$'s ones by
\be\label{relatbases}
p_i= - \sum_{j=1}^n\,  A_{ij}\, m_j.
\ee
The coefficients $p_i$ belong to $ \mbb Z$ for all $i=1, \ldots ,n$ because the entries of $A$ and the coefficients $m_i$ are integers. We can now write $\alpha$ in terms of its components with respect to the two bases as
\be
\alpha=(m_1, \ldots, m_n)= [p_1, \ldots , p_n],
\ee
using different brackets to distinguish the two bases.

\subsubsection{Lowest weight representations and grading of $\mf g$}

A \emph{lowest weight representation} $\mathcal{R}_{\Lambda}$ of a Kac-Moody algebra $\mf g$ is an $\mf h$-diagonalizable representation with a non-zero vector $v$ in the representation space  $V$ of lowest weight $\Lambda \in \mf h^*$ namely
\be
h (v)= \Lambda(h)\, v, \quad \forall \, h \in \mf h
\ee
such that the action of all negative Chevalley generators $f_i$, $i=1, \ldots ,n $ annihilates it:
\be
f_{i} (v)= 0.
\ee
 All the states in the representation $V$ are obtained by the action of all the generators of $\mf n^+$: $\mf n^+(v)= V$ and the set of eigenvalues of all these states is called the \emph{weight system} of the representation $\mathcal{R}_{\Lambda}$.

We will now introduce a distinction between the fundamental weights of the regular subalgebra $\mf s$ : $\lambda_q, \ q\in \mathscr{Q}$ and the  fundamental weights of the whole algebra $\mf g$: $\Lambda_i, \ i=1, \ldots, n \in \mathscr{I}$. The fundamental weights  $\lambda_q$ satisfy 
\be
(\alpha_p|\lambda_q) =- \delta_{pq}, \quad p,q\in \mathscr{Q} \, ,
\ee
while $\Lambda_i$ must in addition satisfy 
\be
(\alpha_j | \Lambda_i)= - \delta_{ji}, \quad j \in \mathscr{I} \backslash \mathscr{Q}  \ \mathrm{and}\  i \in \mathscr{I} \, .
\ee
As the fundamental weights $\Lambda_{i_a}$ corresponding to nodes $i_a \in \mathscr{K}$ (not pertaining to the Dynkin diagram of the subalgebra $\mf s$) are orthogonal to the simple roots of $\mf s$, the Dynkin labels on these nodes are unchanged under the action of $\mf s$. In the same way  the coefficients $m_{i_a} \,, \ i_a \in \mathscr{K}$ are unchanged under the action of $\mf s$. Therefore,  these coefficients provide a grading of the algebra $\mf g$ with respect to the \emph{level} $\ell=(\ell_1, \ldots ,\ell_k) =(m_{i_1}, \ldots ,m_{i_k})$ such that any root $\alpha$ of $\mf g$ can be decomposed as
\be\label{alphadeclev}
\alpha= \sum_{q \in \mathscr{Q} }m_q \alpha_q+\sum_{i_a \in \mathscr{K}} \ell_{a}\alpha_{i_a}.
\ee
This decomposition of the roots provides a grading of the algebra which reads
\be
\mf g = \bigoplus_{\ell} \mf g_{\ell},
\ee
where $\mf g_{\ell}$ is the subspace spanned by all the vectors $x_{\mu}$ satisfying $[h,x_{\mu}]= \mu(h) x_{\mu}$ such that the level of the root $\mu$ is $\ell$. For given $\ell$, each lowest weight $\Lambda$ generates a representation $\mathcal R_{\Lambda}$ which contains a finite number of  weights. The level decomposition of $\mf g$ with respect to a finite subalgebra $\mf s$ can be then view as a slicing of the forward lightcone in the root lattice of $\mf g$ by spacelike hyperplanes as displayed in Figure \ref{fig:levdec}. Therefore we can see that each section corresponding to a given level $\ell$  will be ellipsoidal and then contains a finite number of roots. By contrast, slicing the lightcone by lightlike or timelike hyperplanes implies a decomposition with respect to an affine or an indefinite subalgebra where each slice will contain an infinite number of roots \cite{Feingold1983,Kacmoodywakimoto,Kleinschmidt:2003pt}. We will consider in this dissertation only level decomposition with respect to finite dimensional subalgebra because it yields a finite amount of information at each level $\ell$ and it gives a concrete realization of the Kac-Moody algebra $\mf g$ in terms of irreducible representations of subgroups that will have a direct physical interpretation.

\begin{figure}[ht]
\begin{center}
{\scalebox{0.35}
{\includegraphics{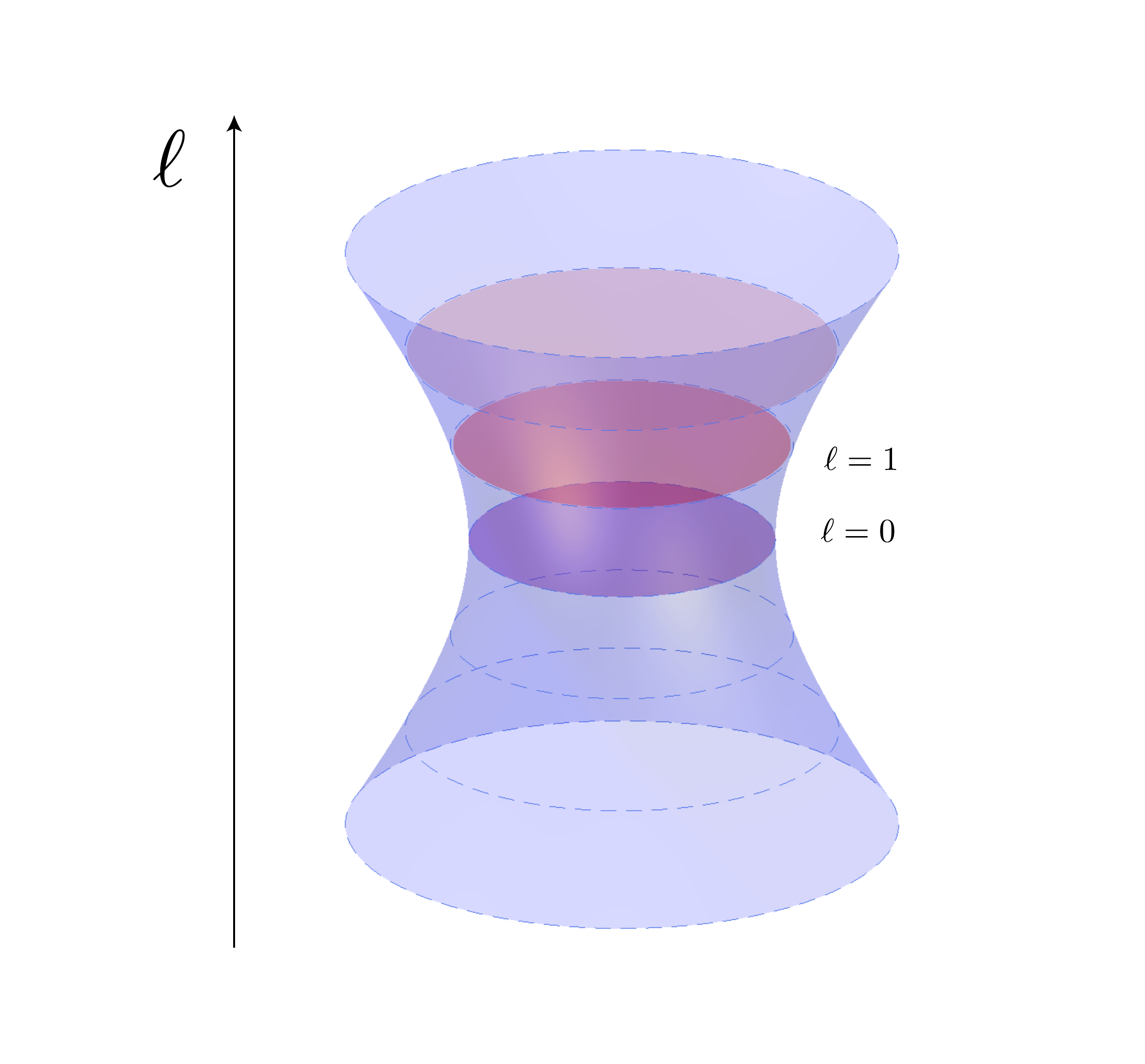}}}
\end{center}
\caption{\small \sl Level decomposition of the adjoint representation of $\mf g$ with respect to the finite regular subalgebra $\mf s$. The lightcone represent the infinite root lattice of $\mf g$ which is sliced  by spacelike sections related to the adjoint action of $\mf s$. For each level $\ell$, the associated section  contains a finite number of weights translating the fact that there are only a finite number of representations. }
\label{fig:levdec}
\end{figure}
The level grading of $\mf g$ implies that 
\be \label{comutelevell}
[\mf g_{\ell}, \ag_{\ell'}] \subseteq \ag_{\ell + \ell'}.
\ee
In particular for $\ell=0=(0, \ldots, 0)$, we have $ [\mf g_0, \ag_{\ell'}] \subseteq \ag_{ \ell'}$ which translates the fact that $\ag_{\ell'}$ is a representation of $\ag_0$ under the adjoint action. Note that the subspace $\ag_0$ corresponding to the level $\ell=0$  contains the whole subalgebra  $\mf s$ enlarged by the Cartan generators $h_{i_a}, \ i_a \in \mathscr{K}$ associated with the removed simple roots $\alpha_{i_a}$. As $\mf s$ is a subalgebra of $\ag_0$, we have that $ [\mf s, \ag_{\ell'}] \subseteq \ag_{ \ell'}$ such that $\ag_{\ell'}$ is left invariant under the adjoint action of the regular subalgebra $\mf s$. 

\subsubsection{Admissible Dynkin labels}

By virtue of \eqref{comutelevell}, we can find all the possible representations that can occur at level $\ell +1$ because it must be contained in the set of representations obtained by taking the product of the level one representation with all the representations appearing at level $\ell$. However, this method becomes more and more fastidious for higher levels.  It is then useful to introduce  a better way to investigate the representations content of $\mf g$ and  to derive all the possible representations for each level $\ell$ by a judicious study of the corresponding lowest weight.

 Let $\Lambda= \alpha$ be the lowest weight of a representation $\mathcal R_{\Lambda}$ and decompose it as \eqref{alphadeclev}. The possible Dynkin labels related to the subalgebra $\mf s$ can be obtained using \eqref{relatbases}  
\be \label{p123}
p_q= -\sum_{i\in \mathscr{I}} A_{qi}\, m_i=  -\sum_{j\in \mathscr{Q}} \bar{A}_{qj} m_j - \sum_{i_a \in \mathscr{K}} A_{q i_a} \ell_a,  \qquad  \forall q \in \mathscr{Q}.
\ee
Inverting this formula, we get
\be\label{m123}
m_q= - \sum_{j \in \mathscr{Q}}  (\bar{A}^{-1})_{qj}\,   \big( \,  p_j + \sum_{i_a \in \mathscr{K}} A_{j i_a} \ell_a \big) ,  \qquad  \forall q \in \mathscr{Q}.
\ee
As the entries of the inverse Cartan matrix $\bar{A}^{-1}$ are always positive for a finite dimensional subalgebra $\mf s$ and because $p_q$ and $m_q$ must be non-negative integers if $\alpha$ is a positive root, the Diophantine equations \eqref{p123} and \eqref{m123} strongly constraints the possible representation  for each level $\ell$.

A further constraint derives from the fact that the lowest weight $\Lambda$ must be a root of $\mf g$ because we are dealing with adjoint representations of $\mf g$. We will exploit a necessary condition for $\Lambda$ to be a root
\be
\Lambda^2= (\Lambda | \Lambda) \geqslant 2\, .
\ee
This additional constraint will eliminate many representations which would still be compatible with \eqref{p123} and \eqref{m123}. 

\subsubsection{Root, weight and outer multiplicities}

We have seen that at level $\ell$, each lowest weight $\Lambda$ generates a representation $\mathcal R_{\Lambda}= \mathcal R(\Lambda)$ with a corresponding finite weight system of the subalgebra $\mf s$.  These weights denoted by $\lambda  \in \mf h^*_{\mf s}$ come with  certain weight multiplicities mult$_{\mathcal R} (\lambda)$. It follows that the root multiplicity of the root  $\beta \in \mf h^*_g$ at a given level $\ell$ denoted by mult$(\beta)$  (defined in \eqref{multipliroot}) is given by the sum of the multiplicities of $\beta$ as a weight in the   various representation $\mathcal{R}^{(\ell)}_i, \ i= 1, \ldots ,n_{\ell}$ of the subalgebra $\mf s$. Thus we can write a relation linking the root multiplicities of $\ag$ with the weight multiplicities of the corresponding $\mf s$-weight as
\be \label{mult157}
\mathrm{mult}(\beta)= \sum_{i=1}^{n_{\ell}}\mu_{\ell} (\mathcal{R}_i^{(\ell)})\,  \mathrm{mult}_{\mathcal{R}_i^{(\ell)}} (\beta),
\ee
where the numbers $\mu_{\ell} (\mathcal{R}_i^{(\ell)})$ are the \emph{outer multiplicities} which counts how many times the representation $\mathcal{R}_i^{(\ell)}$ occurs at level $\ell$. It is clear that if $\mu_{\ell} (\mathcal{R}_i^{(\ell)})= 0$ for some $\ell$ and $i$, then the representation $\mathcal{R}_i^{(\ell)}$ does not appear. 

The only quantities entering in \eqref{mult157} which are known are the weight multiplicities mult$_{\mathcal R}(\lambda)$ for the $\mf s$ representation $\mathcal R_{\Lambda}= \mathcal R(\Lambda)$ with lowest weight $\Lambda$. The computation of these multiplicities follows from the \emph{Freudenthal recursion formula} \cite{Kac:book}
\be
\bigg( (\Lambda|\Lambda)+2\, \mathrm{ht}(\Lambda)- (\lambda|\lambda)-2\, \mathrm{ht} (\lambda)       \bigg)\, \mathrm{mult}_{\mathcal{R}_{\Lambda}}(\lambda)= 2 \, \sum_{\alpha >0} \sum_{k \geqslant 1}  \, (\lambda + k\alpha|\alpha)\, \mathrm{mult}_{\mathcal{R}_{\Lambda}}(\lambda+ k \alpha),
\ee
where the first sum on the r.h.s. ranges over all positive roots of the subalgebra $\mf s$. Starting from the lowest weight $\Lambda$ satisfying mult$_{{\mathcal R}_{\Lambda}}(\Lambda)=1$, all weight multiplicities can be computed by induction on height. 

There is also a recursive way to compute the multiplicity of a root $\alpha \in \mf h^*$ using the \emph{Peterson formula} \cite{Kac:book} which reads 
\be
(\alpha| \alpha +2 \rho)\,  c_{\alpha}= \sum_{ \substack{\gamma, \gamma' \in\,   > 0 \\  \gamma +\gamma'= \alpha }} (\gamma | \gamma')\, c_{\gamma} c_{\gamma'},
\ee  
where $\rho$ is the \emph{Weyl vector} defined as 
\be
\rho = \tfrac{1}{2} \sum_{\beta \in \Delta_+} \beta ,
\ee
and the coefficients $c_{\gamma}$ are defined as
\be
c_{\gamma}= \sum_{k \geqslant 1} \frac{1}{k} \ \mathrm{mult}(\frac{\gamma}{k}).
\ee

For the outer multiplicity $\mu$, there is no closed formula as for the root and the weight multiplicities. Then the computation of the outer multiplicity of each representation at a given level $\ell$ has to be done  via the Freudenthal and Peterson recursion formula and the relation \eqref{mult157}.

\subsubsection{Dynkin labels and Young tableaux}

We will now introduce a useful diagrammatic representation of lowest weights. Let $\lambda$ be a lowest weight of $\mf s$ at a fixed level $\ell$. Its Dynkin labels
\be
\lambda= [p_1, \ldots, p_{n-k}]
\ee
 can equally specified in terms of its \emph{conjugate partition}
\be
\lambda= \{t_1; \ldots ; t_{n-k} \}\quad \mathrm{where} \ t_i= p_1+ p_{2}+ \ldots+ p_{n-k-(i-1)}. 
\ee
To this partition we associate a \emph{conjugate Young tableau}\footnote{The Young tableaux are normally introduced  for highest weight representation $\lambda_{hw}$. Let $\lambda_{hw}=  [\tilde{p}_1, \ldots, \tilde{p}_{n-k}]$, a well defined partition associated to this highest weight is defined by $\lambda_{hw} =  \{ \tilde{t}_1; \ldots ; \tilde{t}_{n-k} \} \quad \mathrm{where} \ \tilde{t}_i= \tilde{p}_1+ \tilde{p}_{2}+ \ldots+ \tilde{p}_{n-k}$. From this partition we associate the Young tableau which is a box array of rows lined up on left such that the lenghts of the $i$-th row is equal to $\tilde{t}_i$. Then the Dynkin labels $\tilde{p}_i$ gives the number of columns of $i$ boxes. The lowest weight can be view as $- \lambda_{hw}$ of the conjugate representation.  This is the reason that we use the conjugate Young tableaux to describe representation of lowest weights. The conjugation amounts to reversing the order of the Dynkin labels.} which is a  array of boxes lined up on the left such that the length of the $i$-th row is equal to $t_i$. For example, the lowest weight $\lambda=[2,0,1,0]$ correspond to the following Young tableau
\be
 \lambda=[2,0,1,0] =\{3,3,2,2\} \quad \longleftrightarrow \quad  {\footnotesize
    \setlength{\tabcolsep}{0.55 em}
    \begin{tabular}{ccc}
      \cline{1-3}
      \multicolumn{1}{|c|}{} &
      \multicolumn{1}{c|}{} &
      \multicolumn{1}{c|}{} \\
      \cline{1-3}
      \multicolumn{1}{|c|}{} &
      \multicolumn{1}{c|}{} &
      \multicolumn{1}{c|}{} \\
      \cline{1-3}
      \multicolumn{1}{|c|}{} &
      \multicolumn{1}{c|}{} \\
      \cline{1-2}
      \multicolumn{1}{|c|}{} &
      \multicolumn{1}{c|}{} \\
      \cline{1-2}
\end{tabular}}\ .
\ee
Dynkin labels provide a dual description of the tableau : $p_i$ gives the number of columns of $(n-k)- (i-1)$ boxes. Therefore in our example, $p_1$ gives the number of columns of $4$ boxes, $p_2$ gives the number of columns of $3$ boxes and so on.

Young tableaux are then in one-to-one correspondence with the irreducible representations of the finite algebra $\mf s$. Any such tableaux always represent a specific symmetrization/antisymmetrization of a tensor of rank $u$, where $u$ is given by the number of boxes in the tableau. The symmetry property of the tensor is such that we first symmetrize all the indices in each row of the tableau, then we antisymmetrize the indices in each column. As a consequence, every column is always completely antisymmetrized but the rows are not necessarily symmetric. Only in special case (i.e. only one row or only one column), the resulting tensor is totally (anti)-symmetric in all indices. 

\subsubsection{The power of computers}

We are now in possession of all the tools necessary  to probe the adjoint representation of the infinite-dimensional Kac-Moody algebra $\ag$ with respect to the irreducible representations of its finite-dimensional Lie algebra $\mf s$. Lowest levels can be computed by hand using all the tools given in this section as in \cite{Damour:2002cu,West:2001as,West:2002jj}. To go further the lowest levels, the calculations can be carried by use of computers \cite{Nicolai:2003fw,Fischbacher:2002fr,Fischbacher:2005fy}. To perform the level decompositions for the algebras which will present a special interest in this work, we will use the nice program SimpLie \cite{SimpLie}. This program created by Teake Nutma  has been specifically developed for level decomposition of infinite-dimensional Lie algebras.
\subsection{Level decomposition of $\aeppp$ with respect to $\mf{sl}(11, \mbb R)$}\label{subsec:leveldece11} 
The Kac-Moody algebra $\aeppp$ has a central role in fundamental physics. The consideration of this algebra is motivated by an attempt to understand the symmetries underlying eleven-dimensional supergravity. We will perform here the level decomposition of $\aeppp$ with respect to the regular subalgebra $\mf s= \mf{sl}(11, \mbb R)$ according to \cite{Nicolai:2003fw,Damour:2002cu}. This level decomposition will be very useful in the continuation of this dissertation.\\

The Kac-Moody algebra $\aeppp$ is entirely characterized by its Cartan matrix  given by 

\be
A [E_{11}]= \left(\begin{array}{ccccccccccc}2 & -1 & 0 & 0 & 0 & 0 & 0 & 0 & 0 & 0 &0 \\-1 & 2 & -1 & 0 & 0 & 0 & 0 & 0 & 0 & 0 &0 \\0 & -1 & 2 & -1 & 0 & 0 & 0 & 0 & 0 & 0 &0 \\0 & 0 & -1 & 2 & -1 & 0 & 0 & 0 & 0 & 0 &0 \\0 & 0 & 0 & -1 & 2 & -1 & 0 & 0 & 0 & 0 &0 \\0 & 0 & 0 & 0 & -1 & 2 & -1 & 0 & 0 & 0 &0\\0 & 0 & 0 & 0 & 0 & -1 & 2 & -1 & 0 & -1 &0 \\0 & 0 & 0 & 0 & 0 & 0 & -1 & 2 & -1 & 0 &-1\\0 & 0 & 0 & 0 & 0 & 0 & 0 & -1 & 2 & -1 &0 \\0 & 0 & 0 & 0 & 0 & 0 & -1 & 0 &-1 & 2&0 \\ 0 & 0 & 0 & 0 & 0 & 0 & 0 & -1 & 0 & 0&2\end{array}\right)\, ,
\ee
which is encoded  in its Dynkin diagram depicted in Figure \ref{fig1:e11}. Erasing the node $1$ defines the regular embedding of its $\aepp$ hyperbolic subalgebra and erasing the nodes $1$ and $2$ yields the regular embedding of the affine $\aep$. As we are dealing with the split real form algebra $E_{11(11)}$, we will reduce the notation and just write $E_{11}$ instead of $E_{11(11)}$. \footnote{From now, we will denote the split real form of the complex algebra $E_n$ by $E_n$ instead of $E_{n(n)}$.} 

We will analyze the level decomposition of the adjoint representation of $E_{11}$ into representations of the $\mf {sl}(11, \mbb R)$ subalgebra defined by the horizontal line in the Dynkin diagram such that the level $\ell$ counts the number of times that the exceptional root $\alpha_{11}$ appears in the decomposition of an arbitrary root $\alpha \in \mf h^*$ in terms of the simple roots $\alpha_i, \ i=1,\ldots,11$ of $E_{11}$:
\be
\alpha= \sum_{i=1}^{10} m_i \alpha_i + \ell\, \alpha_{11}.
\ee
We will concentrate on the lowest levels $\ell \leqslant 4$ given in  Table \ref{tab:levdece11}  which is obtained  from the SimpLie program \cite{SimpLie}. The representation content at each level is represented by irreducible tensors of the $\mathfrak{sl}(11,\mathbb{R})$ subalgebra. Their symmetries properties are fixed by the Young tableaux describing the irreducible representations appearing at a given level $\ell$. We will analyze now this table level by level.

\begin{figure}[h]
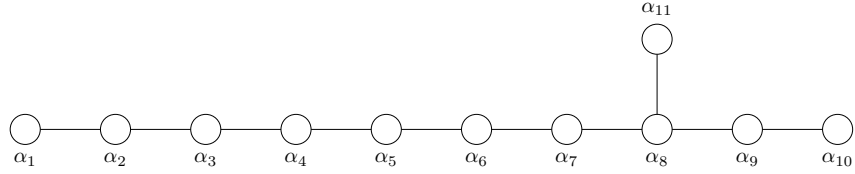
 
\begin{center}
\scalebox{.8}{
\begin{pgfpicture}{15cm}{-0.5cm}{1cm}{2.5cm}
\pgfnodecircle{Node1}[stroke]{\pgfxy(1,0.5)}{0.25cm}
\pgfnodecircle{Node2}[stroke]
{\pgfrelative{\pgfxy(1.5,0)}{\pgfnodecenter{Node1}}}{0.25cm}
\pgfnodecircle{Node3}[stroke]
{\pgfrelative{\pgfxy(1.5,0)}{\pgfnodecenter{Node2}}}{0.25cm}
\pgfnodecircle{Node4}[stroke]
{\pgfrelative{\pgfxy(1.5,0)}{\pgfnodecenter{Node3}}}{0.25cm}
\pgfnodecircle{Node5}[stroke]
{\pgfrelative{\pgfxy(1.5,0)}{\pgfnodecenter{Node4}}}{0.25cm}
\pgfnodecircle{Node6}[stroke]
{\pgfrelative{\pgfxy(1.5,0)}{\pgfnodecenter{Node5}}}{0.25cm}
\pgfnodecircle{Node7}[stroke]
{\pgfrelative{\pgfxy(1.5,0)}{\pgfnodecenter{Node6}}}{0.25cm}
\pgfnodecircle{Node8}[stroke]
{\pgfrelative{\pgfxy(1.5,0)}{\pgfnodecenter{Node7}}}{0.25cm}
\pgfnodecircle{Node9}[stroke]
{\pgfrelative{\pgfxy(1.5,0)}{\pgfnodecenter{Node8}}}{0.25cm}
\pgfnodecircle{Node10}[stroke]
{\pgfrelative{\pgfxy(1.5,0)}{\pgfnodecenter{Node9}}}{0.25cm}
 \pgfnodecircle{Node11}[stroke]
{\pgfrelative{\pgfxy(0,1.5)}{\pgfnodecenter{Node8}}}{0.25cm}

\pgfnodebox{Node12}[virtual]{\pgfxy(1,0)}{$\alpha_{1}$}{2pt}{2pt}
\pgfnodebox{Node13}[virtual]{\pgfxy(2.5,0)}{$\alpha_{2}$}{2pt}{2pt}
\pgfnodebox{Node14}[virtual]{\pgfxy(4,0)}{$\alpha_{3}$}{2pt}{2pt}
\pgfnodebox{Node15}[virtual]{\pgfxy(5.5,0)}{$\alpha_{4}$}{2pt}{2pt}
\pgfnodebox{Node16}[virtual]{\pgfxy(7,0)}{$\alpha_{5}$}{2pt}{2pt}
\pgfnodebox{Node17}[virtual]{\pgfxy(8.5,0)}{$\alpha_{6}$}{2pt}{2pt}
\pgfnodebox{Node18}[virtual]{\pgfxy(10,0)}{$\alpha_{7}$}{2pt}{2pt}
\pgfnodebox{Node19}[virtual]{\pgfxy(11.5,0)}{$\alpha_{8}$}{2pt}{2pt}
\pgfnodebox{Node20}[virtual]{\pgfxy(13,0)}{$\alpha_{9}$}{2pt}{2pt}
\pgfnodebox{Node21}[virtual]{\pgfxy(14.5,0)}{$\alpha_{10}$}{2pt}{2pt}
\pgfnodebox{Node22}[virtual]{\pgfxy(11.5,2.5)}{$\alpha_{11}$}{2pt}{2pt}

\pgfnodeconnline{Node1}{Node2} \pgfnodeconnline{Node2}{Node3}
\pgfnodeconnline{Node3}{Node4}\pgfnodeconnline{Node4}{Node5}
\pgfnodeconnline{Node5}{Node6} \pgfnodeconnline{Node6}{Node7}
\pgfnodeconnline{Node7}{Node8}\pgfnodeconnline{Node8}{Node9}
\pgfnodeconnline{Node9}{Node10} \pgfnodeconnline{Node8}{Node11}

\end{pgfpicture}  }
\caption { \sl \small Dynkin diagram of  $E_{11}= E_{8}^{+++}$. The horizontal nodes correspond to the subalgebra $\mf s= \mf{sl}(11, \mbb R)$ and the node $\alpha_{11}$ corresponds to the exceptional root with respect to which we count the level $\ell$.} 
\label{fig1:e11}
\end{center}
\end{figure} 


\begin{table}[ht]
\begin{center}
\begin{tabular}{|c|c|c|c|c|c|c|}
\hline
$\ell$ &$ \mathfrak{sl}(11,\mathbb{R})$ Dynkin labels& Lowest weights of $E_{11}$ & $\Lambda^2$ & dim & mult & $\mu$\\
 &$  [p_1,\,  p_2,\, \ldots, p_{10}]$& $\Lambda= (m_1,\, m_2,  \ldots, m_{10}, \, \ell)$ &  &$\mathcal{R}_{\Lambda}$ &  $\Lambda$& \\
\hline \hline
$0$ & $[0\, 0 \,  0\,  0\,  0\,  0\,  0\,  0\,  0\,  0]$ &$(0\,  0\,  0\,  0\,  0\,  0\,  0\,  0\,  0\,  0\,  0 )$ &$0$& $1$& $11$& $1$\\
$0$ & $[1\, 0 \,  0\,  0\,  0\,  0\,  0\,  0\,  0\,  1]$ &$(-1\,  -1\,  \ldots   -1\,  0 )$ &$2$& $120$& $1$& $1$\\
$1$ & $[0\, 0 \,  0\,  0\,  0\,  0\,  0\,  1\,  0\,  0]$ &$(0\,  0\,  0\,  0\,  0\,  0\,  0\,  0\,  0\,  0\, 1 )$ &$2$& $165$& $1$& $1$\\
$2$ & $[0\, 0 \,  0\,  0\,  1\,  0\,  0\,  0\,  0\,  0]$ &$(0\,  0\,  0\,  0\,  0\,  1\,  2\,  3\,  2\,  1\,  2 )$ &$2$& $462$& $1$& $1$\\
$3$ & $[0\, 1 \,  0\,  0\,  0\,  0\,  0\,  0\,  0\,  0]$ &$(0\,  0\,  1\,  2\,  3\,  4\,  5\,  6\,  4\,  2\,  3 )$ &$0$& $55$& $8$& $0$\\
$3$ & $[0\, 0 \,  1\,  0\,  0\,  0\,  0\,  0\,  0\,  1]$ &$(0\,  0\,  0\,  1\,  2\,  3\,  4\,  5\,  3\,  1\,  3 )$ &$2$& $1760$& $1$& $1$\\
$4$ & $[0\, 0 \,  0\,  0\,  0\,  0\,  0\,  0\,  0\,  1]$ &$(1\,  2\,  3\,  4\,  5\,  6\,  7\, 8\,  5\,  2\,  4 )$ &$-2$& $11$& $46$& $1$\\
$4$ & $[1\, 0 \,  0\,  0\,  0\,  0\,  0\,  0\,  1\,  0]$ &$(0\,  1\,  2\,  3\,  4\,  5\,  6\,  7\,  4\,  2\,  4)$ &$0$& $594$& $8$& $0$\\
$4$ & $[1\, 0 \,  0\,  0\,  0\,  0\,  0\,  0\,  0\,  2]$ &$(0\,  1\,  2\,  3\,  4\,  5\,  6\,  7\,  4\,  1\,  4)$ &$2$& $715$& $1$& $1$\\
$4$ & $[0\, 1 \,  0\,  0\,  0\,  0\,  0\,  1\,  0\, 0]$ &$(0\,  0\,  1\,  2\,  3\,  4\,  5\,  6\,  4\,  2\,  4)$ &$2$& $8470$& $1$& $1$\\
$\vdots$ & $\vdots$ &$\vdots$ &$\vdots$& $\vdots$& $\vdots$& $\vdots$\\
\hline
\end{tabular}
\caption{\sl \small Level decomposition of $E_{11}$ under $\mathfrak{sl}(11,\mathbb{R})$ up to level $4$ obtained from the SimpLie program \cite{SimpLie}.}
\label{tab:levdece11}
\end{center}
\end{table}

\subsubsection{Level $\boldsymbol{\ell=0}$}
Let us consider first the representations of $\mathfrak{sl}(11,\mathbb{R})$ occurring at level zero. By virtue of the Table \ref{tab:levdece11}, we get $2$ different representations $\mathcal{R}^{(0)}_1$ and $\mathcal{R}^{(0)}_2$ (the notation $\mathcal{R}^{(\ell)}_i$ defined the $i$-th representation at the level $\ell$. The lowest weight of the first one is $\Lambda(\mathcal{R}_1^{(0)})=0$ and of the second one is $\Lambda(\mathcal{R}_2^{(0)})= -\alpha_1 - \alpha_2 \ldots - \alpha_{10}$.

 We will first focus  on the second representation $\mathcal{R}_2^{(0)}$ which corresponds to the adjoint representation of $\mathfrak{sl}(11,\mathbb{R})$. The generator associated to the root  $\Lambda(\mathcal{R}_2^{(0)})$  is the multicommutator $\sim [f_1,[f_2,[ \ldots [f_9,f_{10}]]]]$. Using the Chevalley-Serre relations \eqref{eqn:chevalley} and \eqref{eqn:serre}, we can see that the action of any Chevalley generators $f_i,\ i=1, \ldots, 10$ of $A_{10}$ on this multicommutator will annihilate it 
\be
\big[ f_i, [f_1,[f_2,[ \ldots [f_9,f_{10}]]]] \big]=0     \quad \forall i=1, \ldots ,10\, .
\ee
Therefore it corresponds to the lowest weight  of the representation $\mathcal{R}^{(0)}_{2}$. The dimension of this representation is $120$ and we reach all the weights of this representation by the action of $\mf n_{+} (A_{10})$ on this lowest state. 

The other representation at level zero $\mathcal R^{(0)}_1$  has its  lowest weight equal to zero. This one-dimensional representation consists in the Cartan generator $h_{11}$ not belonging to the subalgebra $\mathfrak{sl}(11,\mathbb{R})$.

Therefore, the $\ell=0$ generators are the $\mathfrak{sl}(11,\mathbb{R})$ generators and the Cartan generator $h_{11}$ not belonging to the $\mathfrak{sl}(11,\mathbb{R})$-subalgebra. This Cartan generator contributes to the enhancement of the  subalgebra $\mf s= \mathfrak{sl}(11,\mathbb{R})$ to $\mathfrak{gl}(11,\mathbb{R})= \mathfrak{sl}(11,\mathbb{R}) \oplus \mbb R$ where the generator $h_{11}$ can be view as the trace part $\mbb R$ of $\mathfrak{gl}(11,\mathbb{R})$. The  generators of the representation $\mathfrak{gl}(11,\mathbb{R})$  are taken to be $K^a_{\ b}$ $(a,b= 1, \ldots ,11)$ with the following commutation relations
\be \label{kabcom1}
[K^{a}_{\ b}, K^{c}_{\ d}]= \delta^c_b \, K^{a}_{\ d}- \delta^a_d\, K^{c}_{\ b},
\ee
and with the following bilinear form
\be \label{bilnzero}
(K^a_{\ b} | K^c_{\ d} ) = \delta^{a}_{d} \delta^{c}_{b} - \delta^{a}_{b} \delta^{c}_{d} .
\ee
We can now express the Chevalley generators corresponding to the horizontal line of $E_{11}$ in terms of this $K^a_{\ b}$-basis
\be \label{gravlinegen} \begin{split}
e_i &= K^i _{\ i+1}, \quad i=1, \ldots 10,\\
f_i&= K^{i+1} _{\ i},\\
h_i&= K^{i}_{\ i}- K^{i+1}_{\ \ i+1}.
\end{split}\ee
This relation between the generators  $K^a_{\ b}$ of $\mathfrak{gl}(11,\mathbb{R})$ and the Chevalley generators of $E_{11}$ follows from the comparison of the commutations relations \eqref{eqn:chevalley} and \eqref{kabcom1}. We get also the identification  for the Cartan generator $h_{11}$
\be
h_{11}= - \frac{1}{3}\, (K^1_{\ 1}+ K^2_{\ 2}+ \ldots + K^8_{\ 8} ) + \frac{2}{3}\,  (K^9_{\ 9}+ K^{10}_{\ 10}+ K^{11}_{\ 11}).
\ee

\subsubsection{Level $\boldsymbol{\ell=1}$}
We will now exhibit the single  representation content at level $\ell=1$.  As indicated in Table \ref{tab:levdece11}, the only  $\mathfrak{sl}(11,\mathbb{R})$ representation at level one is spanned by  an antisymmetric $3$-indices tensor $R^{a_1 a_2 a_3}$. The symmetry properties of this tensor can be encoded by  the following Young tableau 
\be
[0\, 0 \,  0\,  0\,  0\,  0\,  0\,  1\,  0\,  0] \quad \leftrightarrow \quad {\footnotesize
    \setlength{\tabcolsep}{0.55 em}
    \begin{tabular}{|c|}
      \hline
       {} \\
      \hline {} \\
      \hline {} \\
      \hline            \end{tabular}} \quad \leftrightarrow \quad R^{a_1 a_2 a_3}    \ .
\ee 
This level one generator $R^{a_1 a_2 a_3}$ transforms covariantly under the $\mathfrak{gl}(11,\mathbb{R})$ genereators
\be
[K^{c}_{\ d}, R^{ a_1 a_2 a_3}]= \delta^{a_1}_d \, R^{c a_2 a_3 } +  \delta^{a_2}_d \, R^{a_1 c a_3 }+  \delta^{a_3}_d \, R^{a_1 a_2 c }.
\ee
The lowest weight of this representation is $\Lambda (\mathcal{R}^{(1)})= \alpha_{11}$ and the corresponding root vector  $e_{11}$  is identified as
\be
e_{11}= R^{9\, 10\, 11}.
\ee
If at level $\ell=1$ the representation $R^{a_1 \, a_2\, a_3}$ appears, then at level $\ell=-1$ one finds its dual representation $R_{a_1 \, a_2\, a_3}$ which satisfy the following tensor relation
\be
[K^{c}_{\ d}, R_{ a_1 a_2 a_3}]= -\delta^{c}_{a_1} \, R_{d a_2 a_3 } -  \delta^{c}_{a_2} \, R_{a_1 d a_3 }-  \delta^{c}_{a_3} \, R_{a_1 a_2 d }.
\ee
In what follows, positive level generators will always be denoted with upper indices and negative level ones with lower ones. We can now easily get the identification of the negative Chevalley generator $f_{11}$ as
\be
f_{11}= R_{9\, 10\, 11}.
\ee 
By demanding $[e_{11}, f_{11}]= h_{11}$ we find 
\be
[R^{abc}, R_{ d e f}]= 18\, \delta^{[ a b}_{[de}\, K^{c]}_{\ f]} -2 \, \delta^{abc}_{def} \sum_{a=1}^{11}\, K^a_{\ a}\, ,
\ee
where 
\be \begin{split}
\delta^{ab}_{cd} &= \frac{1}{2} (\delta^a_c \delta^b_d- \delta^b_c \delta^a_d)\, ,\\
\delta^{abc}_{def} &= \frac{1}{3!} (\delta^a_d\, \delta^b_e \delta^c_f \pm 5 \mathrm{permutations})\, .
\end{split} \ee
Using the invariance of the bilinear form under the adjoint action (see \eqref{eqn:invariance}), we can extend the bilinear form at level zero \eqref{bilnzero} to the level $\ell=1$. We get at level one
\be
(R^{a_1  a_2 a_3 }| R_{b_1  b_2 b_3 })= 3! \delta^{a_1  a_2 a_3}_{b_1  b_2 b_3},
\ee
such that $(e_{11}| f_{11})= (R^{9\, 10\, 11}| R_{9\, 10\, 11})=1$. This level one representation is of dimension $165$ and we reach all the weight of this representation by the action of $\mf n_{+} (A_{10})$ on this lowest state.
\subsubsection{Level $\boldsymbol{\ell=2}$}
By virtue of the Dynkin labels at this level (see Table \ref{tab:levdece11})
\be
[0\, 0 \,  0\,  0\,  1\,  0\,  0\,  0\,  0\,  0] \quad \leftrightarrow \quad {\footnotesize
    \setlength{\tabcolsep}{0.55 em}
    \begin{tabular}{|c|}
      \hline
       {} \\
      \hline {} \\
      \hline {} \\
      \hline
       {} \\
      \hline {} \\
      \hline {} \\
      \hline          \end{tabular}}    \quad \leftrightarrow \quad  R^{a_1 a_2  a_3  a_4 a_5 a_6} \ .
\ee 
we get that the single representation is spanned by the antisymmetric $6$-indices tensor $R^{a_1 a_2  a_3 a_4 a_5 a_6}$.  Using the graded structure of the level decomposition \eqref{comutelevell}, this level $\ell=2$ generator can be view as the commutator of level $\ell=1$ generators
\be
R^{a_1 a_2  a_3  a_4 a_5 a_6}= [ R^{[a_1  a_2a_3 }, R^{a_4  a_5 a_6]}]\, .
\ee
In the same way as for level $\ell=1$, we get the bilinear form at this level using the invariance of the bilinear form under the adjoint action
\be
(R^{a_1 a_2  a_3  a_4 a_5 a_6}| R_{b_1 b_2  b_3 b_4 b_5 b_6})= 6!\,  \delta^{a_1 a_2  a_3  a_4 a_5 a_6}_{b_1 b_2  b_3 b_4 b_5 b_6} .
\ee
\subsubsection{Level $\boldsymbol{\ell=3}$}
From the Table \ref{tab:levdece11}, we get 2 representations at level $\ell=3$. The first one $\mathcal R^{(3)}_1$ has for Dynkin labels $[0\, 1 \,  0\,  0\,  0\,  0\,  0\,  0\,  0\,  0]$ and corresponds to an $9$-form generator. But the outer multiplicity $\mu$ of this representation is zero meaning that this representation is not allowed. The second representation $\mathcal R^{(3)}_2$ is then the only admissible representation that appears at this level. The corresponding Dynkin labels are associated to a mixed indices tensor $R^{a_1 \, a_2 \ldots a_8 |b}$ where the $a_i$ indices are antisymmetric such that $R^{[a_1 \, a_2 \ldots a_8 |b]}=0$
\be
[0\, 0 \,  1\,  0\,  0\,  0\,  0\,  0\,  0\,  1] \quad \leftrightarrow \quad  {\footnotesize
    \setlength{\tabcolsep}{0.55 em}
    \begin{tabular}{cc}
      \cline{1-2}
      \multicolumn{1}{|c|}{} &
      \multicolumn{1}{c|}{\,    } \\
       \cline{1-2}
       \multicolumn{1}{|c|}{} &
        \multicolumn{1}{c}{} \\
       \cline{1-1}
       \multicolumn{1}{|c|}{\vdots} &
        \multicolumn{1}{c}{} \\
       \cline{1-1}
\multicolumn{1}{|c|}{} &
        \multicolumn{1}{c}{} \\
       \cline{1-1}
\end{tabular}}
\leftrightarrow \quad  R^{a_1 \, a_2 \ldots a_8 |b}\, .
\ee

\subsection{Level decomposition of $ \asuppp$} \label{sec:levdecompo}

In Section  \ref{sec:CosmologicalModel}, we will give the correspondence between the field content of the bosonic part of pure $\mathcal{N}=2$ supergravity and the infinite-dimensional algebra $\asupp$. To this end, we will perform a decomposition of the adjoint representation of $\asuppp$ into representations of an $\mathfrak{sl}(4, \mathbb{R})$ subalgebra defined by the nodes $\alpha_1, \alpha_2$ and $\alpha_3$ in Figure \ref{fig:su21+++}b.  All step operators may then be written as irreducible tensors of the $\mathfrak{sl}(4, \mathbb{R})$ subalgebra. 
\subsubsection{Level decomposition of $A_2^{+++}$}

In order to understand the level decomposition of $\asuppp$, we must first consider the level decomposition of the complex algebra $A_2^{+++}$ under a $A_3\cong \mathfrak{sl}(4,\mathbb{R})$ subalgebra. This level decomposition of $A_2^{+++}$  up to level $\ell= (\ell_1,\ell_2)= (2,2)$ can be obtained  from the SimpLie program \cite{SimpLie} and  it is shown in Table \ref{tab:levdeca2ppp} . The levels $\ell_1$ and $\ell_2$ are respectively associated to the roots $\alpha_4$ and $\alpha_5$ in Figure \ref{fig:su21+++}a. This level decomposition will induce a grading of $A_2^{+++}$ into an infinite set of finite-dimensional subspace $\mathfrak{g}^{+++}_{(\ell_1, \ell_2)}$ such that
\be
A_2^{+++}= \bigoplus_{(\ell_1,\ell_2)} \mathfrak{g}^{+++}_{(\ell_1, \ell_2)}\, ,
\ee 
where the levels $\ell_1$ and $\ell_2$ are either both non-positive or both non-negative.


\begin{table}[t]
\begin{center}
\begin{tabular}{|c|c|c|c|c|c|c|}
\hline
 $(\ell_1, \ell_2)$ &$\mathfrak{sl}(4,\mathbb{R})$ Dynkin labels& Lowest weights of $A_2^{+++}$ & $\Lambda^2$ & dim & mult & $\mu$\\
 &$  [p_1,\,  p_2, \,p_3]$& $(m_1,\, m_2,\, m_3,\, \ell_1, \, \ell_2)$ &  & $\mathcal{R}_{\Lambda}$&$\Lambda$  & \\
\hline \hline
$(0,0)$ & $[0\, 0 \,  0]$ &$(0\,  0\,  0\,  0 \, 0)$ &$0$& $1$& $5$& $2$\\
$(0,0)$ & $[1\, 0 \,  1]$ &$(-1\,  -1\,  -1\,  0\, 0)$ &$2$& $15$& $1$& $1$\\
$(1,0)$ & $[0\, 0 \,  1]$ &$(0\,  0\,  0\, 1\,  0)$ &$2$& $4$& $1$& $1$\\
$(0,1)$ & $[0\, 0 \,  1]$ &$(0\,  0\,  0\, 0\,  1)$ &$2$& $4$& $1$& $1$\\
$(1,1)$ & $[0\, 1 \,  0]$ &$(0\,  0\,  1\, 1\,  1)$ &$0$& $6$& $2$& $1$\\
$(1,1)$ & $[0\, 0 \,  2]$ &$(0\,  0\,  0\, 1\,  1)$ &$2$& $10$& $1$& $1$\\
$(2,1)$ & $[0\, 1 \,  1]$ &$(0\,  0\,  1\, 2\,  1)$ &$2$& $20$& $1$& $1$\\
$(1,2)$ & $[0\, 1 \,  1]$ &$(0\,  0\,  1\, 1\,  2)$ &$2$& $20$& $1$& $1$\\
$(2,2)$ & $[1\, 0 \,  1]$ &$(0\,  1\,  2\, 2\,  2)$ &$-2$& $15$& $5$& $2$\\
$(2,2)$ & $[0\, 2 \,  0]$ &$(0\,  0\,  2\, 2\,  2)$ &$0$& $20$& $2$& $1$\\
$(2,2)$ & $[0\, 1 \,  2]$ &$(0\,  0\,  1\, 2\,  2)$ &$2$& $45$& $1$& $1$\\
$\vdots$ & $\vdots$ &$\vdots$ &$\vdots$& $\vdots$& $\vdots$& $\vdots$\\
\hline
\end{tabular}
\caption{\sl \small Level decomposition of $A_2^{+++}$ under $\mathfrak{sl}(4,\mathbb{R})$ up to level $(2,2)$ obtained from the SimpLie program \cite{SimpLie}.} 
\label{tab:levdeca2ppp} 
\end{center}
\end{table}
\subsubsection{Level $\boldsymbol{(\ell_1, \ell_2)=(0,0)}$}
 At level $\ell=(0,0)$, we get 2 representations: $\mathcal{R}^{(0)}_1$ with lowest weight $\Lambda(\mathcal{R}^{(0)}_1)=0$ and  $\mathcal{R}^{(0)}_2$ with lowest weight $\Lambda(\mathcal{R}^{(0)}_2)= -\alpha_1 - \alpha_2 -\alpha_3$. The representation $\mathcal{R}^{(0)}_2$ is the adjoint representation of $\mathfrak{sl}(4, \mbb R)$. The representation  $\mathcal{R}^{(0)}_1$ has a multiplicity $\mu=2$. This representation consists of the two Cartan generators $h_4$ and $h_5$ not belonging to the subalgebra $\mathfrak{sl}(4, \mbb R)$. Therefore the level zero generators are the $\mathfrak{sl}(4, \mbb R)$ generators and the Cartan generators $h_4+h_5$ and $h_4-h_5$. The Cartan generator $h_4+h_5$ will contribute to the enhancement of the subalgebra $\mathfrak{s}=\mathfrak{sl}(4, \mbb R) $ to $\mathfrak{gl}(4, \mathbb{R})= \mathfrak{sl}(4,\mathbb{R}) \oplus\, \mathbb{R}$ generated by $K^{a}_{\ b}$ $(a, b = 1, \ldots, 4)$.  The extra Cartan generator $T= \tfrac{1}{6}(h_4-h_5)$ enlarges  the  $\mathfrak{gl}(4, \mathbb{R})$ algebra by the addition of a $\mbb R$-factor. The commutation relations at this level are
\be \begin{split}\label{eqn:kab}
\left[K^{a}_{\ b}, K^{c}_{\ d}\right] &= \delta^c_b\,  K^{a}_{\ d} - \delta^a_d\,  K^{c}_{\ b}\, ,\\
\left[T, K^{a}_{\ b}\right]&=0 \, ,
\end{split}
\ee
and the bilinear forms  reads
\be \label{eqn:bilinearzero}
(K^{a}_{\ b} |K^{c}_{\ d}) = \delta^a_d \delta^c_b- \delta^a_b\delta^c_d, \ 
(T|T)= \frac{2}{9}  , \ (T | K^{a}_{\ b})=0\, .
\ee
We can now express the Chevalley generators corresponding to the horizontal line of $A_2^{+++}$ in terms of the $K^a_{\ b}$-basis
\be \label{gravlinegena2ppp} \begin{split}
e_i &= K^i _{\ i+1}, \quad i=1, \ldots 3,\\
f_i&= K^{i+1} _{\ i},\\
h_i&= K^{i}_{\ i}- K^{i+1}_{\ \ i+1}.
\end{split}\ee
We get also the identification for the Cartan generators $h_4$ and $h_5$
\be  \label{eqn:genh4h5level }\begin{split}
h_4 &= - \tfrac{1}{2}\, K + K^4_{\ 4} +3\, T,\\
h_5 &= - \tfrac{1}{2}\, K + K^4_{\ 4} -3\, T, 
\end{split}\ee
where 
\be
K= K^1_{\ 1}+  K^2_{\ 2}+ K^3_{\ 3}+ K^4_{\ 4}\, .
\ee

\subsubsection{Levels $\boldsymbol{(\ell_1, \ell_2)=(0,1) }$ and $\boldsymbol{(1,0) }$}
The positive level generators are obtained through multiple commutators between the generators $R^a$ and $\tilde{R}^a$ on levels $(1,0)$ and $(0,1)$ respectively. They transforms as $\mathfrak{gl}(4, \mathbb{R})$ tensors in the obvious way.

The $T$ commutator relations are
\be
\left[T, R^a \right]= \frac{1}{2} \, R^a, \ \ \left[T, \tilde{R}^a \right]=-  \frac{1}{2} \, \tilde{R}^a .
\ee 
The lowest weight of the representation at level $(1,0)$ is $\Lambda=\alpha_4$. It corresponds to the root vector $e_4$ which is identified as
\be
e_4= R^4.
\ee
In the same way the lowest weight of the representation at level $(0,1)$ is $\Lambda=\alpha_5$. It corresponds to the root vector $e_5$ which is identified as
\be
e_5= \tilde R^4.
\ee
The negative Chevalley generators $f_4$ and $f_5$ are obtained by lowering the indices of the corresponding positive generators. By demanding $[e_4,f_4]=h_4$ and $[e_5,f_5]=h_5$, we find the commutation relations between positive and negative generators
\be \begin{split} \begin{aligned}
\left[ R^a,R_{b}\right] &= \delta^{a}_{b} (- \tfrac{1}{2} K  + 3\, T) + K^{a}_{\ b}\, ,\\
\left[ \tilde{R}^a,\tilde{R}_{b}\right] &= \delta^{a}_{b} (- \tfrac{1}{2} K  - 3\, T) + K^{a}_{\ b}\, .\\
\end{aligned}\end{split}\ee
The bilinear form at these levels are given by 
\be \begin{split}\begin{aligned}
(R^a | R_{b}) &= \delta^{a}_{b},  & (\tilde{R}^a | \tilde{R}_{b})  &= \delta^{a}_{b}.
\end{aligned} \end{split} \ee

Let us now give a complete list of  the identification of the Chevalley generators of $A_2^{+++}$
\be \label{eqn:chevbis}
\begin{split}
\begin{aligned}
&h_1 = K^1_{\ 1} - K^2_{\ 2}, &\qquad&e_1= K^1_{\ 2},    &\qquad&f_1= K^2_{\ 1}                  \, ,\\
&h_2 = K^2_{\ 2} - K^3_{\ 3}, &\qquad&e_2= K^2_{\ 3},    &\qquad&f_2= K^3_{\ 2}                  \, ,\\
&h_3 = K^3_{\ 3} - K^4_{\ 4}, &\qquad&e_3= K^3_{\ 4},    &\qquad&f_3= K^4_{\ 3}                        \, ,\\
&h_4 = - \tfrac{1}{2}\, K + K^4_{\ 4} +3\, T, &\qquad&e_4= R^{4},    &\qquad&f_4= R_{4}           \, ,\\
&h_5 = - \tfrac{1}{2}\, K + K^4_{\ 4} -3\, T, &\qquad&e_5= \tilde{R}^{4},     &\qquad&f_5= \tilde{R}_{4}           \, .
\end{aligned} \end{split}
\ee

\subsubsection{Level $\boldsymbol{(\ell_1, \ell_2)=(1,1)}$}
By virtue of the Dynkin labels at levels $(1,1)$ (see Table \ref{tab:levdeca2ppp})
\be
[0\, 1 \,  0] \ \leftrightarrow \ {\footnotesize
    \setlength{\tabcolsep}{0.55 em}
    \begin{tabular}{|c|}
      \hline
       {} \\
      \hline {} \\
     \hline
         \end{tabular}}    \ \leftrightarrow \  R^{ab} \qquad,  \qquad [0\, 0 \,  2] \ \leftrightarrow \ {\footnotesize
    \setlength{\tabcolsep}{0.55 em}
    \begin{tabular}{|c|c|}
      \hline
       {} &{} \\
      \hline 
         \end{tabular}}    \  \leftrightarrow \  S^{ab} \ ,
         \ee 
we get that the representations at this level are spanned by the antisymmetric-2 indices $R^{ab}$ tensor and by the  symmetric-2 indices $S^{ab}$ tensor. These level $(1,1)$ generators can be obtained through the commutator
\be
\left[ R^a, \tilde{R}^b\right]=  R^{ab} + S^{ab} \,  ,
\ee
where the individual projections are:
\be\label{eqn:rabsab}
S^{ab}= \left[ R^{ ( a}, \tilde{R}^{b ) }\right] , \quad   R^{ab}= \left[ R^{ [ a}, \tilde{R}^{b ] }\right] \  .
\ee

\begin{table}[h]
\begin{center}
\begin{tabular}{|c|c|c|}
\hline
$(\ell_1,\ell_2)$ &$ \mathfrak{sl}(4,\mathbb{R})$ Dynkin labels& Generator of $A_2^{+++}$\\
\hline \hline
$(0,0)$ &$[ 1,0,1] \oplus [ 0,0,0] $ & $K^a_{\ b}$\\
$(0,0)$ &$[ 0,0,0]  $ & $T$\\
$(1,0)$ &$[ 0,0,1]  $ & $R^{\, a}$\\
$(0,1)$ &$[ 0,0,1]  $ & $\tilde{R}^{\, a}$\\
$(1,1)$ &$[ 0,0,2]  $ & $S^{\, s_{1} s_{2}}$\\
$(1,1)$ &$[ 0,1,0]  $ & $R^{\,a_{1} a_{2}}$\\
$(2,1)$ &$[ 0,1,1]  $ & $R^{\, a_{0}|a_{1} a_{2}}$\\
$(1,2)$ &$[ 0,1,1]  $ & $\tilde{R}^{\, a_{0}|a_{1} a_{2}}$\\
$(2,2)$ &$[ 1,0,1]  $ & $R^{\, a_{0}|a_{1} a_{2} a_{3}}$\\
$(2,2)$ &$[ 0,2,0]  $ & $R^{\, a_{1} a_{2} | a_{3} a_{4}}$\\
$(2,2)$ &$[ 0,1,2]  $ & $R^{\,s_{1} s_{2} |a_{3} a_{4} }$\\
$\vdots$& $\vdots$ &$\vdots$\\
\hline
\end{tabular}
\caption{\sl \small Level decomposition of $A_2^{+++}$ under $\mathfrak{sl}(4,\mathbb{R})$ up to level $(2,2)$. The  indices $a_{i}$ are antisymmetric  while the indices $s_{i}$ are symmetric. Note that the generators from the level $(2,1)$ with mixed Young symmetries are subject to constraints.  }
\label{tab:levdeca2}
\end{center}
\end{table}

Negative step operators are defined with lower indices such that the bilinear form evaluated on a positive step operator and its corresponding negative step operator is positive. Then at level $(-1,-1)$, we have
\be 
S_{ab}=  - \left[ R_{ ( a}, \tilde{R}_{b ) }\right] , \quad   R_{ab}= -\left[ R_{ [ a}, \tilde{R}_{b ] }\right] \  ,
\ee
such that the bilinear forms are given by
\be
(R^{ab} | R_{cd})= 3\,   \delta^{ab}_{cd}, \quad   (S^{ab} | S_{cd}) = \bar{ \delta}^{ab}_{cd}, 
\ee
where
\be \begin{split} \begin{aligned}
\delta^{ab}_{cd} &:=  \tfrac{1}{2} (\delta^a_c\, \delta^b_d- \delta^b_c\, \delta^a_d)\, ,\\
\bar{  \delta}^{ab}_{cd} &:=  \tfrac{1}{2} (\delta^a_c\, \delta^b_d+ \delta^b_c\, \delta^a_d)\, .
\end{aligned} \end{split} \ee
The commutations relations between a positive generator and the negative one are given by
\be \begin{split} \begin{aligned}
\left[ R^{ab},R_{cd}\right] &=- 3\,  \delta^{ab}_{cd} K+ 6 \, \delta^{[ a}_{[ c}\, K^{b]}_{\ d]}\, ,\\
\left[ S^{ab},S_{cd}\right] &=-  \bar{\delta}^{ab}_{cd} K+ 2 \, \delta^{(a}_{( c}\, K^{b)}_{\ d)}\,,
\end{aligned} \end{split} \ee
while the generators of different rank commute in the following non-trivial way:
\be \begin{split} \begin{aligned}
\left[ S^{ab}, R_c\right] &= - \delta^{(a}_{c}\,  \tilde{R}^{b)}, && \left[ S^{ab}, \tilde{R}_c\right] =  \delta^{(a}_{c}\, R^{b)}\, ,\\
\left[ R^{ab}, R_c\right] &= - 3 \,  \delta^{[a}_{c}\, \tilde{R}^{b]}, && \left[ R^{ab}, \tilde{R}_c\right] = - 3 \,  \delta^{[a}_{c}\, R^{b]}, \\
\left[ S^{ab}, R_{cd} \right]&=0 \, .
\end{aligned} \end{split} \ee

\subsubsection{Level decomposition of $\asuppp$}
We shall now apply the construction of $\asuppp$ to the level decomposition of $A_2^{+++}$. In this context, we define the level $L$ such that $L=\ell_1 +\ell_2$ and such that the grading of the $\asuppp$ algebra is written as
\be
\asuppp= \bigoplus_L \mathfrak{g}^{+++}_{L}.
\ee
\subsubsection{Level $\boldsymbol{L=0}$}
 At level zero, we have the $\mathfrak{gl}(4, \mathbb{R})$-subalgebra associated to the nodes $\alpha_1, \alpha_2$ and $\alpha_3$. These nodes are non-compact and hence are, as we have seen in (\ref{eqn:sigmalphanc}), invariant under $\sigma$. Thus, the $\mathfrak{gl}(4,\mathbb{R})$ part at $L=0$ is the same as for $A_2^{+++}$. The extra Cartan generators associated to $\alpha_4$ and $\alpha_5$ are however affected by the conjugation $\sigma$. Using (\ref{eqn:su21+++Chev}) and (\ref{eqn:chevbis}), the invariant combinations are 
\begin{eqnarray}\label{eqn:h4}
\mathbf{h_4}&=& h_4+ h_5 = - K + 2 K^4_{\ 4}  ,\\
\mathbf{h_5}&=& i (h_4-h_5) = i 6\, T \label{eqn:h5},
\end{eqnarray}
where $K= \sum_{a=1}^{4} K^a_{\ a}$.
We have already seen that the first one is non-compact, while the second one is compact, i.e.
\be
\theta(\mathbf{h_4})= -\mathbf{h_4}\,, \quad \theta(\mathbf{h_5})= \mathbf{h_5}.
\ee
The effect of the algebraic Iwasawa decomposition (\ref{Iwasawadecreal}) will therefore be to project out the compact Cartan $\mathbf{h_5}$. This was anticipated since the generator $T$ is associated with a dilaton which does not exist in four-dimensional Maxwell-Einstein gravity. 
\subsubsection{Level $\boldsymbol{L=1}$}

We further define the invariant generators at level $L=1$
\be \label{eqn:rarta} \begin{split}
r^a &:= R^a+ \tilde{R}^a,\\
\tilde{r}^a &:= i (R^a - \tilde{R}^a).
\end{split} \ee
The corresponding negative step operators at level $L=-1$ are defined by
\be \label{eqn:rartaneg} \begin{split}
r_a &:= R_a+ \tilde{R}_a,\\
\tilde{r}_a &:= i (R_a - \tilde{R}_a).
\end{split} \ee
More generally, the negative step operators are obtained from the positive ones by lowering the indices as in \eqref{eqn:rartaneg}. The bilinear forms at this level reads
\be \label{eqn:bililevel1}
( r^a | r_b)= 2\, \delta^a_b, \quad ( \tilde{r}^a | \tilde{r}_b) = - 2\, \delta^a_b\, , 
\ee
and the generators of opposite levels $L=\pm 1$ commute as follows
\be \begin{split} \begin{aligned}
\left[ r^a, r_b \right] &= - \delta^a_b\, K + 2\, K^a_{\ b}\, ,\\
\left[ \tilde{r}^a, \tilde{r}_b \right] &= \delta^a_b\, K - 2\, K^a_{\ b}\, ,\\
\left[ r^a, \tilde{r}_b \right] &= 6\, i \,\delta^a_b\, T \, .
\end{aligned} \end{split} \ee
\begin{table}[top]
\begin{center}
\begin{tabular}{|c|c|}
\hline
$L= \ell_1 +\ell_2$ &  Generator of $\asuppp$ \\
\hline \hline
$0$  & $K^a_{\ b}$\\
$0$  & $i \, T$\\
$1$  & $r^{\, a} = R^a + \tilde{R}^a$\\
$1$  & $\tilde{r}^{\, a} = i (R^a - \tilde{R}^a) $\\
$2$  & $s^{\,s_{1} s_{2}} = - 2\, i\, S^{\, s_{1} s_{2}}$\\
$2$  & $r^{\,a_{1} a_{2}}=  2\, R^{\, a_{1} a_{2}}$\\
$3$  & $r^{\, a_{0}|a_{1} a_{2}}$\\
$3$   & $\tilde{r}^{\, a_{0}|a_{1} a_{2}}$\\
$4$  & $r^{\, a_{0}|a_{1} a_{2} a_{3}}$\\
$4$  & $r^{\, a_{1} a_{2} | a_{3} a_{4}}$\\
$4$  & $r^{\,s_{1} s_{2} |a_{3} a_{4} }$\\
$\vdots$ & $\vdots$\\
\hline
\end{tabular}
\caption{\sl \small Level decomposition of $\asuppp$ under $\mathfrak{sl}(4,\mathbb{R})$ up to level $4$. The indices  $a_{i}$ are antisymmetric  while the indices  $s_{i}$ are symmetric. Note that the generators from the level $L=3$ with mixed Young symmetries are subject to constraints.  }
\label{tab:levdecsu}
\end{center}
\end{table}
Using (\ref{eqn:su21+++Chev}) and (\ref{eqn:chevbis}), we get that the invariant combinations of the Chevalley generators at level $L=1$ are
\be \label{eqn:e4ande5}
\mathbf{e_4}= r^4, \quad \mathbf{e_5}= \tilde{r}^4.
\ee
That all other components of $r^a$ and $\tilde{r}^a$ are also invariant follows from the fact that they may be written as commutators between $\mathfrak{gl}(4, \mathbb{R})$ and $r^4$ and $\tilde{r}^4$ which are all invariant. The two Chevalley generators $\mathbf{e_4}$ and $\mathbf{e_5}$ have identical eigenvalues with respect to the four noncompact Cartan
 \be \label{comute4} \begin{split} \begin{aligned}
 \left[\mathbf{h_1}, \mathbf{e_4}\right]&=0,  & [\mathbf{h_2}, \mathbf{e_4}]&=0,& [\mathbf{h_3}, \mathbf{e_4}]&=- \mathbf{e_4}, &[\mathbf{h_4}, \mathbf{e_4}]&=\mathbf{e_4}, \\
 [\mathbf{h_1}, \mathbf{e_5}]&=0, & [\mathbf{h_2}, \mathbf{e_5}]&=0,& [\mathbf{h_3}, \mathbf{e_5}]&=- \mathbf{e_5}, &[\mathbf{h_4}, \mathbf{e_5}]&=\mathbf{e_5}, 
\end{aligned} \end{split} \ee
implying that these generators project into the same root $\vec{\lambda}$ in the restricted root system (see Section \ref{app:restricted} for more details),
\be
\vec{\lambda}= \vec{\alpha}_{\mathbf{e_4}}= \vec{\alpha}_{\mathbf{e_5}}= (0,0,-1,1).
\ee
The generator $\mathbf{h_5}$, being compact, is not diagonalizable over $\mathbb{R}$. Indeed, we have the following commutation relations with $\mathbf{e_4}$ and $\mathbf{e_5}$
\be \label{comutenc45}
[\mathbf{h_5},\mathbf{e_4}] =3\, \mathbf{e_5} \quad, \quad [\mathbf{h_5},\mathbf{e_5}] =- 3\, \mathbf{e_4}.
\ee
\subsubsection{Level $\boldsymbol{L=2}$}
The generators at level $L=2$ are obtained as
\be \label{eqn:sabrab} \begin{split}
s^{ab}&:= [r^a, \tilde{r}^b]\, ,\\
r^{ab}&:=  [r^a,r^b]=  [\tilde{r}^a, \tilde{r}^b].
\end{split} \ee
These generators are separately invariant under $\sigma$. In terms of $A_2^{+++}$ generators, using (\ref{eqn:sabrab}), \eqref{eqn:rarta}, and \eqref{eqn:rabsab} we get
\begin{eqnarray}
s^{ab}&=& - 2i\,  S^{ab},\\
r^{ab}&=& 2\, R^{ab}\, .
\end{eqnarray}
These generators are normalized as
\be
(s^{ab} | s_{cd})= -4\, \bar{\delta}^{ab}_{cd}\,, \quad (r^{ab} | r_{cd})= 12 \, \delta^{ab}_{cd}\, ,
\ee
where  $\delta^{ab}_{cd} :=  \tfrac{1}{2} (\delta^a_c\, \delta^b_d- \delta^b_c\, \delta^a_d)$ and
$\bar{  \delta}^{ab}_{cd} :=  \tfrac{1}{2} (\delta^a_c\, \delta^b_d+ \delta^b_c\, \delta^a_d)$. The commutation relations of generators of opposite levels $L=\pm2$ are
\be \begin{split} \begin{aligned}
\left[ s^{a b}, s_{c d} \right] &= 4\,  \bar{\delta}^{ab}_{cd}\, K  - 8\, \delta^{(a}_{( c}\, K^{b)}_{\ d)}   \, ,\\
\left[ r^{a b}, r_{c d} \right] &= - 12\,  \delta^{ab}_{cd}\, K  + 24\, \delta^{[a}_{[c}\, K^{b]}_{\ d]}   \, ,\\
\left[ r^{a b}, s_{c d} \right] &=  0 \,.
\end{aligned} \end{split} \ee
The generators of different rank commute in the following non-trivial way:
\be \begin{split} \begin{aligned}
\left[ s^{ab}, r_c\right] &= - 2\,  \delta^{(a}_{c}\,  \tilde{r}^{b)}, &&  \left[ s^{ab}, \tilde{r}_c\right] = -2\,  \delta^{(a}_{c}\, r^{b)}\, ,\\
\left[ r^{ab}, r_c\right] &= - 6 \,  \delta^{[a}_{c}\, r^{b]}, &&  \left[ r^{ab}, \tilde{r}_c\right] = 6 \,  \delta^{[a}_{c}\, \tilde{r}^{b]}\, .
\end{aligned} \end{split} \ee
The level decomposition of $\asuppp$ under the $A_3\cong \mathfrak{sl}(4,\mathbb{R})$ subalgebra up to level $L=4$ is shown in Table \ref{tab:levdecsu}. Note that this level decomposition presents the same Young tableaux as in the $A_2^{+++}$ case. We will see in Section \ref{sec:CosmologicalModel} that this representation content up to level $L=2$ where the generator $r^{a_1a_2} $ is projected out,  can be associated with the bosonic field content of pure $\mathcal N=2$ supergravity in $D=4$. \\

   \chapter{Supergravity theories reformulated as non-linear $\sigma$-models}\label{chap:sugrareformulatedth}
 
We will now make use of the contents of the previous chapter to describe some developments devoted to the investigation of the underlying symmetry structures of supergravity theories.\\

 In Section \ref{sec:kmsymcompct14}, we begin by describing how toroidal compactifications of gravity theories reveal hidden global and local symmetries of the reduced Lagrangian. We also discuss attempts at extending these symmetry structures to infinite-dimensional algebras. In Section \ref{app:SigmaModel}, we recall the construction of non-linear $\sigma$-models over coset spaces. We will then apply the tools given in Section \ref{app:SigmaModel} to construct respectively in Sections \ref{sec:gpppinvactionthese1} and \ref{sec:gppinvactionthese2} actions explicitly invariant under very-extended and over-extended Kac-Moody symmetry group .
 
 \setcounter{equation}{0}
 \section{Kac-Moody symmetries through compactifications} \label{sec:kmsymcompct14}
 
In this section, we will explain how hidden Kac-Moody symmetries are exhibited through compactifications on a torus of $D$-dimensional gravitational theories suitably coupled to dilatons $\Phi^u$ and to matter fields associated to $n$-forms $F_{(n)}$ whose Lagrangian is 
\begin{eqnarray}\label{tab:actionmax}
\mathcal{L}= \sqrt{-g}\,\bigg( R -
\frac{1}{2}\sum_{u=1}^{q}\partial_{M}\Phi^{u} 
\partial^{M}\Phi^{u} -\sum_{n}\, \frac{1}{2n!}\, e^{\sum_{u}a_{n}^{u}\Phi^{u}}
F^{2}_{(n)}   \bigg)  +\mathcal{L}_{CS}\, ,
\end{eqnarray}
where $\mathcal{L}_{CS}$ contains the eventual Chern-Simons terms.
We will first recall briefly the Kaluza-Klein reduction of such theories on a circle $S^1$. Then, we will consider in details the reduction of gravity on $T^2$. In this simple example we introduce the essential aspects of enhanced symmetries, most notably that of scalar coset Lagrangians and nonlinear realisations.  We will thereafter consider  the reduction of theories \eqref{tab:actionmax} down to three dimensions. This reduction may exhibits a simple Lie group $\g$ symmetry non-linearly realised. The scalars of the dimensionally reduced theory live in a coset $\g/ \mathrm K$ where $\mathrm K$ is the maximal compact subgroup of $\g$. Finally, we will explain the conjecture that infinite-dimensional Kac-Moody groups should be  symmetries of some supergravity theories.

In the following we shall restrict to the bosonic sectors of supergravity theories and we will consider only reduction on space-like circles.
\subsection{Kaluza-Klein Reduction on $S^1$}
\label{PERTABsec:KKred}

Before starting with the reduction of the bosonic part of supergravity theories on a torus, it will be helpful to briefly recall the Kaluza-Klein reduction on a $S^1$. For that, we will follow  the lecture notes \cite{Pope}.

First we consider the reduction of pure gravity, described by the Einstein-Hilbert Lagrangian in $D$ dimensions
\begin{eqnarray}\label{tab:lag}
\mathcal{L}_{D}= \sqrt{-\widehat{g}}\, \widehat{R}.
\end{eqnarray}
We put hats on the fields to signify that they are constructed from the $D$-dimensional metric $\widehat{g}_{MN} \ (M,N= 0, \ldots, D-1)$. Now suppose that we wish to reduce the theory to $D-1$ dimensions, by compactifying the coordinate $x^{D-1} = z$ on a circle. In general the enhanced symmetries are manifest only in Einstein frame, and therefore we perform the compactification so as to end up with a $(D-1)$-dimensional theory in Einstein frame, with a standard kinetic term for the ``dilatonic''  scalar $\phi$. This requirement fixes the compactification ansatz to be 
\be \begin{split}\label{tab:ans}
\dd\widehat{s}^{2} &= \widehat{g}_{MN}\,  \dd x^M \, \dd x^N,\\
&= e^{2\gamma_{D-1}\phi}\dd s^{2}+e^{2\beta_{D-1}\phi
}(\dd z+A_{\mu}\dd x^{\mu})(\dd z+A_{\nu}\dd x^{\nu})\, ,
\end{split}\ee
where
\begin{eqnarray}\label{tab:alphbet}
 \gamma_{D-1}= \frac {1}{\sqrt{2(D-2)(D-3)}} \  ,\qquad \,  \beta_{D-1}= - (D-3)\gamma_{D-1}  .
\end{eqnarray}
Note that this ansatz implies that the components of the higher-dimensional metric $\widehat{g}_{MN}$ ($M,N= 0, \ldots, D-1$) are given in terms of the $(D-1)$-dimensional fields $g_{\mu \nu}, \, A_{\mu}$ and $ \phi$ ($\mu, \nu= 0, \ldots, D-2$) by
\begin{eqnarray}
\widehat{g}_{\mu\nu}=e^{2 \gamma_{D-1}\phi}g_{\mu\nu}+
e^{2\beta_{D-1}\phi }A_{\mu} A_{\nu} \, ,\qquad
\widehat{g}_{\mu z}=e^{2\beta_{D-1}\phi }A_{\mu} \, ,\qquad
\widehat{g}_{zz}=e^{2\beta_{D-1}\phi } \, .
\end{eqnarray}
All the fields on the right-hand side of (\ref{tab:ans}) are independent of the compact direction $z$ which translates the restriction to the zero modes in the Fourier expansion of the $D$-dimensional metric $\widehat{g}_{MN}$ along the compactified coordinate $z$. \\

The result of the dimensional reduction of \eqref{tab:lag} is
\begin{eqnarray}\label{tab:redl}
\sqrt{-\widehat{g}}\,\widehat{R} \longrightarrow
\sqrt{-g}\,\bigg(R - \frac{1}{2} \partial_{\mu}\phi\,\partial^{\mu}\phi
-\frac {1}{4}e^{-2(D-2)\ \gamma_{D-1}\phi}\, F_{\mu\nu} F^{\mu\nu}\bigg)\, ,
\end{eqnarray}
where $F_{\mu\nu}=2\,   \partial_{[\mu}A_{\nu]}$. The resulting theory \eqref{tab:redl} corresponds to $(D-1)$-dimensional gravity coupled to a scalar field $\phi$ called the \emph{dilaton} and a Maxwell field $A_{\mu}$ also called the \emph{Kaluza-Klein vector}. Note that the original $D$-dimensional general coordinates transformations that preserve the form of the ansatz \eqref{tab:ans} is broken to $(D-1)$- dimensional general coordinates transformations and to $\mathrm{U}(1)$ local gauge transformation of the Kaluza-Klein potential vector $A_{\mu}$. The reduced lagrangian \eqref{tab:redl} has also a constant shift symmetry of the dilaton $\phi$ accompanied by an appropriate rescaling of $A_{\mu}$.\\

Having seen how the Kaluza-Klein reduction of the metric works, we shall now study the reduction of an antisymmetric tensor field strength, from $D$ to $D-1$ dimensions. Suppose we have an $n$-form field strength $\widehat{F}_{(n)} = \dd\widehat{A}_{(n-1)}$ in $D$ dimensions. In terms of indices, it is clear that after reduction on $S^1$ there will two kinds of $D-1$ dimensional potentials, namely one with $(n-1)$ indices lying in the $D-1$ dimensional spacetime, and the other with $(n-2)$ indices lying in the $D-1$ dimensional spacetime, and one index being in the compact direction $z$. This is most easily expressed in terms of differential forms. Thus the ansatz for the reduction of the potential $\widehat{A}_{(n-1)}$ is
\begin{eqnarray}
\widehat{A}_{(n-1)}(x^{\mu},z)=A_{(n-1)}(x^{\mu})+
A_{(n-2)}(x^{\mu}) \wedge \dd z \, .
\end{eqnarray}
After reduction on the circle $S^1$, we thus find
\be\label{tab:redform}
 \sqrt{-\widehat{g}} \, \frac{1}{2n!}
\widehat{F}_{(n)}^{2}
\longrightarrow \sqrt{-g} \, \bigg(\frac{1}{2n!}\,e^{-2(n-1) \gamma_{D-1}
\phi} \, F_{(n)}^{2}+ \frac{1}{2(n-1)!}\,e^{2(D-n-1) \gamma_{D-1} \phi} \,
F_{(n-1)}^{2} \bigg),
\ee
where 
\begin{equation}
F_{(n-1)}= \dd A_{(n-2)}, \qquad  F_{(n)} =
\dd A_{(n-1)}-\dd A_{(n-2)}\wedge A_{(1)}.
\end{equation}
\subsection{Reduction of pure gravity on $T^2$ and $\mathrm{SL}(2, \mathbb{R})$}
\label{PERTABsec:red2dim}

It is clear that having established the procedure for performing a Kaluza-Klein reduction from $D$ to $D-1$ dimensions on a circle, the process can be generalized to reduction on a succession of $k$ circles. This is equivalent to a compactification from $D$ to $D-k$ dimensions on an $k$-torus $T^{k}=S^{1}\times \cdots \times S^{1}$. We shall in this section restrict to the reduction of pure gravity in $D$ dimensions on a $2$-torus $T^2$, because many of the features that appear in this analysis are rather general and apply to any kind of toroidal reduction \cite{Pope}.

\subsubsection{The $\mathrm{SL}(2,\mathbb{R})$-symmetry of the reduced theory}

Let us consider the Einstein-Hilbert action in $D$ dimensions:
\begin{eqnarray}
\mathcal S_{D}= \int \dd ^{D}x \, \sqrt{-g_{D}} \, R_{D}\, ,
\end{eqnarray}
where one puts subscripts on the fields to signify that there are $D$-dimensional objects.
The reduction of this action on $S^{1}$ under the ansatz (\ref{tab:ans}) gives (see (\ref{tab:redl}))
\begin{eqnarray}
\mathcal S_{D-1}= \int \dd ^{D-1}x\, \sqrt{-g_{D-1}} \, \bigg(R_{D-1}-
\frac{1}{2}\,
\partial_{\mu}\phi_{1}
\partial^{\mu} \phi_{1}- \frac {1}{4}\, e^{-2(D-2) \gamma_{D-1} \phi_{1}}\,
F_{(2)\,1}^{2}\bigg)\, ,
\end{eqnarray}
where the indices $\mu$ traverse the non-compact dimensions and the inferior indices $_{()}$ of $F_{(p)}$ indicate that it is a $p$-form associated to a $(p-1)$-form potential $A_{(p-1)}$. The other indices are label indices that specify the torus $T^i$ where the fields appeared.

We perform the same procedure and we compactify again another space-like direction on $S^1$. As a result, we obtained a theory reduced on $T^2= S^1 \times S^1$
{\setlength\arraycolsep{2pt}
\be \begin{split}\label{tab:acred}
\mathcal S_{D-2}=\int & \dd ^{D-2}x\, \sqrt{-g_{D-2}}\\
 {} &\bigg(R_{D-2}- \frac{1}{2}
\partial_{\mu}\phi_{1}
\partial^{\mu}\phi_{1} - \frac{1}{2}\,
\partial_{\mu}\phi_{2}
\partial^{\mu}\phi_{2}  - \frac {1}{4}\,\sum_{i=1}^{2} e^{\vec{c_{i}}\cdot \vec{\phi}}
F_{(2)\,i}^{2}- \frac{1}{2} e^{\sqrt{2}\vec{\alpha}\cdot \vec{\phi}}\,
F^{2}_{(1)12}\bigg)\, ,
\end{split} \ee}
with
\be \begin{split}
\vec{c}_{1}&= \Bigg(-\sqrt{\frac{2(D-2)}{(D-3)}},
-\sqrt{\frac{2}{(D-3)(D-4)}}\Bigg)\, ,\\
\vec{c}_{2}&= \Bigg(0,
-\sqrt{\frac{2(D-3)}{(D-4)}}\Bigg)\, ,\\
\vec{\alpha}&= \Bigg(-\sqrt{\frac{(D-2)}{(D-3)}},
\sqrt{\frac{(D-4)}{(D-3)}}\Bigg)\, .
\end{split}\ee
 The fields resulting from this reduction are 
$g_{\mu\nu}, \, A_{(1)i}, \, A_{(0)12}, \,\textrm{and}\ \vec{\phi}= (\phi_{1},\phi_{2})$. The scalar field  $A_{(0)12}$ called \emph{axion} comes from the dimensional reduction of the first Kaluza-Klein vector $A_{(1)1}$.

Let us now look at the scalars in the reduced theory (\ref{tab:acred}) described  by the scalar Lagrangian 
\be \label{tab:a1}
\mathcal{L}_{scalar} = - \frac{1}{2}\,
\partial_{\mu}\phi_{1}\,
\partial^{\mu}\phi_{1} - \frac{1}{2}\,
\partial_{\mu}\phi_{2}\,
\partial^{\mu}\phi_{2}  - \frac{1}{2}\, e^{\sqrt{2}\vec{\alpha}\cdot \vec{\phi}}\,
\partial_{\mu} \chi \,\partial^{\mu} \chi\, ,
\ee
where $\chi=A_{(0)12} $. Things simplify a lot if we rotate the basis for the two dilatons $ \vec{\phi}=(\phi_1, \phi_2)$. If we make the orthogonal transformations to two new dilaton combinations, which we may call $\phi$ and $\varphi$:
\be \begin{split}
\phi&= \frac{1}{2}\,\Bigg(-\sqrt{\frac{2(D-2)}{(D-3)}}\,\phi_{1}+
\sqrt{\frac{2(D-4)}{(D-3)}} \, \phi_{2} \Bigg)\, ,\\
\varphi&= \frac{1}{2}\,\Bigg(-\sqrt{\frac{2(D-4)}{(D-3)}}\,\phi_{1}-
\sqrt{\frac{2(D-2)}{(D-3)}} \, \phi_{2} \Bigg)\, ,
\end{split}\ee
the Lagrangian (\ref{tab:a1}) becomes
\begin{eqnarray}\label{tab:a2}
\mathcal{L}_{scalar} = - \frac{1}{2}\,
\partial_{\mu}\phi \,
\partial^{\mu}\phi- \frac{1}{2}\,
\partial_{\mu}\varphi \,
\partial^{\mu}\varphi  - \frac{1}{2}\, e^{2\phi}\,
\partial_{\mu} \chi \,\partial^{\mu} \chi \, .
\end{eqnarray}
As $\varphi$ is decoupled from the other scalars, we will consider it independently of $\phi$ and $\chi$
to analyse the global symmetries of this system. On the one hand, $\mathcal{L}_{scalar}$ has a global shift symmetry $\varphi\rightarrow \varphi+ k $. This gives an $\mathbb{R}$ factor in the global symmetry group. On the other hand, the part of $\mathcal{L}_{scalar}$ containing $\phi$ and $\chi$ is invariant under the transformations of  $\mathrm{SL}(2,\mathbb{R})$. Indeed, if we define a complex field $\tau= \chi +i\, e^{-\phi}$ the Lagrangian for $\phi$ and $\chi$ can be written as 
\begin{eqnarray}\label{tab:a3}
\mathcal{L}_{(\phi,\chi)}\equiv -\frac{1}{2}\,
\partial_{\mu}\phi \,
\partial^{\mu}\phi  - \frac{1}{2}\, e^{2\phi}\,
\partial_{\mu} \chi \,\partial^{\mu} \chi= -\frac{\partial_{\mu}\tau \,
\partial^{\mu}\bar{\tau}}{2\, \tau_{2}^{2}}\, ,
\end{eqnarray}
where $\tau_{2}$ is the imaginary part of $\tau= \tau_{1}+
i\tau_{2}$. Now it is not hard to see that this Lagrangian is invariant under the transformation:
\begin{eqnarray}\label{tab:transfot}
\tau \longrightarrow \frac{a\tau + b}{c\tau+d} \qquad \textrm{with} \ ad-bc=1\, ,
\end{eqnarray}
where $a,b,c$ and  $d$ are constants $\in \mathbb{R}$. This transformation can be written under matrix form:
\begin{eqnarray}
g= \left( \begin{array}{cc}
a & b  \\
c & d \end{array} \right)\, ,
\end{eqnarray}
with the condition $\det g= 1$. What we have here is real $2\times 2$ matrices of unit determinant. They form the group $\mathrm{SL}(2,\mathbb{R})$. This symmetry acts nonlinearly on the complex field $\tau$ (see (\ref{tab:transfot})). To conclude, we have seen that the scalar Lagrangian (\ref{tab:a2}) as in total an $\mathrm{SL}(2,\mathbb{R}) \times \mathbb{R}$ global symmetry. This makes a $\mathrm{GL}(2,\mathbb{R})$ symmetry.

We can show that  the non-scalar part of the reduced theory (\ref{tab:acred}) also shares the same symmetry that the scalar part. Note that while the scalars transform nonlinearly under $\mathrm{SL}(2,\mathbb{R})$, the two gauge potentials  $A_{(1)1}$ and $ A_{(1)2}$  transform linearly, as a doublet. Under the $\mathbb R$-factor of $\mathrm{GL}(2,\mbb R)$, the potentials will transform by appropriate constant scaling factors.     Thus the global symmetry of the lower-dimensional Lagrangian is already established by looking just at the scalar fields and their symmetry transformations.

\subsubsection{Scalar Coset Lagrangian $\mathcal{L}_{\mathrm{SL}(2, \mbb R)/ \mathrm{O}(2)}$}\label{PERTABsec:scalcola}
To understand the structure of the global symmetry better, we need to study the nature of the scalar Lagrangian that arise from the dimensional reduction on the torus $T^2$. It leads us into the subject of \emph{nonlinear $\sigma$-models} and \emph{coset spaces}. The example of $\mathrm{SL}(2,\mathbb{R})$ exhibits many of the general features that one encounters in nonlinear $\sigma$-models, while having the merit of being rather simple and easy to calculate explicitly.

The associated Lie algebra of $\mathrm{SL}(2,\mathbb{R})$ is $\mathfrak{sl}(2,\mathbb{R}) \simeq A_1$, the smallest finite  dimensional simple Lie algebra. This algebra is $3$-dimensional and following  Section \ref{sec:def}  we take as a basis the set of the Chevalley generators $\{e, f, h\}$ which take the matrix forms
\be
e= \left(\begin{array}{cc}0 & 1 \\0 & 0\end{array}\right), \ f= \left(\begin{array}{cc}0 & 0 \\ 1 & 0\end{array}\right), \ h = \left(\begin{array}{cc}1 & 0 \\0& -1\end{array}\right)\, .
\ee
These generators satisfy the following commutation relations
\be
[h,e]=2 e, \quad [h,f]=-2f, \quad [e,f]=h.
\ee

Consider now the exponentiation  of the generators $h$ and $e$, and define the  \emph{coset representative}:
\be \begin{split}\label{tab:cos22}
\mathcal{V}(x^{\mu})&= e^{\frac{1}{2} \phi (x^{\mu}) \, h}\, e^{\chi (x^{\mu}) \, e}\, ,\\
&=\left( \begin{array}{cc}
e^{\frac{1}{2}\phi (x^{\mu})} &   \chi(x^{\mu}) \, e^{\frac{1}{2}\phi(x^{\mu})} \\
0 & e^{-\frac{1}{2}\phi(x^{\mu})} \end{array} \right)\, , 
\end{split} \ee
where $\phi$ and $\chi$ are fields depending on the coordinates of a $(D-1)$-dimensional spacetime.
This matrix is in the upper-triangular form and we said that $\mathcal{V}$ is in the upper-triangular gauge or in the \emph{Borel gauge}  because $\mathcal{V}$ is constructed by exponentiation of the Borel subalgebra of $\mathfrak{sl}(2,\mathbb{R})$ defined in  \eqref{eqn:borelsubalgebra1}. If we define
\begin{eqnarray}\label{tab:a6}
\mathcal{M}(x^{\mu})=\mathcal{V}^{T}(x^{\mu})\mathcal{V}(x^{\mu})\, ,
\end{eqnarray}
we can write a Lagrangian as 
\be \begin{split}\label{tab:a4}
\mathcal{L}&= \frac{1}{4}\, tr ( \,
\partial_{\mu}\mathcal{M}^{-1}\,
\partial^{\mu}\mathcal{M}\, )\, , \\
&=- \frac{1}{2} \, \partial_{\mu}\phi \,  \partial^{\mu}\phi
-\frac{1}{2} \, e^{2\phi} \partial_{\mu}\chi \,  \partial^{\mu}\chi\, .
\end{split} \ee
This is exactly the $\mathrm{SL}(2, \mathbb{R})$-invariant scalar Lagrangian $\mathcal{L}_{(\phi,\chi)}$ (\ref{tab:a3}). The coset representative allowed to find a pleasant form for building the scalar Lagrangian using the Lie algebra $\mathfrak{sl}(2,\mathbb{R})$. The advantage now is that we have a very nice  way to see why it is $\mathrm{SL}(2, \mathbb{R})$ invariant. Indeed if we consider a transformation $g \in \mathrm{SL}(2, \mathbb{R})$, then 
\begin{eqnarray}
\mathcal{V} (x^{\mu}) \longrightarrow \mathcal{V}''(x^{\mu})= \mathcal{V}(x^{\mu})g\, ,
\end{eqnarray}thus
\begin{eqnarray}
\mathcal{M}(x^{\mu}) \longrightarrow g^{T}\, \mathcal{M }(x^{\mu})\, g\, ,
\end{eqnarray}
which manifestly leaves the Lagrangian (\ref{tab:a4})  invariant (using the cyclic invariance of the trace). However, we did something improper because the transformation $\mathcal{V} \rightarrow
\mathcal{V}''$ do not leave $\mathcal{V}''$ in the upper-triangular form that the original matrix $\mathcal{V}$ given in (\ref{tab:cos22}). It is thus necessary to make a compensating local transformation $k$ (local in the sense that this matrix depends not only on the constant $\mathrm{SL}(2, \mathbb{R})$ parameters but also on the fields $\phi$ and $\chi$)  that does the job of restoring  $\mathcal{V}''$ to the upper-triangular gauge. The matrix $k(x^{\mu})$ acts on $\mathcal{V}''$ from the left, at the same time as we multiply by $g$ from the right. We define, thus a transformed matrix $\mathcal{V}'$ by 
\begin{eqnarray}\label{tab:a7}
\mathcal{V}'(x^{\mu})= k(x^{\mu})\, \mathcal{V}''(x^{\mu})= k(x^{\mu}) \mathcal{V}(x^{\mu}) g \, ,
\end{eqnarray}
where the matrix $k$ found is orthogonal and belongs to $\mathrm{O}(2)$ the maximal compact subgroup of $\mathrm{SL}(2, \mbb R)$. We can now interpret the action of  $\mathrm{SL}(2, \mathbb{R})$ in terms of transformations on the fields $\phi$ and $\chi$. We can again easily check that the transformation $\mathcal{V}'$ (\ref{tab:a7})  leaves the Lagrangian (\ref{tab:a4}) invariant.

At a given spacetime point, 
we can use the $\mathrm{SL}(2, \mathbb{R})$ transformation to get from any pair of values for $\phi$ and $\chi$, any other pair of values. This means that $\mathrm{SL}(2, \mathbb{R})$  acts transitively on the \emph{scalar manifold} which is the manifold where the fields $\phi$ and $\chi$ take their values. But we must make a compensation transformation $\mathrm{O}(2)$  to make sure that we stay within our original upper-triangular form. Thus we may specify points in the scalar manifold by the \emph{coset } $\mathrm{SL}(2, \mathbb{R})/\mathrm{O}(2)$, consisting of $\mathrm{SL}(2, \mathbb{R})$ motions modulo the appropriate $\mathrm{O}(2)$ compensators. 

\subsection{Compactification on $T^{D-3}$ } \label{subsec:compactificationd3th}
Having established the procedure for performing a Kaluza-Klein reduction from $D$ to $D-1$ dimensions on the circle $S^1$, it is clear that the process can be repeated for a succession of circles. Thus we will first consider a reduction from $D$ to $D-k$ dimensions on a $k$-torus $T^k$ \cite{Lambert:2001gk}. At each $i$'th reduction step, one generates  a Kaluza-Klein vector potential $A_{(1)i}$, and a dilaton $\phi_i$ from the reduction of the metric (see \eqref{tab:redl}). In addition, from $p$-form potential  present in $D-i$ dimensions, one generates a $p$-form  and a $p-1$ form potential (see \eqref{tab:redform}). As a result, one obtains rapidly a proliferating number of fields by compactification on $T^k$. 

In addition, once we reach $n+1$ dimensions, we can dualise the $n$-form $F_{(n)}$ (present in the Lagrangian (\ref{tab:actionmax})) into a scalar. We suppose that the action in $n+1$ dimensions has the form
\begin{eqnarray}\label{tab:act}
\mathcal S= \int \, d^{n+1}x \,\frac{\sqrt{-g}}{2n!}\, e^{\vec{\alpha}\cdot
\vec{\phi}}\, F^{2}_{(n)}\, .
\end{eqnarray}
We introduce in the action (\ref{tab:act}) the field $\varphi$ as a Lagrange multiplier term
\begin{eqnarray}\label{tab:duall}
\mathcal S= \int\,d^{n+1}x \,\frac{\sqrt{-g}}{2n!}\,e^{\vec{\alpha}\cdot
\vec{\phi}}\, F^{2}_{(n)}+ \frac{\sqrt{-g}}{n!}\, \varphi \,
\partial_{\mu}F_{\nu_{1}\ldots \nu_{n}}\, \epsilon^{\,\mu\nu_{1}\ldots \nu_{n}}\, .
\end{eqnarray}
Variation with respect to $\varphi$ simply enforces the Bianchi identity on $F_{(n)}$. However we can also integrate by parts so that there are no derivatives acting on $F_{(n)}$. Eliminating $F_{(n)}$ by its algebraic equation of motion 
\begin{eqnarray}
F_{\nu_{1}\ldots\nu_{n}}=
e^{-\vec{\alpha}\cdot\vec{\phi}}\,\epsilon_{\,\mu\nu_{1}\ldots
\nu_{n}}\, \partial^{\mu}\varphi\, ,
\end{eqnarray}
and substituting back into (\ref{tab:duall}) leads to the equivalent scalar action 
\begin{eqnarray}\label{tab:dualfo}
\mathcal S= \int\,d^{n+1}x \,\frac{\sqrt{-g}}{2}\,e^{-\vec{\alpha}\cdot
\vec{\phi}}\,\partial_{\mu}\varphi \,\partial^{\mu}\varphi \, .
\end{eqnarray}

In conclusion, after the reduction of (\ref{tab:actionmax}) on the torus $T^k$, several scalar fields appears:
\begin{itemize}

\item The dilatons $\Phi^u$, $(u=1, \ldots , q)$ present initially in the $D$ dimensional theory,
\item $k$ scalars $\phi_i$, $(i=1, \ldots, k)$ obtained from the reduction of the metric,
\item scalars from the reduction of the $2$-form $F_{\mu \nu}^k$ (associated to the potential $A_{\mu}^{k}$) generated at each step  by the dimensional reduction of gravity from $D-k+1$ to $D-k$ dimensions,
\item scalars resulting from the potentials associated to the $F_{(n)}$ present initially in the unreduced Lagrangian \eqref{tab:actionmax}. These scalars appear in the compactification on the torus $T^{n-1}$ for $k \geq n-1$,
\item scalars obtained by dualization of all $n$-form $F_{(n)}$ when $k=D-n-1$. In particular, when $D=3$ (compactification on $T^{D-3}$), all the $2$-form $F_{\mu \nu}^k$ ($k=1, \ldots, D-3$), resulting from the reduction of the metric, can be dualize into scalars.

\end{itemize}
The three last kind of scalars will be denoted by $\chi_{\alpha}$.\\

Performing the reduction of the Lagrangian (\ref{tab:actionmax}) to $D=3$, all the physical degrees of freedom which lie in the matter fields can be dualised into scalars. Therefore the reduced  Lagrangian in $D=3$ takes the form: 
\be\label{tab:lagthree}
 \sqrt{-g} \bigg(R - \frac{1}{2}
\partial_{\mu}\vec{\varphi} \cdot \partial^{\mu}\vec{\varphi} -
\frac{1}{2}\sum_{\alpha} e^{\sqrt{2}\,\vec{\alpha}\cdot
\vec{\varphi}}\, (\partial_{\mu}\chi_{\alpha}  + \tfrac{1}{2!} \chi_{\beta} \partial_{\mu}\chi_{\beta '}+...)
(\partial^{\mu}\chi_{\alpha}  + \tfrac{1}{2!} \chi_{\beta} \partial^{\mu}\chi_{\beta '}+...) \bigg),
\ee
where $\vec{\varphi}= \{\Phi^{u},\phi_{i}\}=
(\Phi^{1},\ldots,\Phi^{q};\phi_{1},\ldots,\phi_{D-3})$ and where the vectors $\vec{\alpha}$ are constant vectors with $(q+D-3)$ components\footnote{The vectors $\vec{\beta}$ and $\vec{\beta}'$ in \eqref{tab:lagthree} are such that $\vec{\beta}+ \vec{\beta}' = \vec{\alpha}$. }, characterised the scalars $\chi_{\alpha}$.

One expects that the symmetry of this reduced Lagrangian will be $\mathrm{GL}(D-3, \mathbb{R})$ which is the symmetry group of the moduli space of the $(D-3)$-torus. But for some very specific theories, as eleven-dimensional supergravity, this symmetry is much larger. In fact under certain conditions 
the scalar part of the reduced Lagrangian $\mathcal{L}_3$ \eqref{tab:lagthree} can be identified to a coset Lagrangian $\mathcal{L}_{\g / \mathrm{K}}$ invariant under the transformation $\g/ \mathrm{K}$ where $\g$ is a simple Lie group and $\mathrm{K}$ is the maximal compact subgroup of $\g$. Note that if the time coordinate is compactified, the reduced three-dimensional Lagrangian can be identified to a coset Lagrangian $\mathcal{\mathrm G/ \mathrm K^*}$ where $\mathrm K^*$ is not the maximally compact subgroup of $\mathrm G$. For instance, we will see in Chapter \ref{chap:finitesymsu215} that the reduction of pure $\mathcal N=2$ supergravity on a time-like circle exhibits a coset symmetry $\su/\mathrm{SL}(2, \mbb R) \times \mathrm U(1)$ where the subgroup $\mathrm{SL}(2, \mbb R) \times \mathrm U(1)$ is non-compact .\\

As all the simple Lie group are classified (see Figure \ref{fig:liealgebras}), we can ask an interesting question: What are the theories containing gravity suitably coupled to forms and dilatons which may exhibit upon dimensional reduction down to $3$ dimensions a coset structure $\mathrm{G}/\mathrm K$? This question leads to the procedure of \emph{oxidation} \cite{Julia:1980gr,Cremmer:1999du,Keurentjes:2002xc,Keurentjes:2002rc,Keurentjes:2002vx} which is roughly speaking the inverse of the dimensional reduction. In this context, we call \emph{maximally oxidised theory} such a Lagrangian theory \eqref{tab:actionmax} defined in the highest possible space-time dimension $D$ namely which is itself not obtained by dimensional reduction. In this case, the vectors $\vec{\alpha}$ present in $\mathcal{L}_3$ \eqref{tab:lagthree} are identified to the positive roots \footnote{If the reduction to $3$ dimensions exhibits a non-split real group $\mathrm G$, the vectors $\vec{\alpha}$ correspond to the positive restricted roots of the non-split real algebra $\mathfrak{g}_r$ (see Section \ref{subsec:restiwa}). If the restricted root is degenerate, there will be  more than one scalar field $\chi_{\alpha}$ associated to it.} of the simple Lie algebra $\mf g$ (in its split real form) associated to the Lie group $\mathrm G$. The scalar products of the simple roots allows the construction of the Dynkin diagram of $\ag$. The vertices of the horizontal line of the Dynkin diagram of $\ag$ (see Figure \ref{fig:liealgebras}) define \emph{the gravity line}. It represents simple roots related to scalar fields $\chi_{\alpha}$ coming from the reduction of the metric $g_{MN}$ (except for the algebra $C_n$). The vertices not belonging to the gravity line are called \emph{electric} or \emph{magnetic root} depending if the  simple root is related to a scalar field $\chi_{\alpha}$ resulting from the reduction of an $n$-form $F_{(n)}$ or from its dualization into a scalar.

The maximally oxidised theories have been listed for each coset space $\mathrm G/\mathrm K$ in \cite{Cremmer:1999du} when the associated Lie algebra $\mf g$ is a split real form and in \cite{Keurentjes:2002rc} when $\mf g$ is a non-split real form. For instance, we get that the reduced Lagrangian in $3$ dimensions is invariant under transformations of $\mathrm{SL}
(D-2, \mbb R)/ \mathrm{O}(D-2)$ for the theory containing only gravity and  under transformations  of $\e/\mathrm{SO}
(16)$ for the eleven-dimensional supergravity. Note that all maximally oxidised theories do not have necessarly supersymmetry extension such as the low energy effective action of the bosonic string in $26$ dimensions which exhibits the coset $\mathrm{O}(24,24)/(\mathrm O(24)\times \mathrm O(24))$ upon dimensional reduction down to three dimensions.
\subsubsection{The Coset Lagrangian $\mathcal{L}_{ \g/ \mathrm{K}}$ in $3$ dimensions}
\label{PERTABsec:cosetlagg}

We will now show how the scalar part of the reduced Lagrangian $\mathcal{L}_3$ (\ref{tab:lagthree}) can be identified to a coset Lagrangian $\mathcal{L}_{\g/\mathrm{K}}$ where $\g$ is a simple Lie group and $\mathrm{K}$ is the maximal compact subgroup of $\mathrm{G}$. In the case where the subgroup of $\mathrm G$ is not compact (namely in the case of time-like circle reduction) the construction of the coset Lagrangian is explained in Section \ref{app:SigmaModel}. 

First, let us take a closer look at the subgroup $\mathrm{K}$. In the case of the compactification of gravity on $T^2$, we have found that $\mathrm{K}= \mathrm{O}(2)$. However, it is clear that performing more compactificactions on $T^k$ it would become increasingly complicated to construct the compensator $k$. There is fortunately a general theorem in the theory of Lie algebras that claims that there exists an element of the maximal compact subgroup $\mathrm{K}$ of $\g$ which does the job of compensation. This is the Iwasawa decomposition, which was introduced in Section \ref{subsec:chevalleyinvo2}. Here we introduce the Iwasawa decomposition at the group level. We then have the following statement:  \emph{every element  $g$ in the Lie group $\g$, associated to a Lie algebra $\mathfrak{g}$, can be uniquely expressed as the following product:}
\begin{eqnarray}\label{Iwasawadecomposigroupl}
g= g_{\mathfrak{k}} \, g_{\mathfrak{h}}\, g_{\mathfrak{n}_{
+}}\, ,
\end{eqnarray}
\emph{where $g_{\mathfrak{k}}$ belongs to the maximal compact subgroup $\mathrm{K}$, with Lie algebra $\mathfrak{k}\subset \mathfrak{g}$, $g_{\mathfrak{h}}$ belongs to the Cartan torus $\mathfrak{h}$ of $\mathfrak{g}$ and $g_{\mathfrak{n}^{
+}}$ belongs to the nilpotent part of $\mathrm{G}$.} Our coset representative $\mathcal{V}$ will be constructed by exponentiating the Cartan generators and the full set of positive-root generators (see (\ref{tab:cos22})). Thus our coset representative is written as $\mathcal{V}= g_{\mathfrak{h}}\, g_{\mathfrak{n}_{+}}$. Now, if we act by right-multiplication with a general group element $g$ in $\g$, then $\mathcal{V} g$ is some element of the group $\g$. Now, invoking the Iwasawa decomposition, we must be able to write the group element $\mathcal{V} g$ in the form $g_{\mathfrak{k}} \mathcal{V}'$ where $ \mathcal{V}'$ itself is of the form $g_{\mathfrak{h'}}\, g_{\mathfrak{n}^{,}_{+}}$. This assures that there exists a way of pulling out an element $k$ of the maximal subgroup $\mathrm{K}$ of $\g$ on the left-hand side, such that $\mathcal{V} g= k \mathcal{V}'$. 

To construct the maximal subgroup $\mathrm K$ of $\g$, we use the Chevalley involution $\omega$ 
defined in (\ref{eqn:involutionchev}) which has the effect of reversing the sign of every non-compact generator in the algebra $\mathfrak{g}$, while leaving the sign of every compact generator unchanged.\\

Let us now generalize the construction of the coset Lagrangian $\mathcal{L}_{\mathrm{SL}(2, \mbb R)/ \mathrm O(2)}$ done in Section \ref{PERTABsec:red2dim} for the coset $\mathrm G/\mathrm K$. The coset representative $\mathcal{V}$ is build by exponentiating the Borel subalgebra $\mathfrak{b}$ of $\mathfrak{g}$ (see Section \ref{subsec:chevalleyinvo2})
\begin{eqnarray}
\mathcal{V}(x^{\mu})=e^{\frac{1}{\sqrt{2}}\,\vec{\varphi}(x^{\mu})\cdot \vec{h}}\,
\sum_{{\alpha}>0}\, e^{\chi_{{\alpha}}(x^{\mu})\, 
e_{{\alpha}}}\, ,
\end{eqnarray}
where $\vec{h}=\{R_{u},h_{i}\}$. The generators $R_{u}$, ($u=1,\ldots,q$) are generators associated to the dilatons $\Phi^{u}$ present in the non-reduced theory in $D$ dimensions (see \eqref{tab:actionmax}). The  $h_{i}$ ($i=1,\ldots,D-3$)
and $R_u$ form the Cartan subalgebra $\mathfrak{h}$ of $\mathfrak{g}$ (defined in Section \ref{subsec:chevalley}) and the generator $e_{{\alpha}}$ is the generator associated to the positive root $\alpha$  (see Section \ref{subsec:rootsystem1}). We can identify the scalar part of the reduced Lagrangian (\ref{tab:lagthree}) to a coset Lagrangian
\be \begin{split} \label{eqn:lag3dgsurk25}
\mathcal{L}_{\mathrm{G} / \mathrm{K}} &= \frac{1}{4}\, \big(\,
\partial_{\mu} \mathcal{M}^{-1}\big| \partial^{\mu}
\mathcal{M}\,\big)\, ,\\
&= \frac{1}{4}\, tr ( \,
\partial_{\mu} \mathcal{M}^{-1}\, \partial^{\mu}
\mathcal{M}\,)\, ,
\end{split}\ee
where $\mathcal{M}= \mathcal{V}^{\sharp}\, \mathcal{V}$ and $(\cdot |\cdot) $ is the invariant bilinear form on the Lie algebra (see Section \ref{subsec:bilinear}) that corresponds to the trace in the case of finite-dimensional simple Lie algebra. We define the \emph{generalised transpose}  $x^{\sharp}$ on a generator $x$ of $\mf g$ by
\begin{eqnarray}
x^{\sharp} \equiv - \omega\,(x)\, ,
\end{eqnarray}
where $\omega$ is the Chevalley involution defined in \eqref{eqn:involutionchev}.
In the simple cases, corresponding to orthogonal subgroups $\mathrm{K}$ (like $\mathrm{O}(2)$), $\sharp$ coincides with the transpose. If we normalise in the adjoint representation the Cartan and the positive root generators so that 
\begin{eqnarray}\label{tab:399}
tr ( \, h_{i}\,h_{j} \, ) = \delta_{ij}\, , \qquad tr (
\,e_{{\alpha}}\,e_{{\beta}} \, ) =0 \, ,\qquad tr (
\,e_{{\alpha}}\,f_{{\beta}} \, ) = \delta _{\alpha\beta}\, ,
\end{eqnarray}
then one can show that $\mathcal{L}_{\mathrm{G} / \mathrm{K}}$ is precisely the scalar part of (\ref{tab:lagthree}). Thus it follows that if the vectors $\vec{\alpha}$ obtained from the compactification can be identified with positive roots of an algebra $\mathfrak{g}$, then the action when dimensionally reduced to three dimensions has a symmetry $\g/\mathrm{K}$. Indeed, we can easily check that $\mathcal{L}_{\mathrm{G} / \mathrm{K}}$ is invariant under the global transformation $g \in \g$ and under the local transformations $k \in \mathrm{K}$. 
\begin{eqnarray}
\mathcal{V}(x^{\mu}) \longrightarrow k(x^{\mu})\, \mathcal{V}(x^{\mu})\, g\, .
\end{eqnarray}
\subsection{The conjecture in the infinite-dimensional case}

In the previous section, we have seen the dimensional reduction to $3$ dimensions of  $D$-dimensional gravitational theories  described by the Lagrangian \eqref{tab:actionmax} . That leads to a reduced Lagrangian \eqref{tab:lagthree} containing only scalars coupled to gravity in $3$ dimensions. It is interesting to consider the dimensional reduction beyond $3$ dimensions\footnote{The process using previously to reduce the different theories must be renounced because the parameter $ \gamma_{D-1} = \frac {1}{\sqrt{2(D-2)(D-3)}}$ (see \eqref{tab:alphbet}) appearing through the reduction to $d=D-1$ dimensions in all the components of the root of algebra $\mathfrak{g}$, is badly defined for $d=2$ and $1$. } but it is not any more possible to reduce these theories on the torus $T^k$ for $k > D-3$. Indeed later compactification will not give additional scalar fields in the coset Lagrangian $\mathcal{L}_{\g/\mathrm{K}}$. Nevertheless, the symmetries of the systems must increase because one expects that all the theories posses at least a global symmetry $\mathrm{GL}(k,\mathbb{R})$ when they are reduced on $T^k$. Therefore it was conjectured by Bernard Julia that the symmetry group obtained in $3$ dimensions is extended when we reduce further dimensions \cite{Julia:1980gr,Julia:1982gx}. These additional reductions would give infinite-dimensional Kac-Moody symmetries obtained by adding nodes in the Dynkin diagram of the corresponding finite algebra (see Figure \ref{fig:dynkinkmalgebras}).

It has been shown that the reduced theory in $2$ dimensions are connected to a infinite-dimensional symmetry $\gp$ (affine extension of $\g$) \cite{Nicolai:1987vy} obtained by adding one node to the Dynkin diagram of $\mf g$ where $\mf g$ is a finite Lie algebra (see Section \ref{sec:classkm}). The first example of affine 'hidden' symmetries is the affine symmetry group $\mathrm{SL}(2,\mathbb{R})^{+}$, known as the Geroch group \cite{Geroch:1970nt,Geroch:1972yt,Breitenlohner:1986um}. It comes from 
the reduction of pure four-dimensional gravity on a two-torus $T^2$. The combination of the Ehlers coset $\mathrm{SL}(2, \mbb R)/ \mathrm{SO}(2)$, which appears in the reduction on a circle $S^1$,  and the Matzner-Mizner coset $\mathrm{SL}(2, \mbb R)/ \mathrm{SO}(1,1)$ appearing by further reduction on a  time-like Killing vector,  leads to an infinite-dimensional symmetry known as the Geroch group acting on solutions of Einstein equations with two commuting Killing vectors (axisymmetric stationary solutions).  These results provides  the integrability of these theories in the reduction of two dimensions \cite{Nicolai:1987kz,Nicolai:1988jb}.


Motivated by the dimensional reduction, it has been argued that  the  Kac-Moody algebra $\mathfrak{g}^{++}$ (overextension of $\mathfrak{g} $) defined in Section \ref{sec:classkm}  can play a role in the compactification to $1$ dimension \cite{Julia:1982gx,Mizoguchi:1997si}. Finally, when all the dimensions are compactified, it is obvious to extend the algebra $\mathfrak{g}^{++}$ to $\mathfrak{g}^{+++}$ (triple extension of $\mathfrak{g} $) defined in Section \ref{sec:classkm}   by adding a third vertex. Such $\gppp$ symmetries were first conjectured in the aforementioned cases 
\cite{West:2000ga,West:2001as, Lambert:2001gk} and the extension to all $\gppp$ was proposed in \cite{Englert:2003zs}. So this construction motivates the fact that the eleven-dimensional supergravity could have the symmetry $\mathrm{G}^{+++}=\mathZ{E}_{8(8)}^{+++}=\eppp$ and that the pure gravity in $D$ dimensions could have the symmetry $\mathrm{SL}(D-2, \mbb R)^{+++}$.
 
 More generally all simple maximally non-compact Lie groups $\g$ could be generated from the reduction down to $3$ dimensions of suitably chosen actions \cite{Cremmer:1999du} and it was conjectured that these actions possess the very-extended Kac-Moody symmetries $\gppp$ \cite{Englert:2003zs}. $\mathfrak{g}^{+++}$ algebras are defined by the Dynkin diagrams depicted in Figure \ref{fig:dynkinkmalgebras}, obtained from those of $\mathfrak{g}$ by adding three nodes. 

 \setcounter{equation}{0}
\section{Non-linear $\sigma$-models over cosets spaces}  \label{app:SigmaModel}
In Section \ref{PERTABsec:cosetlagg}, we have constructed a coset Lagrangian $\mathcal{L}_{\mathrm{G}/ \mathrm{K}}$ where $\g$ is a simple Lie group and $\mathrm{K}$ its maximal compact subgroup. In what follows, we will be interested by more general non-linear $\sigma$-models over coset spaces $\bar{\mathrm G}/\bar{\mathrm K} $ where $\bar{\mathrm{G}}$ is not reduced to be a finite-dimensional simple Lie group but it can be an infinite-dimensional Kac-Moody group and the subgroup $\bar{\mathrm{K}}$ is not especially compact. The appearance of such subgroups $\bar{\mathrm{K}}$ will result of the reduction of the time-like coordinate. The associate algebra $\bar{\mf k}$ will be defined  as the subalgebra of $\bar{\mf g}$ invariant under an involution called \emph{temporal involution} that we will describe in details in Section \ref{sec:tempinvolutintro}.

\subsection{The coset lagrangian $\mathcal{L}_{\bar{\mathrm G}/\bar{\mathrm K} }$}

Here we recall how to define a non-linear $\sigma$-model over a coset space $\bar{\mathrm G}/\bar{\mathrm K}$. Let $\bar{ \mathrm G}$ be a connected Kac-Moody group. Consider its real  algebra $\bar{\mathfrak{g}}$ and an involution $\imath: \bar{\mathfrak{g}} \rightarrow \bar{\mathfrak{g}}$. Using the two eigenspaces of $\imath$ we can write the Lie algebra as the direct sum
\be \label{eqn:decopmpolag}
\bar{\mathfrak{g}} = \bar{\mathfrak{k} }\oplus \bar{\mathfrak{p}}, 
\ee  
where $\bar{\mathfrak{k}}$ is the span of the generators fixed under $\imath$ and $\bar{\mf p}$ is the complement of $\bar{\mathfrak{k}}$ which is pointwise anti-invariant under $\imath$.  In the case when the involution $\imath$ corresponds to the Chevalley involution $\omega$, the decomposition \eqref{eqn:decopmpolag} is called the Cartan decomposition (see \eqref{eqn:cartandecomp124}). The vector space $\bar{\mathfrak{k}}$ does  constitute a subalgebra. Let $\bar{\mathrm K}$ be the closed  group corresponding to the subalgebra $\bar{\mathfrak{k}}$. We can now consider $\bar{\mathrm K}$ as a topological subspace of $\bar{ \mathrm G}$ and define the coset space $\bar{ \mathrm G}/\bar{\mathrm K}$ as the set of left cosets $\bar{\mathrm K}g$ where $g \in \bar{\mf g}$. As $\bar{\mathrm K}$ is closed $\bar{ \mathrm G}/\bar{\mathrm K}$ can be endowed with a smooth manifold structure. 

Consider a $k$-dimensional manifold $W$ with metric $h$ and let $x^{\mu}\ (\mu=1, \ldots, k)$ be coordinates on W. 
We can now define a $\sigma$-model for smooth maps $\mathcal{V} : W \rightarrow \bar{ \mathrm G}/\bar{\mathrm K}$, such that $\mathcal{V} : p \mapsto \bar{\mathrm K}\mathcal{V}(p)$. Locally $\mathcal{V}$ can be described, using the exponential map, by a map $v: W \rightarrow \mathfrak{p}$. Let $k : W \rightarrow \bar{\mathrm K}$ be a smooth map, $\mathcal{V}$ and $k\mathcal{V}$ define hence the same map into the coset. We will call such maps $k$ gauge transformations for reasons to be clear below.

 To describe a $\sigma$-model on the coset space $\bar{ \mathrm G}/\bar{\mathrm K}$, one can choose a map $\mathcal{V} : W \rightarrow \bar{ \mathrm G}/\bar{\mathrm K}$ in a fixed gauge, that transforms under global transformations $g \in \bar{\mathrm G}$ as 
\be \label{eqn:tsfonu12}
\mathcal{V}(x^{\mu}) \longrightarrow k(x^{\mu}) \mathcal{V}(x^{\mu}) g,
\ee
where $k(x^{\mu}) \in \bar{\mathrm K}$ is a local compensating transformation requiered to restore the choosen gauge for the coset representative $\mathcal{V}$. A standard gauge-choice is the Borel gauge which relies on the global Iwasawa decomposition \eqref{Iwasawadecomposigroupl}. This gauge choice is always admissible when the subgroup $\bar{\mathrm K} = \mathrm K$ corresponds to the maximal compact subgroup of $\bar{\mathrm G}$ (see the discussion in Section \ref{PERTABsec:cosetlagg} ). But where the subgroup $\bar{\mathrm K}$ is non-compact, we are outside the scope of the global Iwasawa decomposition and the Borel gauge will be not always a good choice \cite{Keurentjes:2005jw}. In this dissertation, we still write maps $\mathcal{V}(x^{\mu})$ in the Borel gauge whatever the form of the subgroup $\bar{\mathrm K}$ of $\bar{\mathrm G}$. But we are to be careful on certain aspects because in the case where $\bar{\mathrm{K}}$ is non-compact, not every element in the coset can be written by exponentiating the Borel generators and in this case the coset representative will describe only a proper subset of $\bar{\mathrm G}/ \bar{\mathrm K}$. Therefore, constructing our coordinate system on the coset $\bar{\mathrm G}/ \bar{\mathrm K}$ via the Borel gauge, we will only consider the subspace of $\bar{\mathrm G}$ where the elements are decomposable in the Iwasawa form.\\

 Our main three criteria on the Lagrangian $\mathcal{L}_{\sigma}$ of the $\sigma$ model is that it should be invariant under $\bar{ \mathrm G}$ acting globally on $\mathcal{V}$ from the right, under gauge transformations and such that its equations of motions should be second order in derivatives of $\mathcal{V}$. It is hence natural to define the action in terms of the $\bar{\mathfrak{g}}$-valued Maurer-Cartan form  on $\bar{ \mathrm G}$ restricted to $\bar{ \mathrm G}/\bar{\mathrm K}$:
\be
\mathrm{d}\mathcal{V} \mathcal{V}^{-1} = \dd x^{\mu}\,  \partial_{\mu} \mathcal V(x) \mathcal V(x)^{-1}.
\ee
According to the decomposition \eqref{eqn:decopmpolag} we can split the Maurer-Cartan form as
\be 
\mathrm{d}\mathcal{V} \mathcal{V}^{-1}= \mathcal{P}+\mathcal{Q}\,,
\ee
where 
$\mathcal{P}$ is the projection of $\mathrm{d}\mathcal{V} \mathcal{V}^{-1}$ into $\bar{\mf p}$ and $\mathcal{Q}$ is the orthogonal projection
\be \begin{split}\label{eqn:pandq} 
\mathcal{P} &=  \frac{1}{2} (\mathrm{d}\mathcal{V}\mathcal{V}^{-1} - \imath (\mathrm{d}\mathcal{V}\mathcal{V}^{-1})) \quad \in \ \bar{\mf p}, \\
\mathcal{Q} &=  \frac{1}{2} (\mathrm{d}\mathcal{V}\mathcal{V}^{-1} + \imath (\mathrm{d}\mathcal{V}\mathcal{V}^{-1}))  \quad  \in \ \bar{\mf k}.
\end{split}\ee
Let us recall the Baker-Hausdorff formulas 
\be\begin{split}
\dd e^{X}\, e^{-X}&= \dd X+ \frac{1}{2!} [X, \dd X]+ \frac{1}{3!}[X,[X,\dd X]] + \ldots ,\\
e^X\, Y\, e^{-X}&= Y + [X, Y]+ \frac{1}{2!}[X,[X,Y]] +\ldots,
\end{split}
\ee
which will be very useful in the computation of the terms of the Maurer-Cartan form \eqref{eqn:pandq}.
We therefore write our Lagrangian as \footnote{The construction used to describe the coset Lagrangian $\mathcal{L}_{\mathrm G/ \mathrm K}$ \eqref{eqn:lag3dgsurk25} is another way to describe the invariant Lagrangian and is of course equivalent to this one.}
\begin{equation}
\label{eqn:SigmaModelAction}
\mathcal{L}_{\sigma} = \sqrt{|h|} h^{\mu \nu} (\mathcal{P}_{\mu} | \mathcal{P}_{\nu}) ,
\end{equation}
where $( \cdot | \cdot )$ is the invariant bilinear form of $\bar{\mathfrak{g}}$ (defined in Section \ref{subsec:bilinear})  and $h = \mathrm{det}\ h_{\mu \nu}$. This Lagrangian is manifestly invariant under the right action of $\bar{\mathrm G}$ on $\mathcal{V}$, and also under gauge transformations acting from the left on $\mathcal{V}$. From \eqref{eqn:tsfonu12}, it follows that under gauge transformation $k$ $\mathcal{P}$ transforms covariantly
\be
\mathcal{P} \longrightarrow k \mathcal{P} k^{-1},
\ee
and the $\bar{\mf k}$-valued field $\mathcal{Q}$ transforms  as a connection
\be\label{eqn:tsfoqconnection}
\mathcal{Q} \longrightarrow k \mathcal{Q} k^{-1}+ \dd k k^{-1}.
\ee

 \subsection{The equations of motion and conserved current}
To derive the equations of motion, consider a variation of $\mathcal{V}$ from the left, i.e. $\mathcal{V}(x) \rightarrow \mathcal{V}'(x)=  e^{\epsilon(x)}\mathcal{V}(x)$, with $\epsilon(x) \in \bar{\mathfrak{p}}$ infinitesimal. As the action is invariant under the action of local $\bar{\mathrm K}$-transformations from the left, this is a non-trivial deformation only for $\bar{\mathfrak{p}}$-valued $\epsilon(x)$. This transformation gives
\begin{equation}
\delta \mathcal{P} =  \mathrm{d}\epsilon + [\mathcal{Q}, \epsilon]\, ,
\end{equation}
 and the equations of motion thus become
\begin{equation}
\label{eqn:SigmaMotion}
\partial_{\mu} (\sqrt{|h|}  h^{\mu \nu} \mathcal{P}_{\nu}) - \sqrt{|h|} h^{\mu \nu}[ \mathcal{Q}_{\mu}, \mathcal{P}_{\nu} ] = 0.
\end{equation}
We have seen in \eqref{eqn:tsfoqconnection} that $\mathcal{Q}$ transforms under gauge transformations as a connection. Its appearance in (\ref{eqn:SigmaMotion}) supports this point of view, so will hereafter refer to $\mathcal{Q}$ as the connection. Note that $\mathcal{Q}$ and $\mathcal{P}$ are not independent, but both derived from the map $\mathcal{V}$. The equations of motion (\ref{eqn:SigmaMotion}) is hence invariant under both gauge transformations and global $\bar{ \mathrm G}$-transformations. We can furthermore use Noethers theorem to derive a gauge invariant Lie algebra valued Noether ($k-1$)-form
\begin{equation}\label{eqn:currapp}
\mathcal J^{\mu} =  \sqrt{|h|} h^{\mu \nu} \mathcal{V}^{-1} \mathcal P_{\nu} \mathcal{V} ,
\end{equation}
which is conserved by virtue of the equations of motion. The Noether-form transform in the adjoint representation of $\mathrm G$, so that when $\mathcal{V} \rightarrow \mathcal{V}'=  \mathcal{V} g^{-1}$,
\begin{equation}
\mathcal J^{\mu}  \rightarrow \mathcal{J}'^{\mu}=  g\mathcal{J}^{\mu} g^{-1}.
\end{equation}
Note also that $\mathcal J$ being conserved implies the equations of motion, so (\ref{eqn:SigmaMotion}) and (\ref{eqn:currapp}) are equivalent, which is the natural consequence of the arbitrariness in defining the action of $\bar{ \mathrm G}$ from the right or from the left when deriving the equations of motion.\\

Let us consider the dynamics of this model. Choose a grading of $\bar{\mathfrak{g}}$ such that
\begin{equation}
\bar{\mathfrak{g}} = \underset{\ell}\bigoplus \ \bar{\mathfrak{g}}_\ell, 
\end{equation}
respected by the involution $\imath$, in the sense that $\imath(\bar{\mathfrak{g}}_\ell) \subset \bar{\mathfrak{g}}_{-\ell}$. The restricted root space decomposition \cite{Helgason:1978} provide for example such a grading, another is given by the level decomposition under a regularly embedded subalgebra (see Section \ref{sec:leveldecomposition}). We can then choose a base  of every level $\ell$ and $-\ell$ in terms of generators $E^{(\ell)}$ and $F^{(\ell)}$ such that $\imath(E^{(\ell)}) = -F^{(\ell)}$. Note that for a finite Lie algebra, the spaces $\bar{\mathfrak{g}}_\ell$ are zero for $|\ell|$ bigger than some given $\mathrm N$, and if $\mathrm{dim}\ \bar{\mathfrak{g}}_\ell > 1$, $E^{(\ell)}$ (and $F^{(\ell)}$) has some additional index, enumerating these generators.  The algebra-valued function $\mathcal{P}$ can now be expanded, with respect to this grading \cite{Damour:2004zy}, as
\begin{equation}
\label{eqn:generalP}
\mathcal{P} = \frac{1}{2}P_{(0)} K^{(0)} + \frac{1}{2} \sum_{\ell \geq 1} P_{(\ell)} (E^{(\ell)} + F^{(\ell)})\, ,
\end{equation}
and we write $K^{(0)}$ for the elements in $\bar{\mathfrak{p}}$ at level zero, i.e. $K^{(0)} \in P_{\imath}(\bar{\mathfrak{g}}_0)$ (with $P_{\imath}: \bar{\mf g} \rightarrow \bar{\mf p}$ the projection onto the coset algebra) defining $J^{(0)}$ to span their complement in $\bar{\mathfrak{g}}_0$. The connection $\mathcal{Q}$ is similarly written
\begin{equation}
\label{eqn:generalQ}
\mathcal{Q} = \frac{1}{2}Q_{(0)} J^{(0)} + \frac{1}{2} \sum_{\ell \geq 1} P_{(\ell)} (E^{(\ell)} - F^{(\ell)})\, .
\end{equation}
Inserting these expressions into (\ref{eqn:SigmaMotion}) we see that, as all the generators are linearly independent, (\ref{eqn:SigmaMotion}) split into one equation for every generator. These equations are to be interpreted as equations of motion, but also as generalized Bianchi identities and constraints, as the parameters $P_{(l)}$ are not all simultaneously physical fields. More explicitly, inserting the expansions \eqref{eqn:generalP} and \eqref{eqn:generalQ} into (\ref{eqn:SigmaMotion}), we get the equation of motion for $P_{(0)}$,

\begin{equation} \begin{split}
\label{eqn:levelzeromotion}
\frac{1}{\sqrt{|h|}} \partial_{\mu} (\sqrt{|h|} h^{\mu \nu} {P_{(0)}}_{\nu}) K^{(0)} + \frac{1}{2} {P_{(0)}}^{\mu}{Q_{(0)}}_{\mu}[J^{(0)}, K^{(0)}] &\\
 + \sum_{\ell \geq 1} {P_{(\ell)}}^{\mu} {P_{(\ell)}}_{\mu} [ E^{(\ell)}, F^{(\ell)}] = 0 ,&
\end{split} \end{equation}
and for the $P_{(\ell)}$'s we get
\be
\label{eqn:higherlevelmotion} \begin{split}
&\frac{1}{\sqrt{|h|}} \partial_{\mu} (\sqrt{|h|} h^{\mu \nu} {P_{(\ell)}}_{\nu})(E^{(\ell)}+F^{(\ell)}) + \frac{1}{2}  {Q_{(0)}}^{\mu} {P_{(\ell)}}_{\mu}  [J^{(0)}, E^{(\ell)}+F^{(\ell)}] \\
&- \frac{1}{2}  {P_{(0)}}^{\mu} {P_{(\ell)}}_{\mu}  [K^{(0)}, E^{(\ell)}-F^{(\ell)}]\\
&+ \frac12 \sum_{\substack{k,m \geq 1\\ k-m = \ell}} {P_{(k)}}^{\mu} {P_{(m)}}_{\mu} [ E^{(k)}-F^{(k)}, E^{(m)}+F^{(m)}] = 0 .
\end{split}\ee

\setcounter{equation}{0}
 \section{The $\gppp$-invariant action} \label{sec:gpppinvactionthese1}
In  Section \ref{sec:kmsymcompct14}, we have seen how hidden Kac-Moody symmetries are exhibited through compactifications of suitably chosen actions \eqref{tab:actionmax}. It was conjectured that theses actions possess the very-exyended Kac-Moody symmetries $\gppp$ \cite{Englert:2003zs}. Now, we would like to make these Kac-Moody symmetries manifest and in this context, an action $\mathcal{S}_{\gppp}$
explicitly invariant under the infinite-dimensional group $\gppp$ was constructed in  \cite{Englert:2003py}. We will review in this section the building of such an action  using the tools given in Section \ref
{app:SigmaModel}. 

The
action $\mathcal{S}_{\gppp}$ is defined in a reparametrisation invariant way  on a
world-line, a priori unrelated to space-time, in terms of an infinity of fields
$\psi_a(\xi)$ where $\xi$ spans the world-line. The fields $\psi_a(\xi)$
live in a coset space
$\gppp/\mathrm{K}^{*+++}$ where the subgroup $\mathrm{K}^{*+++}$ is invariant under a
\emph{temporal involution} $\Omega_1$ (defined later in Section \ref{PERTABsec:tempinv})
 which ensures that the action is $\mathrm{SO}(1,D-1)$ invariant and  which allows identification of index $1$ to a time coordinate.
 
 To avoid the confusion,  we will be interested in this section only by groups $\gppp$ whose the corresponding algebras are split real form. The subtleties introduced by non-split real forms will be highlighted in Part III. The Dynkin diagrams of the $\agppp$ algebras that we will consider in this section are represented in Figure \ref{fig:kmalgebrasgood}. We have  re-labelled the nodes of Figure \ref{fig:kmalgebrasgood} comparing to Figure \ref{fig:dynkinkmalgebras} to simplify the notations of this section.
 
 \begin{figure}[h]
\begin{center}
{\scalebox{0.17}
{\includegraphics{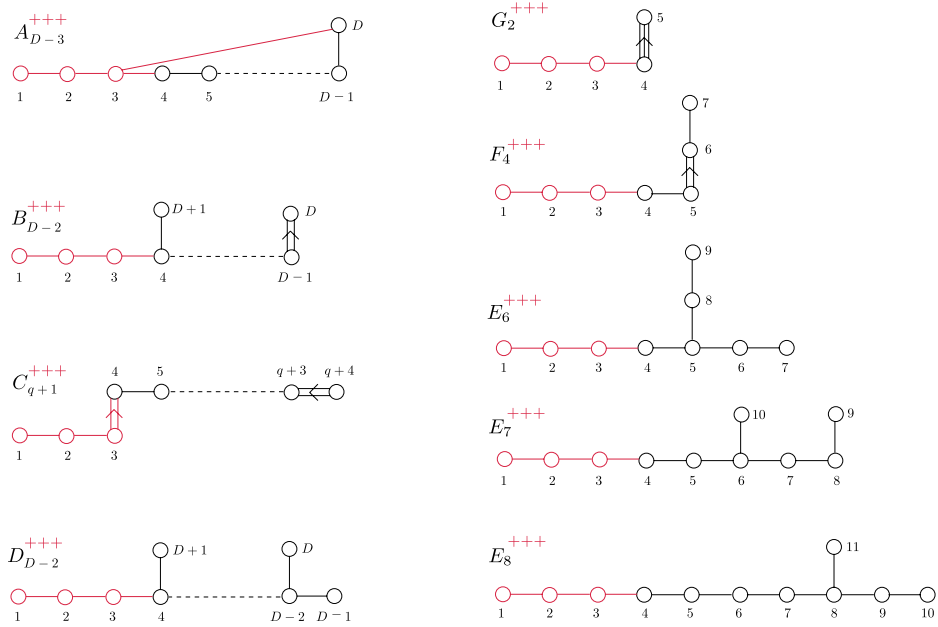}}}
\caption{\small \textsl{Dynkin diagrams of the Kac-Moody algebras $\agppp$ obtained by extension of the finite simple Lie algebras $\ag$ with the nodes labeled by $1,2,3$. The horizontal line starting at $1$ defines the gravity line which is the Dynkin diagram of a $A_{D-1}$ subalgebra. }}
\label{fig:kmalgebrasgood}
\end{center}
\end{figure}

\subsection{On the level decomposition of $\agppp$}
According to the discussion in Section \ref{app:SigmaModel}, we will write an element of the coset $\gppp/\mathrm K^{*+++}$ by exponentiating the Borel subalgebra of $\mathfrak{g}^{+++}$ as:
 \begin{eqnarray}\label{tab:corep}
 \mathcal{V} (\xi)= \exp^{ \psi_a(\xi) \mathcal{B}^{a} }\, .
 \end{eqnarray}
  To each Borel generator $\mathcal{B}^a$, we associate a field $\psi_a(\xi)$. As there is an infinity of generators $\mathcal{B}^a$, there is an infinite number of fields $\psi_a(\xi)$. Therefore we have to organise this summation on the infinity of fields  in  a recursive way. We will use the level decomposition (introduced in Section \ref{sec:leveldecomposition}) of the algebra $\agppp$ with respect to the finite subalgebra $A_{D-1}=\mathfrak{sl}(D, \mbb R)$.  Each $\mathfrak{g}^{+++}$ contains indeed a subalgebra $\mathfrak{gl}(D, \mbb R)$ such that $\mathfrak{sl}(D, \mbb R) \subset \mathfrak{gl}(D, \mbb R) \subset
\mathfrak{g}^{+++}$ where $D$ is interpreted as the space-time dimension.\\

The level zero generators are the $\mathfrak{sl}(D, \mbb R)$ generators and the Cartan generators not belonging to the $\mathfrak{sl}(D, \mbb R)$-subalgebra. One of this Cartan generator will always contribute to the enhancement of the algebra $\mathfrak{sl}(D, \mbb R)$ to $\mathfrak{gl}(D, \mbb R)$.
The generators of the representation $\mathfrak{gl}(D, \mbb R)$  are taken to be
$K^a_{~b}\ (a,b=1,2,\ldots ,D)$   with the following commutation relations
\begin{equation}
\label{dualpap:Kcom} [K^a_{~b},K^c_{~d}]   =\delta^c_b
K^a_{~d}-\delta^a_dK^c_{~b}\,  .
\end{equation}  
When the  maximally oxidised theory \eqref{tab:actionmax} has $q$ dilatons\footnote{All the maximally oxidised  theories have at most one dilaton except
the $C_{q+1}$-series characterised by $q$ dilatons. } the $\mathfrak{gl}(D, \mbb R)$ algebra is enlarged by the addition of $q$ $\mathbb R$-factors through abelian generators $R_u \,( u=1, \dots , q)$.
The $K^a_{~b}$ along with  the $R_u$  generators are the level zero
generators. The step operators of level greater than zero are  tensors
$R^{ a_1\dots a_r }_{\quad b_1 \dots b_s}$ of the $A_{D-1}$ subalgebra. We recall that each  tensor forms an  irreducible representation of $A_{D-1}$ characterised
by some Dynkin labels (see Section \ref{sec:leveldecomposition}). The lowest levels contain antisymmetric tensor step  operators
$R^{a_1a_2 \dots a_r}$ associated to electric and magnetic
roots arising from the dimensional reduction of field strength forms in \eqref{tab:actionmax}. They satisfy the tensor and scaling relations
\be \begin{split}
\label{dualpap:root} [K^a_{~b},R^{a_1\dots a_r}]   &=\delta^{a_1}_b R^{aa_2
\dots a_r} +\dots +
\delta^{a_r}_b R^{a_1 \dots a_{r-1}a}\, ,\\
 [R, R^{a_1\dots a_r}
] &=   -\frac{\varepsilon_A
a_A}{2}\,  R^{a_1\dots a_r}\, ,
\end{split}\ee
 where $a_A$ is the dilaton coupling constant to the
 field strength form   and
$\varepsilon_A$ is $+1\, (-1)$ for an
electric (magnetic) root \cite{Englert:2003zs}. The generators obey the invariant
scalar product relations
\begin{eqnarray}
\label{dualpap:killing}
&&( K_{~a}^a | K_{~b}^b) =G_{ab}\, ,\quad(
K^b_{~a} |K_{~c}^d)=
\delta_c^b\delta_a^d \ a\neq b\, , \quad( R| R)
=\frac{1}{2}
\,,\\
\label{dualpap:step}
&&(
R^{ a_1\dots a_r }_{\quad b_1 \dots b_s} | \bar{R}_{ a'_1\dots a'_r }^{\quad b'_1 \dots b'_s})
=\delta^{a_1}_{a'_1}\dots\delta^{a_r}_{a'_r}\delta^{b'_1}_{b_1}\dots\delta^
{b'_s}_{b_s}\,.
\end{eqnarray}
Here $G= I_D -
\frac{1}{2}\Xi_D$ where $\Xi_D$ is a $D$-dimensional matrix with all
entries   equal to unity and   $\bar{R}_{ a'_1\dots a'_r }^{\quad b'_1 \dots b'_s}$ designates the negative step operator conjugate to $ R^{ a'_1\dots a'_r }_{\quad b'_1 \dots b'_s}$ interchanging upper and lower indices.\\

Using this level decomposition, it is possible to rewrite the coset representative  (\ref{tab:corep}) as
\begin{equation}
\label{dualpap:positive} {\cal V(\xi)}= \exp \big(\sum_{a\ge b}
h_b^{~a}(\xi)K^b_{~a} -\sum_{u=1}^q
\phi^u(\xi) R_u\big) \exp \big(\sum
\frac{1}{r!s!} A_{ a_1\dots a_r |s_1 \dots s_p }(\xi) R^{ a_1\dots a_r |s_1 \dots s_p } \big)
\end{equation}
where we have associated to each Borel generators $\mathcal{B}^a =\{K^b_{\ a}, R_u, R^{ a_1\dots a_r}_{\quad b_1 \dots b_s }, \dots \}$ a field $\psi_a (\xi)=\{h^{\ a}_b(\xi), \phi_u(\xi), A_{ a_1\dots a_r}^{\quad b_1 \dots b_s } (\xi), \dots \} $. Note that the first exponential in \eqref{dualpap:positive} contains only level zero operators and the second one the positive root generators of levels strictly greater than zero.
\subsection{The temporal involution and the coset $\gppp/\mathrm K^{*+++}$}\label{PERTABsec:tempinv}
The metric $g_{\mu \nu}$ at a fixed space-time point parametrises the coset $\mathrm{GL}(D)/\mathrm{SO}(1,D-1)$. To construct a $\gppp$-invariant action containing such a tensor, we shall build a non linear realisation of $\gppp$ in a coset space $\gppp/\mathrm K^{*+++}$ where the subgroup $\mathrm K^{*+++}$ contains the Lorentz group $\mathrm{SO}(1,D-1)$. As $\mathrm{SO}(1,D-1)$ is non-compact, we cannot use the Chevalley involution $\omega$ (defined in Section \ref{subsec:chevalleyinvo2}) to construct $\mathrm K^{*+++}$ that is now non-compact. Rather we will use the \emph{temporal involution} $\Omega_1$ from which the required non-compact generators of $\mathrm K^{*+++}$ can be selected.

The temporal involution $\Omega_i$ generalizes the Chevalley involution to allow identification of the index $i$ to a to a time coordinate. Let us define the action of a general temporal involution $\Omega_i$  on the Chevalley generators $\{h_j, e_j, f_j \}$ of $\agppp$:
\be \begin{split}
&\Omega_i \\
h_j \quad &\rightarrow \quad  -\, h_j \, ,\\
e_j\quad &\rightarrow  \quad - \, \epsilon_i \, f_j\, , \\
f_j
\quad &\rightarrow\quad  - \, \epsilon_i \,e_j\, .
\end{split} \ee
Here $e_j$  is expressed as $A_{D-1}$ tensors and $f_j$  as the tensor with upper and lower indices interchanged. $\epsilon_i$ is defined as $+1(-1)$ if the number of $'i'$ indices (that is the number of time indices) is even (odd). In the same way, all the generators $e_{\alpha}$ and $f_{\alpha}$ (obtained by multiple commutators of $e_j$ and $f_j$ as in \eqref{eqn:ealphagen1} ) are mapped under the temporal involution to $(e_{\alpha}, f_{\alpha}) \longrightarrow \, -\epsilon_{\alpha}\, (f_{\alpha}, e_{\alpha})$ where $\epsilon_{\alpha}$ is equal to $+1(-1)$ depending that the number of indices $'i'$ are even (odd). As all the generators of $\mathfrak{g}^{+++}$ are expressed  in terms of $K^a_{\ b}, R_u$ and tensors $R^{ a_1\dots a_r}_{\quad b_1 \dots b_s }$ (see Section \ref{sec:leveldecomposition}), we can write the effect of the temporal involution on these quantities as
\be \begin{split}
\label{tabeq:inomegk}
&\Omega_i  \\
K^{a}_{\ b} \quad &\rightarrow \quad  -\, \epsilon_{a}\,\epsilon_{b}\,
K^{b}_{\ a}\, ,\\
R_u \quad  &\rightarrow  \quad  -R_u\, , \\
R^{ a_1\dots a_r}_{\quad b_1 \dots b_s }
\quad &\rightarrow \quad -\,\epsilon_{a_{1}}\ldots
\epsilon_{a_{r}}\,\epsilon_{b_{1}}\ldots\, \epsilon_{b_{s}}\,
\bar{R}_{ a_1\dots a_r}^{\quad b_1 \dots b_s }\, , 
\end{split}\ee
with $\epsilon_a = -1$ if $a=i$ and $\epsilon_a = 1$ otherwise. For example we have:   $\Omega_1(R^{9\, 10\, 11}) \, =\,-\,  \epsilon_9\, \epsilon_{10}\, \epsilon_{11} \, R_{9\, 10\, 11}\, =\, - \, R_{9\, 10\, 11}$. The temporal involution leaves invariant the subalgebra $\mf k^{*+++}$ of $\agppp$. Its generators are $(e_{\alpha}- \epsilon_{\alpha}  f_{\alpha})$.\\
 

Following the steps of Section \ref{app:SigmaModel} with the coset representative given by \eqref{dualpap:positive} and the involution $\imath=\Omega_1$, we are now able to construct an  $\mathcal{S}_{\gppp}$ invariant under
global $\gppp$ transformations, defined on  the
coset ${\gppp}/\mathrm{K}^{*+++}$
\begin{equation}
\label{dualpap:actionG} \mathcal{S}_{ \gppp}=\int \dd\xi  \frac{1}{n(\xi)}(\mathcal{P} (\xi) | \mathcal{P}(\xi))\, ,
\end{equation} where
$n(\xi)$ is an arbitrary \emph{lapse function} ensuring reparametrisation
invariance on the world-line. This lapse function ensures the motion to be a null geodesic on the coset through its equation of motion $(\mathcal{P}|\mathcal{P})=0$.  Writing
\begin{equation}
\label{dualpap:full} \mathcal{S}_{\gppp} =\mathcal S_{\gppp}^{(0)}+\sum_A
\mathcal S_{\gppp}^{(A)}\, ,
\end{equation} where $\mathcal S_{\gppp}^{(0)}$ contains all level
zero contributions, one obtains (see \cite{Englert:2003py} for more details)
\begin{subequations} \begin{align}
 \mathcal S_{\gppp}^{(0)} &=\frac{1}{2}\int \dd \xi
\frac{1}{n(\xi)}\left[\frac{1}{2}(g^{\mu\nu}g^{\sigma\tau}-
\frac{1}{2}g^{\mu\sigma}g^{\nu\tau})\frac{\dd g_{\mu\sigma}}{\dd \xi}
\frac{\dd g_{\nu\tau}}{\dd \xi}+\sum_{u=1}^q
\frac{\dd \phi^u}{\dd \xi}\frac{\dd \phi^u}{\dd \xi}\right], \label{dualpap:fullzero}\\
\mathcal S_{\gppp}^{(A)}&=\frac{1}{2 r! s!}\int \dd \xi
\frac{ e^{- 2\lambda
\phi}}{n(\xi)}\left[
\frac{\mathrm{D}A^{\mu_1\dots \mu_r}_{ \quad \nu_1\dots
\nu_s }}{\dd \xi} g_{\mu_1{\mu}^\prime_1}...\,
g_{\mu_r{\mu}^\prime_r}g^{\nu_1{\nu}^\prime_1}...\,
g^{\nu_s{\nu}^\prime_s}
\frac{\mathrm{D}A^{\mu'_1\dots \mu'_r }_{\quad \nu'_1\dots
\nu'_s }}{\dd \xi}\right]. \label{dualpap:fulla}
\end{align}\end{subequations}
The $\xi$-dependent fields $g_{\mu\nu}$ are defined as
$g_{\mu\nu} =e_\mu^{~a}e_\nu^{~b}\eta_{ab}$ where $e_\mu^{~a}=(e^{-h(\xi)})_\mu^{~a}$. The appearance of the
Lorentz metric $\eta_{ab}$ with $\eta_{11}=-1$ is a consequence of the
temporal involution $\Omega_1$. The metric $g_{\mu\nu}$ allows a
switch from  the Lorentz  indices  $(a,b)$ of the fields appearing  in
(\ref{dualpap:positive}) to
$\mathrm{GL}(D)$ indices $(\mu,\nu)$.  $\mathrm{D}/\dd \xi$ is a  covariant derivative generalising
$\dd /\dd \xi$ through  non-linear terms arising from  non-vanishing
commutators  between  positive  step operators. There evaluation is group dependent\footnote{As an example we give the covariant derivative of the $6$-form appearing at level $2$ in the decomposition of $E_8^{+++}$ (see Table \ref{tab:levdece11}): $\tfrac{\mathrm{D}}{\dd \xi} A_{a_1 a_2 \dots a_6} = \tfrac{\dd}{ \dd \xi} A_{a_1 a_2 \dots a_6}  +10\, A_{[a_1 a_2 a_3} \tfrac{ \dd}{ \dd \xi} A_{a_4 a_5 a_6]}$.}. The parameter $\lambda$ is the
generalisation of the scale parameter
$-\varepsilon_A a_A/2$ to all roots.
\subsection{Weyl reflections and signatures of $\gppp$-invariant action} \label{sec:tempinvolutintro}
We will now  review the effect of Weyl reflection on the space-time signature of a $\gppp$-invariant theory. First, recall that a Weyl transformation of a generator $T$ of a Lorentzian algebra $\mathfrak{g}^{+++}$ can be expressed as a conjugation by a group element $U_W$ of $\mathrm{G}^{+++}$: $T \longrightarrow  U_W\,  T \, U^{-1}_W $. The non-commutativity
of the temporal involution $\Omega_1$ with the Weyl reflections
\be \label{eqn:nocomu69}
  U_W\, (\Omega_1 T)\,  U_W^{-1}= \Omega '\,  (U_W T U^{-1}_W)\, , 
  \ee
implies that different Lorentz signatures $(t,s)$ (where $t(s)$ is the number of time (space) coordinates)   are related between themselves \cite{Keurentjes:2004bv,Keurentjes:2004xx,Englert:2004ph}. The analysis of signature changing has been done for all $\mathfrak{g}^{+++}$ that are very-extensions of a simple split Lie algebra $\mathfrak{g}$ \cite{deBuyl:2005it, Keurentjes:2005jw}. In these cases, Weyl reflections with respect to a root of gravity line\footnote{The gravity line is the set of the simple roots of the $\mathfrak {sl}(D, \mathbb R)$-part of $\mathfrak{g}^{+++}$ (see Figure \ref{fig:kmalgebrasgood}).} do not change the global Lorentz signature  but it changes only the identification of the time coordinate. In fact, only Weyl reflections with respect to roots not belonging to the gravity line can change the global signature of the theory.
 
\subsubsection{Weyl reflection with respect to roots of the gravity line}

To illustrate the consequence of the non-commutativity of Weyl transformations with  the temporal involution $\Omega_1$, let us consider the simple example of the Weyl reflection $s_1$ defined by \eqref{weylreflection}  associated to the simple root $\alpha_1$ (see Figure \ref{fig:kmalgebrasgood}). Only simple roots $\alpha_1$ and $\alpha_2$ are modified by this reflection: 
\be \begin{split}
s_1(\alpha_1)&= -\, \alpha_{1} \, ,\\
s_2(\alpha_{2})&= \alpha_{2} \,+ \alpha_{1}\, .
\end{split} \ee
The other simple roots are unchanged under this reflection ($s_1(\alpha_i)=\alpha_{i}$ for
i$\neq$ 1, 2).
Let us consider now, the action of this Weyl reflection on the generators associated to these roots. The positive generators associated to the simple roots of the gravity line are expressed as: 
\begin{eqnarray}
e_i= K^{i}_{\ i+1} \qquad i=1,\ldots,D-1\, .
\end{eqnarray}
The generators associated to the roots $\alpha_1$ and $\alpha_2$ are then modified respectively as
\be \begin{split}
U \, K^{1}_{\ 2} \, U^{-1}&= \delta \, K^{2}_{\ 1}\, ,\\
U \, K^{2}_{\ 3} \, U^{-1}&= \rho \, \lbrack K^{1}_{\ 2},K^{2}_{\
3}  \rbrack=\rho\, K^{1}_{\ 3}\, ,
\end{split} \ee
where $\delta$ and $\rho$ are plus or minus signs which arise as positive generators are representations of the Weyl group up to signs. We will see that such signs always cancel in the determination of $\Omega'$ because they are identical in the Weyl transform of corresponding positive and negative roots, as their commutator is in the Cartan subalgebra which forms a true representation of the Weyl group.

Before the Weyl reflection, the index $1$ is time index and the others are space-like indices because of the choice of the temporal involution $\Omega_1$. We recall that the action on the involution $\Omega$ on the generators $K^{a}_{\ b}$ (see \eqref{tabeq:inomegk}) is
\begin{eqnarray}
\Omega \, K^{a}_{\ b}= -\, \epsilon_a \, \epsilon_{b}\, K^{b}_{\ a}\, ,
\end{eqnarray}
where $\epsilon_a$ is equal to $1(-1)$ if the index $a$ is space (time). If we apply \eqref{eqn:nocomu69}, we find the action of $\Omega'$ on the two generators:
\be \begin{split} 
\delta \, \Omega'\, K^{1}_{\ 2}&= \Omega' \,U\,K^{2}_{\
1}\,U^{-1}=U\, \underbrace{\Omega_1\,
K^{2}_{\ 1}}_{K^{1}_{\ 2}}\,U^{-1}\, ,\\
&= \delta \,  K^{2}_{\ 1}\, ,\\
\rho \, \Omega'\, K^{2}_{\ 3}&= \Omega' \,U\,K^{1}_{\
3}\,U^{-1}=U\, \underbrace{\Omega_1\,
K^{1}_{\ 3}\,}_{K^{3}_{\ 1}}U^{-1}\, ,\\
&=\rho\,  K^{3}_{\ 2}\, .
\end{split}\ee
One gets
\be \begin{split}\label{tab:inv12}
\Omega' \, K^{1}_{\ 2}= K^{2}_{\ 1}  \quad &\mathrm{while} \quad   \Omega_1 \, K^{1}_{\ 2}=
K^{2}_{\ 1}\, ,\\
\Omega' \, K^{2}_{\ 3}= K^{3}_{\ 2}  \quad &\mathrm{while} \quad  \Omega_1 \, K^{2}_{\ 3}=-\,
K^{3}_{\ 2}\, .
\end{split} \ee
The other positive generators associated to simple roots are invariant under this Weyl transformation. This implies
\begin{eqnarray}\label{tab:invnn}
\Omega' \, K^{n}_{\ n+1}=\Omega_1 \, K^{n}_{\ n+1}= -\, K^{n+1}_{\ n}\,, 
\qquad n=3,\ldots,D-1\, .
\end{eqnarray}
The content of \eqref{tab:inv12} and \eqref{tab:invnn} is represented in Table \ref{dualpap:gravity}. The signs below the generators of the gravity line indicate the sign in front of the negative step operator $K^{a+1}_{\ a}$ obtained by the involution $\Omega_1\,(\Omega')$ : a minus sign is in agreement with the Chevalley involution and indicates that the indices in $K^{a}_{\ a+1}$ are both either space or time indices while a plus sign indicates that one index must be time and the other space.
\begin{table}[top]
\begin{center}
\begin{tabular}{|c|ccccc|c|}
\hline
 &$K^1_{\ 2}$&$K^2_{\ 3}$&$K^3_{\ 4}$&$\cdots$&$K^{D-1}_{\
D}$&Time coordinate\\
\hline
$\Omega_1$&$+$&$-$&$-$&$-$&$-$&1\\
$\,\Omega^\prime$&$+$&$+$&$-$&$-$&$-$&2\\
\hline
\end{tabular}
\caption{\sl \small Involution switch from $\Omega_1$ to
$\Omega^\prime$ due to the Weyl reflection $s_{1}$.}
\label{dualpap:gravity}
\end{center}
\end{table}
The Table \ref{dualpap:gravity} shows that the time coordinate after the Weyl reflection (see the line $\Omega'$) must be identified either with $2$, or with all indices $\neq 2$. We choose the first description, which leaves unaffected coordinates attached to the planes invariant under the Weyl transformation\footnote{The Weyl reflection $s_{\alpha}$ fixes the hyperplane orthogonal to $\alpha$ (see \eqref{eqn:planinvweyl}).}. \\

It is easy to generalize this computation for all Weyl reflections with respect to roots of the gravity line \cite{Englert:2004ph}. We get that any  Weyl reflection $s_{\alpha_a}$ generated by  $\alpha_a$ a simple root of the gravity line exchanges the index $a$ and $a+1$ along with the space-time nature of the corresponding coordinates.

\subsubsection{Weyl reflection with respect to electric or magnetic roots}

Weyl reflections generated by simple roots not belonging to the gravity line
relate step operators of different levels. As a consequence, these may potentially induce changes of signature far less trivial than the simple exchange of the index identifying the time coordinate\cite{Keurentjes:2004bv,Keurentjes:2004xx,deBuyl:2005it,Keurentjes:2005jw}.  
 
We will now set up the notations and recollect some formulas necessary for some discussions of signature changes which will appears in Part II (see Appendix \ref{appendixthese})  and in Part III (see Section \ref{sec:signa}). In order to address the action of an  involution on a generic step operator
 $R^{a_1 \dots a_r}$ of level greater than zero in a given irrreducible representation of $A_{D-1}$, we introduce some notations.
 First, given an involution $\tilde \Omega$, one defines
 $\mathrm{sign}(\tilde \Omega X)$ for any given positive step operator  $X$  in  the following way
 \begin{equation}
 \tilde \Omega \, X \equiv \mathrm{sign}(\tilde \Omega X) \, {\bar X},
 \label{dualpap:sidef}
 \end{equation}
 where $\bar X$ designates the negative step operator conjugate to $X$.
Second, we also introduce a sign associated to a given  positive step operator of level greater than zero\footnote{As far as the action of the involution is concerned the symmetry properties of a step operator given by its Dynkin labels do not play any role.}
 $R^{a_1 \dots a_r}$
in the following way
\begin{subequations} \label{eqn:signtheomega}\begin{align}
+\quad : &\quad \mathrm{sign}(\tilde \Omega R^{a_1 \dots a_r})= -\epsilon_{a_1}\dots\epsilon_{a_r}  \label{dualpap:signa}\\
- \quad : &\quad \mathrm{sign}(\tilde \Omega R^{a_1 \dots a_r})=+\epsilon_{a_1}\dots\epsilon_{a_r}, \label{dualpap:signb}
\end{align}\end{subequations}
where $\epsilon_a=-1$ if $a$ is a timelike index and
$\epsilon_a=+1$ if $a$ is a space-like index, the space-time
nature of the coordinate labelled by the index $a$ being defined
by the action of $\tilde \Omega$ on the $K^a_{\ b}$. The $+$ sign
defined in (\ref{dualpap:signa}) will lead to the positive kinetic
energy term for the corresponding field in the action while
the $-$ sign defined in (\ref{dualpap:signb}) will lead to a negative
kinetic energy term.\\

The precise analysis of the different possible signatures has been studied from the algebraic point
 of view in great details for $E_{11}$  (and more generally for $E_n$) in
 \cite{Keurentjes:2004bv,Keurentjes:2004xx}\footnote{Some algebraic considerations in this context for others groups $ \g$ are presented in the Appendix of the second reference.}. In this case the corresponding maximally oxidised theory is the bosonic sector of the  low effective action of M-theory and  the signatures found was $(1,10,+)$, $(2,9,-)$, $(5,6,+)$, $(6,5,-)$ and $(9,2,+)$.  These signatures match perfectly with the signatures changing dualities and the exotic phases of M-theories \cite{Hull:1998vg,Hull:1998fh,Hull:1998ym}. Indeed, we can interpret the Weyl transformation $s_{\alpha_{11}}$ generated by the simple root $\alpha_{11}$ (see Figure
\ref{fig:kmalgebrasgood}) as a double T-duality in the direction $9$ and $10$ followed by an exchange of these directions \cite{Elitzur:1997zn,Obers:1998rn,Banks:1998vs,Englert:2003zs}. \\

In the reference \cite{deBuyl:2005it} we have extend to all
$\gppp$-theories the analysis of signature changing.
 We find for all the
$\gppp$-theories all the possible signature $(t,s, \pm)$, where $t$
(resp. $s$) is the number of time-like (resp. space-like)
directions, related by Weyl reflections of $\gppp$ to the signature
$(1,D-1,+)$ associated to the theory corresponding to the
traditional maximally oxidised theories. Along with the different
signatures the signs $\pm$ of the kinetic terms of the relevant fields
are also discussed. We started the analysis with $A_{D-3}^{+++}$
corresponding to pure gravity in $D$ dimensions then we extend the
analysis to the other $\gppp$, first to the simply laced ones and then to the
non-simply laced ones. Each $\gppp$ algebra contains a $A_{D-3}^{+++}$
subalgebra, the signatures of $\gppp$ should thus includes the one of
$A_{D-3}^{+++}$. This is indeed the case, but some $\gppp$ will
contain additional signatures. If one want to restrict our focus
on  string theory, the special cases of $D_{24}^{+++}$ and
$B_{8}^{+++}$ are interesting, the former being related to the
low-energy effective action of the bosonic string (without
tachyon) and the latter being related to the low-energy effective
action of the heterotic string (restricted to one gauge field).
The existence of signature changing dualities  are related to the
magnetic roots and suggests that these transformations correspond
to a generalisation of the S-duality existing in these two
theories \cite{Sen:1994fa,Sen:1994wr}.

 \setcounter{equation}{0}
 \section{From $\gppp$ to $\gpp$-invariant actions} \label{sec:gppinvactionthese2}
 
 To make connections between this new formalism and the covariant
space-time theories, it is interesting to analyse the several
actions invariant under overextended Kac-Moody algebra $\agpp$. The Dynkin diagram of $\agpp$ is obtained by deleting the root $\alpha_1$ in the Dynkin diagram of  $\agppp$ (see Figure \ref{fig:kmalgebrasgood}).  
It has been shown  that for each very-extended algebra $\gppp$, the $\gppp$- invariant theory encompasses
two distinct theories invariant under the overextended Kac-Moody subalgebra $\gpp$ \cite{Englert:2004ph}.
The  $\gpp_C$- invariant action $\mathcal S_{\gpp_C}$ describes a motion in a coset    $\gpp/\mathrm K^{++}$ and carries a Euclidean signature while the second theory described by a different embedding of $\gpp$ in
$\gppp$, referred as $\gpp_B$,  describes a motion in a different coset  $\gpp/\mathrm K^{*++}$. In contradistinction  with
the $\gpp_C$ case, the $\gpp_B$-theory
carries various Lorentzian signatures which are revealed through various equivalent
formulation related by Weyl transformations \cite{Englert:2004ph}.

We now recall quickly  the construction of these two theories. There are obtained both by a truncation of a infinity of $\agppp$-fields, putting to zero all the fields multiplying generators involving the deleted root $\alpha_1$ in Figure \ref{fig:kmalgebrasgood}. This truncation is realised consistently
with all equations of motion, i.e. it implies
that all the solutions of the equations of motions of
$\mathcal{S}_{\gpp}$ are also solutions of the equations of motion
of $\mathcal{S}_{\gppp}$  \cite{Englert:2004ph}.

 \subsection{Cosmological model} \label{subsec:cosmomodelth}
 The  study of the properties of cosmological solutions in the vicinity of space-like singularity  revealed an overextend symmetry $\gpp_C$ \cite{Damour:2000hv,Damour:2001sa,Damour:2002et}.
The overextended algebra $\gpp_C$ is obtained from the very-extended
algebra $\gppp$  by deleting the  node labelled 1 from the Dynkin
diagrams of $\gppp$ depicted in Figure \ref{fig:kmalgebrasgood}.
The action $\mathcal S_{\gpp_C}$ describing the $\gpp_C$ theory is obtained from $\gppp$ by performing the following
consistent truncation. One puts to zero in  the coset representative (\ref{dualpap:positive}) the
field multiplying the Chevalley generator
$h_1=  K_{~1}^1- K_{~2}^2$ and all the
fields multiplying the positive step operators associated to  roots whose
decomposition in terms of simple roots contains  the deleted root
$\alpha_1$. Performing this truncation (see \cite{Englert:2004ph} for more details) one obtains the following action
\begin{equation}
\label{dualpap:fullp} \mathcal S_{\gpp_C} =\mathcal S_{\gpp_C}^{(0)}+\sum_B
\mathcal S_{\gpp_C}^{(B)}\, ,
\end{equation} with
\begin{subequations}\begin{align} 
 \mathcal S_{\gpp_C}^{(0)} &=\frac{1}{2}\int \dd t
\frac{1}{n(t)}\left[\frac{1}{2}(g^{\hat{\mu}\hat{\nu}}g^{\hat{\sigma}\hat{\tau}}-
g^{\hat{\mu}\hat{\sigma}}g^{\hat{\nu}\hat{\tau}})\frac{\dd g_{\hat{\mu}
\hat{\sigma}}}{\dd t}
\frac{\dd g_{\hat{\nu}\hat{\tau}}}{\dd t}+
\frac{\dd \phi}{\dd t}\frac{\dd \phi}{\dd t}\right] , \label{dualpap:fullop}\\
\mathcal S_{\gpp_C}^{(B)}&=\frac{1}{2 r! s!}\int \dd t
\frac{ e^{- 2\lambda
\phi}}{n(t)}\left[
\frac{\mathrm{D}B_{\hat{\mu}_1\dots \hat{\mu}_r}^{\quad \hat{\nu}_1 \dots
\hat{\nu}_s}}{\dd t} g^{\hat{\mu}_1{\hat{\mu}}^\prime_1}...\,
g^{\hat{\mu}_r{\hat{\mu}}^\prime_r}g_{\hat{\nu}_1{\hat{\nu}}^\prime_1}
...\     g_{\hat{\nu}_s{\hat{\nu}}^\prime_s}
\frac{\mathrm{D}B_{{\hat{\mu}}^\prime_1\dots {\hat{\mu}}^\prime_r}^{\quad
{\hat{\nu}}^\prime_1\dots {\hat{\nu}}^\prime_s}}{\dd t}\right] . \label{dualpap:fullc}
\end{align} \end{subequations}
 Hatted indices $\hat\mu \,  (\hat \mu=2,\dots, D)$ have been introduced, the remaining $A$-fields have been denoted by $B$ and  $\xi$  has been interpreted  as $t$.

This theory describes a motion on the coset
$\gpp/\mathrm K^{++}$ where  $\mathrm K^{++}$ is the maximal compact subgroup of $\gpp$. The action \eqref{dualpap:fullp} is then equivalent to
\be
\mathcal{S}_{\gpp/ \mathrm K^{++}} = \int \dd t \frac{1}{n(t)} \, (\mathcal{P}(t)|\mathcal{P}(t)),
\ee
where  $n(t)^{-1}= \sqrt{h}h^{1\, 1}$ (see \eqref{eqn:SigmaModelAction}) and the involution $\imath$ used to construct the subalgebras of $\agpp$ is the Chevalley involution $\omega$. Since  $\mathrm K^{++}$ is defined by the Chevalley
involution which commutes with the Weyl reflection of $\gpp$ this coset admits only the Euclidean signature.\\

This `cosmological' action $\mathcal S_{\gpp_C}$ generalises to
all $\gpp$  the $\mathZ E_{10}\, (\equiv \mathZ E_8^{++})$ action of reference \cite{Damour:2002cu, Damour:2004zy} proposed in the context of M-theory and cosmological billiards. Considering the action $\mathcal{S}_{\mathZ E_{10}/ \mathrm K^{++}}$, it has been shown that the action restricted to a definite number of lowest fields is equal to the eleven-dimensional supergravity in which the fields depend only on the time coordinate \cite{Damour:2002cu}. More precisely this restriction to lowest level fields  ($\ell < \bar{\ell}$) is done by vanishing all covariant derivative operating on a field $B_{\hat{\mu}_1\dots \hat{\mu}_r}^{\quad \hat{\nu}_1 \dots \hat{\nu}_s}$ which belongs to the subspace $\ag _{\ell} \ \mathrm{with} \ \ell > \bar{\ell}$. In the same way, considering the expansion of the algebra valued function $\mathcal{P}$, this truncation consists by setting to zero all the coefficients $\mathcal{P}_{(\ell)}$ when $\ell > \bar{\ell}$ in \eqref{eqn:generalP}. This restriction on some low level fields is a consistent truncation of the model $\mathrm{G}^{++}/ \mathrm{K}^{++}$ because it constitutes a solution of the equations of motion of $\mathcal{S}_{\mathrm{G}^{++}/ \mathrm{K}^{++}}$ (see \eqref{eqn:levelzeromotion} and \eqref{eqn:higherlevelmotion}). One gets a perfect match between the equations of motion of the $\mathZ{E}_{10}/ \mathrm{K}^{++}$ $\sigma$-model truncated up to level $\ell=3$ and height $29$ (see Section \ref{subsec:leveldece11}) and the equations of motion of eleven-dimesional supergravity in the vicinity of space-like singularity as well as between the Hamiltonian constraints. It is therefore possible to establish a 'dictionary' and   relate the truncated fields of the coset $\mathZ{E}_{10}/ \mathrm{K}^{++}$ and the degrees of freedom of eleven-dimensional supergravity (see Table \ref{tabtab:cosmo}). For the precise dictionary, see \cite{Damour:2002cu,Damour:2004zy}. 

\begin{table}[bottom]  \centering
\scalebox{0.83}{
\begin{tabular}{cccc}

&\emph{Fields belonging  to $\mathcal{S}_{\mathZ{E}_{10}/ \mathrm K^{++}}$}  & & \emph{Fields of supergravity depending on time} \\
&& &  \\
level 0& $g_{\hat{\mu} \hat{\nu}}(t)= f(h_a^{\ b})$  &  $\leftrightsquigarrow $  & metric \\ 
level 1&  $A_{\hat{\mu}\hat{\nu} \hat{\rho}}(t)$ &  $\leftrightsquigarrow$  & 3-form electric potential \\
level 2&  $A_{\hat{\mu}_1...\hat{\mu}_6}(t)$ &  $\leftrightsquigarrow$  & 6-form magnetic potential (dual of the 3-form) \\ 
level 3&  $A_{\hat{\mu}_1...\hat{\mu}_8,\hat{\nu}}(t)$ &  $\leftrightsquigarrow$  & 'dual' of the metric
\end{tabular}
} 
\caption{Link between the lowest level up to level $3$ and height $29$ of the cosmological $\mathZ{E}_8^{++}$-invariant theory and the fields of the eleven-dimensional supergravity.}\label{tabtab:cosmo}
\end{table}

We can now ask the question: what is about the higher level fields? There is a conjecture, not yet checked, in \cite{Damour:2002cu} that some higher level fields associated to roots of $E_9 \subset E_{10} \subset E_{11}$ (namely with $m_1=m_2=0$) contain spatial derivatives of the lowest fields of level $1, 2$ and $3$. These higher level fields at level $\ell=3k+1$, $\ell=3k+2$ and $\ell=3k+3$ are associated to the following $\mathfrak{sl}(11, \mathbb R)$ Dynkin labels
\be\label{eqn:dynkinlabel475}
[0\, k\, 0\, 0\, 0\, 0\,0\, 1\, 0\, 0\, ], \quad [0\, k\, 0\, 0\, 1\, 0\,0\, 0\, 0\, 0\, ], \quad [0\, k\, 1\, 0\, 0\, 0\,0\, 0\, 0\, 1\, ],
\ee
 with lowest weights $\Lambda^2=2$. The higher level fields correspond to the three infinite towers of fields
 \be\label{eqn:highfield475}
 A_{a_1\, a_2\, a_3}^{\quad s_1\, s_2\, \dots s_k }, \quad A_{a_1\, a_2\, \dots a_6}^{\quad s_1\, s_2\, \dots s_k }, \quad A_{a_1\, a_2\, \dots a_8|b}^{\quad s_1\, s_2\, \dots s_k } 
 \ee
where the upper indices are completely symmetric and the lower indices have the same symmetries as the level $1,2$ and $3$ representations. These infinite tower fields have precisely the right index structure to be interpreted as spatial gradiant of the lowest fields of level $1, 2$ and $3$
\be
\partial^{s_1}\ldots \partial^{s_k}  A_{a_1\, a_2\, a_3}(t), \quad \partial^{s_1}\ldots \partial^{s_k}  A_{a_1\, a_2\, \dots a_6}(t), \quad \partial^{s_1}\ldots \partial^{s_k}  A_{a_1\, a_2\, \dots a_8|b}.
\ee
We will see that in the case of the brane model, the higher level fields \eqref{eqn:highfield475} have been identified to BPS solutions of eleven-dimensional supergravity  or of its exotic counterparts (see \cite{Englert:2007qb} and Part II). Theses results suggest that the 'gradient conjecture' of \cite{Damour:2002cu} must be reconsidered.

 \subsection{Brane model} \label{subsec:branemodelth}
 The $\gpp_B$-theory is obtained by performing the same truncation
as for the $\mathrm{G}_C^{++}$-model, namely we equate as before to zero
all the fields in the $\gppp$-invariant action (\ref{dualpap:full}) which
multiply generators  involving the root $\alpha_1$. But this
truncation is performed {\em after} a $\gppp$ Weyl transformation $s_{\alpha_1}$
which
transmutes the time index 1 to a space index (see Section \ref{sec:tempinvolutintro}). This gives an action $\mathcal S_{\gpp_B}$ which is
formally identical to the one given by (\ref{dualpap:fullp})
 but with a Lorentz signature for the
metric, which in the flat coordinates amounts to a negative sign for the
Lorentz metric component $\eta_{22}$, and with $\xi$ identified to the missing
space coordinate  instead of $t$.

The $\mathcal S_{\gpp_B}$ action is thus characterised by a signature $(1,D-2,+)$
where  the sign $+$ means that (\ref{dualpap:signa}) is fulfilled for all the simple positive step operators
implying that all the kinetic energy terms in the action are positive.
The theory describes a motion on the coset
$\gpp/\mathrm K^{*++}$ where  $\mathfrak{k}^{*++}$ is the subalgebra of $\agpp$
invariant under the  time involution $\Omega_2$ defined as in
(\ref{tabeq:inomegk}) with 2 as the time coordinate.
As this involution will not generically commute with Weyl reflections, the same coset can be
described by actions $\mathcal S_{{\mathrm G}{}_{(i{}_1i{}_2\dots i{}_t)}^{++}}^{(t,s,\varepsilon)}$ where
the global signature is $(t,s,\varepsilon)$ with
$\varepsilon$ denoting  a set of  signs, one for each
simple step operator which does not belong to the gravity line, defined by (\ref{eqn:signtheomega}) and $i{}_1i{}_2\dots i{}_t$ are the time indices. The equivalence of the different actions has been
shown by deriving differential equations relating the fields parametrising the different coset
representatives \cite{Englert:2004ph}. \\

The 'brane' action admits exact solutions identical to intersecting extremal brane solutions of the corresponding maximally oxidised theory smeared in
all direction but one \cite{Englert:2004ph}. These solutions  provide a laboratory to understand the significance of the higher level fields and to check whether or not the Kac-Moody theory can
described uncompactified theories \cite{Damour:2002cu,Englert:2004ph,Damour:2005zb}. Furthermore the intersection rules  are neatly encoded in the $\mathrm G^{++}$ algebra through orthogonality conditions between the real positive roots corresponding to the branes in the configuration \cite{Argurio:1997gt,Englert:2004it}.\\

\begin{table}[top] 
\centering
\scalebox{.83}{
\begin{tabular}{cccc} 
&\emph{Fields belonging  to $\mathcal{S}_{\mathZ{E}_{10}/\mathrm K^{*++}}$}  & & \emph{Branes of M-theory} \\
&& &  \\
level 0& $g_{\hat{\mu} \hat{\nu}}(x)$&  $\leftrightsquigarrow$  & $KK$-wave (0-brane) \\
level 1&  $A_{\hat{\mu}\hat{\nu} \hat{\rho}}(x)$ + $g_{\hat{\mu} \hat{\nu}}(x)$&  $\leftrightsquigarrow$  & $M2$ (2-brane) \\
level 2&  $A_{\hat{\mu}_1...\hat{\mu}_6}(x)$ + $g_{\hat{\mu} \hat{\nu}}(x)$&  $\leftrightsquigarrow$  & $M5$ (5-brane) \\ 
level 3&  $A_{\hat{\mu}_1...\hat{\mu}_8,\hat{\nu}}(x)$ + $g_{\hat{\mu} \hat{\nu}}(x)$ &  $\leftrightsquigarrow$  & $KK6$-monopole 
\end{tabular}
}
\caption{Link between the lowest level up to level $3$ of the brane $\mathZ{E}_{10}$-invariant theory and the branes of M-theory.} \label{tabtab:brane}
\end{table}
The $\sigma$-model $\mathZ{E}_{10}/ \mathrm K^{*++}$ provides a natural framework for studying static solutions. It yields for the lowest levels, all the basic BPS solutions of eleven-dimensional supergravity \cite{Englert:2003py}, namely the KK-wave, the M2 brane, the M5 brane and the KK6-monopole, smeared in all space dimensions but one, as well as their exotic counterparts (see Chapter \ref{chap:basicbps} and Table \ref{tabtab:brane} ). We can again ask the question of the interpretation of the higher level fields. In Part II, we elucidate this interrogation for the real roots of $E_{10}$ belonging to the regular affine subalgebra $E_9 \subset E_{10}$. We construct an infinite $E_9$ multiplet of BPS states and for each positive real root of $E_9$ whose Dynkin labels are in the form \eqref{eqn:dynkinlabel475},
  we obtain a BPS solutions of eleven dimensional supergravity, or of its exotic counterparts, depending on two non-compact transverse variables \cite{Englert:2007qb}. All these solutions are related by U-dualities realised via $E_9$ Weyl transformations.

\part{An $E_9$ multiplet of BPS states}\label{parttwo}
 \setcounter{equation}{0}
\markboth{Introduction of Part II}{Introduction of Part II}
\vspace{6cm}
Elucidating the role of the huge number of  $E_{10}$ and $E_{11}$
generators is an important problem in the Kac--Moody approach to
M-theory. In this part of the thesis we focus on the  real roots of $E_{10}$, and
in particular to those belonging to its regular affine subalgebra
$E_9$.  We consider fields parametrising  the Borel representatives of
the coset space $ \mathZ{E}_{10}/\mathrm{K}^{*++} \equiv \mathZ{E}_{10}/\mathrm{K}_{10}^-$ (see Section \ref{subsec:branemodelth}), that is the Cartan and the positive 
generators of $E_{10}$. The algebra $E_{10}$ is taken to be embedded regularly
in $E_{11}$. The Dynkin diagrams of $E_{10}$ and $E_9$ are obtained
from the Dynkin diagram of Figure \ref{ffirst} by deleting successively the nodes 1
and~2. We find that each positive real root determines a BPS state in
space-time\footnote{This is in line with the analysis of
  \cite{West:2004st} where BPS states are associated with $E_{11}$
  roots.}, where a BPS state is defined by the no-force condition
allowing superposition of configurations centred at different
space-time points. We obtain explicitly an  infinite $E_9$ multiplet
of BPS static solutions of eleven-dimensional  supergravity depending on two
non-compact  space variables. They are all related  by U-dualities
realised  by the $E_9$ Weyl transformations. 

\begin{figure}[h]
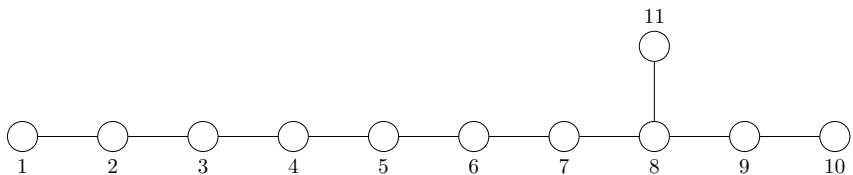
 
\begin{center}
\scalebox{.8}{
\begin{pgfpicture}{15cm}{-0.5cm}{1cm}{2.5cm}
\pgfnodecircle{Node1}[stroke]{\pgfxy(1,0.5)}{0.25cm}
\pgfnodecircle{Node2}[stroke]
{\pgfrelative{\pgfxy(1.5,0)}{\pgfnodecenter{Node1}}}{0.25cm}
\pgfnodecircle{Node3}[stroke]
{\pgfrelative{\pgfxy(1.5,0)}{\pgfnodecenter{Node2}}}{0.25cm}
\pgfnodecircle{Node4}[stroke]
{\pgfrelative{\pgfxy(1.5,0)}{\pgfnodecenter{Node3}}}{0.25cm}
\pgfnodecircle{Node5}[stroke]
{\pgfrelative{\pgfxy(1.5,0)}{\pgfnodecenter{Node4}}}{0.25cm}
\pgfnodecircle{Node6}[stroke]
{\pgfrelative{\pgfxy(1.5,0)}{\pgfnodecenter{Node5}}}{0.25cm}
\pgfnodecircle{Node7}[stroke]
{\pgfrelative{\pgfxy(1.5,0)}{\pgfnodecenter{Node6}}}{0.25cm}
\pgfnodecircle{Node8}[stroke]
{\pgfrelative{\pgfxy(1.5,0)}{\pgfnodecenter{Node7}}}{0.25cm}
\pgfnodecircle{Node9}[stroke]
{\pgfrelative{\pgfxy(1.5,0)}{\pgfnodecenter{Node8}}}{0.25cm}
\pgfnodecircle{Node10}[stroke]
{\pgfrelative{\pgfxy(1.5,0)}{\pgfnodecenter{Node9}}}{0.25cm}
 \pgfnodecircle{Node11}[stroke]
{\pgfrelative{\pgfxy(0,1.5)}{\pgfnodecenter{Node8}}}{0.25cm}

\pgfnodebox{Node12}[virtual]{\pgfxy(1,0)}{$1$}{2pt}{2pt}
\pgfnodebox{Node13}[virtual]{\pgfxy(2.5,0)}{$2$}{2pt}{2pt}
\pgfnodebox{Node14}[virtual]{\pgfxy(4,0)}{$3$}{2pt}{2pt}
\pgfnodebox{Node15}[virtual]{\pgfxy(5.5,0)}{$4$}{2pt}{2pt}
\pgfnodebox{Node16}[virtual]{\pgfxy(7,0)}{$5$}{2pt}{2pt}
\pgfnodebox{Node17}[virtual]{\pgfxy(8.5,0)}{$6$}{2pt}{2pt}
\pgfnodebox{Node18}[virtual]{\pgfxy(10,0)}{$7$}{2pt}{2pt}
\pgfnodebox{Node19}[virtual]{\pgfxy(11.5,0)}{$8$}{2pt}{2pt}
\pgfnodebox{Node20}[virtual]{\pgfxy(13,0)}{$9$}{2pt}{2pt}
\pgfnodebox{Node21}[virtual]{\pgfxy(14.5,0)}{$10$}{2pt}{2pt}
\pgfnodebox{Node22}[virtual]{\pgfxy(11.5,2.5)}{$11$}{2pt}{2pt}

\pgfnodeconnline{Node1}{Node2} \pgfnodeconnline{Node2}{Node3}
\pgfnodeconnline{Node3}{Node4}\pgfnodeconnline{Node4}{Node5}
\pgfnodeconnline{Node5}{Node6} \pgfnodeconnline{Node6}{Node7}
\pgfnodeconnline{Node7}{Node8}\pgfnodeconnline{Node8}{Node9}
\pgfnodeconnline{Node9}{Node10} \pgfnodeconnline{Node8}{Node11}

\end{pgfpicture}  }
 \caption { \sl \small The Dynkin diagram of $E_{11}$ and its gravity
 line. The roots 3 and 2 extend $E_8$ to $E_{10}$ and the additional
 root 1 to $E_{11}$. The Dynkin diagram of the  $A_{10}$  subalgebra
 of $E_{11}$,  represented in the figure by the horizontal line is its
 `gravity line'.} 
 \label{ffirst}
\end{center}
\end{figure} 

An obvious question at this point concerns the relation between our results
and the conjecture of~\cite{Hull:1994ys}, according to which the BPS solitons
of the toroidally compactified theory should transform under the arithmetic
group $ \mathZ{E}_9({\mathbb Z})$. Like~\cite{Obers:1998rn}, we here consider
only the Weyl group of $\mathZ{E}_9$, realized as a subgroup of $\mathZ{E}_9({\mathbb
  Z})$. More specifically, 
we consider the Weyl groups of various affine $A_1^+ \equiv A_1^{(1)}$
subgroups of $\mathZ{E}_9$. Each one of these is then found to act via inversions
and integer shifts on a certain analytic function characterising the given
BPS solution, cf. (\ref{Moebact}) ---  much like the modular group
$\mathrm{SL}(2,{\mathbb Z})$ (see Appendix \ref{affapp} for details on the embedding of
the Weyl group into the Kac--Moody group itself). There is also an
action on the conformal factor, which is, however, more complicated
and cannot be interpreted in this simple way (and which can be locally
undone by a conformal coordinate transformation). The remaining Weyl
reflections `outside' the given $A_1^+$, and associated to the simple roots of the $E_9$, are realized as permutations.  However, we should like
to stress that the full $\mathZ{E}_9({\mathbb Z})$ contains many more transformations
than those considered here.

We show that the BPS states we find admit an equivalent description as
solutions of effective actions, where the infinite set of $E_9$ fields
in the Borel representative of $\mathZ{E}_{10}/\mathrm{K}_{10}^-$ play the role of
matter fields in a `dual' metric. We label the description in terms of
eleven-dimensional  supergravity fields the `direct' one and the description in terms
of Borel fields the `dual' one. Comparing the direct with the dual
description sheds some light on the significance of the $E_9$ real
roots. We see indeed that the Borel fields corresponding to these
roots are related to the supergravity fields by an infinite set of
`non-closing dualities' generalizing the Hodge duality.  This is a
feature which is well-known in the context of the Geroch symmetry of
standard $D=4$ gravity reduced to two
dimensions~\cite{Geroch:1970nt,Geroch:1972yt,Breitenlohner:1986um,Nicolai:2001cc}. 
There one also defines an infinite set of dual potentials from the
standard fields and uses the infinite Geroch group to generate new
solutions. The Geroch group is the affine $A_1^+$ extension of $\mathrm{SL}(2)$
and we will make use of various subgroups of $\mathZ{E}_9\subset \mathZ{E}_{10}$
isomorphic to this basic affine group. 

One of the original motivations for the present work was to get a
better understanding of the significance of the higher level fields
in $E_{10}$ and maybe expand the known dictionary beyond level three (rather
height $29$) (see Section  \ref{subsec:branemodelth}). Since the role of some of these higher levels is
understood in the two-dimensional $E_9$ context in terms of the
generalised dual potentials one can 
anticipate that they will play a similar role in 
the one-dimensional brane $\s$-model $\mathcal{S}^{brane}$ and cosmological $\s$-model  $\mathcal{S}^{cosmo}$. Indeed
this is what we find in the BPS case but we
have not obtained an analytic form for these non-closing dualitites in
the form of an extended dictionary.
Still it is clear that their definition is not restricted to BPS
states.   

Borel fields attached to roots of $E_{10}$ that are not roots of $E_9$
define BPS states depending on one non-compact space variable. These
may not admit a direct description in terms of eleven-dimensional  supergravity fields
but their dual description is well defined. The explicit BPS solution
attached to a level 4 root of $E_{10}$ is obtained, in agreement with
reference \cite{West:2004st}, and describes the M9 solution in 11
dimensions which is the `uplifting' of the D8-brane of Type IIA string
theory. 

\noindent The results presented in this part have been published in the reference \cite{Englert:2007qb}.

  \chapter{ Basic BPS states in $E_{10} \subset E_{11}$}\label{chap:basicbps}

In this chapter, we review the construction of the $\sigma$-model
$\mathcal{S}^{brane}$ and its relation to the basic BPS solutions of eleven-dimensional 
supergravity, namely the KK-wave, the M2 brane, the M5 brane and the KK6-monopole, smeared in all space dimensions but one, as well as their exotic counterparts.

\setcounter{equation}{0}
\section{From $E_{11}$ to $E_{10}$ and the coset space $\mathZ{E}_{10}/\mathrm{K}^-_{10}$}\label{sec:frome11toe10}
We recall that the Kac-Moody algebra $E_{11}$ is
entirely defined by the commutation relations of its Chevalley generators and
by  the Serre relations (see Chapter \ref{chap:math}). The Chevalley presentation
consists of the generators $e_m,f_m$ and $h_m, \ m=1,2,\dots ,11$~with
commutation relations 
\begin{eqnarray}  \label{chevalley}\begin{split}
   [h_m,h_n]&=0~~, ~~&&[h_m,e_n]=A_{mn}e_n \, ,  \\
[e_m,f_n]&=\delta_{mn} h_m ~~ ,~~&& [h_m,f_n]=-A_{mn} f_n\, ,
\end{split}\end{eqnarray}
where $A_{mn}$ is the Cartan matrix which can be
expressed in terms of scalar products of the simple roots $\alpha_m$
as  
\begin{equation}
\label{cartan}  
A_{mn}=2\frac{(\alpha_m|\alpha_n)}{(\alpha_m|
\alpha_m )}\, \cvp
\end{equation}
The Cartan subalgebra is generated by $h_m$, while  the positive
(negative) step operators are  the  $e_m$ ($f_n$) and their
multi-commutators, subject to the Serre relations 
\begin{equation}
\label{serre} [e_m,[e_m,\dots,[e_m,e_n]\dots]]=0,\quad
[f_m,[f_m,\dots,[f_m,f_n]\dots]]=0\, ,
\end{equation} 
where the number of $e_m$ ($f_m$) acting on $e_n$     ($f_n$) is
given by $1-A_{mn}$. The Cartan matrix of $E_{11}$ is encoded in its
Dynkin diagram depicted in  Figure \ref{ffirst}. Erasing the node 1 defines the
regular embedding of its $E_{10}$ hyperbolic subalgebra and erasing
the nodes 1 and 2 yields the regular embedding of the affine
$E_9$.

$E_{11}$ contains a  subalgebra $\mathfrak{gl}(11)$ such that $\mathfrak{sl}(11)\equiv A_{10}
\subset \mathfrak{gl}(11)\subset
E_{11}$. The generators of the $\mathfrak{gl}(11)$ subalgebra are taken to be
$K^a_{~b}\ (a,b=1,2,\ldots ,11)$   with commutation relations
\begin{equation}
\label{Kcom} [K^a_{~b},K^c_{~d}]   =\delta^c_b
K^a_{~d}-\delta^a_dK^c_{~b}\,  .
\end{equation}
The relation  
between the
commuting generators $K^a{}_a$ of 
$\mathfrak{gl}(11)$ and   the Cartan generators $h_m$ of $E_{11}$ in the   
Chevalley basis
follows from comparing the commutation relations (\ref{chevalley})  
and
(\ref{Kcom}) and from the identification of the simple roots of
$E_{11}$. These are
$e_m=\delta_m^{a} K^a{}_{a+1},\ m=1,\ldots ,10$ and  $e_{11}= R^{\, 9\,  
10\,11}$
where
$R^{abc}$ is a generator in
$E_{11}$ that is a third rank anti-symmetric tensor under $A_{10}$.   
One gets
\begin{eqnarray}
\label{aa} h_m&=&\delta_m^{a}(K^a{}_a-K^{a+1}{}_{a+1})~~~~  
m=1,\dots,10\\
\label{el} h_{11}&=&-\frac{1}{3}(K^1{}_1+\ldots +K^8{}_8) +\frac{2} 
{3}(K^9{}_9+
K^{10}{}_{10}+K^{11}{}_{11})\,.
\end{eqnarray}
The positive (negative) step operators in the $A_{10}$ subalgebra are
the $K^a{}_b$ with $b>a$ ($b<a$). The adjoint representation of
$E_{11}$ can be written as an infinite direct sum of representations
of the $\mathfrak{gl}(11)$ generated by the $K^a{}_b$. This is known as the
$A_{10}$ level decomposition of
$E_{11}$ (see Section \ref{subsec:leveldece11}).
The $K^a{}_b$ define the level zero positive
(negative) step  operators.  The  positive (negative) level $\ell$  step
operators are defined by the number  of times the root $\alpha_{11}$
appears in the decomposition of the adjoint representation of $E_{11}$  into  
irreducible representations of  $A_{10}$.  At level 1, one has the
single representation spanned by the anti-symmetric tensor
$R^{abc}$. At each level the number of irreducible representations of
$A_{10}$ is finite and the symmetry properties of the irreducible
tensors $R^{c_1\dots c_r}$ ($ R_{c_1\dots c_r}$) are fixed by the
Young tableaux of the representations. 
In what follows, positive level step operators will always be denoted
with upper indices and negative level ones with lower ones. For
positive level $\ell$ the number of indices on a generator $R^{c_1\dots
  c_r}$ is $r=3\ell$.

The Borel group formed by the Cartan generators and the positive level
0 generators can be taken as representative of the coset space
$\mathrm{GL}(11)/ \mathrm{SO}(1,10)$ and hence the parameters of the Borel group can be
used to define in a particular gauge the 11-dimensional metric
$g_{\mu\nu}$  which spans this coset space at a given space-time
point. We recall that the subgroup 
$\mathrm{SO}(1,10)$ of $\mathrm{GL}(11)$ is the subgroup invariant under a temporal
involution $\Omega_1$ (defined in Section \ref{PERTABsec:tempinv})  which generalizes the Chevalley involution by
allowing the identification of the tensor index  1 to be the time
index. This suggests to define in general all $E_{11}$ fields as parameters of the coset space $\mathZ{E}_{11}/\mathrm K^{*+++} \equiv \mathZ{E}_{11}/\mathrm K^-_{11} $ where $ K^-_{11}$ is invariant under the temporal involution $\Omega_1$.

For the regular embedding of $E_{10}$ in $E_{11}$ obtained by deleting
the node 1 in Figure \ref{ffirst}, the description in terms of $\mathfrak{gl}(10)$ follows from
the description of $E_{11}$ in terms of $\mathfrak{gl}(11)$. Omitting the
generators $K^1{}_2,K^2{}_1$ and $K^1{}_1$, the relations
(\ref{Kcom}), (\ref{aa}) remain valid and  (\ref{el}) becomes  
\begin{equation}
\label{el10}
 h_{11}\to -\frac{1} {3}(K^2{}_2+\ldots +K^8{}_8) +\frac{2}
{3}(K^9{}_9+
K^{10}{}_{10}+K^{11}{}_{11})\,.
\end{equation}
The temporal involution $\Omega_i$, $\varepsilon_i=+1$ for
$a=2,3\dots,11$ reduces to the Chevalley involution acting on the
$E_{10}$ generators.  It leaves invariant a subalgebra 
$\mathfrak{k}^+_{10}$ of $E_{10}$. The Borel representative of the coset space
$\mathZ{E}_{10}/ \mathrm{K}^+_{10}$ is now parametrized by $A_9$ tensor fields in the
Euclidean metric $\mathrm{GL}(10)/ \mathrm{SO}(10)$.  Taking these fields to be
functions of the remaining time coordinate~1, one can built a
$\sigma$-model on this coset space (see Section \ref{subsec:cosmomodelth}).  This model has been used mainly
to study cosmological solutions
\cite{Damour:2002et,Kleinschmidt:2005gz} and we shall label the action
of this $\sigma$-model~\cite{Damour:2002cu} as $\mathcal{S}^{cosmo}$.  
 

We could of course have chosen 2 instead of 1 as time coordinate in
$\mathfrak{gl}(11) \subset E_{11}$.  This change of time coordinate can be
obtained by performing  the $E_{11}$ Weyl reflexion $ s_{\alpha_1}$
sending $\alpha_1\to -\alpha_1$ and $\alpha_2 \to \alpha_1 +\alpha_2$.
Choosing the gravity line of Figure \ref{ffirst} to be the reflected one, one finds
that its time coordinate has switched from 1 to  2 (see Section \ref{sec:tempinvolutintro}).  This results from
the fact that the temporal involution does not in general commute
with Weyl reflexions
\cite{Keurentjes:2004bv,Keurentjes:2005jw,Englert:2004ph}, a property
that has far reaching consequences, as reviewed in
Section~\ref{sec:tempinvolutintro}
. Deleting the node 1 in Figure \ref{ffirst} we obtain the Dynkin
diagram of $E_{10}$ endowed with the temporal involution
$\Omega_{2}$. This involution
leaves invariant a subalgebra $\mathfrak{k}_{10}^-$ of $E_{10}$.  The coset space
$\mathZ{E}_{10}/\mathrm{K}^-_{10}$ accommodates the Lorentzian metric  $\mathrm{GL}(10)/
\mathrm{SO}(1,9)$. Performing  products of $E_{10}$ Weyl reflexions on the
gravity line $s_{\alpha_i},\, i=2,\dots, 10$, one obtains ten possible
different identifications of the time coordinate from
$\Omega_{i}, \, i=2,3\dots ,11$ (see Section \ref{sec:tempinvolutintro}).   The
$\sigma$-model build upon the coset $\mathZ{E}_{10}/\mathrm{K}^-_{10}$  can be
constructed for any choice of $i$ in $\Omega_{i}$.
These formulations of the $\sigma$-model are all equivalent up to the
field redefinitions by $E_{10}$ Weyl transformations and we shall
label them by the generic notation $\mathcal{S}^{brane}$, leaving implicit the
choice of the time coordinate $i$. 

For sake of completeness, we recall the construction of $\mathcal{S}^{brane}$ which was performed in Section \ref{subsec:branemodelth}. We take as representatives of $\mathZ{E}_{10}/\mathrm{K}^-_{10}$ the elements of
the Borel group of $\mathZ{E}_{10}$ which we write
as\footnote{\label{borelfn}We recall that this type of Borel gauge for the coset
  space is not always accessible since the denominator $\mathrm{K}^-_{10}$ is
  not the compact subgroup of the split $\mathZ{E}_{10}$ and therefore the
  Iwasawa decomposition theorem fails. A simple finite-dimensional
  example is the coset space $\mathrm{SL}(2)/\mathrm{SO}(1,1)$ which will also play a
  role below. In this space  one cannot find  upper triangular
  representatives for matrices of the form
  \begin{equation*}
  {\cal V} = \left[\begin{array}{cc}a&b\\a&c\end{array}\right]
  \quad\quad\quad \text{with}\quad\big[a(c-b)=1\big],
  \end{equation*}
  since the lightlike first column vector cannot be Lorentz boosted
  to a spacelike one.}
\begin{equation}
\label{positive} {\cal V}(x^1)= \exp \left[\sum_{a\ge b}
h_b^{~a}(x^1)K^b_{~a}\right]\exp \left[\sum
\frac{1}{r!} A_{ a_1\dots a_r}(x^1) R^{
a_1\dots   a_r} +\cdots\right]\, ,
\end{equation}
where from now on all indices run from 2 to 11. The first exponential
contains only  level zero  operators  and the second one the positive
step operators of $E_{10}$ of levels strictly greater 
than zero. Following Section \ref{app:SigmaModel}, we define
\begin{equation}
\label{sym}
\mathcal{P}(x^1)=\frac{1}{2} \bigg(\frac{\dd {\cal V}}{\dd x^1} {\cal V}^{-1} - \Omega_{\lambda}   \big(   \frac{\dd {\cal V}}{\dd x^1} {\cal V}^{-1}   \big )        \bigg)\, ,
\end{equation}
with $\lambda$ equal to the  chosen time coordinate. 
Using the invariant scalar product $(.|.)$ for
$E_{10}$ one obtains a $\sigma$-model constructed on the coset
$\mathZ{E}_{10}/\mathrm{K}^-_{10}$  
\begin{equation}
\label{action} \mathcal{S}^{brane}=\int \dd x^1 \frac{1}{n(x^1)} (\mathcal{P}(x^1) | \mathcal{P}(x^1))\, ,
\end{equation} where
$n(x^1)$ is an arbitrary lapse function ensuring reparametrisation
invariance on the world-line. Explicitly, defining 
\begin{equation}
\label{vielbein}
e_\mu^{~m}=(e^{-h})_\mu^{~m} \qquad g_{\mu\nu}
=e_\mu^{~m}e_\nu^{~n}\eta_{mn} \, , 
\end{equation}
where $\eta_{mn}$ is the Lorentz metric with
$\eta_{\lambda\lambda}=-1$, one writes (\ref{action}) as 
\begin{equation}
\label{full} \mathcal{S}^{brane} =\mathcal S_0+\sum_A
\mathcal S_A\, ,
\end{equation} with
\begin{subequations} \begin{align}
 \mathcal S_0 &=\frac{1}{4}\int \dd x^1
\frac{1}{n(x^1)}(g^{\mu\nu}g^{\sigma\tau}-
g^{\mu\sigma}g^{\nu\tau})\frac{dg_{\mu\sigma}}{\dd x^1}
\frac{dg_{\nu\tau}}{\dd x^1}, \label{fullzero}\\
\mathcal S_A&=\frac{1}{2 r!}\int \dd x^1
\frac{ 1}{n(x^1)}\left[
\frac{DA_{\mu_1\dots \mu_r}}{\dd x^1} g^{\mu_1{\mu}^\prime_1}...\,
g^{\mu_r{\mu}^\prime_r}
\frac{DA_{{\mu}^\prime_1\dots {\mu}^\prime_r}}{\dd x^1}\right]. \label{fulla}
\end{align}\end{subequations} 
Here $D/\dd x^1$ is a group covariant derivative and all indices run from
2 to 11. Note that in (\ref{vielbein}) one may extend the range of
indices to include $\mu,\nu,m,n =1$, using the embedding relation
$E_{10}$ in $E_{11}$ which reads~\cite{Englert:2003zs} 
\begin{equation}
\label{embedding}
h_1^{~1}=\sum_{a=2}^{11} h_a^{~a}\qquad, \qquad  h_1^{~a}=0\qquad
a=2,3,\dots 11\, . 
\end{equation}

Up to level 3 and height 29, the fields in (\ref{action}) can be identified with fields of eleven-dimensional 
supergravity~\cite{Damour:2002cu}. They
can be used, as shall now be recalled, to characterise its basic BPS
solutions depending on one non-compact space variable. 
\setcounter{equation}{0}
 \section{The basic BPS solutions in 1 non-compact dimension}\label{sec:basiconedim}
\subsection{Generalities and Hodge duality}
The basic BPS solutions of eleven-dimensional  supergravity are the 2-brane (M2) and
its magnetic counterpart the 5-brane (M5), and in the pure gravity
sector the Kaluza--Klein wave (KK-wave) whose magnetic counterpart is
Kaluza-Klein monopole (KK6-monopole). These are static solutions
which, wrapped on tori, leave respectively 8, 5, 9 and 3 non-compact
space dimensions.  
It is convenient to express the magnetic solutions in terms of an
`electric' potential with a time index. This is done  by trading {\em
  on the equations of motion} the field strength for its Hodge
dual. For the M5 the field
$F_{\mu{}_1\mu{}_2\mu{}_3\mu{}_4}=4\,\partial_{[\mu{}_1}
  A_{\mu{}_2\mu{}_3\mu{}_4]}$ has as dual the field 
 $\widetilde
F_{\nu{}_1\nu{}_2\nu{}_3\nu{}_4\nu{}_5\nu{}_6\nu{}_7}=7\,\partial_{[\nu{}_1}
  A_{\nu{}_2\nu{}_3\nu{}_4\nu{}_5\nu{}_6\nu{}_7]}$ defined by 
\begin{equation}
\label{Pdual1}
\sqrt{- g}\widetilde
F^{\nu{}_1\nu{}_2\nu{}_3\nu{}_4\nu{}_5\nu{}_6\nu{}_7}=\frac{1}{4!}\epsilon^{\nu{}_1\nu{}_2\nu{}_3\nu{}_4\nu{}_5\nu{}_6\nu{}_7\mu{}_1\mu{}_2\mu{}_3\mu{}_4}
F_{\mu{}_1\mu{}_2\mu{}_3\mu{}_4}\, , 
\end{equation}
For the KK6-monopole the KK-potential $A_\mu^{(\nu)}$ in terms of the
vielbein $e_\mu{}^n$ is given by  $A_\mu^{(\nu)} = -e_\mu{}^n 
(e^{-1})_n{}^\nu$ where $\mu$ labels the non-compact directions,
$\nu$ is the Taub-NUT direction in coordinate indices and $n$ is
the Taub-NUT direction in flat frame indices and there is no summation
on $n$. The  field strength is
$F_{\mu{}_1\mu{}_2}^{(\nu)}=\,\partial_{\mu{}_1} A_{\mu{}_2}^{(\nu)}
-\partial_{\mu{}_2} A_{\mu{}_1}^{(\nu)}$. Its dual is $\widetilde
F_{\nu{}_1\nu{}_2\nu{}_3\nu{}_4\nu{}_5\nu{}_6\nu{}_7\nu{}_8
  \nu{}_9\vert\nu{}_9}=9\,
\partial_{[\nu{}_1}A_{\nu{}_2\nu{}_3\nu{}_4\nu{}_5\nu{}_6\nu{}_7\nu{}_8
    \nu{}_9]\vert\nu{}_9}$ where 
\begin{equation}
\label{Pdual2}
\sqrt{- g }\widetilde
F^{\nu{}_1\nu{}_2\nu{}_3\nu{}_4\nu{}_5\nu{}_6\nu{}_7\nu{}_8
  \nu{}_9\vert\nu{}_9}=\frac{1}{2}\epsilon^{\nu{}_1\nu{}_2\nu{}_3\nu{}_4\nu{}_5\nu{}_6\nu{}_7\nu{}_8 \nu{}_9 \mu{}_1\mu{}_2} F_{\mu{}_1\mu{}_2}^{(\nu{}_9)}\, .   
\end{equation}
These BPS solutions depend on space variables in the non-compact
dimensions only. One may further compactify on tori some of these
`transverse' directions. This increases accordingly the number of
Killing vectors and one obtains in this way new `smeared' solutions
depending only on the space variables in the remaining non-compact
dimensions\footnote{\label{foot:smearing} In the string language, the smearing process
  amounts to introducing image branes in the compact dimensions and
  averaging them over the torus radii (or equivalently considering in
  the non-compact dimensions distances large compared to these
  radii). Compact dimensions which cannot be `unsmeared'  are labelled
  `longitudinal'. It will be convenient  in what follows to take the
  time dimension as compact (and longitudinal). Decompactifying
  longitudinal space-time dimensions does not affect the field
  dependence of the solutions. However longitudinal dimensions cannot
  always be decompactified, as exemplified by the Taub-NUT direction
  of the KK6-monopole. This feature will be studied in detail in
  Section~\ref{chargesec}.}. 
As we shall see the smearing process is straightforward, except for
the smearing of magnetic solutions to one non-compact dimensions,
which can only be performed in the electric language, hinting on the
fundamental significance of the dual formulation as will indeed be
later confirmed.  

As recalled below, all BPS solutions, smeared up to one non-compact
dimensions are solutions of the    $\sigma$-model  $\mathcal{S}^{brane}$ given
by (\ref{full}) where $x^1$ is identified with the non-compact
space dimension \cite{Englert:2003py,Englert:2004ph}. 
\subsection{Levels 1 and 2: The 2-brane and the 5-brane}

For the 2-brane (M2) solution of eleven-dimensional  supergravity wrapped in the
directions 10 and 11, we choose 9 as the time coordinate, so that the
only non vanishing component  of the 3-form potential is
$A_{9\,10\,11}$. For the 5-brane (M5) wrapped in the directions
4,5,6,7,8, we choose 3 as time coordinate so that the 3-form potential
is still $A_{9\,10\,11}$ and the Hodge dual (\ref{Pdual1}) is
$A_{3\,4\,5\,6\,7\,8}$. One gets for these BPS solutions the following
metric and fields  
\be  \label{M2metric} \begin{split}
{\rm M2} \qquad: \qquad g_{11}&=g_{22}=H^{1/3}\,,\\
 g_{33}&=g_{44}=\dots
 =g_{88}=H^{1/3}\,,\\
  -g_{99}&=g_{10\,10}=g_{11\,11}=H^{-2/3}\,,\\ 
A_{9\,10\, 11} &=\frac{1}{H}\, ,\\
\end{split} \ee
\be  \label{M5metric} \begin{split}
{\rm M5}\qquad:\qquad g_{11}&=g_{22}=H^{2/3}\,,\\
-g_{33}&=g_{44}=\dots
 =g_{88}=H^{-1/3}\,,\\
  g_{99}&=g_{10\,10}=g_{11\,11}=
 H^{2/3}\,,\\ 
A_{3\,4\, 5\,6\,7\,8} &=\frac{1}{H}\, .
\end{split}\ee

Here $H$ is a harmonic function of the non-compact space dimensions
[(1,2,3,4,5,6,7,8) for M2 and  (1,2,9,10,11) for M5] with
$\delta$-function singularities at the location of the
branes. Smearing simply reduces the number of variables in the
harmonic functions to those labelling the remaining non-compact
dimensions. For instance a single M2 (M5) brane smeared to two
non-compact  dimensions, located at the origin of the coordinates
$(x^1,x^2)$, is described by $H= (q /2\pi)\ln r =(q
/2\pi)\ln\sqrt{(x^1)^2 +(x^2)^2}$ where $q$ is an electric (magnetic)
charge density. When smeared to one non-compact dimension one gets $H=
(q /2)\, \vert x^1\vert$. We see that in the one-dimensional case only
the electric dual description of the magnetic M5 brane is available,
as the duality relating the 6-form (\ref{Pdual1}) to the 3-form
supergravity potential   requires at least two transverse
dimensions. Note that this one-dimensional solution is still a
solution of eleven-dimensional  supergravity because the replacement in the equations
of motion  of the 4-form field strength by the 7-form dual is valid as
long as the Chern-Simons term contributions vanish, as is indeed the
case for the above brane solutions.  Actually the electric description
of the M5 (\ref{M5metric}), is a solution of the following
effective action (in any number of transverse non-compact dimensions)  
\begin{eqnarray}
\label{M5dual}
{\cal S}^{(11)}_{M5} =\frac{1}{16\pi G_{11}}\int \dd ^{11}x
\sqrt{-g^{(11)}}\left[R^{(11)}-  \frac{1}{2\cdot7!  
}\widetilde F_{\mu_1\mu_2 \ldots \mu_7}\widetilde
  F^{\mu_1\mu_2\mu_3 \ldots \mu_7}\right] . 
\end{eqnarray}
 
We recall \cite{Englert:2003py} how the M2 solution of eleven-dimensional 
supergravity smeared over all dimensions but one can be obtained as a
solution of the  the $\sigma$-model  by putting in $\mathcal{S}^{brane}$
(\ref{full}) all non-Cartan fields to zero except  the level 1
3-form component $A_{9\,10\,11}$ with time in 9. Similarly the M5
solution smeared in all directions but one solves the  equations of
motion of this  $\sigma$-model  by retaining for the non-Cartan fields
only the component $A_{3\,4\,5\,6\,7\,8}$ of the level 2 6-form  with
time in 3. These are respectively parameters of the Borel generators  
\begin{eqnarray}
\label{3ge}
 R^{[3]}_1&\equiv&R^{9\,10\,11}\\
 \label{6ge}
 R^{[6]}_2&\equiv& R^{345678}\, ,
\end{eqnarray}
where the subscripts denote the $A_{9}$ level in the decomposition of
the adjoint representation of~$E_{10}$. As we will see in more detail
below, the roots corresponding to the elements $R^{[3]}_1$ and
$R^{[6]}_2$ have scalar product $-2$ and thus give rise to an
infinite-dimensional affine $A_1^+$ subalgebra.

These two solutions of the $\sigma$-model are characterised by Cartan fields
 $ h_a^{~a}, a=2,3,\dots,11$. 
One has\footnote{\label{intconsfn}For simplicity we have chosen zero for the
 integration constants in the solutions of the equations of motion of
 the  fields $A_{9\,10\, 11}$ and $A_{3\,4\, 5\,6\,7\,8}$.}
 \cite{Englert:2003py} 
\be \label{M2} \begin{split}
{\rm M2}:\qquad h_a^{~a} &=\left\{\frac{-1}{6}\,\Big|\,
\frac{-1}{6},\frac{-1}{6},\frac{-1}{6},\frac{-1}{6},\frac{-1}{6},\frac{-1}{6},\frac{-1}{6},\frac{1}{3},\frac{1}{3},\frac{1}{3}\right\}
  \ln H\,,\\
   h_a^{~b}&=0 ~\hbox {for} ~a\neq b\,  \\ 
h_a^{~a}K^a_{~a}&=\frac{1}{2}\ln H \,\cdot\,h_{11}\,,\\
  A_{9\,10\, 11} &=\frac{1}{H}\, ,
\end{split}\ee

\be\label{M5} \begin{split}
{\rm M5}: \qquad h_a^{~a}&=\left\{\frac{-1}{3}\,\Big|\,
\frac{-1}{3},\frac{1}{6},\frac{1}{6},\frac{1}{6},\frac{1}{6},\frac{1}{6},\frac{1}{6},\frac{-1}{3},\frac{-1}{3},\frac{-1}{3}\right\}\ln
H\,,\\
 h_a^{~b}&=0 ~\hbox {for} ~a\neq b, \\ 
h_a^{~a}K^a_{~a}&=\frac{1}{2}\ln H \,\cdot\,(-h_{11}-K^2{}_2\,)\,,\\
 A_{3\,4\, 5\,6\,7\,8} &=\frac{1}{H}\, ,
\end{split} \ee
where we used (\ref{el10}) and $H(x^1)$ is a harmonic function.  To
the left of the $\vert$ symbol in the first line of (\ref{M2}) and
(\ref{M5}), we have added the field $h_1^{~1}$, evaluated from the
embedding of $E_{10}$ in $E_{11}$ (\ref{embedding}). All other
quantities in these equations are defined in $E_{10}$. From
(\ref{vielbein}), the $\sigma$-model results (\ref{M2}) and
(\ref{M5}) are equivalent to the supergravity results
(\ref{M2metric}) and (\ref{M5metric}) for branes smeared to one
space dimension. 

\subsection{Levels 0 and 3: The Kaluza-Klein wave and the
  Kaluza-Klein monopole} \label{subsec:kkwkkm}

We now examine the BPS solutions involving only gravity. 

First consider  the KK-wave solution. The supergravity solution with
time in 3 and torus compactification in the 11 direction, is  
\be
 \label{G0metric1}
 \dd s^2= -H^{-1}(\dd x^3)^2+(\dd x^1)^2 +(\dd x^2)^2+(\dd x^4)^2+\dots +(\dd x^{10})^2+
 H[\dd x^{11}- A_{3}^{(11)}\dd x^3]^2\, , 
\ee
where the electric potential  $A_{3}^{(11)}$ is related to the harmonic
function $H$ in nine space dimensions with suitable source
$\delta$-function singularities by   
\begin{equation}
\label{potential}
A_{3}^{(11)} = (1/H)-1\, .
\end{equation}
Smearing over any number of space dimensions results in taking  $H$ as
a harmonic function of only the remaining non-compact space
variables. For non-compact space dimension $d>2$, the constant $-1$ in
(\ref{potential}) makes the potential vanish in the asymptotic
Minkowskian space-time if the limit of $H$ at spatial infinity is
chosen to be one. For $d=2$ or 1 space is not asymptotically flat and
we  keep for convenience the non vanishing constant in
(\ref{potential}) to be one. 

Smearing the KK-wave to  one non-compact dimension, the above
supergravity solution is recovered from the  $\sigma$-model
(\ref{full}) by putting to zero all fields parametrising the
positive roots in the Borel representative (\ref{positive})  except
the level 0 field $h_{3}^{~11}(x^1)$, taking 3 as the time
coordinate. To see this, it is convenient to rewrite
\begin{equation}
\label{wava}
{\cal V}_1= \exp \big[h_3{}^3(K^3{}_3-K^{11}{}_{11})+\,h_3{}^{11}K^3{}_{11} \big]\, ,
\end{equation}
by disentangling the Cartan generators and the
level zero positive step operators in two separate exponentials:
\begin{equation}
\label{wavb}
{\cal V}_2= \exp \big[h_3{}^3(K^3{}_3-K^{11}{}_{11})\big] \exp \big[A_{3}^{~(11)} \,K^3{}_{11}\big]\, .
\end{equation}
Therefore, we want to determine $A_{3}^{~(11)}$ in \eqref{wavb}. In terms of the $\mathrm{SL}(2)$ matrices defined in (\ref{repsl}), $  K^3{}_3-K^{11}{}_{11}=h_1$ and $  K^3{}_{11}=e_1$.  One has
\begin{equation}
\label{nu1}
{\cal V}_1=\sum_{n=0}^{\infty}  \frac{1}{n!} \left[\begin{array}{cc}h_3{}^3&h_3{}^{11}\\0&-h_3{}^3\\\end{array}\right]^n =\left[\begin{array}{cc}e_{11}{}^{11}&-e_3{}^{11}\\0&e_3{}^3\\\end{array}\right]
\end{equation}
\begin{equation}
\label{vielkk}
e_3{}^3=e^{-h_3{}^3} \qquad e_{11}{}^{11}=e^{h_3{}^3}\qquad e_3{}^{11}=-  \frac{h_3{}^{11}}{2  h_3{}^3}  \,  ( e^{h_3{}^3}-  e^{-h_3{}^3})\, ,
\end{equation}
where the $e_{\mu} {}^{n}$ are the vielbein (\ref{vielbein}) characterising the KK-solution (see also Appendix B of \cite{Englert:2003py}). On the other hand, from (\ref{vielkk}), one gets
\begin{equation}
\label{nu2}
{\cal V}_2=\left[\begin{array}{cc}e_{11}{}^{11}&e_{11}{}^{11} \, A_{3}^{~(11)}\\0&e_3{}^3\\\end{array}\right]\, .
\end{equation}
Equating ${\cal V}_1$  (\ref{nu1}) and ${\cal V}_2$ (\ref{nu2}), we get
\begin{equation}
\label{kka}
A_{3}^{~(11)}=-e_3{}^{11} (e_{11}{}^{11})^{-1}\, .
\end{equation}
and the metric corresponding to the representative (\ref{wava}) with time in 3  is thus
\begin{equation} \begin{split}
\dd s^2=&\quad(\dd x^1)^2+(\dd x^2)^2+(\dd x^4)^2+\dots +(\dd x^{10})^2 -(e_3{}^3)^2 (\dd x^3)^2\\
& \quad   + (e_{11}{}^{11})^2\left[ \dd x^{11} -A_{3}^{~(11)} \dd x^3 \right]^2\, .
  \label{ametric}
\end{split}\end{equation}
Taking into account the
embedding relation (\ref{embedding}), the solution can be written as
 \be \label{GO} \begin{split}
{\rm KKW} :  \qquad \ h_a^{\ a} &=\left\{0\Big|
0,\frac{1}{2},0,0,0,0,0,0,0,\frac{-1}{2}\right\}\ln H\,, \\ 
h_a^{~a}K^a_{\ a}&=\frac{1}{2}\ln H \,\cdot\,
(K^3{}_3-K^{11}{}_{11})\,,\\ 
A_{3}^{(11)} &=\frac{1}{H}-1\, ,
\end{split}
\ee
which, using (\ref{vielbein}) and (\ref{kka}), is
equivalent to the KK-wave solution (\ref{G0metric1}) and
(\ref{potential}) of general relativity. \\

Consider now the KK6-monopole solution. In 11 dimensions, it has 7
longitudinal dimensions (see footnote \ref{foot:smearing} page \pageref{foot:smearing}). Taking 11 as the Taub-NUT direction and 4 as
the timelike direction, the general relativity solution reads 
\begin{equation} \begin{split}
\label{G3metric1}
\dd s^2= &\  H\left[(\dd x^1)^2+(\dd x^2)^2
+(\dd x^3)^2\right] -(\dd x^4)^2+(\dd x^5)^2+\dots+(\dd x^{10})^2\\
& \quad +H^{-1}\left[ \dd x^{11}
  -\sum_{i=1}^3   A_i^{(11)} \dd x^i\right]^2 
\end{split}\end{equation}
and
\begin{equation}
\label{F}
F_{ij}^{(11)}\equiv \partial_i A_j^{(11)}-\partial_j
A_i^{(11)}=-\varepsilon_{ijk}\, \partial_k H\, , 
\end{equation}
where $H(x^1,x^2,x^3)$ is the harmonic function. It can be smeared to
2 spatial dimensions by taking the index $j$ in (\ref{F})  to label
a compact dimension, say 3,  
\begin{equation}
\label{F2}
\partial_i A_3^{(11)}=\varepsilon_{ik}\, \partial_k H\qquad i,k = 1,2
\end{equation}
and $H(x^1,x^2)$ is now  harmonic in two dimensions. The metric
(\ref{G3metric1}) becomes 
\begin{equation} \begin{split}
\label{G3metric2}
\dd s^2=& \ H\left[(\dd x^1)^2+(\dd x^2)^2+(\dd x^3)^2\right]
  -(\dd x^4)^2+(\dd x^5)^2+\dots+(\dd x^{10})^2\\
  &\quad +H^{-1}\left[ \dd x^{11} - A_3^{(11)}  \dd x^3\right]^2\, 
\end{split}\end{equation}
and is a solution of Einstein's equations.

The smearing to one space dimension is more subtle. As for the
magnetic 5-brane, it requires a dual formulation which in this case is
defined by the duality relation (\ref{Pdual2}). However the
reformulation of the supergravity action is now less
straightforward. To understand the dual formulation we first show how
to use it for the unsmeared KK6-monopole given by
(\ref{G3metric1}) and (\ref{F}).  
We rewrite the metric by setting $e_3^{~11}$, hence $A_3^{(11)}$, to
zero and substitute for it the field dual to $A_3^{(11)}$, defined
with field strength
$\widetilde
F_{\nu{}_1\nu{}_2\nu{}_3\nu{}_4\nu{}_5\nu{}_6\nu{}_7\nu{}_8
  \nu{}_9\vert\nu{}_9}=9\,
\partial_{[\nu{}_1}A_{\nu{}_2\nu{}_3\nu{}_4\nu{}_5\nu{}_6\nu{}_7\nu{}_8
    \nu{}_9]\vert\nu{}_9}$, where the dual field strength $\widetilde
F$ is defined by (\ref{Pdual2}). The dual (diagonal) metric and the
dual potential read 
\be \begin{split} 
 \label{KKMmetric}
{\rm KK6M} : \qquad \qquad \qquad  g_{11}&=g_{22}= g_{33}= H\,,\\
  -g_{44}&=g_{55}\dots
 =g_{10\,10}=1\,,\\
  g_{11\,11}&= H^{-1}\, ,\\ 
 A_{4\,5\,6\,7\,8\,9\,10\, 11\vert 11} &=\frac{1}{H}\,  .
\end{split} \ee
One verifies that the dual description  of the KK6-monopole given by
(\ref{KKMmetric}) can be derived from an effective action  in
analogy with  the action (\ref{M5dual}) for the M5. Here the dual
field plays the role of a matter field. In the gauge considered here
one takes as effective action action 
\begin{equation}
\label{KKMdual}
{\cal S}^{(11)}_{KK6} =\frac{1}{16\pi G_{11}}\,\int \dd ^{11}x \sqrt{-g^{(11)}}\left[R^{(11)}- {1\over 2  
}\widetilde F_{i\,4\,5\,6\,7\,8\,9\,10\, 11\vert 11}\widetilde F^{i\,4\,5\,6\,7\,8\,9\,10\, 11\vert 11}\right] \, ,
\end{equation}
where $i$ runs over the three non-compact dimensions $1,2,3$. In this
dual description we may trivially smear the KK-monopole to two or to
one non-compact space dimensions by letting the index $i$  in
(\ref{KKMdual}) run over the remaining non-compact dimensions. In
two non-compact space dimensions, one obtains the dual of the
description (\ref{F2}) and (\ref{G3metric2}) and in one dimension
one gets  in this way  a definition of the smeared KK6-monopole which
inherits its charge and mass from the parent one with 3 non-compact
dimensions.  

The charge carried by the KK6-monopoles in three or less non-compact
dimensions can be obtained in the dual formulation  from the equations
of motion of the field
$A_{\nu{}_2\nu{}_3\nu{}_4\nu{}_5\nu{}_6\nu{}_7\nu{}_8
  \nu{}_9\vert\nu{}_9}$. From (\ref{KKMdual}), one gets 
\begin{equation}
\label{current}
\sum_{i,j=1}^2
\partial_i\big(\sqrt {-g}
g^{ij}g^{44}g^{55}g^{66}g^{77}g^{88}g^{99}g^{10\,10}(g^{11\,11})^2\partial_j 
A_{4\,5\,6\,7\,8\,9\,10\, 11\vert 11}\big)=0 \, , 
\end{equation}
and using (\ref {KKMmetric}) one finds
\begin{equation}
\label{currentH}
\sum_{i=1}^2 \partial_i\partial_i H =0\, ,
\end{equation}
outside the source singularities of the harmonic function $H$. If the
latter yields in (\ref{currentH}) only $\delta$-function
singularities $\sum_k q_k\delta(\vec r-\vec r_k)$ located in
non-compact space points $\vec r_k$, one may extend
(\ref{currentH}) to the whole non-compact space. We write  
\begin{equation}
\label{h}
\sum_{i=1}^2
\partial_i\partial_i H\propto \sum_k q_k \delta(\vec r-\vec r_k)\, ,
 \end{equation}
where $q_k$ is the charge of the monopole located  at $\vec r_k$. For
instance a single KK6-monopole located at the origin in 2 non-compact
space is described by $H= (q/2\pi)\ln r$.  
 
For 3 or 2 non-compact dimensions, writing (\ref{F}) or
(\ref{F2}) as $F_{\mu\nu}^{(11)}$ one recovers from (\ref{h}) the
charge of the monopoles from the conventional surface integral  
\begin{equation}
\label{chargeKKM}
\int F \equiv\int (1/2) F_{\mu\nu}^{(11)}\,\dd x^\mu \wedge \dd x^\nu \propto
\sum_k q_k\, , 
\end{equation}
where the surface integral enclosed the charges
$q_k$. The equation (\ref{chargeKKM}) is equivalent to (\ref{h}). For
KK6-monopoles smeared to one non-compact dimension, the surface
integral loses its meaning but the direct definition of charge
(\ref{h}) is still valid. In this way the magnetic KK6-monopole in
any number of non-compact transverse dimensions is suitably described
(as  the magnetic 5-brane by (\ref{M5dual})) by a dual effective
action  (\ref{KKMdual}). As expected for a BPS solution, the charge
of the KK6-monopole, smeared or not, is equal to its tension evaluated
in string theory,  as recalled in Chapter \ref{chap:dualfor} and in
Appendix~\ref{appm3} where the mass of the KK6-monopole is derived
from T-duality.

The above KK6-monopole solution smeared over all dimensions but one
can again be obtained as a solution of the $\sigma$-model  by putting
in $\mathcal{S}^{brane}$,  with time in 4, all non-Cartan fields to zero except
the level 3 component $A_{4\,5\,6\,7\,8\,9\,10\,11\vert
  11}$\cite{Englert:2003py}. This is the parameter of the Borel
generator\footnote {We use the bar symbol to distinguish within the
  same irreducible level 3 $A_9$ representation the generator $\bar
  R^{[8,1]}_3$ corresponding to a real $E_{10}$ root from the
  generator  $R^{[8,1]}_3=[R^{[3]}_1,R^{[6]}_2]$ pertaining to the
  degenerate   null root, see also below in Section~\ref{m2m5sec}.}  
\begin{equation}
\label{81ge}
\bar R^{[8,1]}_3 \equiv R^{4\,5\,6\,7\,8\,9\,10\,11\vert 11}\, ,
\end{equation}
where the subscript labels the $A_{9}$ level in the decomposition of
the adjoint representation of $E_{10}$. The solution is, taking into
account the embedding relation (\ref{embedding}) 
\be
 \label{G3} \begin{split}
{\rm KK6M}: \qquad \qquad  h_a^{~a} &=\left\{\frac{-1}{2}\,\Big|\,
  \frac{-1}{2},\frac{-1}{2},0,0,0,0,0,0,0,\frac{1}{2}\right\}\ln H\,,\\ 
h_a^{~a}K^a_{~a} &=\frac{1}{2}\ln H
  \,\cdot\,(-K^2{}_2-K^3{}_3+K^{11}{}_{11})\,,\\ 
A_{4\,5\dots 10 \, 11\vert 11} &=\frac{1}{H}\,,
\end{split}\ee
with $H(x^1)$ harmonic. From (\ref{vielbein}) one indeed recovers
the KK6-monopole solution of the dual action (\ref{KKMdual})
displayed in (\ref{KKMmetric}). 

\subsection{The exotic BPS solutions}
As a consequence of the non-commutativity of Weyl reflexions with the
temporal involution, the $\mathZ{E}_{10}$ $\sigma$-model $\mathcal{S}^{brane}$ living on
$\mathZ{E}_{10}/\mathrm{K}_{10}^-$ was expressed in 10 different ways according to the
choice of the time coordinate in (\ref{full}) in the global
signature (1,9). These are related through Weyl transformations of
$E_{10}$ from roots of the gravity line (see Section \ref{sec:tempinvolutintro}). Adding the Weyl reflexion
$s_{\alpha_{11}}$ one gets in addition equivalent expressions for
$\mathcal{S}^{brane}$ where the signatures  in
(\ref{full}) are globally different \cite{Englert:2004ph}.  This equivalence realises in
the action formalism  the general analysis of Weyl transformations by
Keurentjes \cite{Keurentjes:2004bv, Keurentjes:2004xx}. Starting with
the global signature (1,9) in 10 dimensions, or (1,10) in 11
dimensions, one reaches  different  signatures $(t,s,\pm)$ in 11
dimensions where $t$ is the number of timelike directions, $s$ is the
number of spacelike directions and $\pm$ encodes the sign of the
kinetic energy term of the level 1 field in the action
(\ref{fulla}), $+$ being the usual one and $-$  the `wrong'
one. These are \footnote{ The analysis of the
  different possible signatures related by Weyl reflexions has been
  extended to all $\mathfrak{g}^{++}$ in \cite{deBuyl:2005it,
    Keurentjes:2005jw}.}: $(1,10,+)$, $(2,9,-)$, $(5,6,+)$, $(6,5,-)$
and $(9,2,+)$ \cite{Englert:2004ph}. The signature changes  under the Weyl transformations
used in the following sections are presented in Appendix~\ref{appw}.

The results obtained in the   $\sigma$-model (\ref{full}) are in
complete agreement with the interpretation of the Weyl reflexion  
$s_{\alpha_{11}}$ as a double T-duality in the direction $9$ and $10$
plus exchange of the two directions   
\cite{Elitzur:1997zn, Obers:1998rn, Banks:1998vs,
  Englert:2003zs}. Indeed, it has been shown that T-duality involving
a timelike direction changes 
the signature of space-time leading to the exotic phases of M-theory
\cite{Hull:1998vg, Hull:1998ym}.  The signatures found by Weyl
reflexions are thus in perfect agreement with the analysis  of
timelike T-dualities.  

The brane scan of the exotic phases has been studied
\cite{Hull:1998fh, Argurio:1998ad}. Their different BPS branes depend
on the signature and the sign of the kinetic term. The number of
longitudinal timelike directions for a given brane is constrained. As
an example if we consider the so-called ${\rm M}^*$ 
phase characterised by the signature $(2,9, -)$, the wrong  sign of
the kinetic energy term implies that the exotic M2 brane must have
even number of timelike directions. There are thus two different M2
branes in ${\rm M}^*$ theory denoted $(0,3)$ and $(2,1)$ where the
first entry is the number of timelike longitudinal directions and the
second one the number of spacelike longitudinal directions. 
For instance, the metric of a $(2,1)$ exotic M2 brane with timelike
directions $10$ , $11$, and spacelike direction $9$ is  
\be
 \label{M2metricE} \begin{split}
{\rm M2}^*:\quad  &g_{11}=g_{22}=H^{1/3},\\
 &g_{33}=g_{44}=\dots
=g_{88}=H^{1/3}\,,\\ 
&g_{99}=-g_{10\,10}=-g_{11\,11}= H^{-2/3}, 
\end{split}\ee
where $H$ is  the harmonic function in the transverse non-compact dimensions. 

When smeared in all directions but one this metric 
is also a solution of the  $\sigma$-model  $\mathcal{S}^{brane}$ with the
correct identification in (\ref{full}) of time components and sign
shifts in kinetic energy terms. More generally, all the exotic branes
smeared to one non-compact  space dimension are solutions of this
$\sigma$-model living on the coset $\mathZ{E}_{10}/\mathrm{K}_{10}^-$
\cite{Englert:2004ph}.

  \chapter{$E_9$-branes and the infinite U-duality group} \label{chap:infiniteudualgroup}
In this chapter we construct an infinite set of BPS solutions  of eleven-dimensional 
supergravity  and of its exotic counterparts depending on two
non-compact space variables. They are  related by the Weyl group of
$E_9$ to the basic ones reviewed in Chapter \ref{chap:basicbps} and constitute an
infinite multiplet of U-dualities viewed as Weyl transformations. 
\setcounter{equation}{0}
\section{The working hypothesis}

Our working hypothesis is that the fields describing  BPS solutions of
eleven-dimensional  supergravity  depending on two non-compact space variables $(x^1,
x^2)$ are coordinates in  the coset $\mathZ{E}_{10}/\mathrm{K}_{10}^-$, in the regular
embedding $E_{10}\subset E_{11}$. The coset representatives are taken
in the Borel gauge, subject of course  to the remark in
footnote~\ref{borelfn} page \pageref{borelfn}.   

We first express in this way the basic solutions of Chapter \ref{chap:basicbps}, smeared
to two space dimensions.  Consider the M2 and M5 branes.  Their Borel representatives are
\begin{eqnarray}
\label{2M2} {\rm M2}&:& {\cal V}_1= \exp \left [\frac{1}{2}\ln H\,
  h_{11}\right]\, \exp\left [\frac{1}{H}\, R^{[3]}_1\right]\\ 
\label{2M5} {\rm M5} &:&{\cal V}_2= \exp \left[\frac{1}{2}\ln H\,
  (-h_{11} -K^2{}_2)\right]\, \exp \left[\frac{1}{H}\,
  R^{[6]}_2\right]\, . 
\end{eqnarray}
Here  $R^{[3]}_1$ and  $R^{[6]}_2$ are defined in (\ref {3ge}) and
(\ref {6ge}), respectively, and $h_{11}$ was defined in
(\ref{el10}). The Cartan fields and the potentials 
$A_{9\,10\,11}(x_1,x_2)$ for the M2 and
$A_{3\,4\,5\,6\,7\,8}(x_1,x_2)$ for the M5 are given by
(\ref{M2}), (\ref{M5}) with $H$ now a function of the two
variables $x^1,x^2$. Their metric (\ref{M2metric}) and
(\ref{M5metric})  are encoded in (\ref{vielbein}) giving the
relation  of the Cartan fields to the vielbein  and in
(\ref{embedding}) expressing the embedding of $E_{10}$ in
$E_{11}$. The Hodge duality relations 
\begin{eqnarray}
\label{dual1}
\sqrt{\vert g \vert}g^{11}g^{99}g^{10\,10} g^{11\, 11}\partial_1
A_{9\,10\,11}&=&\partial_2  A_{3\,4\,5\,6\,7\,8} \\ 
\label{dual2}
\sqrt{\vert g \vert}g^{22}g^{99}g^{10\,10} g^{11\, 11}\partial_2
A_{9\,10\,11}&=&-\partial_1  A_{3\,4\,5\,6\,7\,8}\, , 
\end{eqnarray}
 reads both for M2 and M5, using (\ref{vielbein}), (\ref{M2}) and
 (\ref{M5}),  
\begin{eqnarray}
\label{duality1}
\partial_1 H&=& \partial_2 B\\
\label{duality2}
\partial_2 H&=&-\partial_1 B \, ,
\end{eqnarray}
where $B=A_{3\,4\,5\,6\,7\,8}$ for the M2 and $B=A_{9\,10\,11}$ for
 the M5. In this way, due to the particular choice we made for the
 tensor components defining the branes, the fields $A_{9\,10\,11}$ and
 $A_{3\,4\,5\,6\,7\,8}$ are interchanged  between the M2 and the M5
 when their common value switches from $1/H(x^1,x^2)$ to\footnote{We
 chose zero for the  integration constants  of the dual  fields
 $A_{9\,10\, 11}$ and $A_{3\,4\, 5\,6\,7\,8}$ (cf footnote~\ref{intconsfn} page \pageref{intconsfn}).} 
 $B(x^1,x^2)$.  Note however that for the M2 (M5) the time in
 $A_{3\,4\,5\,6\,7\,8}$ ($A_{9\,10\,11}$) is still 9 (3). The relations (\ref
 {duality1}) and (\ref{duality2}) are the Cauchy-Riemann relations for
 the analytic function
\begin{equation}
\label{analytic}
{\cal E}_{(1)}= H + iB\, ,
\end{equation}
and  $H$ and $B$ are thus conjugate harmonic functions. The duality
relations (\ref{duality1}) and (\ref{duality2})  allow for the
replacement of the Borel representatives ${\cal V}_1$ and ${\cal V}_2$
by 
\begin{eqnarray}
\label{boreln1}
{\rm M2}&:& {\cal V}^\prime_1= \exp \left [\frac{1}{2}\ln H\,
  h_{11}\right]\, \exp \left[B \,R^{[6]}_2\right]\\ 
\label{boreln2}
{\rm M5}&:& {\cal V}^\prime_2= \exp \left [\frac{1}{2} \ln H
  \,(-h_{11}-K^2{}_2)\right ] \, \exp \left[B\, R^{[3]}_1\right]\, . 
\end{eqnarray}
Note that the representative of the M5 in (\ref{boreln2}) is, as
the representative of the M2 in (\ref{2M2}), expressed in terms of
the supergravity metric and 3-form potential in two non-compact
dimensions.  

Consider now the purely gravitational BPS solutions. According to
(\ref{GO}) and (\ref{G3})  the Borel representatives are 
\begin{subequations} \begin{align}
{\rm KKW}: {\cal V}_0 &= \exp \left[\frac{1}{2}\ln H \,
  (K^3{}_3-K^{11}{}_{11})\right]\, \exp \left[(H^{-1}-1)\,
  K^3_{~11}\right] \label{bgr1} \\
  {\rm KK6M}: {\cal V}_3&= \exp \left[\frac{1}{2}\ln H
  \,(-K^2{}_2-K^3{}_3+K^{11}{}_{11})\right ] \, \exp \left[H^{-1}\,
  R^{4\,5\,6\,7\,8\,9\,10\,11\vert \,11}\right]\, , \label{bgr2}
\end{align}\end{subequations}
with $H=H(x^1,x^2)$. Using the duality relations (\ref{Pdual2})
between the [8,1]-form and  $A_3^{(11)}$, on may express the Borel
representative of the KK6-monopole, as the representative of the
KK-wave (\ref{bgr1}), in two space dimensions in terms of the
11-dimensional metric.  

Transforming  by $E_9$ Weyl transformations  the Borel representatives
of the basic solutions given here, we shall obtain for all  $E_9$
generators associated to its real positive roots,  representatives
expressed in terms of the harmonic functions $H=H(x^1,x^2)$. These
will be transformed through dualities and compensations to Borel
representatives expressed in terms of new level 0 fields and 3-form
potentials $A_{9\,10\,11}$. This potential and the metric encoded in
the level 0 fields through the embedding relation (\ref{embedding})
and (\ref{vielbein}) will be shown to solve the equations of
motions of eleven-dimensional supergravity. In this way we shall find an infinite set
of $E_9$ BPS solutions related to the M2, M5, KK-waves and
KK6-monopoles  by U-duality, viewed as $E_9$ Weyl transformations. 

\setcounter{equation}{0}
\section{The M2 - M5 system} \label{m2m5sec}
\subsection{The group-theoretical setting}

Generalizing the previous notation for generators to all levels by
using subscripts denoting the $A_9$ level, we write $R^{[3]}_1\equiv
R^{9\,10\,11}$ , $R^{[3]}_{-1}\equiv R_{9\,10\,11}$ and
$R^{[6]}_2\equiv R^{3\,4\,5\,6\,7\,8} $ , $R^{[6]}_{-2}\equiv
R_{3\,4\,5\,6\,7\,8}$. One has 
\begin{align} \label{geroch}
 [R^{[3]}_1,R^{[3]}_{-1}]&=h_{11},&&[h_{11},R^{[3]}_1]=2\, R^{[3]}_1, && \nn \\ 
[R^{[6]}_2, R^{[6]}_{-2}]&=-h_{11}-K^2_{~2}, && [h_{11}, R^{[6]}_2]=-2\, R^{[6]}_2, &&\\
[R^{[3]}_1, R^{[6]}_{-2}]&=0\,. &&   && \nn
\end{align}
These commutation relations form a Chevalley presentation of a group
with Cartan  matrix 
\begin{equation}
\label{gerochC}
{\bf A}=\left[\begin{array}{cc}
2&-2\\-2&2 \\
\end{array}\right]\, ,
\end{equation}
and one verifies that the Serre relations are satisfied. The matrix \eqref{gerochC} is the Cartan matrix of  
the affine $A_1^+$ algebra. Its corresponding group is  isomorphic to the standard
Geroch group which is also the affine extension of $\mathrm{SL}(2)$. The
central charge  is $k$ and derivation $d$, whose eigenvalues define 
the affine level, are here given by  the embedding of the $A_1^+$ in
$E_{10} $ as 
\begin{equation}
\label{brol}
k=-K^2_{\ 2}\quad , \quad d= -\frac {1}{3} K^2_{\
  2}+\frac{2}{9}(K^4_{\ 4} +\dots +K^9_{\ 9}) - \frac{1}{9}(K^3_{\ 3}
+K^{10}_{\ 10} +K^{11}_{\ 11})\, . 
\end{equation}
The level counting operator $d$ is not fixed uniquely by the present
embedding. 

The multicommutators satisfying the Serre relations form three
towers. The  positive  generators, normalized to one, are 
\begin{align}
\label{tower1}
R^{[3]}_{1+3n} &=  2^{-n}\,\,
\left[R^{[3]}_1\left[R^{[3]}_1\left[R^{[6]}_2\left[R^{[3]}_1\dots\left[R^{[6]}_2\left[R^{[3]}_1,R^{[6]}_2\right]\right]\dots\right]\right.\right.\right.\qquad 
      &&n\ge 0\\ 
\label{tower2}
R^{[8,1]}_{3n}
 &= 2^{-(n-1/2)}\,\,\left[R^{[3]}_1\left[R^{[6]}_2\left[R^{[3]}_1\dots\left[R^{[6]}_2\left[R^{[3]}_1,R^{[6]}_2\right]\right]\dots\right]\right.\right.
    \qquad &&n>0\\ 
\label{tower3}
R^{[6]}_{-1+3n} &=  2^{-(n-1)}\,\,
\left[R^{[6]}_2\left[R^{[3]}_1\dots\left[R^{[6]}_2\left[R^{[3]}_1,R^{[6]}_2\right]\right]\dots\right]\right.\qquad 
  &&n>0\, , 
\end{align} 
where the affine level $n$ is equal to the number of $R^{[6]}_2$ in
the tower\footnote{Shifts in the affine level by one unit corresponds
  to shifts in $A_9$ levels by three units, see
  also~\cite{Kleinschmidt:2006dy}. In what follows, when the 
  term level is left unspecified, we always mean the $A_9$
  level.}. The  $R^{[8,1]}_{3n} $ tower correspond to the null
roots $n\,\delta$ where 
\begin{equation}
\label{delta}
\delta = \alpha_3 + 2\alpha_4 +3\alpha_5 +4\alpha_6 +5\alpha_7
+6\alpha_8 +4\alpha_9 +2\alpha_{10} +3\alpha_{11}
\end{equation}
is the null root of level $\ell=3$ (see Table \ref{tab:levdece11}). This root
has the following properties
\begin{equation}
\label {products}
(\delta|\delta) =0,\qquad (\delta| \alpha_i
)=0\quad i=3,\dots,11\, . 
\end{equation}
In particular $R^{[8,1]}_{3} $ is a linear combination
$R^{3\,4\,5\,6\,7\,8\,[9\,10\, , 11]}$ of level  3 tensors with all
indices distinct. Its height is 30 and thus exceeds the `classical'
limit 29 of \cite{Damour:2002cu}.  

Substituting $h_{11} =-h_{11}^\prime-K^2{}_2$ in (\ref{geroch}) one
obtains a presentation with $R^{[3]}_1$ and $R^{[6]}_2$
interchanged. While the $A_1^+$ group in the presentation
(\ref{geroch}) appears associated with the M2 brane, one could
associate the alternate presentation with the M5 brane. The two
presentations differ by  shifts in the affine level but not by the
$A_9$ level. To avoid complicated notations we keep for the complete
M2-M5 system the description given by (\ref{geroch}) which is
labelled explicitly in terms of  the $A_9$ level. The generators of
the $A_1^+$ group pertaining to the real roots of the M2-M5 system
appear in Figure \ref{ger1fig}a and in Figure \ref{ger1fig}b. 

All  the real roots of $E_9$  can be reached by $E_9$ Weyl
transformations acting on (say) $\alpha_{11}$ defining the generator
$R^{[3]}_1$. We shall find convenient for our construction of the
infinite set of the 
$E_9$ BPS-branes to generate all the real roots from two different
real roots, namely $\alpha_{11}$ and $-\alpha_{11}+\delta$
characterising respectively the generators $R^{[3]}_1$ and
$R^{[6]}_2$.\footnote{We note that in $A_1^+$ not all real roots are
  Weyl equivalent but there are two distinct orbits as we will see in
  more detail below.}
 In this section we obtain the generators of the
$R^{[3]}_{1+3n}$ and $R^{[6]}_{-1+3n}$ towers (\ref{tower1}) and
(\ref{tower3}) and their negative counterparts  by the Weyl reflexions
$s_{\alpha_{11}}$ and $s_{-\alpha_{11}+ \delta}$ acting in alternating
sequences, starting  from their action on the generators $R^{[3]}_1$
and $R^{[6]}_2$. This is depicted in Figure \ref{ger1fig}a.  
The Weyl reflexions $s_{\alpha_{11}}$ and $s_{-\alpha_{11}+ \delta}$
generate the Weyl group of the affine subgroup
$A_1^+$ of $E_9$ depicted in Figure \ref{fig:a1+++}. Its  formal structure is discussed in Appendix \ref{affapp}.

For the simply laced algebra considered here, with real roots normed
to square length 2,  the Weyl reflexion  $s_\alpha(\beta)$ (see Section \ref{subsec:weylgroupthese}) in the
plane perpendicular to the real root $\alpha$ acting on the arbitrary
weight $\beta$ is given by 
\begin{equation}
\label{defWeyl}
s_\alpha(\beta)= \beta -( \beta | \alpha)\,  \alpha.
\end{equation}
Consider the Weyl reflexion $s_{- \alpha_{11}+\delta}$ acting on
$\alpha_{11} + n\,\delta$, which defines  $R^{[3]}_{1+3n}$ for $n\geq
0$ and $R^{[6]}_{1+3n}$ for $n<0$.  
One gets
\begin{equation}
\label{weyl1}
s_{-\alpha_{11}+\delta}(\alpha_{11}+n\,\delta) =
\alpha_{11}+n\,\delta -( \alpha_{11}+n\,\delta|  -\alpha_{11}
+\delta) (-\alpha_{11} +\delta) = -\alpha_{11} +(n+2)\, \delta
\, , 
\end{equation}
where we have used (\ref{products}). The real root $ -\alpha_{11}
+(n+2)\, \delta$ defines the generator $R^{[6]}_{-1+3(n+2)}$ for $n\ge
0$ and $R^{[3]}_{-1+3(n+2)}$ for $n<0$.  
The Weyl reflexion has induced an $A_9$ level increase of four `units'
since $\delta$ from (\ref{delta}) has three units of $A_9$ level and
$\alpha_{11}$ has one. 
Similarly, acting on the root $-\alpha_{11} +n\, \delta$, which
defines the generator $R^{[6]}_{-1+3n}$  for $n>0$  and
$R^{[3]}_{-1+3n}$ for $n\leq 0$, with the Weyl reflexion
$s_{\alpha_{11}}$  
one gets
\begin{equation}
\label{weyl2}
s_{\alpha_{11}}(-\alpha_{11}+n\,\delta)= -\alpha_{11}+n\,\delta
-( -\alpha_{11}+n\,\delta\, |\,   \alpha_{11}) \, \alpha_{11} =
\alpha_{11} +n\, \delta \, . 
\end{equation}
This Weyl reflexion induces an $A_9$ level increase of two units. Thus acting
successively with the Weyl reflexions (\ref{weyl1}) and
(\ref{weyl2}) on any root $\alpha_{11} + n\,\delta$ or in the reverse
order on the root $-\alpha_{11} +n\, \delta$ one induces a level
increase of six units, or equivalently of two affine levels. Of course
interchanging the initial and final roots and the order of the two
Weyl reflexions, one decreases the affine level by two units. Thus
starting from the roots $\alpha_{11}$ and $-\alpha_{11}+\delta$, we
obtain  the real roots defining all the generators of the
$R^{[3]}_{1+3n}$ and $R^{[6]}_{-1+3n}$ towers (\ref{tower1}) and
(\ref{tower3}) and their negative counterparts. These form two
sequences depicted in Figure \ref{ger1fig}a. The `M2 sequence' originates
from the $\alpha_{11}$  root (and thus from the generator $R^{[3]}_1$)
and reads 
\begin{figure}[h]
\begin{center}
{\scalebox{0.60}
{\includegraphics{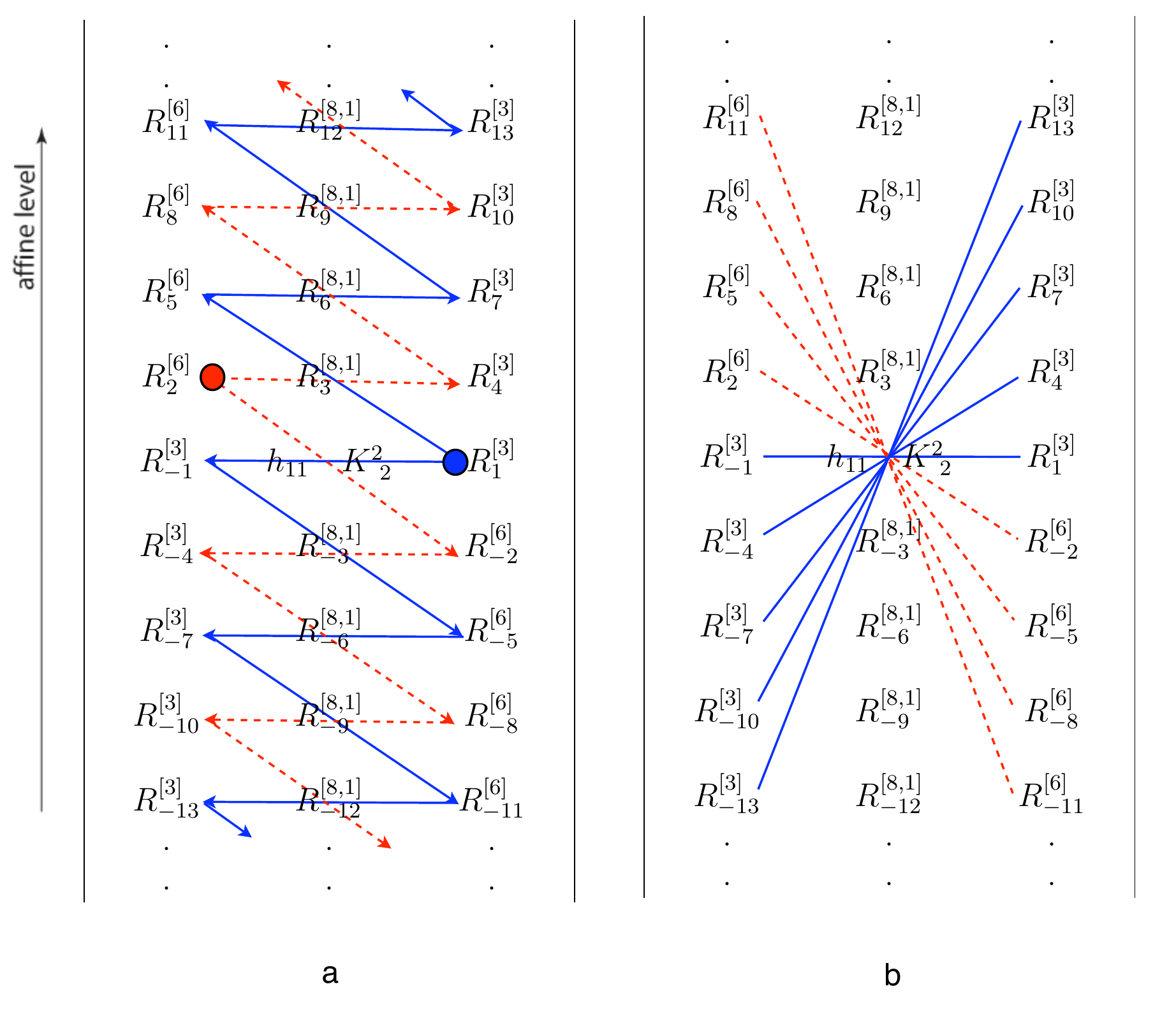}}}
  \end{center}
  \caption {\sl \small $A_1^+$ group for the M2-M5
 system. (a)  The Weyl 
 group: the M2 sequence is depicted by  solid lines and the M5
 sequence by dashed lines. Horizontal lines  represent Weyl reflexions
 $s_{\alpha_{11}}$ and diagonal ones $s_{-\alpha_{11}+ \delta}$. (b)
 The $\mathrm{SL}(2)$ subgroups:   the solid lines label $\mathrm{SL}(2)$ subgroups from
 the 3-tower (\ref{tower1})  and the dashed ones the $\mathrm{SL}(2)$
 subgroups from the 6-tower (\ref{tower3}).  } 
\label{ger1fig}
\end{figure}
\be
\label {M2seq}
\dots\stackrel{\alpha_{11}}{\longleftarrow} 
R^{[6]}_{-5}\stackrel{-\alpha_{11}+\delta}{\longleftarrow}
R^{[3]}_{-1}\stackrel{\alpha_{11}}{\longleftarrow}\
\framebox{$R^{[3]}_1$}\stackrel{-\alpha_{11} +Ê\delta}{\longrightarrow}
R^{[6]}_5\stackrel{\alpha_{11} }{\longrightarrow}
R^{[3]}_7\stackrel{-\alpha_{11} +Ê\delta}{\longrightarrow}
R^{[6]}_{11}\stackrel{\alpha_{11}}{\longrightarrow}\dots  
\ee
The `M5 sequence' originates from $-\alpha_{11}+\delta$ (and thus from
the generator $R^{[6]}_2$) and reads
\begin{equation}
\label {M5seq}
\dots\stackrel{-\alpha_{11}+\delta}{\longleftarrow}
R^{[3]}_{-4}\stackrel{\alpha_{11}}{\longleftarrow}
R^{[6]}_{-2}\stackrel{-\alpha_{11}+\delta}{\longleftarrow}\framebox{$R^{[6]}_2$}
\stackrel{\alpha_{11}
}{\longrightarrow} R^{[3]}_4\stackrel{-\alpha_{11}
  +Ê\delta}{\longrightarrow}
R^{[6]}_8\stackrel{\alpha_{11}}{\longrightarrow}
R^{[3]}_{10}\stackrel{-\alpha_{11} +Ê\delta}{\longrightarrow}\dots  
\end{equation}
Both sequences are represented in Figure \ref{ger1fig}a.

The Hodge duality relations (\ref{dual1}) and (\ref{dual2}) will
play an essential role in the determinations of the $E_9$
BPS-branes. The Hodge dual generators  $R^{[3]}_1$ and $R^{[6]}_2$
have commutation relations  
\begin{equation}
\label{dualcom}
\left[R^{[3]}_1,R^{[6]}_2\right]= R^{[8,1]}_3 \, .
\end{equation}
The roots of the $E_9$ subalgebra do not contain $\alpha_2$ when expressed in terms of simple roots. Hence from (\ref{products}) any Weyl transformation from a $E_9$ real root leaves invariant (possibly up to a sign) the right hand side of (\ref{dualcom}). Therefore, the image of the basic pair
$R^{[3]}_1,R^{[6]}_2$ by any such $E_9$ Weyl transformation are pairs whose $A_9$ level sum is equal to three and we have:

\noindent
{\bf Theorem 1}
\it 
{The set of  Weyl transformations in $E_9$ mapping the $A_1^+$ group
  ({\ref{geroch}) into itself either transforms the pair
    ($R^{[3]}_1,R^{[6]}_2$) into itself or into one of the pairs
    ($R^{[3]}_{1+3p} ,R^{[3]}_{-1-3(p-1)}$), ($R^{[6]}_{-1+3(p+1)}
    ,R^{[6]}_{1-3p}$) where $p$ is a positive integer.} \rm 

\noindent
This theorem applies  to the above Weyl transformations and is easily
checked from Figure \ref{ger1fig}a. 

The $A_1^+$ group  ({\ref{geroch}) admits two infinite sets of
  $\mathrm{SL}(2)$ subgroups 
\begin{align}
\label{sla}
\left[R^{[3]}_{1+ 3p},R^{[3]}_{-1-3p}\right]&=h_{11}-p K^2{}_2 ,
&& \left[h_{11}-p K^2{}_2 \, , R^{[3]}_{\pm(1+ 3p)}\right]  =\pm 2R^{[3]}_{\pm (1+
  3p)}\,,\\ 
\label{slb}
\left[R^{[6]}_{-1+ 3p},R^{[6]}_{1-3p}\right]&=- h_{11}-p K^2{}_2 ,
  &&\left[- h_{11}-p K^2{}_2\, , R^{[6]}_{\pm (-1+ 3p)}\right]
   =\pm 2R^{[6]}_{\pm (-1+ 3p)} \,, 
\end{align}
where $p\geq 0$ in \eqref{sla} and $p >0$ in \eqref{slb}. As all Weyl reflexions send opposite roots to opposite transforms, one has

\noindent
{\bf Theorem 2}
\it {The set of  Weyl transformations in $E_9$ mapping the $A_1^+$
  group ({\ref{geroch}) into itself exchanges the $\mathrm{SL}(2)$ subgroups
    between themselves.}\rm

\noindent
The subgroups (\ref {sla}) and (\ref {slb}) are depicted in Figure \ref{ger1fig}b. 
\subsection{The M2 sequence}\label{subsec:meseqth}

We take as representatives of the M2 sequence all the Weyl transforms
of the M2 representative (\ref {2M2}). The time coordinate is
9. Following  in Figure \ref{ger1fig}a the solid line towards positive step
generators, we encounter  Weyl transforms of the $\mathrm{SL}(2)$ subgroup
generated by $(h_{11}\, , R^{[3]}_1\, ,R^{[3]}_{-1})$ represented by a
solid line in Figure \ref{ger1fig}b. Theorem 2 determines from  (\ref{sla}) and
(\ref{slb}) the Weyl transform of the Cartan generators of
(\ref{2M2})  and we write 
\begin{align}
\label{borel3n}
{\cal V}_{1+6n}&= \exp \left[\frac{1}{2} \ln H \, (h_{11}-2n K^2{}_2)\right]\, \exp\left [\frac{ 1}{H}\, R^{[3]}_{1+6n}\right]\qquad &&n\ge 0\\
\label{borel6n}
{\cal V}_{-1+6n}&= \exp \left[\frac{1}{2} \ln H\, (-h_{11}-2n K^2{}_2)\right] \, \exp \left[\frac{1}{H}\, R^{[6]}_{-1+6n}\right] \qquad &&n>0\, .
\end{align}
We shall trade the tower fields $A^{[3]}_{1+6n}, A^{[6]}_{-1+6n}$
parametrising the generators in (\ref{borel3n}) and
(\ref{borel6n}) in favour of the supergravity potential
$A_{9\,10\,11}$ and construct from them BPS solutions of eleven-dimensional 
supergravity. We have not indicated sign shifts induced by
the Weyl transformations in the tower fields from the sign of the
lowest level $n=0$ field which is taken to be $(+1/H)$. This is here
the only relevant sign,  as  discussed below. 

Let us consider explicitly the first two steps. These will introduce the two essential features of our construction: compensation and signature changes.

\bigskip
\noindent
{\bf $\bullet$ From level 1 to level 5: compensation}

Following in Figure \ref{ger1fig}a the solid line towards positive step generators, we first encounter the Weyl reflexion $s_{-\alpha_{11}+\delta}$ sending the level 1 generator $R^{[3]}_1$ to the level 5 generator $R^{[6]}_5$.  (\ref{borel6n}) for $n=1$ reads
\begin{equation}
\label{five}
{\cal V}_5= \exp\left [\frac{1}{2} \ln H\,(-h_{11}-2K^2{}_2)\right] \, \exp\left [\frac{1}{H}\, R^{[6]}_5\right]\, .
\end{equation}
As can be seen in Figure \ref{ger1fig}a,  the Weyl reflexion $s_{- \alpha_{11}+\delta}$ sending $R^{[3]}_1$ to $R^{[6]}_5$ sends its dual  $
R^{[6]}_2$ to $ R^{[6]}_{-2}$, in accordance with Theorem 1.  We call
$ R^{[6]}_{-2}$ the dual generator of $R^{[6]}_5$ and we get by acting
with  $s_{- \alpha_{11}+\delta}$ on  the dual representative for the
M2 (\ref {boreln1}), the `dual' representative of (\ref{five}), 
\begin{equation}
\label{prime5}
{\cal V}^\prime_5= \exp  \left[\frac{1}{2} \ln
  H\,(-h_{11}-2K^2{}_2)\right] \,\exp \left[-B\, R^{[6]}_{-2}\right]\,
. 
\end{equation}
The  $-$ sign in front of $B$ arises as follows. 
The   generators $-h_{11}-K^2{}_2\, ,\, R^{[6]}_2\, ,\,R^{[6]}_{-2}$
form an $\mathrm{SL}(2)$ group, depicted by a dashed line in Figure \ref{ger1fig}a. We use the
representation\footnote{ $h_1$, $e_1$ and $f_1$ are the Chevalley
  generators of an  $\mathfrak{sl}(2)\subset E_9$.} 
\begin{equation}
\label{repsl}
h_1= \left[\begin{array}{cc}
1&0\\0&-1 \\
\end{array}\right]\,,\quad e_1=\left[\begin{array}{cc}
0&1\\ 0&0\\
\end{array}\right]\,,\quad f_1=\left[\begin{array}{cc}
0&0\\1&0\\
\end{array}\right]\, ,\quad K^2{}_2=\left[\begin{array}{cc}
1&0\\0&1\\
\end{array}\right]\, ,
\end{equation}
with  $ -h_{11}-K^2{}_2=h_1, R^{[6]}_2=e_1\, ,R^{[6]}_{-2}=f_1$, where
we have also included a representation for the central element $K^2{}_2$.
The Weyl reflexion $s_{- \alpha_{11}+\delta}$  on a generator of the algebra is generated by the
group conjugation matrix $U_5$ of  $\mathrm{SL}(2)$  \cite{Kac:book} 
\begin{equation}
\label{weyl5}
U_5=\exp R^{[6]}_{-2} \, \exp\,( - R^{[6]}_2 )\, \exp R^{[6]}_{-2}\, ,
\end{equation}
which can be represented by
\begin{equation}
\left[\begin{array}{cc}
1&0\\1&1\\
\end{array}\right]\left[\begin{array}{cc}
1&-1\\0&1\\
\end{array}\right]\left[\begin{array}{cc}
1&0\\ 1 &1\\
\end{array}\right] =\left[\begin{array}{cc}
0&-1\\1&0\\
\end{array}\right]\, ,
\end{equation}
and thus 
\begin{equation}
\label{trans5}
U_5 R^{[6]}_2 U^{-1}_5 =\left[\begin{array}{cc}
0&0\\-1&0\\
\end{array}\right] =-R^{[6]}_{-2}\, .
\end{equation}
One may verify that the conjugation matrix $U_5$ acting on the Cartan
generator of the representative of the M2 in dual form
(\ref{boreln1}) yields the same Cartan generator in the dual
representative ${\cal V}'_5$ in (\ref{prime5}) as in the direct
form ${\cal V}_5$, which was obtained from the level 1 representative
of the M2 (\ref{2M2}). 

We now write (\ref{prime5}) as an $\mathrm{SL}(2)$ matrix times a factor
coming from the $K^2{}_2$ contribution. One has  
\begin{equation}
\label{matrix5}
{\cal V}^\prime_5= H^{-1/2}\left[\begin{array}{cc}
H^{1/2}&0\\0&H^{-1/2}\\
\end{array}\right]\left[\begin{array}{cc}
1&0\\ -B &1\\
\end{array}\right]\, ,
\end{equation}
The negative root in (\ref{prime5}) can be transferred to the
original Borel gauge by a compensating element of $\mathrm{K}_{10}^{-}$. To
this effect we multiply on the left the matrix (\ref{matrix5}) by a
suitable element of the group  $\mathrm{SO}(2)= \mathrm{SL}(2)\cap \mathrm{K}_{10}^{-}$. For a
well chosen $\theta$ we get 
\be \label{comp5} \begin{split}
\overline{\cal V}^\prime_5&=
H^{-1/2}\left[\begin{array}{cc}\cos\theta&\sin\theta\\-\sin\theta&\cos\theta\\ 
\end{array}\right]\left[\begin{array}{cc}
H^{1/2}&0\\0&H^{-1/2}\\
\end{array}\right]\left[\begin{array}{cc}
1&0\\ -B &1\\
\end{array}\right]\\
&= H^{-1/2}\left[\begin{array}{cc}
\displaystyle
\left(\frac{H}{H^2 + B^2}\right)^{-1/2}&0\\0& \displaystyle
\left(\frac{H}{H^2 + B^2}\right)^{1/2}\\
\end{array}\right]\left[\begin{array}{cc}
1& \displaystyle
\frac{-B}{H^2 + B^2}\\ 0 &1\\
\end{array}\right]\\
&=\exp \left[-\frac{1}{2}\ln (H^2 + B^2)\, K^2{}_2\right] \, \exp
\left[\frac{1}{2} \ln \frac{H}{H^2 + B^2}\, h_{11}\right]\, \exp
\left[\frac{-B}{H^2 + B^2}\, R^{[6]}_2\right]\, . 
\end{split} \ee
Using the embedding relation (\ref{embedding}) we get from
(\ref{vielbein}) the metric\footnote{This solution of eleven-dimensional 
  supergravity has been derived previously in a different context
  \cite{Lozano-Tellechea:2000mc}.} encoded in the representative
(\ref{comp5})   
\be\label{metric5} \begin{split}
{\rm Level }\ 5 :\qquad g_{11}&=g_{22}=(H^2+B^2) \widetilde H^{1/3}\\
g_{33}&=g_{44}=\dots =g_{88}= \widetilde H^{1/3}\\ 
-g_{99}&=g_{10\,10}=g_{11\,11}=\widetilde H^{-2/3}\, ,
\end{split}\ee
where
\begin{equation}
\widetilde H =\frac{H}{H^2+B^2}\, .
\end{equation}
Using this metric and the duality equations (\ref{dual1}) and
(\ref{dual2}), one obtains the supergravity 3-form potential $A_{9\,
  10\,11}$ dual to  $A_{3\,4\,5\,6\,7\,8}=-B/(H^2+B^2)$  
\begin{equation}
\label{potential5}
A_{9\, 10\,11}= \frac{1}{\widetilde H}\, ,
\end{equation}
and the dual representative of $\overline{\cal V}^\prime_5$, expressed
in terms of the potential (\ref{potential5}) is  
\begin{equation}
\label{fin5}
\overline{\cal V}_5=\exp\left [-\frac{1}{2}\ln (H^2 + B^2)\,
  K^2{}_2\right] \, \exp \left[\frac{1}{2} \ln \widetilde H\,
  h_{11}\right]\,  \exp\left [\frac{1}{\widetilde H}\,
  R^{[3]}_1\right] \, . 
\end{equation}
The dual pair $\widetilde H$ and $\widetilde B\equiv -B/(H^2+B^2)$ are
the conjugate harmonic functions defined by the analytic function
${\cal E}_2$ given by 
\begin{equation}
\label{analytic5}
{\cal E}_2= \frac{1}{{\cal E}_1}=\widetilde H + i\widetilde B\, .
\end{equation}
We shall verify later that the metric (\ref{metric5}) and the
potential (\ref{potential5}) solve the equations of motion of eleven-dimensional 
supergravity. 

\bigskip
\noindent
{\bf $\bullet$ From level 5 to level 7: signature change}

Pursuing further in Figure \ref{ger1fig}a the solid line towards positive step
generators, we encounter the Weyl reflexion $s_{\alpha_{11}}$  leading
from level 5 to the level 7 root $R^{[3]}_7$.  The equation (\ref{borel3n})
reads 
\begin{equation}
\label{seven}
{\cal V}_7= \exp \left[\frac{1}{2} \ln H \, (h_{11}-2
  K^2{}_2)\right]\, \exp \left[\frac{1}{H}\, R^{[3]}_7\right]\, . 
\end{equation}
In the computation of the level 5 solution we have  
followed the sequence of duality transformations and the compensation
depicted on the second horizontal line  in  Figure \ref{second}. The sequence of
operations required to transform ${\cal V}_7$ to a representative
expressed in terms of the supergravity 3-form potential parametrizing
$R^{[3]}_1$ is depicted in the third line of Figure \ref{second}. As discussed below
in more details in the analysis of the full M2 sequence, all steps
appearing in the figure on the same column at  
levels 5 and 7 are related by the same Weyl transformation
$s_{\alpha_{11}}$. Hence one may short-circuit the first two dualities
and the first compensation and evaluate directly the Borel
representative pertaining to the last column of the level 5 line in
Figure \ref{second}. This amounts to take  
the Weyl transform by  $s_{\alpha_{11}}$
  of $\overline{\cal V}_5$ given in (\ref{fin5}). The generators
  $h_{11}\, , R^{[3]}_{-1}\, ,R^{[3]}_1$ generate as above an $\mathrm{SL}(2)$
  group represented here by a solid line in Figure \ref{ger1fig}b. The Weyl
  conjugation matrix $U_7$ sending $R^{[3]}_1$ to $R^{[3]}_{-1}$ is 
  \begin{equation}
\label{weyl7}
U_7=\exp R^{[3]}_{-1}\, \exp\,( - R^{[3]}_1)\,  \exp R^{[3]}_{-1}\, ,
\end{equation}
which yields the result corresponding to (\ref{trans5}), namely
\begin{equation}
\label{trans7}
U_7\,R^{[3]}_1\, U^{-1}_7  =- R^{[3]}_{-1}\, .
\end{equation}
One gets
\begin{equation}
\label{in7}
\overline{\cal V}_7=\exp \left[-\frac{1}{2}\ln (H^2 + B^2)\,
  K^2{}_2\right] \, \exp \left[-\frac{1}{2} \ln \widetilde H\,
  h_{11}\right]\,  \exp\left[-\frac{1}{\widetilde H}\,
  R^{[3]}_{-1}\right]\, . 
\end{equation}
To convert the negative root into a positive one  we have to perform a
second compensation. Here a new phenomenon appears: the space-time
signature changes because  the temporal involution $\Omega_9$ does not
commute with the  above Weyl transformation. As explained in
Appendix~\ref{appw1}, the signature $(1,10, +)$ becomes $(2,9, -)$
with time coordinates 10 and 11 and negative kinetic energy for the
field strength. 
The signature change affects the compensation matrix. The intersection
of the $\mathrm{SL}(2)$ group generated by $h_{11}\, , R^{[3]}_{-1}\,
,R^{[3]}_1$ with $\mathrm{K}_{10}^{-}$ is not $\mathrm{SO}(2)$ but $\mathrm{SO}(1,1)$. Indeed the
transformed involution $\Omega^\prime$ resulting from the action of
the Weyl reflexion $\alpha_{11}$ yields $\Omega^\prime R^{[3]}_1= +
R^{[3]}_{-1}$, and the combination of step operators invariant under
$\Omega^\prime $ is the hermitian, hence {\em non-compact generator}
$R^{[3]}_1+R^{[3]}_{-1}$, implying   $\mathrm{SL}(2)\cap \mathrm{K}_{10}^{-}=\mathrm{SO}(1,1)$.  

Recall that the field $1/\widetilde H$ in (\ref{in7}) is inherited
from (\ref{fin5}) which was obtained from (\ref{comp5}) using
Hodge duality. The latter is a differential equation and we have
hitherto chosen for simplicity the integration constant to be
zero. This choice would lead at level 7 to a singular compensating
matrix (see footnote~\ref{borelfn} page \pageref{borelfn}) and we therefore will use instead
the field $ 1/\widetilde H-1$ 
(we could keep an arbitrary constant $\gamma\ne 0$ instead of $-1$ but that
would unnecessarily complicate notations). Using the matrices
(\ref{repsl}) with  $h_{11}=h_1\, , R^{[3]}_1=e_1\,
,R^{[3]}_{-1}=f_1$, we get for  a suitable choice of $\eta$ the
compensated representative 
\be \begin{split}
\overline{\overline{\cal V}}_7&= (H^2+B^2)^{-1/2}\left[\begin{array}{cc}
\cosh\eta&\sinh\eta\\ \sinh\eta&\cosh\eta\\
\end{array}\right]\left[\begin{array}{cc}
\widetilde H^{-1/2}&0\\0&\widetilde H^{1/2}\\
\end{array}\right]\left[\begin{array}{cc}
1&0\\-\widetilde H^{-1}+1 &1\\
\end{array}\right]\\
&= (H^2+B^2)^{-1/2}\left[\begin{array}{cc}
(2-\widetilde H )^{1/2}&0\\0&(2-\widetilde H)^{-1/2}\\
\end{array}\right]\left[\begin{array}{cc}
1&1- (2-\widetilde H)^{-1}\\ 0 &1\\
\end{array}\right]\, ,
\end{split}\ee
or
\begin{equation}
\label{fin7}
\overline{\overline{\cal V}}_7=\exp \left[-\frac{1}{2}\ln (H^2+B^2)\,
  K^2{}_2 \right]\, \exp\left[ \frac{1}{2}\ln (2-\widetilde H )\,
  h_{11}\right]\, \exp \left[(1- \frac{1}{2-\widetilde H}) \,
  R^{[3]}_1\right]\, . 
\end{equation}
As shown below, this yields a solution of the supersymmetric exotic
partner of eleven-dimensional  supergravity with times in 10 and 11 and a negative
kinetic energy term. One has 
\begin{equation}
\label{potential7}
A_{9\,10\,11}=-\frac{1}{2-\widetilde H }\, ,
\end{equation}
where we dropped in the 3-form potential the irrelevant constant
$(-1)$. The metric encoded in the representative (\ref{fin7}) is,
from (\ref{embedding}) and (\ref{vielbein}), 
\be \label{metric7} \begin{split}
{\rm Level }\ 7 :\qquad g_{11}&=g_{22}=(H^2+B^2) \widetilde{\widetilde H}^{1/3}\, ,\\
g_{33}&=g_{44}=\dots =g_{88}= \widetilde{\widetilde H}^{1/3}\, ,\\
g_{99}&=-g_{10\,10}=-g_{11\,11}=\widetilde{\widetilde H}^{-2/3}\, ,
\end{split}\ee
where
\begin{equation}
\widetilde{\widetilde H} =2-\widetilde H\, .
\end{equation}
From the duality relations (\ref{dual1}) and (\ref{dual2}), we see
that the field  $\widetilde{\widetilde B}$ dual to
$\widetilde{\widetilde H}$ is equal to $ -\widetilde B$. The dual pair
$\widetilde{\widetilde H}$ and $\widetilde{\widetilde B}$ are
conjugate harmonic functions associated to the analytic function 
\begin{equation}
\label{analytic7}
{\cal E}_3= 2-{\cal E}_2=\widetilde{\widetilde H} + i\widetilde
{\widetilde B}\, . 
\end{equation}

\bigskip
\noindent
{\bf $\bullet$ The complete M2 sequence}

The  M2 sequence is characterized by the generators $R^{[6]}_{-1+ 6n},
n>0$ and $R^{[3]}_{1+ 6n}, n\ge 0$. These are reached by following in
Figure \ref{ger1fig}a the solid line starting from $R^{[3]}_1$ towards the positive
roots. The representatives are given in (\ref{borel3n}) and
(\ref{borel6n}).  
 
 \begin{figure}[h]
   \centering
   \includegraphics[width=13cm]{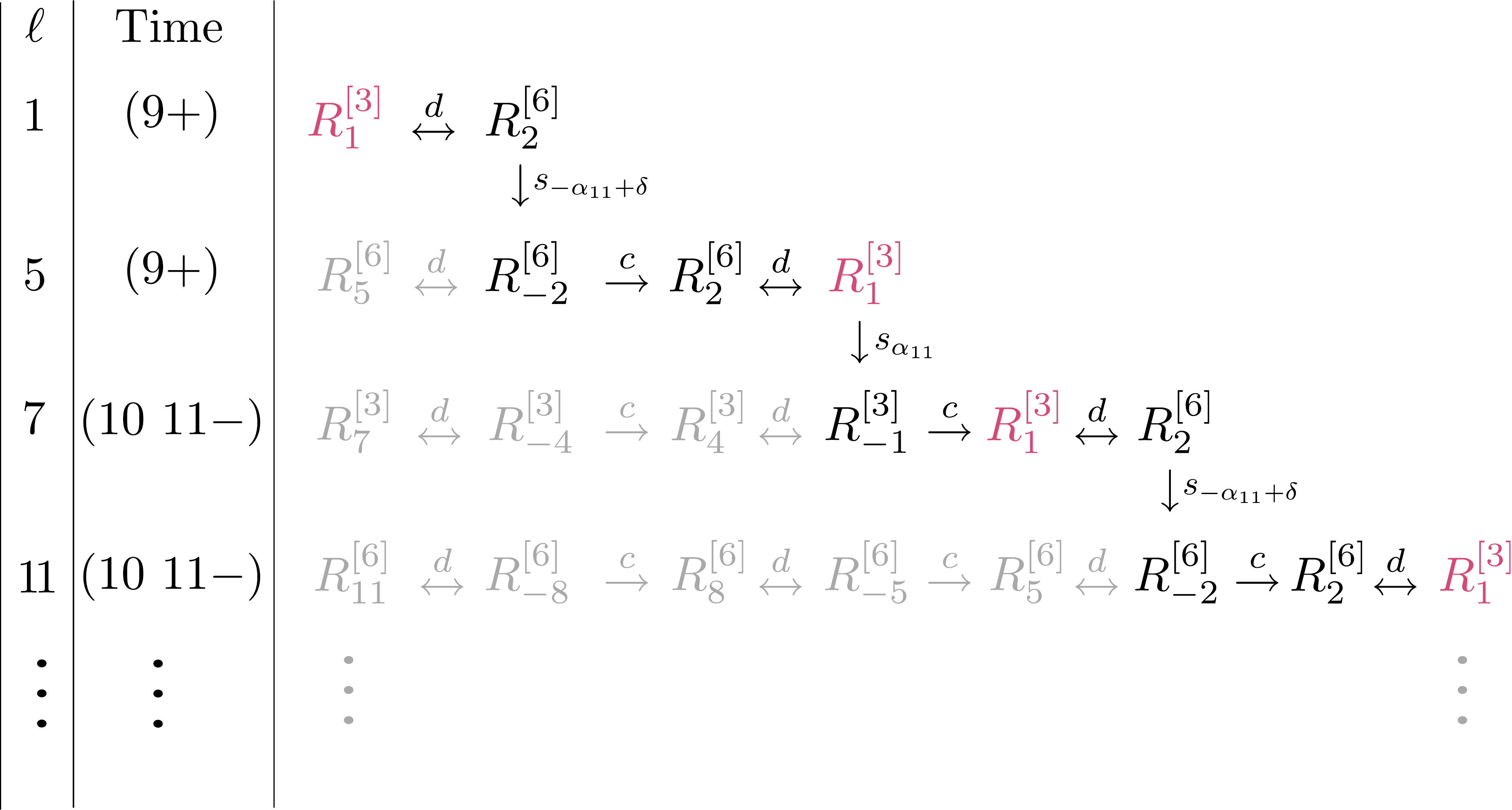}
 \caption {\sl \small The construction of BPS states for the M2
 sequence. The superscript $d$ labels a duality and the superscript
 $c$ labels a compensation. The horizontal rows are connected by Weyl
 reflexions as indicated.} 
 \label{second}
\end{figure}

A glance on Figure \ref{second} shows that at any level of this sequence one may
 trade this representative either by performing the Weyl
 transformation $s_{\alpha_{11}}$ on the preceding level
 representative 
 expressed in terms of the generator $R^{[3]}_1$ (as was done to reach
 the level 7 from the level 5 representative), or  by  the Weyl
 transformation $s_{- \alpha_{11}+\delta}$ on its dual representative
 in terms of $R^{[6]}_2$ (as was done to reach the level 5 from the
 level 1 representative). In this way, one bypasses all the steps
 depicted on horizontal lines in Figure \ref{second} which involve complicated
 duality relations and compensations  to solely perform a single
 compensation by a $\mathrm{SO}(2)$ or $\mathrm{SO}(1,1)$ matrix  and a known Hodge
 duality defined by (\ref{dual1}) and (\ref{dual2}), as
 exemplified by the detailed analysis of the two first levels of the
 sequence.  The nature of the compensation needed  in the construction
 of the M2 sequence alternates at each step on a horizontal line of
 Figure \ref{second} between $\mathrm{SO}(2)$ and $\mathrm{SO}(1,1)$ as shown in
 Appendix~\ref{appw1}. The representatives of the M2 sequence in terms
 of the supergravity fields are easily written in terms of complex
 potentials ${\cal E}_n$. These are obtained by operating on the
 analytic function
 \be
 {\cal E}(z)_1= H(z,\bar z) + iB(z,\bar z),
 \ee
 with $z=
 x^1+ix^2, \bar z= x^1-ix^2$, successively by  inversions  and
 translations according to 
 \begin{equation}
\label{Moebact}
{\cal E}(z)_{2n}= \frac{1}{{\cal E}(z)_{2n-1}}\,,\qquad{\cal
  E}(z)_{2n+1}= 2-{\cal E}(z)_{2n} \qquad n >0\, . 
\end{equation}
These formul\ae{} summarise the action of the $E_9$ Weyl group on BPS
states which are largely characterised by the harmonic functions
${\cal E}$. From (\ref{Moebact}) one sees that the action consists
of inversion and shift in a way very similar to the modular group
$\mathrm{SL}(2,{\mathbb Z})$. In order to see that there is more than just the
action of an $\mathrm{SL}(2,{\mathbb Z})$ one must consider the action of the
transformations on the full metric, including in particular the
conformal factor.
For the full solution one gets, defining ${\cal
  F}_{2n-1}={\cal E}_{1}{\cal E}_{3}\dots {\cal 
  E}_{2n-1} $ for $n >0$ (and ${\cal F}_{-1} \equiv 1$), the
straightforward generalisation of  (\ref{2M2}), (\ref{fin5}) and
(\ref{fin7}), 

\begin{align}
\label{seqM21}
{\cal V}_{1+6n}&=\exp \left[-\frac{1}{2}\ln ({\cal F}_{2n-1}
  \bar{\cal F}_{2n-1})\, K^2{}_2\right] \, \exp\left[\frac{1}{2} \ln
  {\cal R}e\, {\cal E}_{2n+1}\, h_{11}\right]\,  \exp
\left[\frac{(-1)^{n}}{{\cal R}e\, {\cal E}_{2n+1}}\,
  R^{[3]}_1\right]\\ 
\label{seqM22}
{\cal V}_{-1+6n}&=\exp \left[-\frac{1}{2}\ln( {\cal F}_{2n-1}
  \bar{\cal F}_{2n-1})\, K^2{}_2\right] \, \exp \left[\frac{1}{2} \ln
  {\cal R}e\, {\cal E}_{2n}\, h_{11}\right]\,  \exp
\left[\frac{(-1)^{n+1}}{{\cal R}e\, {\cal E}_{2n}}\,
  R^{[3]}_1\right],  
\end{align}
with $n\geq 0$ in \eqref{seqM21} and  $n>0$ in \eqref{seqM22}.
The quantity ${\cal \bar{F}}$ denotes the complex conjugate of ${\cal F}$ and
the signatures in (\ref{seqM21})  are $(1,10,+)$ with time
in 9 for $n$ even and $(2,9, -)$ with times in 10 and 11 for $n$ odd. In
(\ref{seqM22}) the signatures are $(2,9, -)$ with times in 10 and 11
for $n$ even  and $(1,10,+)$  with time in 9 for $n$ odd. The detailed
analysis of the signatures for the M2 sequence is done in
Appendix~\ref{appw1} and the final results are summarised in Table \ref{tab:M2seq}.  
To interchange the role of even and odd $n$ in the above signatures,
one simply builds another M2 sequence starting from the  exotic M2 of
the theory $(2,9,-)$ whose metric is given in (\ref{M2metricE}). It
has two longitudinal times in 10 and 11 and one longitudinal spacelike
direction 9.

The coset representatives (\ref{seqM21}) and (\ref{seqM22})  yield from (\ref{vielbein})
and (\ref{embedding}) the metric and three-form potential 
\begin{align}
\label{M2odd}
\dd s^2_{[1+6n]}&= {\cal F}_{2n-1}\bar{\cal
 F}_{2n-1}H_{2n+1}^{1/3}[(\dd x^1)^2+(\dd x^1)^2]
+H_{2n+1}^{1/3}[(\dd x^3)^2+\dots+(\dd x^8)^2] \nn\\ 
 &\quad + H_{2n+1}^{-2/3}[(-1)^{(n+1)}
 (\dd x^9)^2+(-1)^n(\dd x^{10})^2+(-1)^n(\dd x^{11})^2]  \\
 A_{9\,10\,11}&=\frac{(-1)^{n}}{H_{2n+1}} \nonumber\\ 
\nonumber\\
\dd s^2_{[-1+6n]}&= {\cal F}_{2n-1}\bar{\cal
 F}_{2n-1}H_{2n}^{1/3}[(\dd x^1)^2+(\dd x^1)^2]
 +H_{2n}^{1/3}[(\dd x^3)^2+\dots+(\dd x^8)^2]\label{M2even}  \nn \\ 
&\quad + H_{2n}^{-2/3}[(-1)^n (\dd x^9)^2+(-1)^{n+1}(\dd x^{10})^2+(-1)^{(n+1)}
 (\dd x^{11})^2] \\ 
A_{9\,10\,11}&=\frac{(-1)^{n+1}}{H_{2n}}\nonumber
\end{align}
with $n\geq 0$ in \eqref{M2odd}, $n>0$ in \eqref{M2even} and   $H_p={\cal R}e\, {\cal E}_p$. We stress that an important effect
of the action of the affine Weyl group on the BPS solutions is the
change in the conformal factor which is expressed through ${\cal
  F}_{2n-1}\bar{\cal  F}_{2n-1}$. 

For each level on the M2-sequence, these equations satisfy the
equations of motion of eleven-dimensional  supergravity or of its exotic counterpart
outside the singularities of the functions $H_p$.  There, the factor
${\cal F}_{2n-1}\bar{\cal F}_{2n-1}$ can indeed be eliminated by a
change of coordinates and the functions $H_p$ are still harmonic
functions in the new coordinates. The solutions (\ref{M2odd}) and (\ref{M2even})
have then the same dependence on $H_p$ as the M2 metric and 3-form
have on $H_1\equiv H$ and differ thus from the M2 solution only
through the choice of the harmonic function. They therefore solve the
Einstein equations. This is also discussed more abstractly in
Section~\ref{analyticsec}. 
 
To obtain these results, we have chosen a particular path to reach
from level 1 the end of any horizontal line in Figure \ref{second}. Along this path,
all signs of the fields in the representatives were fixed by the
choice $+1/H$ at level 1 and by the Hodge duality relations
(\ref{dual1}) and (\ref{dual2}).  Thus, we do not have to
explicitly take into account signs which might affect higher level
tower fields in (\ref{borel3n}) and (\ref{borel6n}). 

The consistency of the procedure used in this chapter to obtain
solutions related by U-dualities viewed as Weyl transformations rests
however on the arbitrariness of the path chosen to reach from level 1
the end of any horizontal line in Figure \ref{second}. Dualities for levels $l>2$
are in principle defined by the Weyl transformations.  Consistency is
thus equivalent to commuting  Weyl transformations with
compensations. Compensations and Weyl transformations do indeed
commute, as proven in Appendix \ref{appc}.  
\subsection{The M5 sequence}\label{subsec:m5seqth}

We follow the same procedure as for the M2 sequence. We take as
representatives of the M5 sequence all the Weyl transforms of the M5
representative (\ref {2M5}) with fields (\ref{M5}) and time
coordinate 3. Following  in Figure \ref{ger1fig}a the dashed line towards positive
step generators, we encounter  Weyl transforms of the $\mathrm{SL}(2)$ subgroup
generated by $(-h_{11}- K^2{}_2\, , R^{[6]}_2\, ,R^{[6]}_{-2})$
represented by a dashed line in Figure \ref{ger1fig}b.  Theorem 2 determines from
(\ref{sla}) and (\ref{slb}) the Weyl transform of the Cartan
generators of (\ref{2M5})  and we write 
\begin{align}
\label{6Nborel}
{\cal V}_{2+6n}&= \exp \left[\frac{1}{2} \ln H \, (-h_{11}-(
2n+1) K^2{}_2)\right]\, \exp \left[\frac{1}{H}\,
  R^{[6]}_{2+6n}\right]\qquad &&n\ge 0\\ 
\label{3Nborel}
{\cal V}_{-2+6n}&= \exp\left[\frac{1}{2} \ln H\, (h_{11}-(2n-1)
  K^2{}_2)\right] \, \exp \left[\frac{1}{H}\, R^{[3]}_{-2+6n}\right]
\qquad &&n>0\, . 
\end{align}
We shall trade the tower fields $A^{[6]}_{2+6n}=A^{[3]}_{-2+6n}$  in
favour of the supergravity potential $A_{9\,10\,11}$ and construct
from them BPS solutions of eleven-dimensional  supergravity. 

For the M5 itself, we take the dual representative  $ {\cal
  V}^\prime_2$ expressed in terms of the 3-form potential, which is
given in (\ref{boreln2}). As previously the first two steps, levels
4 and 8, contain the essential ingredients of the whole sequence. 
\begin{figure}[h]
   \centering
   \includegraphics[width=13cm]{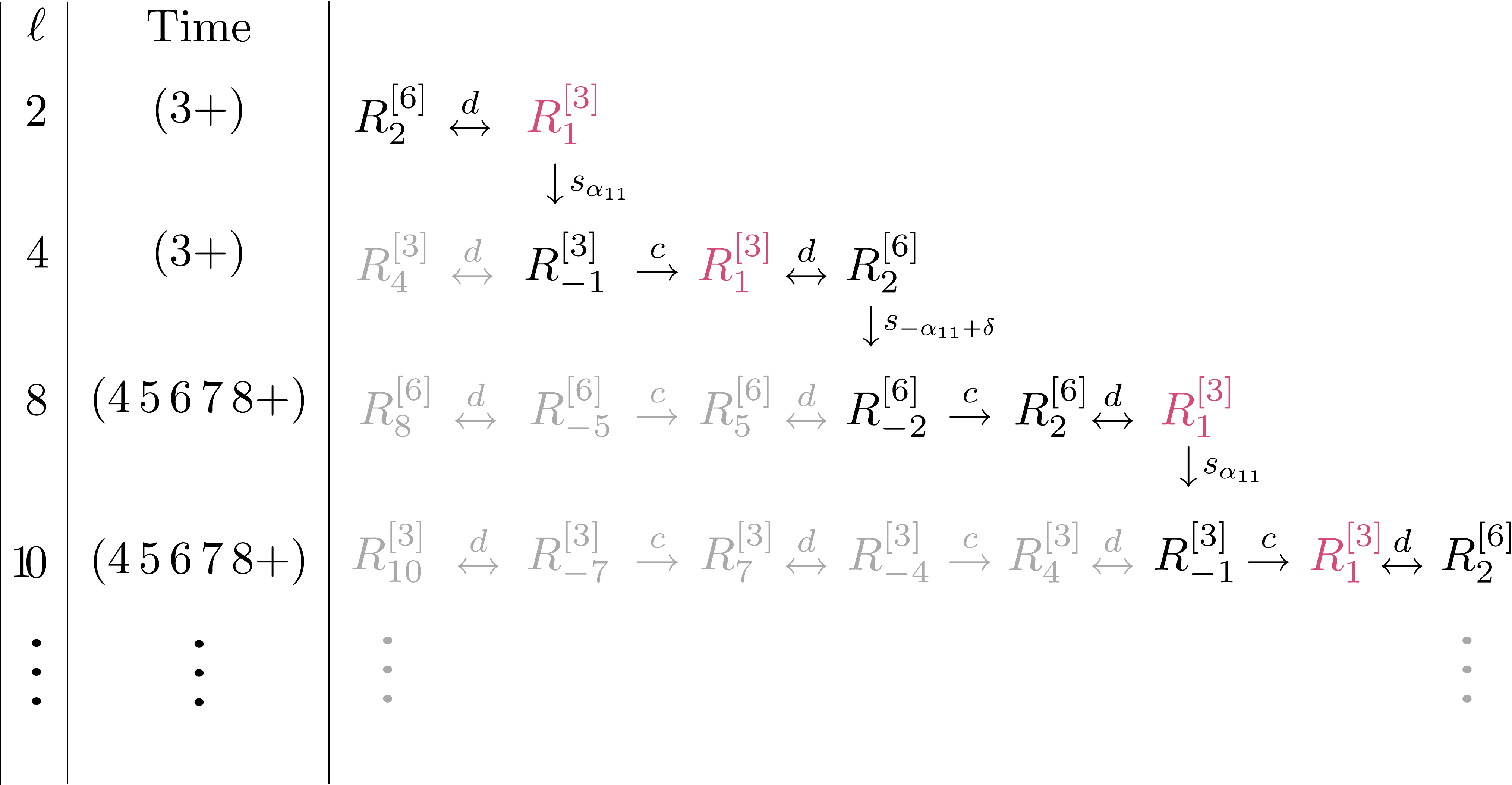}
 \caption {\sl \small The construction of BPS states for the M5
 sequence. The superscript $d$ labels a duality and the superscript
 $c$ labels a compensation.} 
 \label{fig:sequenceb}
\end{figure}

Following in Figure \ref{ger1fig}a the dashed line towards positive step generators,
we first encounter the Weyl reflexion $\alpha_{11} $ sending the level
2 generator $R^{[6]}_2$ to the level 4 generator $R^{[3]}_4$, or
equivalently, as exhibited in Figure \ref{fig:sequenceb}, the dual generator $R^{[3]}_1$ to
the generator $R^{[3]}_{-1}$. Applying this Weyl reflexion to
(\ref{boreln2}) and performing an $\mathrm{SO}(2)$ compensation we get the
representative 
\begin{equation}
\label{fin4}
\overline{\cal V}_4=\exp \left[-\frac{1}{2}\ln (H^2 + B^2)\,
  K^2{}_2\right] \, \exp \left[-\frac{1}{2} \ln \widetilde H\,
  (h_{11}+ K^2{}_2)\right]\,  \exp \left[\widetilde B\,
  R^{[3]}_1\right]\, , 
\end{equation} 
which yields $A_{9\, 10\,11}= \widetilde B$ and the
metric\footnote{This solution of 11 D supergravity has already been
  derived in a different context   \cite{Lozano-Tellechea:2000mc}.} 
\be \label{metric4} \begin{split}
{\rm Level }\ 4 :\qquad g_{11}&=g_{22}=(H^2+B^2) \widetilde H^{2/3}\, ,\\
-g_{33}&=g_{44}\dots =g_{88}= \widetilde H^{-1/3}\, ,\\ 
g_{99}&=g_{10\,10}=g_{11\,11}=\widetilde H^{2/3}\, ,
\end{split} \ee
with $\tilde{H}$ and $\tilde{B}$ as before.
The level 4 results are in  agreement with the interpretation of the
Weyl reflexion $s_{\alpha_{11}}$ as a double T-duality in the
directions 9 and 10 plus interchange of the two directions
\cite{Elitzur:1997zn, Obers:1998rn, Banks:1998vs, Englert:2003zs}. We
recover indeed  the level 4 metric and  3-form by applying  Buscher's
duality rules to the  M5 smeared in the directions $9,10$ and
$11$. This is shown in Appendix~\ref{appb}. 
The next step leads to level 8. As for the computation of the level 7
representative in the M2 sequence, we may skip the two first dualities
and the first compensation indicated in the third line of Figure \ref{fig:sequenceb}. It
suffices to perform  the Weyl reflexion $s_{- \alpha_{11}+\delta}$ on
the dual representative of (\ref{fin4}) followed by a $\mathrm{SO}(1,1)$
compensation and a Hodge duality. One gets 
\begin{equation}
\label{fin8}
\overline{\overline{\cal V}}_8=\exp\left[-\frac{1}{2}\ln (H^2+B^2)\,
  K^2{}_2\right]\, \exp\left[ -\frac{1}{2}\ln \widetilde{\widetilde
    H}\, (h_{11}+ K^2{}_2)\right]\, \exp \left[- \widetilde{\widetilde
    B} \, R^{[3]}_1\right]\, , 
\end{equation}
which yields $A_{9\, 10\,11}=-\widetilde{\widetilde B}$ and the metric
\be  \label{metric8} \begin{split}
{\rm Level }\ 8 :\qquad g_{11}&=g_{22}=(H^2+B^2) \widetilde{\widetilde
  H}^{2/3}\, ,\\
   g_{33}&=-g_{44}\dots =-g_{88}= \widetilde{\widetilde
  H}^{-1/3}\, ,\\ 
g_{99}&=g_{10\,10}=g_{11\,11}=\widetilde{\widetilde H} H^{2/3}\, .
\end{split}\ee
As shown in Appendix~\ref{appw2} the signature is now $(5, 6,+)$ with
times in 4, 5, 6, 7 and 8. 

The full M5 sequence is characterized by the roots $R^{[3]}_{-2+ 6n},
n>0$ and $R^{[6]}_{2+ 6n}, n\ge 0$. These are reached by following in
Figure \ref{ger1fig}a the dashed line starting at $R^{[6]}_2$ towards the positive
roots. The representative is defined by the Cartan generator given in
(\ref{sla}) or (\ref{slb}) and by the field $1/H$ multiplying
the positive root.  
As for the M2 sequence, the generalisation to all levels to the lowest
ones (\ref{boreln2}), (\ref{fin4}) and (\ref{fin8}) is
straightforward. As indicated in Figure \ref{fig:sequenceb},  one obtains iteratively the
representatives in terms of the supergravity 3-form by solely
performing  a single compensation by a $\mathrm{SO}(2)$ or $\mathrm{SO}(1,1)$ matrix
and a known Hodge duality defined by (\ref{dual1}) and
(\ref{dual2}).  One alternates after two steps between
representatives with a single time in 3 and  exotic ones with times in
4, 5, 6, 7 and 8. The nature of the compensation changes at each step.  
One has
\begin{align}
{\cal V}_{2+6n}&=\exp \left[-\frac{1}{2}\ln ({\cal F}_{2n-1}
  \bar{\cal F}_{2n-1})\, K^2{}_2\right] \, \exp \left[-\frac{1}{2} \ln
  {\cal R}e\, {\cal E}_{2n+1}\, (h_{11}+K^2{}_2)\right]\nonumber \\  
\label{seqM51}
& \quad  \cdot \exp \left[ (-1)^{n} {\cal I}m\, {\cal E}_{2n+1}\,
  R^{[3]}_1\right] \quad \quad n\geq 0\\ 
\nonumber\\
{\cal V}_{-2+6n}&=\exp \left[-\frac{1}{2}\ln ({\cal F}_{2n-1}
  \bar{\cal F}_{2n-1})\, K^2{}_2\right] \, \exp\left[-\frac{1}{2} \ln
  {\cal R}e\, {\cal E}_{2n}\, (h_{11}+K^2{}_2)\right]\nonumber\\ 
\label{seqM52}
& \quad \cdot \exp\left[ (-1)^{n+1}{\cal I}m\, {\cal E}_{2n}\,
  R^{[3]}_1\right] \quad \quad 
n>0
\end{align}
where in (\ref{seqM51}) one has the signatures $(1,10,+)$ with time
in 3 for $n$ even and $(5,6, +)$ with times in 4, 5, 6, 7 and 8 for $n$ odd,
and in (\ref{seqM52}) the signatures are  $(1,10, +)$  with time in
3 for $n$ odd and $(5,6,+)$ with times in 4, 5, 6, 7 and 8 for $n$ even. As
previously it is always possible to interchange at each pair of levels
the two signatures by choosing an exotic M5 to initiate the
sequence. The detailed analysis of the signatures for the M5 sequence
and  of the compensations required is done in Appendix~\ref{appw2} and
summarised  in Table \ref{tab:signam5seq}. 

These representatives yield the metric and 3-form potential for all
states on the M5 sequence. We get from (\ref{seqM51}) and
(\ref{seqM52}) 
\begin{align}
\label{M5odd}
\dd s^2_{[2+6n]}&= {\cal F}_{2n-1}\bar{\cal
 F}_{2n-1}H_{2n+1}^{2/3}[(\dd x^1)^2+(\dd x^1)^2]+H_{2n+1}^{2/3}[
 (\dd x^9)^2+(\dd x^{10})^2+(\dd x^{11})^2] \nn \\
 & \quad + H_{2n+1}^{-1/3}[(-1)^{n+1}(\dd x^3)^2
 +(-1)^n(\dd x^4)^2\dots+(-1)^n(\dd x^8)^2] \\ 
 \nonumber A_{9\,10\,11}&= (-1)^{n} B_{2n+1}\\
\nonumber\\
\label{M5even} 
\dd s^2_{[2+6n]}&= {\cal F}_{2n-1}\bar{\cal
  F}_{2n-1}H_{2n}^{2/3}[(\dd x^1)^2+(\dd x^1)^2]
+H_{2n}^{2/3}[(\dd x^9)^2+(\dd x^{10})^2+
  (\dd x^{11})^2] \nn \\& \quad +H_{2n}^{-1/3}[(-1)^n(\dd x^3)^2+(-1)^{n+1}(\dd x^4)^2+\dots+(-1)^{n+1}(\dd x^8)^2]  \\ 
\nonumber
A_{9\,10\,11}&= (-1)^{n+1} B_{2n}
\end{align}
with $n\geq 0$ in \eqref{M5odd}, $n>0$ in \eqref{M5even} and $B_p={\cal I}m\, {\cal E}_p$. 

For each level on the M5-sequence, these equations satisfy   the
equations of motion of eleven-dimensional  supergravity or of its exotic counterpart
outside the singularities of the harmonic functions $H_p$ and $B_p$.
There, the factor ${\cal F}_{2n-1}\bar{\cal F}_{2n-1}$ can indeed be
eliminated by a change of coordinates and the functions $H_p$ and
$B_p$ are still conjugate harmonic functions of the new
coordinates. (\ref{M5odd}) and (\ref{M5even}) have then the same
dependence on $H_p$ and $B_p$ as the M5 metric and 3-form have on
$H_1\equiv H$ and $B_1\equiv B$ and differ thus from the M5 solution
only through the choice of the harmonic functions. They therefore
solve the Einstein equations.  
\setcounter{equation}{0}
\section{The gravity tower}\label{sec:gravitytowersec}

The affine $A_1^+$ group generated by $R^{[3]}_1\equiv R^{9\,10\,11}$ and
$R^{[6]}_2\equiv R^{3\,4\,5\,6\,7\,8} $ spans three towers of
generators. We found BPS solutions for each positive generator of the
3-tower (\ref{tower1}) and of the 6-tower (\ref{tower3}). All
these generators correspond to real roots while those in the third
tower (\ref{tower2}) generators correspond to null roots of
square length zero.  
Each generator of the third tower at level $3(n+1)\quad n\ge 0$
belongs to an irreducible representation of $A_8\subset E_9$ whose
lowest weight is the real root $\alpha_4 + 2\alpha_5
+3\alpha_6+4\alpha_7 +5\alpha_8 +3\alpha_9 +\alpha_{10} +3\alpha_{11}
+n\delta$. We now show that the  lowest weight generators belong to a
$A_1^+$ subgroup  of $E_9$ generated by $R^{4\,5\dots 10 \, 11\vert
  11}$ which sits at level 3 and by $K^3{}_{11}$, which is defined by
the level 0 real root
$\alpha_3+\alpha_4+\alpha_5+\alpha_6+\alpha_7+\alpha_8+\alpha_9+\alpha_{10}$. 

These two generators are related as follows
\begin{align}
\label{ger1}
K^3{}_{11} &\leftrightarrow
\alpha_3+\alpha_4+\alpha_5+\alpha_6+\alpha_7+\alpha_8+\alpha_9+\alpha_{10}\equiv\lambda\,
,\\ 
\label{ger2}
R^{4\,5\dots 10 \, 11\vert 11} &\leftrightarrow \alpha_4 + 2\alpha_5
+3\alpha_6+4\alpha_7 +5\alpha_8 +3\alpha_9 +\alpha_{10}
+3\alpha_{11}=-\lambda+\delta \, , 
\end{align}
where the last equality in (\ref{ger2}) is easily checked using
(\ref{delta}). They are  
Weyl transforms of the generators $R^{9\,10\,11}$ and
$R^{3\,4\,5\,6\,7\,8}$ of the the $A_1^+$ group defined in
(\ref{geroch}). To see this, first perform the Weyl transformation
interchanging 9 and 3. The $A_1^+$ generators  (\ref{geroch}) are
transformed to (all Weyl transforms of step generators are written up
to a sign) 
\be \begin{split}
R^{9\,10\,11}\quad&\to  \quad R^{3\,10\,11}\\
R^{3\,4\,5\,6\,7\,8} \quad &\to \quad  R^{4\,5\,6\,7\,8\,9}\, ,
\end{split} \ee
defined by the roots $\alpha_3 +
\alpha_4+\alpha_5+\alpha_6+\alpha_7+\alpha_8 +\alpha_{11}$ and
$\alpha_4+2\alpha_5+3\alpha_6+4\alpha_7+5\alpha_8+4\alpha_9
+2\alpha_{10} +2\alpha_{11}$. 
Then perform the Weyl reflexion  $s_{\alpha_{11}}$ to get the generators
\be \begin{split}
R^{3\,10\,11} \quad &\to  \quad  K^3{}_9\\
R^{4\,5\,6\,7\,8\,9}  \quad&\to \quad  R^{4\,5\,6\,7\,8\,9\,10\,11\vert 9}\, ,
\end{split}\ee
defined by the roots $\alpha_3 +
\alpha_4+\alpha_5+\alpha_6+\alpha_7+\alpha_8$ and  
$\alpha_4+2\alpha_5+3\alpha_6+4\alpha_7+5\alpha_8+4\alpha_9
+2\alpha_{10} +3\alpha_{11}$. Finally perform the Weyl
transformation exchanging 9 and 11 to get 
\be \begin{split}
K^3{}_9  \quad &\to  \quad  K^3{}_{11}\\
R^{4\,5\,6\,7\,8\,9\,10\,11\vert \,9}  \quad&\to \quad
R^{4\,5\,6\,7\,8\,9\,10\,11\vert \,11}\, , 
\end{split}\ee
whose defining roots are $\lambda$ and $-\lambda +\delta$. 
The transformed Cartan generators are $K^3{}_3 -K^{11}{}_{11}$ and
$-K^2{}_2 +K^{11}{}_{11}-K^3{}_3$. Under these transformations, the
M2-brane generator is mapped onto the  Kaluza-Klein wave generator in
the direction 11. The M5-brane generator is mapped to the dual
Kaluza-Klein monopole generator $R^{4\,5\,6\,7\,8\,9\,10\,11\vert
  \,11}$. These generate the `gravity $A_1^+$ group' conjugate in $E_9$
to the `brane $A_1^+$ group' (\ref{geroch}). 

We now find the BPS solutions of eleven-dimensional  pure gravity (which are of course
solution of eleven-dimensional  supergravity) associated to each positive real root of
the gravity $A_1^+$ group. One could redo  the analysis of the M2-M5
system starting from the representatives of the KK-wave and
KK6-monopole given in (\ref{bgr1}) and  (\ref{bgr2}) and the
duality relations (\ref{Pdual2}).  It is however simpler to take
advantage of the Weyl mapping of the two $A_1^+$ subgroups of $E_9$ 
\begin{eqnarray}
\label{bragramap}
R^{9\,10\,11} &\leftrightarrow&   K^3{}_{11} \\
\label{bragra1}
R^{4\,5\,6\,7\,8\,9} &\leftrightarrow&   R^{4\,5\,6\,7\,8\,9\,10\,11\vert 11}\\
 \alpha_{11} \leftrightarrow   \lambda\quad &, & \quad\delta\leftrightarrow\delta\, .
  \label{bragra2}
\end{eqnarray}

The generators $R^{[3]}_{1+3n}$ of the 3-tower (\ref{tower1}) are
mapped to generators of level $3n$. We label these generators
$R^{[0]}_{3n}$ ($R^{[0]}_{0}\equiv K^3{}_{11}$). The generators
$R^{[6]}_{-1+3n}$ of the 6-tower (\ref{tower3}) are also mapped to
generators of level $3n \, (n>0)$. We label these generators $\bar
R^{[8,1]}_{3n}$ ($\bar R^{[8,1]}_{3}\equiv
R^{4\,5\,6\,7\,8\,9\,10\,11\vert 11}$).  In the mapping the signature
changes as shown in Appendix~\ref{appsg}. In particular, the KK-wave
$R^{[0]}_{0}$ yields a single time in 3 and  the KK6-monopole $\bar
R^{[8,1]}_{3}$ becomes exotic with two times 9 and 10. This mapping of
the M2-M5 sequences of Figure \ref{ger1fig}a to the gravity sequences is illustrated
in Figure \ref{fig:mappingbranegrav}. 
To the M2 sequence corresponds a wave sequence starting with
the KK-wave and to the M5 sequence a monopole sequence staring with
the (exotic) KK6-monopole. Note that there is a duplication in each
sequence of states with the same level $3n$ for $n>0$. We shall show
that this duplication is spurious in the sense that the two states are
related by a switch of coordinates.  

\begin{figure}[h]
\begin{center}
{\scalebox{0.60}
{\includegraphics{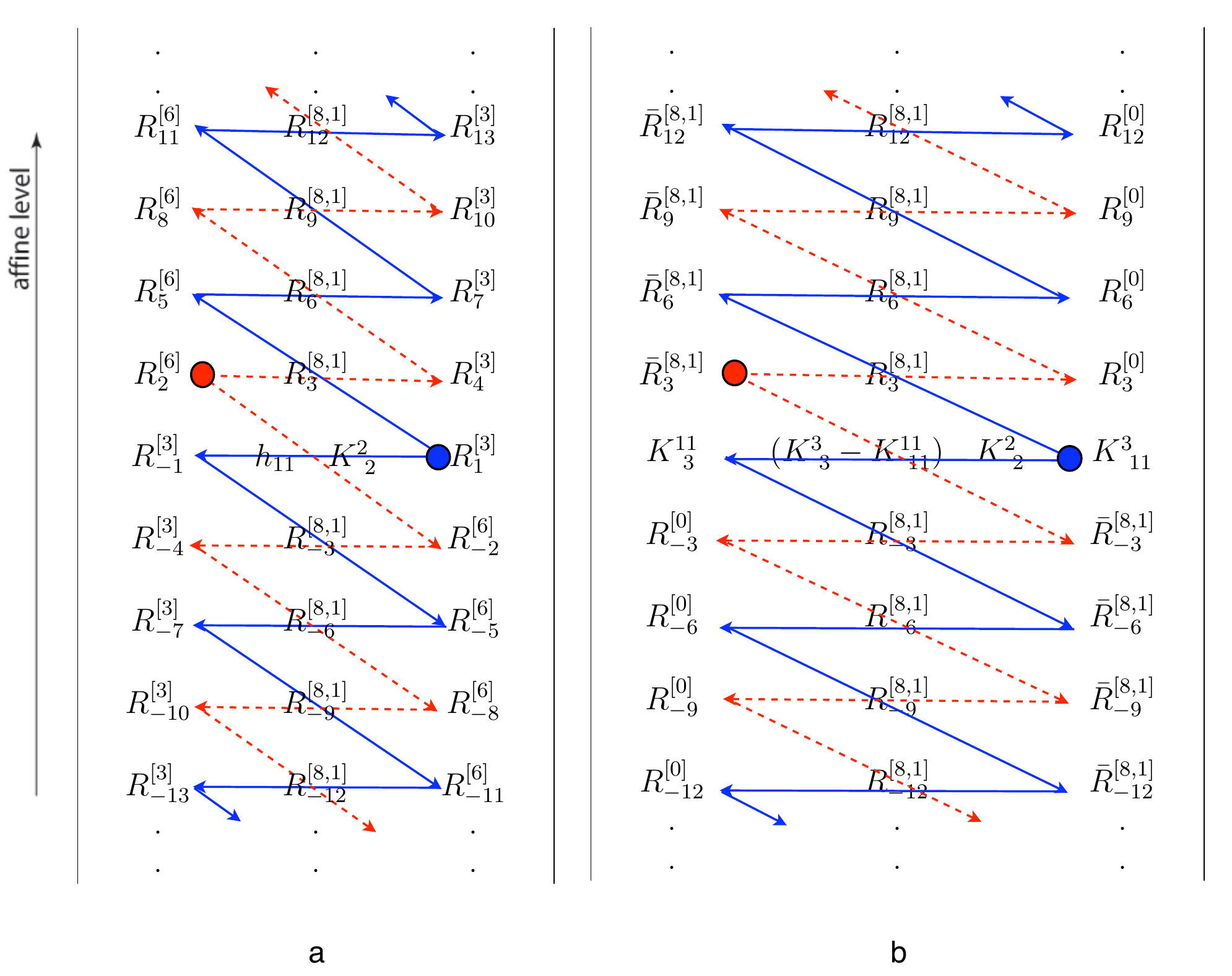}}}
\end{center}
 \caption {\sl \small Mapping of the brane $A_1^+$ group (Figure \ref{fig:mappingbranegrav}a) to the
 gravity $A_1^+$ group (Figure \ref{fig:mappingbranegrav}b). M2 and wave sequences are depicted by
 solid lines, M5 and monopole sequences by dashed lines. In Figure \ref{fig:mappingbranegrav}a
 (Figure \ref{fig:mappingbranegrav}b) horizontal lines represent Weyl reflexions by
 $s_{\alpha_{11}}$ ($s_\lambda$), diagonal lines by
 $s_{-\alpha_{11}+\delta}$ ($s_{-\lambda+\delta}$).} 
 \label{fig:mappingbranegrav}
\end{figure}

From the correspondence we immediately get from the representatives of
the M2 sequence (\ref{borel3n}), (\ref{borel6n}), and of the M5
sequences (\ref{6Nborel}) and (\ref{3Nborel}), the representatives of
the KK-wave sequence  (\ref{KW1}), (\ref{KW2}) and of the
KK-monopole sequence (\ref{KM1}), (\ref{KM2}) in terms of the
$R^{[0]}_{3n}$ and $\bar R^{[8,1]}_{3n'}$ generators 
\begin{align}
\label{KW1}
{\cal V}_{6n}&= \exp \left[\frac{1}{2} \ln H \,
  (K^3{}_3-K^{11}{}_{11} -2n K^2{}_2)\right]\, \exp
\left[\frac{1}{H}\, R^{[0]}_{6n}\right]\qquad &&n\ge0 \\ 
\label{KW2}
{\cal V}_{6n'}&= \exp \left[\frac{1}{2} \ln H\,
  (-K^3{}_3+K^{11}{}_{11} -2n' K^2{}_2)\right] \, \exp
\left[\frac{1}{H}\, \bar R^{[8,1]}_{6n'}\right]\qquad &&n'>0\\ 
\label{KM1}
{\cal V}_{3+6n'}&= \exp \left[\frac{1}{2} \ln H \,
  (-K^3{}_3+K^{11}{}_{11} -(2n'+1) K^2{}_2)\right] \, \exp
\left[\frac{1}{H}\, \bar R^{[8,1]}_{6n'+3}\right]\ &&n'\ge0\\ 
\label{KM2}
{\cal V}_{-3+6n}&= \exp \left[\frac{1}{2} \ln H\,
  (K^3{}_3-K^{11}{}_{11} -(2n-1) K^2{}_2)\right]\, \exp
\left[\frac{1}{H}\, R^{[0]}_{6n-3}\right] \quad &&n>0\ \, . 
\end{align}
In these equations we distinguish the representatives of the [0]-tower
depicted in the right column of Figure \ref{fig:mappingbranegrav}b from those of the [8,1]-tower
depicted in the left column by labelling the former by $n$ and the
latter by $n'$.

To get the representatives for the wave sequence in terms of the
gravitational potential $A_{3}^{(11)}$ given by (\ref{kka}),
we apply the mapping (\ref{bragramap}), (\ref{bragra1}) and
(\ref{bragra2}) to the representative of the M2 sequence in terms of
the supergravity 3-form potential\footnote{We have added an
  integration constant $-1$ to the field $A_3^{(11)}$ as in the
  discussion below (\ref{potential}).} (\ref{seqM21}) and
(\ref{seqM22}) 
\begin{eqnarray}
\label{seqG01}
{\cal V}_{6n}&=&\exp \left[-\frac{1}{2}\ln ({\cal F}_{2n-1} \bar{\cal
 F}_{2n-1})\, K^2_{\ 2}\right] \, \exp \left[\frac{1}{2} \ln {\cal
 R}e\, {\cal E}_{2n +1}\, (K^3{}_3-K^{11}{}_{11})\right] \nonumber \\ 
 && \exp \left[(-1)^{n} \bigg(\frac{1}{{\cal R}e\, {\cal
 E}_{2n+1}}-1\bigg)\, K^{3}_{\ 11}\right] \qquad n\geq 0\\ 
\nonumber\\
\label{seqG02}
{\cal V}_{6n^\prime}&=&\exp \left[-\frac{1}{2}\ln ({\cal F}_{2n'-1}
  \bar{\cal F}_{2n'-1})\, K^2_{\ 2}\right] \, \exp [\frac{1}{2} \ln
  {\cal R}e\, {\cal E}_{2n^\prime}\,(K^3{}_3-K^{11}{}_{11})] 
\nonumber\\  
&& \exp \left[ (-1)^{n^\prime+1} \bigg( \frac{1}{{\cal R}e\, {\cal
  E}_{2n^\prime}} -1 \bigg)\, K^{3}_{\ 11}\right] \qquad n^\prime>0 
\end{eqnarray}
where in (\ref{seqG01}) one has the signatures $(1,10,+)$ with time
in 3 for $n$ even and in 11 for $n$ odd, and in (\ref{seqG02}) the
signatures are  $(1,10,+)$ with time in 3 for $n^{\prime}$ odd and in
11 for $n^\prime$ even (see Appendix~\ref{appsg}). 

It is proven in Appendix~\ref{apprg} that the KK-wave sequence
contains a redundancy of the solutions for $n>0$, namely  
(\ref{seqG01}) and (\ref{seqG02})  lead to identical metric up
to interchange of the time coordinates 3 and 11. The full wave
sequence for $n>0$ has metric: 
\begin{align}
\label{KKW}
\dd s^2_{[6 n^\prime]}&= {\cal F}_{2n'-1}\bar{\cal F}_{2n'-1} \Big
 [(\dd x^1)^2+(\dd x^2)^2 \Big ] + (-1)^{n^\prime}
 H^{-1}_{2n^\prime}(\dd x^3)^2+\Big [(\dd x^4)^2\dots+(\dd x^{10})^2 \Big ]
 \nonumber\\ 
 &\quad + (-1)^{n^\prime +1} H_{2n^\prime} \Big [ \dd x^{11} -  \Big(
 (-1)^{n^\prime +1}   H^{-1}_{2n^\prime}+ (-1)^{n^\prime} \Big)\dd x^3
 \Big]^2  \, , 
 \end{align}
where $H_p={\cal R}e\, {\cal E}_p$. For $n=0$ it is given by
(\ref{G0metric1}). All metrics in the KK-wave are solutions of eleven-dimensional 
supergravity.  The factor ${\cal F}_{2n'-1}\bar{\cal F}_{2n'-1}$ can
again be eliminated by a (singular) coordinate change, preserving the
harmonic character of $H_p$.  

From the representatives of the M5 sequence in terms of the
supergravity 3-form potential, (\ref{seqM51}) and (\ref{seqM52}),
we get the representatives for the monopole sequence in terms of the
gravitational potential $A_{3}^{~(11)}$ 
\begin{align}
\label{seqG31}
{\cal V}_{3+6n'}&=\exp \left[-\frac{1}{2}\ln ({\cal F}_{2n'-1}
  \bar{\cal F}_{2n'-1})\, K^2_{\ 2}\right] \, \exp \left[-\frac{1}{2}
  \ln {\cal R}e\, {\cal E}_{2n'+1}\,
  (K^3{}_3-K^{11}{}_{11}+K^2_2)\right]\nonumber\\ 
&\quad  \   \exp \left[ (-1)^{n'} {\cal I}m\, {\cal E}_{2n'+1}\,  K^{3}_{\
  11}\right]\qquad n'\geq0\\ 
\nonumber\\
\label{seqG32}
{\cal V}_{-3+6n}&=\exp \left[-\frac{1}{2}\ln ( {\cal F}_{2n-1}
  \bar{\cal F}_{2n-1})\, K^2_{\ 2}\right] \, \exp \left[-\frac{1}{2}
  \ln {\cal R}e\, {\cal E}_{2n}\,
  (K^3{}_3-K^{11}{}_{11}+K^2_2)\right]\nonumber\\  & \quad \  \exp
\left[(-1)^{n+1 } {\cal I}m\, {\cal E}_{2n}\,  K^{3}_{\ 11}\right] 
\qquad n>0
\end{align}
where in (\ref{seqG31}) one has  the signatures  $(2,9, -)$  with
time in 9 and 10 for $n'$ even and $(5,6,+)$ with time in 4, 5, 6, 7 and 8 for
$n'$ odd, and in (\ref{seqG32}) the signatures are    $(2,9, -)$
with time in 9 and 10 for $n$ odd and $(5,6,+)$ with time in 4, 5, 6, 7 and 8 for
$n$ even (see Appendix~\ref{appsg}). 

In analogy with the KK-wave sequence,  the metric in
(\ref{seqG31}) and (\ref{seqG32}) are equivalent up to a
redefinition of the time coordinates (see Appendix~\ref{apprg}).
There is thus only one gravity tower, the left and the right tower of
Figure \ref{fig:mappingbranegrav}b are equivalent, each of them contains the full wave and
monopole sequences. 

The full monopole sequence has the metric: 
\begin{align}
\dd s^2_{[3+ 6n']} &= {\cal F}_{2n'-1}\bar{\cal F}_{2n'-1}
 H_{2n'+1}\Big [(\dd x^1)^2+(\dd x^2)^2 \Big ] + H_{2n'+1}(\dd x^3)^2 \nn\\
 &\quad +(-1)^{n'}
 \Big [(\dd x^4)^2\dots+(\dd x^{8})^2 \Big ] +(-1)^{n'+1}  \Big [(\dd x^9)^2+(\dd x^{10})^2 \Big ]  \nn \\ 
 & \quad +
 H^{-1}_{2n'+1} \Big [ \dd x^{11} -  \Big( (-1)^{n'}   B_{2n'+1}\Big)\dd x^3
 \Big]^2 \, , \label{mono} 
 \end{align}
 where $B_p = {\cal I}m\, {\cal E}_p$.\\
 Again the metric of the monopole sequence solve the Einstein equations.
 
 The generators $R_{1+3p}^{[3]}$, $R_{2+3p}^{[6]}$, $\bar
 R_{3+3p}^{[8,1]}$, $p\ge 0$, and $K^3{}_{11}$ span the M2, M5 and
 gravity towers for positive real roots and define distinct BPS
 solutions. All positive real roots of $E_9$ can be reached from these
 by permuting coordinate indices in $A_8$ or equivalently by
 performing Weyl transformations $s_{\alpha_i}$ from the  gravity line
 depicted in Figure \ref{ffirst} with nodes 1 and 2 deleted. In this way we reach
 all $E_9$ positive real roots and the related BPS solutions. In what
 follows we shall keep the above notation for all  towers of positive
 real roots differing by $A_8$ indices, and specify the coordinates
 when needed. 
 \setcounter{equation}{0}
\section{Analytic structure of BPS solutions and the Ernst potential} \label{analyticsec}
 
We have obtained an infinite U-duality multiplet of $E_9$ BPS solutions of
eleven-dimensional  supergravity depending on two non-compact space variables. This was
achieved by analysing various $A_1^+ \equiv A_1^{(1)}$
subalgebras of $E_9$, which allow us to reach all positive roots within such a
subalgebra from sequences of Weyl reflexions starting from basic BPS
solutions reviewed in Chapter \ref{chap:basicbps}. A striking feature of the method is that
each solution is determined by a pair of conjugate harmonic functions $H_p$
and $B_p$ which can be combined into an analytic function ${\cal E}_p
=H_p+iB_p$, where $p$ characterises the level of the solution. This feature
emerges from the action of the affine $A_1^+$ subgroup on the
representatives and is clearly not restricted to supergravity. In this
section, we establish the link with another
$A_1^+$ subgroup of $E_9$, namely the Geroch group of general relativity. As
is well known  \cite{Geroch:1970nt, Geroch:1972yt, Breitenlohner:1986um},
the latter acts on stationary axisymmetric (or colliding plane wave)
solutions in four space-time dimensions (which can be embedded consistently
into eleven-dimensional  supergravity) via `non-closing dualities'
generating infinite towers of higher order dual potentials. Here
we explain the action of the Geroch group on BPS solutions, for which
the so-called {\em Ernst potential} (see (\ref{Ernst}) below)
is an analytic function, and hence is entirely analogous to the function
${\cal E}$ encountered above. As we will see this action
(so far not exhibited in the literature to the best of our knowledge)
`interpolates' between free field dualities and the full non-linear
action of the Geroch group on non-analytic Ernst potentials --- exactly
as for the M2-M5 sequence discussed in Section~\ref{m2m5sec}. To keep
the discussion simple we will restrict attention to four-dimensional
Einstein gravity with two commuting Killing vectors, that is,
depending only on two (spacelike) coordinates.

Before we specialise to the case of BPS solutions we present the more
general formalism. The general line element in this case is of the
form 
\be\label{linelem}
\dd s^2 = H^{-1} e^{2\s} (\dd x^2+\dd y^2) + (-\rho^2H^{-1} +H \mmp^2)\dd t^2
+2H \mmp\dd t\,\dd z +H\dd z^2.
\ee
Here, $\p_t$ and $\p_z$ are Killing vectors, hence the metric coefficients
depend only on the space coordinates $(x,y)\equiv (x^1,x^2)$.
Furthermore, we have adopted a conformal frame for the $(x,y)$ 
components of the metric, with conformal factor $e^{2\s}$. $\mmp$ is 
the called the Matzner--Misner potential and related to the Ehlers
potential $B$ through the duality relation
\be\label{dualrel}
\eps_{ij}\p_j B = \rho^{-1}H^2 \p_i \mmp,
\ee
where $i,j=1,2$. There is no need to raise or lower indices, as the
metric in $(x,y)$ space is the flat Euclidean metric, with
$\eps_{12}=\eps^{12}=1$. Therefore the inverse duality relation 
is $\eps_{ij}\p_j \mmp = - \rho H^{-2}\p_i B$.

The vacuum Einstein equations for the line element (\ref{linelem})
in terms of the Matzner--Misner potential $\mmp$ read
\be\label{eommm} \begin{split}
H\p_i (\rho \p_iH) &= \rho \left(\p_iH \p_iH -\rho^{-2}H^4\p_i
  \mmp\p_i \mmp\right)\\
\rho H^{-1}\p_i (\rho \p_i \mmp) &= 2 \rho \p_i \left(\frac\rho H\right) \p_i
\mmp \, , 
\end{split}\ee
Rewritten in terms of the Ehlers potential $B$ these give, using
(\ref{dualrel}),
\be\label{eome} \begin{split}
H \p_i(\rho\p_i H) &= \rho \left(\p_iH \p_iH - \p_iB \p_iB \right)\\
H\p_i(\rho \p_i B) &= 2 \rho \p_iH \p_iB \, ,
\end{split}\ee
where the two sets of equations (\ref{eommm}) and (\ref{eome}) are related
by the so-called Kramer--Neugebauer transformation $B\leftrightarrow \mmp,
H \leftrightarrow \rho/H$. In addition, there are equations for $\rho$ 
and the conformal factor $\s$. These are two (compatible) first order
equations for the conformal factor  
\be\label{constraint} \begin{split}
\rho^{-1}\p_{(x} \rho \p_{y)}\s &= \frac14 (H^{-1}\p_x H)(H^{-1}\p_y H)
  +\frac14 (H^{-1}\p_x B)(H^{-1}\p_y B),\\
\rho^{-1}\p_x \rho \p_x\s - \rho^{-1}\p_y \rho \p_y\s &=
       + \frac14 (H^{-1}\p_xH)^2 + \frac14 (H^{-1}\p_x B)^2 \\
       &\quad  -\frac14 (H^{-1}\p_y H)^2 -  \frac14 (H^{-1}\p_y B)^2 \, ,
\end{split}\ee
while $\rho$ satisfies the two-dimensional Laplace equation without source
\be
\p_i\p_i\rho =0\, .
\ee
A second order equation for $\s$ can be deduced by varying $\rho$, or
alternatively from the constraints and the dynamical equations for the 
metric (\ref{eommm}) [or (\ref{eome})]; it reads
\be\label{conf}
\p_i\p_i \s = -\frac14\rho H^{-2} \big( \p_i H\p_iH + \p_i B \p_i B \big),
\ee
If $\rho$ is different from a constant (as is the case generally with
axisymmetric stationary or colliding plane wave solutions), we can 
integrate the first order (\ref{constraint}), which determine
the conformal factor up to one integration constant; the second order
equation (\ref{conf}) is then automatically satisfied as a consequence 
of the other equations of motion. On the other hand, as we 
will see below, the BPS solutions are characterized by $\rho = \rm constant$, 
for which the l.h.s. of (\ref{constraint}) vanishes identically (whence 
the r.h.s. must also vanish identically). In this case, we are left with
the second order (\ref{conf}), and the conformal factor is only
determined modulo a harmonic function in $(x,y)$. 

The equations of motion (\ref{eome}) can be rewritten conveniently
in terms of the {\em complex Ernst potential} [cf .(\ref{analytic})]
\be\label{Ernst}
\cE = H + i B,
\ee
satisfying the Ernst equation
\be\label{ernsteq}
H \p_i (\rho \p_i \cE) = \rho \p_i\cE\p_i\cE.
\ee
As we will see below this equation is trivially satisfied for BPS
solutions in the sense that both sides vanish identically.

\subsection{BPS solutions}

In our analysis of the 11-dimensional gravity tower, the lowest level
BPS solution is the KK-wave   (\ref{G0metric1}). It stems from the
generator $K^3_{~11}$ with time in 3. In 4D gravity with $x^1,x^2$ as
non compact space variables, the corresponding wave solution is
associated to the Chevalley generator $K^3_{~4}$ depicted  in Figure \ref{fig:a1+++} by
the node 3. Taking the timelike direction to be 3, we get
[$(x,y)\equiv (x^1,x^2)$, $(t,z) \equiv (x^3,x^4)$] 
\be
\label{kksol2d}
\dd s^2 =  \dd x^2 +\dd y^2 + (H-2)\dd t^2 - 2 (1-H)\dd t\,\dd z + H\dd z^2.
\ee
Here, $H=H(x,y)$ is a {\em harmonic function} in $x,y$, which, for the
brane with a  source at $x=y=0$ we choose to be $H=\frac12 
\ln (x^2+y^2) = \ln |\z|$ in terms of the complex coordinate
\be
\z = x + i y.
\ee 
Comparing (\ref{kksol2d}) with (\ref{linelem}),  
we see that for this  BPS solution the general fields $\mmp,\s$
and $\rho$ are expressed 
in terms of $H$ as\footnote{The constant $b$ for $\mmp$ should be chosen
as $1$ in order to obtain an asymptotically flat solution in more
than four space-time dimensions. From the point of view of the
two-dimensional reduction,
however, it does not matter and can be chosen arbitrarily. Note also,
that constant shifts of $\mmp$ are part of the Matzner-Misner $\mathrm{SL}(2)$,
see below.}
\be\label{kksol2da}
 e^{2\s}=H,\quad\quad \mmp= b - H^{-1},\quad\quad
\rho=1.
\ee
Using the duality relation (\ref{dualrel}) one obtains the Ehlers
potential $B$ up to an integration constant. Indeed, as already
mentioned above, with (\ref{kksol2da}), 
the duality relations (\ref{dualrel}) just become the  Cauchy--Riemann 
equations for the Ernst potential (\ref{Ernst}), to wit
\be
\p_x B = -\p_y H \quad , \quad \p_y B = \p_x H \, ,
\ee
or, in short notation, $\eps_{ij}\p_jH = - \p_i B$. Conversely,
for (\ref{dualrel}) to reduce to the Cauchy--Riemann relations, we must
have $\mmp= 1/H + {\rm constant}$ and $\rho$ constant. {\it Therefore, the
Cauchy--Riemann equations for the Ernst potential are equivalent to
the BPS (`no force')  
condition and may thus be taken as the defining equations for BPS 
solutions}. In a supersymmetric context these (first order) equations
would be equivalent to the Killing spinor conditions defining the
BPS solution.

For $H =  \frac12 \ln (x^2 + y^2)$, we immediately obtain
\be
B = \arctan\left(\frac{y}{x}\right) + {\rm constant} = \arg(\z) + {\rm constant} \, ,
\ee
whence the Ernst potential is simply
\be
\cE(\z) = \ln |\z| + i \arg(\z) + {\rm constant} = \ln \z + {\rm constant} \, ,
\ee
and so is an {\em analytic} function of $\z$. It is then easy
to see that the equations of motion and the constraint equations are 
satisfied for {\em any} analytic Ernst potential $\cE$ if $\rho$ is
constant (and in particular, with $\rho=1$). Namely both the Ernst equation 
(\ref{ernsteq}) as well as (\ref{constraint}) reduce to the identity 
$0=0$ for all such solutions. Because (\ref{constraint}) is void, the 
conformal factor must then be determined from the second order equation 
(\ref{conf}). For holomorphic $\cE$, the equation (\ref{conf}) can be
rewritten as
\be\label{conf1}
\p_\z \p_{\bar{\z}} \s = -\frac12\rho\,
\frac{\p_\z \cE \p_{\bar{\z}} \bar{\cE}}{(\cE + \bar{\cE})^2}
\ee
and only in this case the solution to this equation can be given in
closed form. It reads 
\be
\s(\z,\bar{\z}) = \frac12\ln \, (\cE + \bar{\cE})\, .
\ee
The ambiguity involving harmonic functions left by (\ref{conf})
is related to the covariance of the equations of motion under 
{\it conformal analytic coordinate transformations} of the complex 
coordinate $\z=x+iy$, which leave the 2-metric in diagonal form, viz.
\be
\z \longrightarrow \z'=f(\z) \;\; .
\ee
As is well known, the conformal factor transforms as
\be
\s(\z,\bar\z) \longrightarrow \s (\z,\bar\z) 
    + \frac12 \ln \big| f(\z)\big|^2 \, ,
\ee
under such transformations, where the second term on the r.h.s. is
indeed harmonic. We already used this fact  when we removed the
conformal factors $ {\cal F}_{2n-1}\bar{\cal F}_{2n-1}$  to prove that
the metric in (\ref{M2odd}), (\ref{M2even}), (\ref{M5odd}),
(\ref{M5even}),  (\ref{KKW}) and (\ref{mono}) solve the Einstein
equations outside the singularities of the harmonic functions. 
\subsection{Action of Geroch group}

The Geroch group for $(3+1)$-dimensional gravity in
the stationary axi-symmetric case discussed here is affine 
$\widehat{\mathrm{SL}(2)}$ with central extension ($\equiv A_1^+$); this is the same
structure 
we encountered for the M2 and M5 towers. It is depicted in Figure \ref{fig:a1+++} by
the Dynkin diagram formed by the nodes 3 and 4. Extending the diagram
with node 2 to the overextended $A_1^{++}$, we may as previously
identify the central charge with a Cartan generator of $A_1^{++}$
($-K^2_{~2}$ in $E_{10}\equiv E_8^{++}$). Adding the node 1 leads to
the very extended $A_1^{+++}$ which is the pure gravity counterpart
(in $ D=4$) of $E_{11}$. 
\begin{figure}[h]
\centering
 \includegraphics [width=4cm]{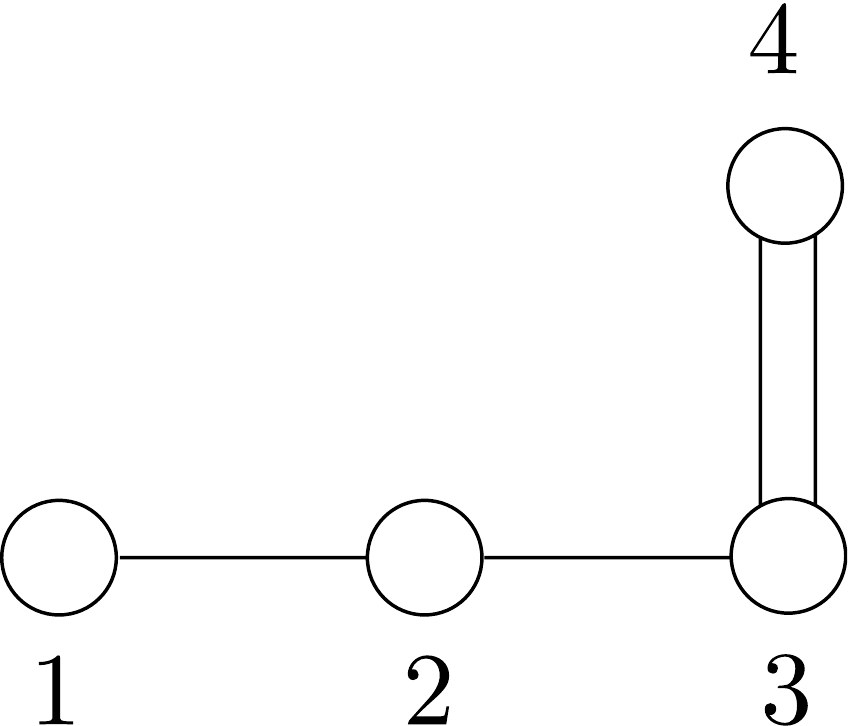}
\caption{\sl \small Dynkin diagram of $ A_1^+$ and
  its `horizontal' extensions $A_1^{++}$ and $A_1^{+++}$. The
  Matzer--Misner $\mathrm{SL}(2)$ is represented by the node 3  and the Ehlers
  $\mathrm{SL}(2)$ by the node 4.}  
  \label{fig:a1+++}
\end{figure}

The two distinguished $\mathrm{SL}(2)$ subgroups of $A_1^+$
corresponding to the Matzner--Misner and the Ehlers cosets appear
in different real forms. The timelike Killing vector
$\p_t$ turns the Matzner--Misner coset into $\mathrm{SL}(2)/\mathrm{SO}(1,1)$ (with
non-compact denominator group), whereas the Ehlers coset is
$\mathrm{SL}(2)/\mathrm{SO}(2)$ (with compact denominator group).
With 3 or 4 as time direction,  the temporal involution (\ref{tabeq:inomegk})
leaves indeed invariant  the Lorentz generator $\Omega\, (
K^3{}_4+K^4{}_3)=K^3{}_4+K^4{}_3$ in the Matzer--Misner $ \mathrm{SL}(2)$, but
preserves the rotation generator 
 $\Omega \,(R^{4 \vert 4}-R_{4 \vert 4})=R^{4 \vert 4}-R_{4 \vert 4}$
in the Ehlers  $ \mathrm{SL}(2)$. Here $R^{4 \vert 4}$  is  the simple
positive step operator corresponding  in four dimensions to the
11-dimensional generator   $R^{4\,5\,6\,7\,8\,9\,10\,11\vert \,11}$ of
Section \ref{sec:gravitytowersec}.

For the underlying split real algebra $A_1^+$ we use a simple
Chevalley--Serre basis consisting of $e_i,f_i, h_i$ ($i=3,4$) and
derivation $d$ (see also Appendix~\ref{affapp}). The central element 
is $c=h_3+h_4$. The index `4' refers to the Ehlers
$\mathrm{SL}(2)$ and the index `3' refers to the Matzner--Misner
$\mathrm{SL}(2)$, as depicted in Figure \ref{fig:a1+++}.

\medskip
\noindent
{\bf $\bullet$ Ehlers group}

The Ehlers $\mathrm{SL}(2)$ acts by M\"obius transformations on the
(analytic) Ernst
potential. Using the standard notation for M\"obius
transformation generators\footnote{Note that these are M\"obius
  transformations on $\cE$ and {\em not} on the complex coordinate $\z$.
  The system admits an additional
  {\em conformal symmetry} acting on the complex coordinate $\z$
  \cite{Julia:1996nu}, which shows again that {\em any} analytic Ernst
  potential solves the equations of motion.}
  \be
L_{-1} = - \frac{\p}{\p\cE},\quad\quad L_0 = -\cE \frac{\p}{\p\cE},
\quad\quad L_{1} = - \cE^2 \frac{\p}{\p\cE} \, ,
\ee
the Ehlers generators are\footnote{The factors $i$ are understood
from studying the invariant bilinear form for $(L_{-1},L_0,L_1)$
which is non-standard from the Kac--Moody point of view.}
\be
e_4 = i L_{-1},\quad\quad h_4 = 2 L_0,\quad\quad f_4 = i L_1 \, ,
\ee
with the resulting transformation of the real and imaginary
components of $\cE$
\be \begin{split} \begin{aligned}
f_4 H &= 2 H B,&\quad\quad f_4 B = & B^2-H^2,\\
h_4 H &= -2 H,&\quad\quad  h_4 B = & -2 B,\\
e_4 H &= 0,&\quad\quad e_4 B =& -1.
\end{aligned} \end{split} \ee

\noindent
{\bf $\bullet$ Matzner--Misner group}

The infinitesimal action of the Matzner--Misner group on general
$\mathrm{SL}(2)/\mathrm{SO}(1,1)$ coset fields $H$ and $\mmp $ is (for $\rho=1$)
\be \label{mmtrm} \begin{split} \begin{aligned}
f_3 H &= -2 H \mmp,&\quad\quad f_3 \mmp = & \mmp^2+H^{-2},\\
h_3 H &= 2 H,&\quad\quad  h_3 \mmp = & -2 \mmp,\\
e_3 H &= 0,&\quad\quad e_3 \mmp =& -1.
\end{aligned} \end{split} \ee

In order to compute its action on the Ernst potential $\cE$, we have to
exploit the duality relation (\ref{dualrel}) between the potentials
$\mmp$ and $B$. It is straightforward
to work out the action of $e_2$ and $h_2$, with the result
\be
h_3 B = 2 B,\quad\quad e_3 B =0,
\ee
where some constants have been fixed from the commutation
relations. Finally, the action of $f_2$ on $B$ follows from
\be
\eps_{ij}\p^j (f_3 B) =f_3 (H^2 \p_i \mmp) =
   - 2 H^2 \mmp \p_i\mmp + H^2 \p_i (H^{-2}).
\ee
Using  $\mmp = b - H^{-1}$ and the Cauchy--Riemann equation, this
yields
\be
\eps_{ij}\p^j (f_3 B) = -2b \p_i H = -2 b \eps_{ij}\p^j B.
\ee
Therefore we find
\be\label{f2B1}
f_3 B = -2b B,
\ee
setting an integration constant equal to zero.
We note that in order to satisfy $[e_3,f_3]=h_3$ on
$B$ we need to have a non-trivial action of $e_3$ on the
integration constant $b$, namely $e_3b =-1$ which is consistent
with the general shift property of the Matzner--Misner group
(\ref{mmtrm}) and $\mmp=b-H^{-1}$ for this solution. In a sense,
one can view the Matzner--Misner group as acting via M\"obius transformations
on the variable $b$. However, closing this action with the Ehlers
M\"obius transformations on $\cE$ leads one to introduce new
constants (notably in $f_4 \mmp$) which transform non-trivially
under the remaining generators. For
completeness we note the transformation rules
\be
e_4 \mmp = 0,\quad\quad h_4 \mmp = 2 \mmp,
\ee
which are true generally and
\be
f_4 \mmp = 2 B H^{-1} +\l \, , 
\ee
where $\l$ is an example of a new constant. This last relation is
true {\em on the solution} $\mmp=b-H^{-1}$; generally the result would
be some non-local expression.

Combining (\ref{mmtrm}) and (\ref{f2B1}),
the action of $f_2$ on the full Ernst potential is
\be
\label{finact}
f_3 \cE =  -2b\cE - 2 \, .
\ee
This shows that the action of $f_3$ on $\cE$ does not yield a new
transformation, but simply a linear combination of previous ones
(to wit, $e_4$ and $h_4$). Hence, under the action of the Matzner--Misner
$\mathrm{SL}(2)$, a BPS solution will remain a BPS solution.\footnote{The
  constant parameters $b,\l,\ldots$ do not influence the analyticity of
  the solution, although they are essential for the action of the
  Geroch group.} The formula  (\ref{finact}) agrees with our
findings in (\ref{Moebact}).

This almost `trivial' action of the Geroch group on the BPS solutions --
which essentially acts only via M\"obius transformations on the Ernst 
potential $\cE$ -- confirms our previous finding for M2 and M5 branes
(\ref{Moebact}), but is in marked contrast to its action on non-BPS
stationary axisymmetric 
solutions \cite{Breitenlohner:1986um}. There $\cE(x,y)$ is {\em not}
analytic, and $\rho(x,y)$ is a non-constant function, often identified
with a radial coordinate (so-called Weyl canonical coordinates).
When starting from the vacuum solution to obtain say, the Schwarzschild
or Kerr solution, the $(x,y)$ dependence of the Ernst potential is
precisely the one induced by the $(x,y)$ dependence of the spectral
parameter whose coordinate dependence, in turn, hinges on the coordinate
dependence of $\rho$. Since we have $\rho=1$ for BPS solutions, this
mechanism does not work, confirming our conclusion that the action
of the Geroch group cannot turn an analytic Ernst potential into
a non-analytic one, hence leaves the class of BPS solutions stable.
 
The results of this section can be summarised by saying that the Weyl
group of $A_1^+$ acts via shifts and inversions on the complex Ernst
potential and at the same time transforms the conformal factor but
leaves invariant the set of analytic Ernst potentials.

  \chapter{Dual formulation of the $E_9$ multiplet} \label{chap:dualfor}

We showed in Chapter \ref{chap:basicbps} that the basic magnetic BPS solutions of eleven-dimensional 
supergravity (M5 and KK6-monopole) smeared in all directions but one
are expressible in terms of the dual potentials $A_{3\,4\,
  5\,6\,7\,8}$ and  $A_{4\,5\,6\,7\,8\,9\,10\,11\vert 11}$
parametrising the Borel generators $R^{[6]}_2$ and $ \bar
R^{[8,1]}_3$. In higher non-compact transverse space dimensions these
potentials are related by Hodge duality to the supergravity fields
$A_{9\,10\,11}$ and $A_{i}^{~(11)}$. The dual potentials take on the
solutions, up to an integration constant, the same value $1/H$ as do
the fields $A_{9\,10\,11}$  and $A_3^{~(11)}$ for the basic electric BPS
solutions, namely the M2-branes at level 1 and the KK-waves at level
0. $H$ is, in any number of non-compact transverse spacelike
directions, a harmonic function with $\delta$-function singularities
at the location of the sources.  

In Chapter \ref{chap:infiniteudualgroup} we  constructed  BPS solutions of eleven-dimensional  supergravity in two
transverse spacelike directions for all $E_9$ positive real roots. We
shall label such description of the BPS states in terms of the
supergravity metric and 3-form the `direct' description.  
Each solution was obtained by relating through dualities and
compensations the `generalised dual potential' $1/H$ parametrising an
$E_9\subset E_{10}$ positive root in the Borel representative of
$E_{10}$ to the supergravity metric and 3-form. 

In this chapter, we will present a space-time description of the BPS
states directly in terms of the generalised dual potentials. We 
label it  the `dual' description. We shall show that the dual
description of the BPS solutions can be derived from gauge fixed
effective actions 
\begin{equation}
\label{gendual}
{\cal S}^{(11)}_{\{q\}} =\frac{1}{16\pi G_{11}}\,\int \dd ^{11}x
\sqrt{\vert g\vert}\left(R^{(11)}- \epsilon{1\over 2   
} F_{i\,\{q\}} F^{i\,\{q\}}\right)\, ,
\end{equation}
where $i$ runs over the two non-compact dimensions $1,2$ and
$\epsilon$ is $+1$ if the action involves a single time coordinate (or
an odd number of time coordinates) and $-1$ if the number of time
coordinates is even. $F_{i\,\{q\}}=\partial_i A_{\{q\}}$ where $\{q\}$
stands for the tensor indices of the $A_p^{[N]}$ potential
multiplying $R_p^{[N]}$ in the Borel representative. Here $p$ is the
level and $[N]$  labels a tower $[3]$, $[6]$, $[0]$ or $[8,1]$ for any
set of $A_8\subset E_9$ tensor indices.  The set of indices is fixed
by the  $A_1^+$ group selected by a Weyl transformed in $A_8$ of the
$A_1^+$ subgroup of $E_{10}$ chosen in  (\ref{geroch}). \\

We will also define and compute the masses in the string theory context. We will see that the dual formalism introduced in this chapter will be a convenient tool to analyze the charge and mass content of the $E_9$ BPS states. 

We will enclose this chapter by considering some $E_{10}$ field associated to real roots which are not in $E_9$. Theses ones may not admit  a direct description but the dual description will be still well defined.
 \setcounter{equation}{0}
 \section{Effective actions}\label{sec:effectaction}

We first consider the M2-M5 system of Section \ref{m2m5sec}. Explicitly, for the
3-tower depicted in the right column of Figure \ref{ger1fig}a   we have
$A_{1+3n}^{[3]}= A_{9\,10\,11\, ,\, [3\,4\,\dots \,10\,11].n}$ and for the 6-tower
depicted in the left column we have $A_{2+3n}^{[6]}=
A_{3\,4\,5\,6\,7\,8\, ,\, [3\,4\,\dots \,10\,11].n}$. Here the symbol $n$  is the number of times the
antisymmetric set of indices  $[3\,4\,\dots \,10\,11]$ must be taken.
That this is the correct tensor structure follows from the structure
of the `gradient representations'
in~\cite{Damour:2002cu,Nicolai:2003fw} (see also Section \ref{sec:gppinvactionthese2}).

For all BPS states in the M2-M5 system, we take 
\begin{equation}
\label{Hfield}
A_p^{[N]}=1/H\, .
\end{equation}
The metric associated to $A_p^{[N]}$ is encoded in the Borel
representatives of the M2 sequence (\ref{borel3n}),
(\ref{borel6n}) and the M5 sequence (\ref{6Nborel}),
(\ref{3Nborel}). We combine (\ref{borel3n}) and (\ref{3Nborel}) to
form the 3-tower (\ref{tower1}) and (\ref{borel6n}) and
(\ref{6Nborel}) to form the 6-tower (\ref{tower3}). We have 
\begin{align}
\label{bor3}
{\cal V}_{1+3n}&= \exp \left[\frac{1}{2} \ln H \, (h_{11}-n
  K^2{}_2)\right]\, \exp \left[\frac{1}{H}\,
  R^{[3]}_{1+3n}\right]\qquad &&n\ge 0 \\ 
\label{bor6}
{\cal V}_{2+3n}&= \exp \left[\frac{1}{2} \ln H\, (-h_{11}-(n+1)
  K^2{}_2)\right] \, \exp \left[\frac{1}{H}\, R^{[6]}_{2+3n}\right]
\qquad &&n\ge 0 \, , 
\end{align}
which yield for the 3-tower (\ref{bor3}) the metric
(\ref{metric3}) and for the 6-tower (\ref{bor6}) the metric
(\ref{metric6})  
\be \begin{split}
 \label{metric3}
\vert g_{11}\vert&=\vert g_{22}\vert=H^{1/3 +n}\, ,\\ 
3-\mathrm{tower}: \qquad \vert g_{33}\vert &=\vert g_{44}\vert =\dots =\vert g_{88}\vert
 =H^{1/3}\, , \\
 \vert g_{99}\vert& =\vert g_{10\,10}\vert =\vert
 g_{11\,11}\vert = H^{-2/3}\, ,
\end{split} \ee

\be \begin{split}
\label{metric6}
\vert g_{11}\vert&=\vert g_{22}\vert=H^{2/3 +n}\, ,\\
6-\mathrm{tower}: \qquad \vert g_{33}\vert &=\vert g_{44}\vert =\dots =\vert g_{88}\vert
 =H^{-1/3}\, , \\
 \vert g_{99}\vert &=\vert g_{10\,10}\vert =\vert
 g_{11\,11}\vert = H^{2/3}\, . 
\end{split} \ee

For the time components of the metric one multiplies the absolute
values of the metric components by a minus sign. The time components
for the 3- and 6-towers are specified in Section \ref{m2m5sec} for the Weyl
orbits initiated by the M2 with time in 9 and by the M5 with time in
3. Note that we could as well take the  Weyl orbit initiated by the M2
with times in 9 and 10 and by the M5 with times in 4, 5, 6, 7 and 
8. Alternatively we could mix the two orbits to avoid for instance at
all level exotic solutions and have always time in 9 for the M2
sequence and in 3 for the M5 sequence depicted in Figure \ref{ger1fig}a. Note that in
all cases, climbing the 3-tower or the 6-tower by steps of one unit of
$n$ amounts to alternate between BPS states on the M2 sequence and the
M5 sequence. We shall comment on this feature in the summary of this part.

We now verify that the matter term in (\ref{gendual}) solves the
Einstein equations with  $A_p^{[N]}=1/H$ and  the metric given in
(\ref{metric3}) and (\ref{metric6}). For the 3-tower the matter
Lagrangian reads 
 \begin{align}
\label{matter3}
{\cal L} &=-\epsilon\frac{1}{2} \sqrt{\vert g\vert }  F_{i\,\{q\}}
F^{i\,\{q\}}\, ,\\ &= 
-\epsilon \frac{1}{2}\sum_{i=1}^2  \sqrt{\vert g\vert}  g^{ii} g^{99}
g^{10\,10} g^{11\,11} [g^{33}g^{44}\dots g^{11\,11}]^n
\left(\partial_i A_{9\,10\,11\, ,\, [3\,4\,\dots
    \,10\,11].n}^{[3]}\right)^2\, , \nn
\end{align} 
while for the 6-towers one gets
\be
\label{matter6}
{\cal L}
=-\epsilon\frac{1}{2}\sum_{i=1}^2  \sqrt{\vert g\vert }  g^{ii} g^{33}
g^{44} g^{55}g^{66}g^{77}g^{88} [g^{33}g^{44}\dots g^{11\,11}]^n
\left(\partial_i A_{3\,4\,5\,6\,7\,8\, ,\, [3\,4\,\dots
    \,10\,11].n}^{[6]}\right)^2\, . 
\ee
One computes from  (\ref{matter3}) and (\ref{matter6}) the
energy-momentum tensors for the 3-tower 
\be \begin{split}
\label{tensor3}
T^1_1&=-T^2_2= -\frac{1}{4} H^{-7/3 -n} \,\left[(\partial_1 H)^2
  -(\partial_2 H)^2\right]\\
  T^1_2&=T^2_1 = -\frac{1}{2} H^{-7/3
  -n} \,[\partial_1 H\partial_2 H]\\
T^3_3&=T^4_4=\dots
T^8_8=-\frac{2n-1}{4} H^{-7/3 -n}\, [(\partial_1H)^2
  +(\partial_2H)^2]\\ 
 T^9_9&=T^{10}_{10}=T^{11}_{11}=-\frac{2n+1}{4} H^{-7/3 -n} \,
\left[(\partial_1H)^2 +(\partial_2H)^2\right],
\end{split}\ee
and for the 6-tower
\be \begin{split}
\label{tensor6}
T^1_1&=-T^2_2=- \frac{1}{4} H^{-8/3 -n} \,\left[(\partial_1 H)^2
  -(\partial_2 H)^2\right]\\
T^1_2&=T^2_1 = -\frac{1}{2} H^{-8/3 -n}
\,[\partial_1 H\partial_2 H]\\
T^3_3&=T^4_4=\dots T^8_8=-\frac{2n+1}{4} H^{-8/3 -n}\, 
  \left[(\partial_1H)^2  +(\partial_2H)^2\right]\\ 
 T^9_9&=T^{10}_{10}=T^{11}_{11}=-\frac{2n-1}{4} H^{-8/3 -n} \, 
  \left[(\partial_1H)^2 +(\partial_2H)^2\right] .
\end{split}\ee
The fact that $T^\mu_\nu$ in (\ref{tensor3}) and (\ref{tensor6})
does not depend on $\epsilon$ results from the cancellation between
negative signs arising from the kinetic energy term in the action
(\ref{gendual})  and from the concomitant even numbers  of time
metric components. From the metric (\ref{metric3}) and
(\ref{metric6}) one easily verifies that the Einstein equations 
\begin{equation}
\label{einstein}
R^\mu_\nu-\frac{1}{2}\delta^\mu_\nu R=T^\mu_\nu\, ,
\end{equation}
with $T^\mu_\nu$ given by (\ref{tensor3}) and (\ref{tensor6}), are
satisfied. This result holds for  $\epsilon=\pm1$ because the left
hand side of (\ref{einstein}) turns out to be independent of
$\epsilon$. \\

Using the mapping (\ref{bragramap}), (\ref{bragra1}) and
(\ref{bragra2}) of the brane $A_1^+$ group (\ref{geroch}) to the
gravity towers $A_1^+$ group, one maps the 3-tower to the 0-tower and
the 6-tower to the [8,1]-tower depicted respectively on the right and
left columns of Figure \ref{fig:mappingbranegrav}b.  
We combine (\ref{KW2}) and (\ref{KM1}) to form the [8,1]-tower. One has
\begin{equation}
{\cal V}_{3+3n}= \exp \left[\frac{1}{2} \ln H \,
  (-K^3{}_3+K^{11}{}_{11} -(n+1) K^2{}_2)\right] \, \exp
\left[\frac{1}{H}\, \bar R^{[8,1]}_{3n+3}\right] \qquad n \ge 0\, . 
\end{equation}
The corresponding dual metric is
\be \begin{split}
 \label{metric0}
\vert g_{11}\vert&=\vert g_{22}\vert=H^{n+1}\\ 
\vert g_{33}\vert &=H\\
 \vert g_{11\,11}\vert &= H^{-1}\\
\vert g_{aa}\vert &=1\qquad a\neq 1,2,3,11
\, .
\end{split}\ee
As expected, up to the interchange of the coordinates 3 and 11, the
same result holds for the redundant 0-tower (except for the level 0 of
$E_{10}$, which is the KK-wave). The verification of the Einstein
equations derived from the actions (\ref{gendual})  duplicates that
of the M2-M5 system. Note that in this generalized dual formulation,
the metric of the gravity tower are all diagonal. 

The above results for the 3- and 6-towers and for the gravity tower
have been established for a chosen set of tensor indices, namely the
set determined by the choice of the $A_1^+$ subgroup of $E_{10}$ for
which $R^{[3]}_1$ is identified with $R^{9\, 10 \, 11}$. The validity
of the effective action (\ref{gendual}) for all $E_9\subset E_{10}$
fields associated to its positive real roots follows from permuting
the $A_8$ tensor indices, that is from performing   Weyl
transformations of the gravity line. 
\setcounter{equation}{0}
\section{Charges and masses} \label{chargesec}
The charge content of the $E_9$ BPS solutions is easier to analyse in
the dual description because the dual potential is not mixed with
fields arising from the compensations. {\em Outside the sources}\/,
the equation of motion for the dual field $A_{\{q\}}$ is from
(\ref{gendual}) 
\begin{equation}
\label{genfield}
\sum_{i=1}^2\partial_i \left(\sqrt{\vert g \vert}g^{\{q\}}\partial_i
A_{\{q\}}\right)=0 \, , 
\end{equation}
where 
\begin{align}
\label{lap3}
 g^{\{q\}}&=g^{ii}g^{99} g^{10\,10} g^{11\,11} [g^{33}g^{44}\dots
  g^{11\,11}]^n &&\hbox {3-tower : level  (1+3$n$)}\\ 
\label{lap6}
 g^{\{q\}}&=g^{ii}g^{33} g^{44} g^{55}g^{66}g^{77}g^{88}
      [g^{33}g^{44}\dots g^{11\,11}]^n && \hbox {6-tower :
	level  (2+3$n$)} \\ 
\label{lapKK}
 g^{\{q\}}&=g^{ii} g^{44} g^{55}\dots g^{10\,10}
(g^{11\,11})^2[g^{33}g^{44}\dots g^{11\,11}]^n  &&\hbox{[8,1]-tower
  : level (3+3$n$)}\, , 
\end{align}
with $n\ge 0$. The appearance of the $n$-fold blocks of antisymmetric
metric factors $g^{33}g^{44}\cdots g^{11\,11}$ is again due to the
embedding of $E_9$ in $E_{10}$, see~\cite{Kleinschmidt:2006dy}.  
From (\ref{metric3}), (\ref{metric6}),
(\ref{metric0}), and from the embedding relation in $E_{11}$
(\ref{embedding}), we get  for all towers and hence, by permutation
of tensor indices in $A_8$, for all $E_9$ BPS states $\sqrt{\vert
  g\vert }g^{\{q\}} =\pm H^2$. As, up to an integration constant, one
has always $A_{\{q\}}=1/H$, the field equation (\ref{genfield})
reduces to 
\begin{equation}
\label{allH}
\sum_{i=1}^2\partial_i \partial_i H=0 \, .
\end{equation}
Here, as for the KK-monopole  discussed in Section \ref{sec:basiconedim}, the equation
(\ref{allH}) is valid outside the sources and the latter are
determined by fixing the singularities of the function $H$. Labelling
the positions of the smeared M2 by $x^1_k,x^2_k$ and their charges by
$q_k$, one takes  
\begin{equation}
\label{source2}
H(x^1,x^2)=\sum_k \frac{q_k}{2\pi}\ln  \sqrt{(x^1-x^1_k)^2+(x^2-x^2_k)^2}\, ,
\end{equation}
and,  in analogy with (\ref{h}), the extension of (\ref{allH})
including the sources  reads\footnote{We  fix the M5 charges by the
  Weyl transformation relating the M2 to the M5, which for convenience
  was not explicitly used in our general derivation of the $E_9$ BPS
  solutions in Chapter \ref{chap:infiniteudualgroup}.} {\em for all $E_9$ BPS solutions} 
\begin{equation}
\label{fullH}
\sum_{i=1}^2\partial_i \partial_i H=\sum_k \frac{q_k}{2\pi}\delta
({\bf x - x_k})\, . 
\end{equation}
Thus we obtain, for all BPS states, the same charge value as for the
M2, as expected from  U-dualities viewed as $E_9$ Weyl reflexions.  
Our identification of $q_k$ with a charge is however not the
conventional one as long as our solutions with 2 non-compact space
dimensions have not been identified with static solutions in 2+1
space-time dimensions. This raises the question whether we are allowed
to decompactify the time. This will be examined in the following
Sections \ref{subsec:froml4tol6} and \ref{subsec:highlsol}. 
We wish to stress that decompactification of time or space dimensions
is {\em not} the same as `unsmearing'. The latter term refers to the
undoing of the smearing process by which the dependence of the
harmonic functions characterising our BPS solutions is reduced by one
or more variables through compactification. Thus unsmearing implies
decompactification of space dimensions but the converse is not
necessarily true. When it is true we call the decompactified
dimensions `transverse'.  For the basic BPS solution smeared to two
space dimensions, unsmearing is of course possible up to the space
dimensions of the defining solution given in Section \ref{sec:basiconedim} (8 for the
M2, 5 for the M5, 9 for the KK-wave and 3 for the KK-monopole). {\em
  For all higher level BPS states in 2 non-compact space dimensions,
  unsmearing is impossible.} It is indeed straightforward to show that
the Einstein equation (\ref{einstein}) is not satisfied if the
harmonic function $H$ entering the right hand side of the equation is
extended to three dimensions. 

One verifies that in the dual formulation all $E_9$ BPS states can be
smeared to one space dimension, the charge being still defined by
(\ref{fullH}) with $i$ equal to 1. These solutions are also
solutions of the  $\sigma$-model  $\mathcal{S}^{brane}$ (\ref{full}). 

A criterion for decompactification of longitudinal spacelike
directions  and of timelike directions will be obtained from the
requirement that the tensions should be finite. These quantities will
be evaluated in the string context from string dualities and uplifting
to eleven dimensions.  For each BPS state characterised by a dual
potential $A_p^{[N]}$ 
we define an action $\cal A$ given in Planck units by the product of
all spatial and temporal compactification radii, each of them at a
power equal to the number of times the corresponding index occurs in
$A_p^{[N]}$. 
One gets from (\ref{lap3}), (\ref{lap6}) and (\ref{lapKK}) the
action ${\cal A}_{\ell}$ of the level $\ell$ solution\footnote{Similar actions
  were considered also in~\cite{Brown:2004jb}.} 
\begin{align}
\label{act3}
{\cal A}_{1+3n}&=\frac{1}{l_p^{9n+3}}\, R_9 R_{10} R_{11} [R_3
  R_4\dots R_{11}]^n && \hbox {3-tower : level  (1+3$n$)}\\ 
\label{act6}
 {\cal A}_{2+3n}&=\frac{1}{l_p^{9n+6}}\,R_3 R_4R_5R_6R_7R_8 [R_3 R_4\dots R_{11}]^n && \hbox{ 6-tower : level  (2+3$n$)} \\
\label{actKK}
 {\cal A}_{3+3n}&=\frac{1}{l_p^{9(n+1)}}\,R_4R_5\dots R_{10}(R_{11})^2 [R_3 R_4\dots R_{11}]^n && \hbox{ [8,1]-tower : level  (3+3$n$)} \, ,
\end{align}
where $l_p$ is the 11-dimensional Planck constant ($l_p^9=8\pi
G_{11}$). We identify for non-exotic states $\cal A$ to $MR_t$ where
$M$ is the mass of the source and $R_t$ the compactification time
radius. We derive in Appendix \ref{appm}  the actions $\cal A$,
(\ref{act3}), (\ref{act6}) and (\ref{actKK}), from the
interpretation in the context of string theory of the Weyl reflexions
used to construct the BPS solutions, both for exotic and non-exotic
states. Requiring finiteness of the action density implies that
$A_{\{q\}}$ be linear in the radii for those directions, spatial or
temporal, which can be decompactified. For non-exotic states this is
equivalent to requirement of finite tension.

It is immediately checked that for the basic branes $n=0$  one obtains
the correct mass formula for the M2 (time in 9) (\ref{massl1}), the
M5 (time in 3) (\ref{mm5}) and the KK-monopole (time in 4)
(\ref{mkk}). The KK-monopole mass is in agreement with the
calculation of the ADM mass of the unsmeared KK6-monopole
\cite{Bombelli:1986sb}. 
Our criterion confirms that time and all longitudinal space dimensions
can be taken to be non-compact except for the Taub-Nut direction 11 of
the KK6-monopole which occurs quadratically in (\ref{actKK}) and
hence cannot be decompactified. This fact is in agreement with the
fact that the  KK monopole solution in 3 transverse space dimensions,
characterised by an harmonic function $H=1+q/r$ where $r^2 \equiv
(x^1)^2+(x^2)^2+(x^9)^2$,  has, in order to avoid a conical
singularity, its radius $R_{11} \propto q$ \cite{Hawking:1976jb,
  Gibbons:1979xm, Sen:1997js} and hence finite. 

We now examine further the nature of the BPS solutions for level
higher than 3, that is outside the realm of the basic BPS solutions of
Section \ref{sec:basiconedim}. 

\subsection{From level 4 to level 6}\label{subsec:froml4tol6}

The actions ${\cal A}_{\ell}$ defined in (\ref{act3}), (\ref{act6}) and
(\ref{actKK}) are in agreement with the computation of the masses for
level 4 (\ref{ml4}) with time in 3, level 5 (\ref{massl5}) with
time in 9, level 6 (\ref{massl6}) with time in 3 obtained in
Appendix \ref{appm} in the string context. Thus time occur linearly in $\cal A$
and can be non-compact. Examining the dependence of $\cal A$ in the
spatial radii, we see that 
the spacelike directions that can be decompactified are $(4,5,6,7,8)$
for $\ell=4$, $(10,11)$ for $\ell=5$ and none for $\ell=6$.  

U-duality requires that the dimensionally reduced metric in 2+1
dimensions  be identical for  these solutions and in addition be
equivalent to the (2+1)-dimensional metric for the basic BPS solutions
of Section \ref{sec:basiconedim}.\footnote {\label{bpsfn}All BPS solutions for $\ell\le 6$ should then
  form a multiplet  of $E_8$ which is the  symmetry of eleven-dimensional 
  supergravity reduced to (2+1) dimensions.}  We now show that this
requirement is fulfilled, both in the direct and in the dual
formalism. 

To perform the dimensional reduction we write in general the
11-dimensional metric in the following form 
\begin{equation}
 \dd s^{2}= g_{\mu \nu} \dd x^\mu \, \dd x^\nu + \sum_r h_{rr}(\dd x^r)^2 \, ,
 \end{equation}
 where $\mu, \nu=1,2,a$ and $r\neq 1,2,a$, labelling $a$ as the time
 coordinate. To find the canonical Einstein action in $d= 3$
 dimensions, the components of the reduced metric have to be Weyl
 rescaled  
\begin{eqnarray}\label{reduc}
 \widetilde{g}_{\mu \nu} = g_{\mu \nu}\,  h^{\frac{1}{d-2}} =g_{\mu
 \nu}\,  h\, , 
 \end{eqnarray}
 where $h= \det\, h_{rs}$.
 
We first consider the level 4 state. In the direct formulation the
level 4 metric is given by 
(\ref{metric4}) and for the dimensional reduction to $3$ dimensions
we get using (\ref{reduc}) with $h= \widetilde{H}^{1/3}$ and time
in 3 
\begin{eqnarray}
\dd s^2_{l=4,3D}=  \widetilde{g}_{\mu \nu} \dd x^\mu \, \dd x^\nu= H [(\dd x^1)^2+
  (\dd x^2)^2]- (\dd x^3)^2\, . 
 \end{eqnarray}
The same metric is obtained in the reduction of the dual metric
(\ref{metric3}) with $n=1$, where now  $h= {H}^{-1/3}$. Similarly
the level 5 solution with time in 9, given in the direct formulation
by (\ref{metric5}) and in the dual one by (\ref{metric6}) with
$n=1$ with respectively $h= \widetilde{H}^{2/3}$ and $h= {H}^{-2/3}$,
yields from (\ref{reduc}) 
\begin{equation}
\dd s^2_{\ell=5,3D}=  \widetilde{g}_{\mu \nu} \dd x^\mu \, \dd x^\nu= H [(\dd x^1)^2+
  (\dd x^2)^2]- (\dd x^9)^2\, . 
 \end{equation}
The level 6 solution with the timelike direction 3 is given in the
direct formulation by (\ref{KKW}) with $n^\prime=1$, namely 
\begin{align}
\label{level6sol}
\dd s^2_{\ell=6}&= (H^2+B^2)  [(\dd x^1)^2+(\dd x^2)^2  ] - \widetilde
 H^{-1}(\dd x^3)^2+[(\dd x^4)^2\dots+(\dd x^{10})^2  ] \nonumber\\ 
 &\quad + \widetilde H  [ \dd x^{11} -  (  \widetilde  H^{-1}-1 )\dd x^3 ]^2  \, .
 \end{align}
  and in the dual formulation by (\ref{metric0}) with $n=1$, that is
\begin{equation}
\label{level6soldu}
\dd s^2_{\ell=6}= H^2 [(\dd x^1)^2+(\dd x^2)^2  ] - H(\dd x^3)^2+
[(\dd x^4)^2\dots+(\dd x^{10})^2  ]+H^{-1} ( \dd x^{11} )^2  \, , 
 \end{equation}
Reducing the level $6$ metric (\ref{level6sol}) and
(\ref{level6soldu}) to $3$ dimensions, we again find 
\begin{equation}
\dd s^2_{\ell=6,3D}=  \widetilde{g}_{\mu \nu} \dd x^\mu \, \dd x^\nu= H [(\dd x^1)^2+
  (\dd x^2)^2]- (\dd x^3)^2 \,. 
 \end{equation}
One easily checks that the same 3-dimensional metric  (with suitable
time coordinate)  are recovered for all basic BPS solutions recalled
in Chapter \ref{chap:basicbps}. 

We have thus verified that all the BPS solutions of eleven-dimensional  supergravity,
for levels $\ell \leq 6$  are equivalent in 2+1 dimensions.  These
solutions constitute an $E_8\subset E_9$ multiplet of branes (see
footnote~\ref{bpsfn} page \pageref{bpsfn}).  The  $E_8$ multiplet is the same as the one studied
some time ago algebraically as a consequence of M-theory compactified
on $T^8$ for which the masses of the different BPS states of the
multiplet has been  
derived (see \cite{Elitzur:1997zn} and in particular Table 11), and
their  space-time interpretation  was  obtained in reference
\cite{Lozano-Tellechea:2000mc}. We recover here these results from the
Weyl group  of $E_9$ endowed with the temporal involution,  and from
the interpretation of these Weyl transformation in the context of
string theory. This $E_9$ containing a timelike direction  is the
correct setting to describe all  the BPS solutions with two unsmeared
spacelike directions  in a group theoretical language. We summarise in the Table \ref{tab:mass}
 the mass content\footnote{For the
  level 3 KK-monopole potential we have put the time in 4 as in
  Chapter \ref{chap:basicbps} instead of 9, 10 in the general metric (\ref{mono}) to
  avoid here exotic states.}  for all the levels $0< \ell\le 6$ in the form ${\cal A}_{\ell} =M R_t$, where
${\cal A}_{\ell}$ is the level $\ell$ action (\ref{act3}), (\ref{act6})
and (\ref{actKK}), $M$  the mass and $R_t$ the time radius (which can
be taken to $\infty$),  to exhibit their striking relation with the
$E_9$ dual potentials. 
\begin{table}[h]
\begin{center}
\begin{tabular}{|c|c|c|c|}
\hline
$\ell$ & time& $E_9$ field& ${\cal A}_{\ell} =MR_t$ \\ \hline \hline
1&9& $A_{9\,10\,11}$ & $R_9R_{10}R_{11}/ l_p^3$\\
2&3& $A_{3\dots8}$&$R_3 \dots R_8/l_p^6$\\
3&4& $A_{4\dots 10\,11,11}$&$R_4 \dots R_{11}.R_{11}/l_p^9$\\
4&3& $A_{3\dots11, 9\,10\,11}$&$R_3 \dots R_{11}.R_9R_{10}R_{11}/l_p^{12}$\\
5&9& $A_{3\dots11, 3\dots 8}$& $R_3 \dots R_{11}.R_3\dots R_8/l_p^{15}$\\
6&3& $A_{3\dots11,4\dots11,  11}$&$R_3 \dots R_{11}.R_4\dots R_{11}. R_{11}/l_p^{18}$\\
\hline
\end{tabular} \caption{\sl \small Mass content for the levels $0< \ell\le 6$ in the form  ${\cal A}_{\ell} =M R_t$, where
${\cal A}_{\ell}$ is the level $\ell$ action, $M$  the mass and $R_t$ the time radius.  }
\label{tab:mass} 
\end{center}
\end{table}
These BPS states  already reach levels beyond the classical levels $\ell
 \leq 3$ for which a dictionary between $E_{10}$ fields  and
 space-time fields depending on one coordinate
 exists~\cite{Damour:2002cu}. In the next subsection we discuss the
 solutions for $\ell>6$.

\subsection{The higher level solutions}\label{subsec:highlsol}

The action formulae (\ref{act3}), (\ref{act6}) and (\ref{actKK})
are all in agreement with the evaluation of $\cal A$ in the string
context, as seen from Appendix \ref{appm} no time or longitudinal space
radius occurs linearly in $\cal A$. Thus time is compact and the only
non-compact space radii are the transverse two dimensions.  

All these states can  only be reached from basic non-exotic solutions
through a timelike T-duality. 

U-duality does no more imply that the dimensionally reduced metric in
(2+1) dimensions should be identical to that of the basic ones and one
indeed verifies  that they are distinct.  However metric and the
induced dilaton field are expected to be identical when reduced to 2
dimensions. This is indeed the case as we now show.  

Reducing all the solutions down to three dimensions\footnote{The
  spacelike or timelike nature of $x^r$ is irrelevant for this
  argument.} 1, 2  and $r$, with $r\in 3, \dots ,11$, we
get\footnote{In the dual formalism this formula holds with
  $\widetilde{g}_{11}$ depending only on $H$.} 
\begin{equation}
\label{metriclg}
\dd s^2_{\ell,3D}=  \widetilde{g}_{11}(H,B)  [(\dd x^1)^2+
  (\dd x^2)^2]+(\dd x^r)^2.
 \end{equation}
Because for all the solutions we have unity in front of $(\dd x^r)^2$,
reducing on $x^r$ down to two dimensions and performing a Weyl
rescaling all the metric reduce to a flat two-dimensional space with
zero dilaton field.

\setcounter{equation}{0}
\section{ Transcending eleven-dimensional supergravity}

We have seen in Sections \ref{sec:effectaction} and \ref{chargesec} that all $E_9$ BPS solutions (including the
basic ones discussed in Section \ref{sec:basiconedim}) can be smeared to one space
dimension in the dual formalism. There is in fact no such description
in the direct formalism, except for the M2 and the KK-wave. More
generally, the dual formalism in one non-compact space dimension is
equivalent to the  $\sigma$-model (\ref{full})  restricted to a
single  $\ell> 0$ field. In that case indeed  the matter term is the same
in both formalisms as no covariant derivative arises in the
$\sigma$-model \cite{Damour:2002cu, Englert:2004ph} and the Einstein
term  \cite{Banks:1998vs} coincides with its  level zero. 

We now consider BPS solutions obtained from positive real roots of
$E_{10}$ not present in $E_9$. These solutions can be obtained by
performing Weyl transformations on 
$E_9$ BPS solutions smeared to one dimension. They  depend on one
non-compact space variable and may have no counterpart in eleven-dimensional 
supergravity. We illustrate the construction of such solutions by one
example. 
\begin{table}[h]
\begin{center}
\begin{tabular}{|cc|cccccccccc|c|}
\hline
$IIA$&$M$-theory&1&2&3&4&5&6&7&8&9&10&11\\
\hline
D6 & KK6& \,&$\times$&$\times$&$\times$&$\times$&$\times$&$\times$&$\times$&\, & \,& $\times$ \\
D8 & M9& \,&$\times$&$\times$&$\times$&$\times$&$\times$&$\times$&$\times$&$\times$&$\times$&$\times$ \\
\hline
\end{tabular}
\caption{\sl \small The D8 brane obtained from the D6 brane by
  performing the Weyl  
reflexion $s_{\alpha_{11}}$, i.e. a double T-duality plus exchange of
the directions $x^9$ and $x^{10}$.} 
\label{tab:m9}
\end{center}
\end{table}

We shall obtain the M9 (namely the `uplifting'\footnote{There is a
  no-go theorem stating that massive 11-dimensional supergravity does
  not exist \cite{Bautier:1997yp}. The concept of uplifting the D8
  seems thus puzzling. However  a definition of  `massive eleven-dimensional 
  supergravity' has been proposed \cite{Bergshoeff:1997ak} for a
  background with an isometry generated by a spatial Killing
  vector. In that theory the M9 solution does exist
  \cite{Bergshoeff:1998bs}. Existence of the M9 is also suggested by
  the study of the central charges of the M-theory superalgebra
  \cite{Townsend:1997wg, Hull:1997kt}.}  of the D8 brane of massive
Type IIA supergravity)\footnote{The metric of the D8 brane has
  previously been discussed in the context of $E_{11}$ in
  \cite{West:2004st}.} by performing a Weyl transformation on the
KK6-monopole smeared in all directions but one described in Section
\ref{subsec:kkwkkm}. We start with a D6 
along the spatial directions $2,3,\dots,7$ and we choose 8 as time
  coordinate.  The D6 is smeared in the directions 9 and 10 and thus
  depends only on the non-compact variable $x^1$. 
The uplifting to M-theory of the D6 yields a KK6-monopole with
  Taub-NUT direction 11 (see Table \ref{tab:m9}). 
To obtain the D8 and its uplifting M9, we  perform the Weyl reflexion
  $s_{\alpha_{11}}$ which may be viewed as a double T-duality in the
  directions 9 and 10 plus exchange of the two radii  
\cite{Elitzur:1997zn, Obers:1998rn, Banks:1998vs, Englert:2003zs}.

The KK6-monopole, in the longitudinal directions $2,\dots ,7$ with
timelike direction 8 and Taub-NUT direction 11,  smeared in  9 and 10,
is described in the  $\sigma$-model by a  
level 3 generator\footnote{This level 3 step operator contains the
  index 2 and belongs to a  $E_9$ conjugate in $E_{10}$ to the $E_9$
  we used in the previous sections.}  $R^{2345678\,11\vert11}$. The
solution is given in (\ref {G3}) up to a permutation of
indices. Defining $p^a = -h^{~a}_a$ one has 
\be \begin{split}
\label{modkkmono}p^1&=p^9=p^{10} = {1\over 2} \ln H(x^1) \\
p^{11}&=-{1 \over 2} \ln 
H(x^1)\\  p^i&=0, \qquad i=2 \dots 8 \\
A_{2345678\,11\vert 11}&={1 \over H(x^1)} .
\end{split}\ee
The level 3 root $\alpha^{(3)}$  corresponding to
$R^{2345678\,11\vert11}$  is
$\alpha^{(3)}=\alpha_2+2\alpha_3+3\alpha_4+4\alpha_5+5\alpha_6+6\alpha_7+7\alpha_8+4\alpha_9+\alpha_{10}+3\alpha_{11}$.
Performing the Weyl 
reflexion $s_{\alpha_{11}}$, we get
$s_{\alpha_{11}}(\alpha^{(3)})\equiv \alpha^{(4)}
=\alpha^{(3)}+\alpha_{11}$. 
The root $\alpha^{(4)}$  is the lowest weight of the $A_9$ irreducible
representation of level 4 \cite{Nicolai:2003fw} whose 
Dynkin labels are $[0\, 0\,0\,0\,0\,0\,0\,0\,2]$. This $A_9$ representation in
the decomposition of the adjoint representation of $E_{10}$  is not in
$E_9$. The action of    $s_{\alpha_{11}}$ on the Cartan fields is
\cite{Englert:2003zs} 
\be \begin{split}
 p^{\prime a}&= p^a+ {1 \over 3} (p^9+p^{10}+p^{11}) \qquad a=2 \dots
  8\noindent \\ 
p^{\prime a}&= p^a- {2 \over 3} (p^9+p^{10}+p^{11}) \qquad a=9,10,11.
\end{split}\ee
Using the embedding  relation (\ref{embedding}), the Weyl transform
of (\ref{modkkmono})  yields the solution 
\begin{eqnarray}
\label{mett1}
p^{\prime 1}&=&{4\over 6} \ln H(x^1)\\
\label{mett2}
 p^{\prime a}&=& {1\over 6} \ln H(x^1) \qquad a=2 \dots 10 \noindent \\
\label{mett3}
p^{\prime 11}&=& - {5\over 6} \ln H(x^1). \\ \label{complicated}
A_{23456789 \,10\,11\vert 11\vert 11}&=&{1 \over H(x^1)}\, .
\end{eqnarray}
The level 4 field (\ref{complicated}) contains the antisymmetric
 set of indices $2,3\dots ,11$. These are not apparent in its $A_9$
 Dynkin labels given above but are needed in the 11-dimensional metric
 stemming from the embedding $E_{10}\subset E_{11}$ encoded in 
 (\ref{embedding}). The $A_{10}\subset E_{11} $ Dynkin labels of
 this  field in this embedding are indeed
 $[1\,0\,0\,0\,0\,0\,0\,0\,0\,2]$ (see Table \ref{tab:levdece11}). From (\ref{vielbein}) one gets the
 11-dimensional metric  
\begin{equation}
\label{M9}
\dd s^{2}_{M9}= H^{4/3} (\dd x^1)^2+H^{1/3} [(\dd x^2)^2+\dots
  -(\dd x^8)^2+(\dd x^9)^2+(\dd x^{10})^2] 
+H^{-5/3} (\dd x^{11})^2\, .
\end{equation}
One verifies the validity of the dual formalism equation
(\ref{genfield}) for the field (\ref{complicated}), namely  
\begin{equation}
\label{tenfield}
\frac{\dd}{\dd x^1} \left(\sqrt{\vert g \vert}g^{22}g^{33}\dots g^{10\,10}[
  g^{11\,11}]^3\frac{d}{\dd x^1}A_{23456789 \,10\,11\vert 11\vert
  11}\right)= \frac{\dd ^2}{\dd (x^1)^2}H(x^1)=0 \, . 
\end{equation}
The metric (\ref{M9}) describes the   M9  \cite{Bergshoeff:1998bs},
which reduced to 10 dimensions gives the D8~\cite{Bergshoeff:1996ui}
of  massive type IIA. The computation of the M9-mass in the string
context (\ref {1110}) agrees with our general action formula 
\begin{equation}
{\cal A}_{M9}= R_8 M_{M9}={R_2R_3R_4R_5R_6R_7R_8R_9  R_{10} R^3_{11}
  \over l_p^{12}}\, , 
\end{equation}
indicating that $R_{11}$ is compact.

\chapter*{} \section*{Summary and comments of Part II}
\markboth{Summary and comments of Part II}{Summary and comments  of Part II}

We have constructed an infinite $E_9$ multiplet of  BPS solutions of
eleven-dimensional  supergravity and of its exotic counterparts depending on  two
non-compact variables. These solutions  are related by U-dualities
realised as  Weyl transformations of the $E_9$ subalgebra of $E_{11}$
in the regular embedding $E_9\subset E_{10}\subset E_{11}$. Each BPS
solution stems from an $E_9$ potential $A^{[N]}_p$ multiplying the
generator  $R^{[N]}_p$ in the Borel representative of the coset space
$\mathZ{E}_{10}/\mathrm{K}^-_{10}$ where $\mathrm{K}^-_{10}$ is invariant under a temporal
involution.  $A^{[N]}_p$ is related to the supergravity 3-form and
metric through dualities and compensations. This $E_9$ multiplet of
states split into three classes according to the  level $\ell$.  For
$0\le \ell \le 3$ we recover the basic BPS solutions, namely, the
KK-wave, the M2, the M5 and the KK-monopole. For $4\le \ell \le 6$, the
solutions have 8 longitudinal space dimensions. We argue that for
higher levels, all 9 longitudinal directions, including time, are
compact.  Each  BPS solution can be mapped to a solution of a dual
effective action of gravity coupled to matter expressed in terms of
the $E_9$ potential $A^{[N]}_p$. In the dual formulation the BPS
solutions can be smeared to one non-compact space dimension and
coincides then with solutions of the $E_{10}$ $\sigma$-model build
upon $E_{10}/\mathrm{K}^-_{10}$. The $\sigma$-model yields in addition an
infinite set of BPS space-time solutions corresponding to all real
roots of $E_{10}$ which are not roots of $E_9$. These appear to
transcend eleven-dimensional  supergravity, as exemplified by the lowest level $(\ell=4)$
solution which is identified to the M9.

The relation between the  eleven-dimensional  supergravity 3-form and metric, and the
$E_9$ potentials  $A^{[N]}_p$  has significance beyond the realm of
BPS solutions. To see this we first recall that the  $E_9$ potentials
can be organised in towers defined by decomposing $E_9\subset E_{10}$
into  $A_1^+$ subgroups with central charge $-K^2{}_2\in E_{10}$.  
Each of these $A_1^+$ subgroup contains  two `brane' towers $[N]=[3],[6]$
or one `gravity tower' $[N]=[8,1]$ of real roots (the two gravity towers
$[N]=[8,1]$ and [0] are redundant except for the lowest level representing
the KK-wave). We first examine the brane towers. 
 
The recurrences of the 3-tower $A^{[3]}_{1+3n}$ alternate in nature at
each step: they switch from states on the M2 sequence to those on the
M5 sequence. This feature is illustrated in  Figure \ref{ger1fig}a where the 3-tower
recurrences are depicted on the right column. On the other hand, each
recurrence of the 3-tower is related by {\em duality-compensation
  pairs} to the supergravity 3-form potential which we denote by
$(A^{[3]}_1)_q$,  as seen in the horizontal lines of both Figure \ref{second} and
Figure \ref{fig:sequenceb}. Here we designated by the integer $q$ the number of
duality-compensation pairs needed to reach the field $(A^{[3]}_1)_q$
from $A^{[3]}_{1+3n}$. In the realm of BPS states studied in this
paper each field $A^{[3]}_{1+3n}$ defines a different BPS solution of
eleven-dimensional  supergravity defined by $(A^{[3]}_1)_q$ and the related
metric. Comparing Figure \ref{ger1fig}a with Figure \ref{second} and Figure \ref{fig:sequenceb} we see that  $q$ is equal to
$n$. This relation expresses the fact that {\em the number of steps
  needed to climb the 3-tower up to the field 
$A^{[3]}_{1+3n}$ is equal to the number of duality-compensation pairs
  needed to reach from $A^{[3]}_{1+3n}$ the eleven-dimensional  supergravity 3-form
  defining the corresponding BPS solution}. However the relation $q=n$
does not rely on the BPS character of the solution and hence has
general significance, which can be pictured as follows.  

Were all the compensations matrices put equal to unity  no new
solutions could be generated by Weyl transformations from any solution
of eleven-dimensional  supergravity defined by its  3-form $A^{[3]}_1$ (or from its
Hodge dual) and metric. Indeed, because of the Weyl equivalence of all
dualities depicted in Figure \ref{second} and Figure \ref{fig:sequenceb}, one would simply get  
\be \label{comp01} \begin{split}
A^{[3]}_{1+3n} \quad &\stackrel{\rm I}{=} \quad  A^{[3]}_1\qquad n \,{\rm even}\\
A^{[3]}_{1+3n} \quad  &\stackrel{\rm I}{=}  \quad A^{[6]}_2\qquad n \,{\rm odd}\, ,
\end{split}
\ee
where $A_2^{[6]}$ is taken to be the Hodge dual of $A_1^{[3]}$  and
the superscript I means that all compensations have been formally
equated to unity. A similar analysis of the 6-tower depicted in the
left column of Figure \ref{ger1fig}a would yield 
\be \label{comp02} \begin{split}
A^{[6]}_{2+3n} \quad  &\stackrel{\rm I}{=}  \quad  A^{[6]}_2\qquad n \,{\rm even}\\
A^{[6]}_{2+3n}  \quad &\stackrel{\rm I}{=}  \quad   A^{[3]}_1\qquad n \,{\rm odd}\, .
\end{split}
\ee
The same phenomenon would appear in the gravity tower, where it is
somewhat hidden in the redundant 0-tower of Figure \ref{fig:mappingbranegrav}. 

The non-trivial content of the $E_9$ tower potentials is entirely
due to compensations. These prevent `duals of duals' to be equivalent
to unity and one may view the $E_9$ towers as defining `non-closing
dualities', familiar from the standard Geroch group. They translate
through the compensation process the genuine non-linear structure of
gravity.

  \part{Finite and infinite-dimensional symmetries of pure $\mathcal N=2$ supergravity in $D=4$}\label{partthree}
  \setcounter{equation}{0}
\markboth{Introduction of Part III}{Introduction  of Part III}

\mbox{}\\

\vspace{6cm}

In this part of the thesis, we study the symmetries of four-dimensional pure ${\cal N}=2$ supergravity which is of interest for several reasons. The bosonic sector of this model consists of gravity coupled to a single Maxwell field. It admits half-BPS solutions like the extremal Reissner-Nordstr\"om black hole. In fact, a general half-BPS solution can be described by four charges $m$, $n$, $q$ and $h$ subject to the constraint $m^2+n^2=q^2+h^2$~\cite{Kallosh:1994ba,Argurio:2008zt}. The first two charges are gravitational mass and NUT charge and the latter two correspond to the electric and magnetic charges under the Maxwell field. In addition, it has been known for a long time that the solutions of this model with one Killing vector transform under the group $\mathrm G=\mathrm{SU}(2,1)$~\cite{Kinnersley}. This symmetry group is not in its split real form (which would be $\mathrm{SL}(3,\mathbb{R}))$ and one of our motivations for this part was to investigate whether the conjectures discussed in Chapter \ref{chap:sugrareformulatedth} apply also in this case (see~\cite{HenneauxJulia,deBuyl:2003ub,Henneaux:2007ej,Riccioni:2008jz} for related work). In particular, the theory of real forms for the extended infinite-dimensional symmetries $\mathrm G^{++}$ and $\mathrm G^{+++}$ is not as well-developed as for finite-dimensional groups but see~\cite{Rousseau1989,Rousseau1995,BenMessaoud} for some mathematical results. Since the symmetry $\mathrm{SU}(2,1)$ mixes the two gravitational charges one can study the question of gravitational dualities~\cite{Hull:1997kt,Hull:2000zn,Henneaux:2004jw,Bunster:2006rt,Boulanger:2008nd,Bergshoeff:2009zq,Argurio:2008zt,Nieto:1999pn} analogous to electromagnetic duality in this simple model.\\

\noindent The results presented in this part have been published in the reference \cite{Houart:2009ed}.

   \chapter{On finite-dimensional symmetries of pure $\mathcal{N}=2$ supergravity } \label{chap:finitesymsu215}

\setcounter{equation}{0}
\section{Symmetries and BPS solutions of pure $\mathcal{N}=2$  supergravity}
\label{sec:EinsteinMaxwell}

Pure $\mathcal{N}=2$ supergravity in four dimensions is the natural supersymmetric completion of Einstein-Maxwell theory. To set the scene, we shall in this section present our conventions for this theory, and in particular discuss its underlying symmetries in the presence of a space-like or a time-like  Killing vector. The presence of these Killing vectors is equivalent to performing a Kaluza-Klein reduction of the theory on a space-like or a time-like circle, respectively. This process reveals a hidden global symmetry in $D=3$, described by the group SU(2,1) \cite{Kinnersley,Breitenlohner:1987dg}.
\subsection{Einstein--Maxwell in $D=4$}

The field content of four-dimensional $\, \mc{N}=2$ supergravity consists of a gravity multiplet, with a graviton $g_{\alpha \beta}$, two gravitino $\Psi^a_\alpha$ ($a=1,2$) and a Maxwell field $A_\alpha$. The bosonic part of the theory is then described by the standard Einstein-Maxwell Lagrangian:
\begin{eqnarray}
\label{eqn:SugraAction4d}
\mathcal{L}_{4d}=  \frac{1}{4}\, \sqrt{- g} \, \Big(R- F_{\alpha\beta} F^{\alpha \beta}\Big)\,, 
\end{eqnarray}
such that the Maxwell field is minimally coupled to gravity, and where $F_{\alpha \beta} = 2\,  \p_{\, [ \alpha} \,A_{\beta ]}$, locally. We will take space-time $\mathcal{M}_4$ to be Lorentzian with signature $(-, +, +, +)$.\footnote{Regarding index notation, Greek letters $\alpha, \beta...$ will indicate the four-dimensional curved space-time indices, $\mu, \nu...$ the three-dimensional curved indices, $A,B,...$ flat space-time indices, and $a,b...$ flat space indices.} The equations of motion derived from (\ref{eqn:SugraAction4d}), written in flat coordinates, are for the metric
\begin{equation}
\label{eqn:Einstein}
R_{AB} + \frac{1}{2} \eta_{AB} F_{CD} F^{CD} - 2 F_{AC}{F_B}^{C} = 0, 
\end{equation}
and for the Maxwell field
\begin{equation}
\label{eqn:Maxwell}
D^A F_{AB} = 0 .
\end{equation}
Here $D$ is the covariant derivative with respect to the spin connection. From the symmetry properties of the fields, we can derive the following Bianchi-identities for the Riemann-tensor and the field strength
\begin{eqnarray}
\label{eqn:MaxwellBianchi}
\epsilon^{ABCD}D_B F_{CD} &= &0 ,\\
\label{eqn:RiemannBianchi1}
\epsilon^{ABCD} R_{BCDE} &=& 0 .
\end{eqnarray}

For the analysis of finite symmetries, we will mainly be concerned with space-times preserving some subgroup of the diffeomorphism group of $\mathcal{M}_4$. These residual symmetries are described by the existence of Killing vectors $\kappa$. The Maxwell field will also preserve this symmetry if its Lie derivative with $\kappa$ vanishes. The dynamics of such solutions can be described from a three-dimensional perspective, formally reducing (\ref{eqn:SugraAction4d}) on the orbits of $\kappa$. This three-dimensional theory is then living on an orbit space $\mathcal{M}_3 =  \mathcal{M}_4/\Sigma$, where $\Sigma$ is the exponentiated action of $\kappa$ on $\mathcal{M}_4$.\footnote{Note that generally $\kappa$ will vanish on certain submanifolds of $\mathcal{M}_4$, and when defining its orbit space, we choose a component of $\mathcal{M}_4$ where $\kappa$ is non-vanishing and connected to infinity.}

In three dimensions vector fields have only one propagating degree of freedom and are hence equivalent to scalar fields. One can therefore dualise a vector -- using the Hodge star on the corresponding field strength -- to a scalar by explicitly imposing its Bianchi-identity and consequently write the three-dimensional Lagrangian only in terms of a metric and a set of scalars (see Section \ref{subsec:compactificationd3th}). For example, a four-dimensional stationary Maxwell-field will in three dimensions be described by two scalars (one from the component of the potential in the time-direction, and one from dualisation). As we will see below, this will concretely realize electromagnetic duality as well as a gravitational duality such as the Ehlers symmetry. As a consequence, the three-dimensional theory allows for a big set of symmetries, acting on the set of solutions preserving the given Killing vector. In fact, the moduli space of solutions  (almost) realizes a generally non-linear representation of this symmetry group. We will discuss this in more detail in Section \ref{sec:GroupAction}.

\subsection{$\mathrm{SU}(2,1)$ and coset models}
\label{SU(2,1)SigmaModel}

In the following, the group $\mathrm{SU}(2,1)$ and some of its subgroups will play an important role since $\mathrm{SU}(2,1)$ is the global symmetry group of Einstein-Maxwell theory in the presence of a Killing vector~\cite{Kinnersley}. We briefly discuss its definition and the construction of coset models with $\mathrm{SU}(2,1)$ symmetry, relegating more details and explicit expressions of the generators to the Sections \ref{app:SigmaModel} and  \ref{subsec:su21andextension}.

In our conventions the group $\mathrm{SU}(2,1)$ is defined by the set of all unit-determinant complex $(3\times 3)$ matrices $g$ that preserve a metric $\eta$ of signature $(2,1)$;
\begin{equation}
{\mathrm {SU}}(2,1) = \left\{ g \in \mathrm{SL}(3, \mathbb{C}) \,:\, g^\dagger \eta g = \eta \right\}\quad\text{with}\quad
\eta = \left(\begin{array}{ccc}0&0&-1\\0&1&0\\-1&0&0\end{array}\right)\,,
\end{equation}
and we denote the associated Lie algebra by $\mathfrak{su}(2,1)$. This Lie algebra is a real form\footnote{We refer the reader to the Section \ref{sec:realkm} for introductions on real forms of complex Lie algebras.} of $\mathfrak{sl}(3,\mathbb{C})$ which may be described via the Tits-Satake diagram shown in Figure~\ref{fig:su21}. The labelling of nodes in Figure~\ref{fig:su21} is chosen to leave room for the extension of $\mathfrak{su}(2,1)$ to the Kac-Moody algebra $\mathfrak{su}(2,1)^{+++}$  discussed in Chapter~\ref{chap:infinitedimsymsu21}. 

\begin{figure}[t]
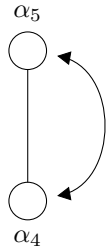

\begin{center}
\begin{pgfpicture}{1cm}{0cm}{1cm}{3cm}

\pgfnodecircle{Node1}[stroke]{\pgfxy(1,0.5)}{0.25cm}
\pgfnodecircle{Node2}[stroke]
{\pgfrelative{\pgfxy(0,2)}{\pgfnodecenter{Node1}}}{0.25cm}

\pgfnodebox{Node6}[virtual]{\pgfxy(1,0)}{$\alpha_{4}$}{2pt}{2pt}
\pgfnodebox{Node7}[virtual]{\pgfxy(1,3)}{$\alpha_{5}$}{2pt}{2pt}
\pgfnodeconnline{Node1}{Node2} 

\pgfsetstartarrow{\pgfarrowtriangle{4pt}}
\pgfsetendarrow{\pgfarrowtriangle{4pt}}
\pgfnodesetsepend{5pt}
\pgfnodesetsepstart{5pt}
\pgfnodeconncurve{Node2}{Node1}{-10}{10}{1cm}{1cm}

\end{pgfpicture}
\caption {\label{fig:su21} \sl \small The Tits-Satake diagram of the real form $\asu$. This real form is in one to one correspondence with a conjugation $\sigma$ of the complex Lie algebra $A_2=\mathfrak{sl}(3,\mathbb{C})$ which fixes completely $\asu$. The Tits-Satake diagram precisely gives  the action of  $\sigma$ on the simple roots of $A_2$. The presence of the double arrow means that $\sigma(\alpha_4)= \alpha_5$ and $\sigma(\alpha_5)= \alpha_4$. See Section~\ref{subsec:su21andextension}  for more details.} 
\end{center}
\end{figure}
The Lie algebra $\mathfrak{su}(2,1)$ has two maximal subalgebras that will play a central role in what follows. The first one is the maximal compact subalgebra $\mathfrak{k}=\mathfrak{su}(2)\oplus \mathfrak{u}(1)$, defined by the subset of generators which are pointwise fixed by the so-called Cartan involution $\theta$:
\begin{equation}
\mathfrak{k}=\mathfrak{su}(2)\oplus \mathfrak{u}(1)=\{ x\in \mathfrak{su}(2,1)\,: \, \theta(x)=x\}.
\end{equation}
The other (non-compact) maximal subalgebra is $\mathfrak{k}^{*}=\mathfrak{sl}(2,\mbb{R})\oplus \mathfrak{u}(1)$, which is similarly defined with respect to a ``temporal involution'' $\Omega_4$. The two involutions $\theta$ and $\Omega_4$ are discussed in more detail in Sections \ref{app:k} and  \ref{sec:carttemporalin}.  We recall that the Cartan involution induces the following Cartan decomposition in terms of vector spaces\footnote{In comparison with Section \ref{subsec:su21andextension}, we remove from now the subscript $_r$ of $\mathfrak{k}_r$ and $\mathfrak p_r$ to reduce the notations.} 
\begin{equation}
\mathfrak{su}(2,1)=\mathfrak{k}\oplus \mathfrak{p},
\end{equation}
where $\mathfrak{p}$ is the subspace which is anti-invariant under $\theta$, corresponding to the orthogonal complement of $\mathfrak{k}$ with respect to the Killing form on $\mathfrak{su}(2,1)$. Similarly, the temporal involution $\Omega_4$ induces the analogous decomposition
\be \mathfrak{su}(2,1)=\mathfrak{k}^{*}\oplus \mathfrak{p}^{*}.
\ee 
Note that $\mathfrak{p}$ and $\mathfrak{p}^{*}$ transform respectively  in representations of $\mathfrak{k}$ and $\mathfrak{k}^{*}$ but are not subalgebras of $\mathfrak{su}(2,1)$. For later reference, let us recall another useful decomposition of $\mathfrak{su}(2,1)$, known as the algebraic Iwasawa decomposition in terms of vector spaces
\begin{equation}
\mathfrak{su}(2,1)=\mathfrak{k}\oplus \mathfrak{a} \oplus \mathfrak{n}_+,
\label{Iwasawa}
\end{equation}
where $\mathfrak{a}$ is the non-compact part of the Cartan subalgebra $\mathfrak{h}\subset \mathfrak{su}(2,1)$ and $\mathfrak{n}_+$ is the nilpotent subspace of upper-triangular matrices. The subspace $\mathfrak{b}_+=\mathfrak{a} \oplus \mathfrak{n}_+$ is known as the standard Borel subalgebra.

At the group level, we then have the corresponding maximal compact subgroup  
\begin{equation}
\mathrm{K} = \mathrm{SU}(2)\times \mathrm{U}(1)\, 
\end{equation}
and non-compact subgroup 
\begin{equation}
\mathrm{K}^* = \mathrm{SL}(2, \mathbb{R})\times \mathrm{U}(1)\, .
\end{equation}
Similarly, the subspaces $\mathfrak{p}$ and $\mathfrak{p}^{*}$ correspond to the two coset spaces
\begin{equation}
\mathcal{C} = \mathrm{G}/\mathrm{K} = \frac{\mathrm{SU}(2,1)}{\mathrm{SU}(2)\times \mathrm{U}(1)}
\quad\text{and}\quad
\mathcal{C}^* = \mathrm{G}/\mathrm{K}^* = \frac{\mathrm{SU}(2,1)}{\mathrm{SL}(2, \mathbb{R})\times \mathrm{U}(1)}\,.
\end{equation}
Physically, $\mathcal{C}$ and $\mathcal{C}^{*}$ arise, respectively, as the moduli spaces of scalars upon reduction to $D=3$ of the Einstein-Maxwell Lagrangian on a space-like or a time-like circle. 

The coset space $\mathcal{C}=\mathrm{G}/\mathrm{K}$ is a Riemannian symmetric space of dimension $\dim(\mathrm{G})-\dim(\mathrm{K})=4$, matching the combined number of degrees of freedom contained in the metric and the Maxwell field in $D=4$. To describe a coset model on this space one can choose a map $\mathcal{V}: \mathcal{M}_3  \rightarrow \mathrm{G}/\mathrm{K}$ in a fixed gauge, that transforms under global transformations $g\in\mathrm{G}$ as $\mathcal{V}(x) \to k(x) \mathcal{V} g^{-1}$, where $k(x)\in\mathrm{K}$ is a local compensating transformation required to restore the chosen gauge for the coset representative.

A manifestly SU(2,1)-invariant Lagrangian can now be constructed using the Maurer-Cartan form $\mathrm{d}\mathcal{V} \mathcal{V}^{-1}$ as follows. Its projection
\begin{equation}
\mathcal{P}= \frac{1}{2} \big(\dd\mathcal{V} \mathcal{V}^{-1}\,  - \,  \theta(\dd\mathcal{V} \mathcal{V}^{-1})\big) \ \in \mathfrak p
\end{equation}
 along the coset transforms $\mathrm{K}$-covariantly under the global $\mathrm{G}$ action as $\mathcal{P}\to k \mathcal{P}k^{-1}$ and the (invariant) trace of its square can be used as a $\mathrm{G}$-invariant Lagrangian that is second order in derivatives: 
 \begin{equation}
 \mathcal{L}=\sqrt{|h|}h^{\mu\nu}  (\mathcal{P}_{\mu} | \mathcal{P}_{\nu}) ,
 \end{equation}
 where $h_{\mu\nu}$ is the metric on $\mathcal{M}_3$. To make this concrete, we shall extend the decomposition (\ref{Iwasawa}) to the group level using the Iwasawa theorem, so that
\begin{equation}
\mathrm{SU}(2,1)=\mathrm{KAN},
\end{equation}
where $\mathrm A$ is the abelian group with the Lie algebra $\mathfrak{a}$ and $\mathrm N$ is the nilpotent group corresponding to the subspace $\mathfrak{n}_+$.
This ensures that we may choose a coset representative $\mathcal{V}\in \mathrm{AN}$ of upper-triangular matrices, traditionally referred to as the ``Borel gauge''. As a consequence of this gauge choice, the coset element  $\mathcal{V}$ can be parametrized using four scalar fields (coordinates on G/K), to be illustrated in detail below.

On the other hand, the coset space $\mathcal{C}^*=\mathrm{G}/\mathrm{K}^*$ is a homogeneous space of dimension $\dim(\mathrm{G})-\dim(\mathrm{K}^*)=4$, but is no longer a Riemannian manifold. Rather, it has signature $(2,2)$, and is usually referred to as a ``pseudo-Riemannian'' symmetric space \cite{Breitenlohner:1987dg}. The general construction of an SU(2,1)-invariant coset model discussed above is still applicable, although in this case the global Iwasawa decomposition is no longer valid (see the discussion in Section \ref{app:SigmaModel}). 

For both choices of subgroup, $\mathrm{K}$ and $\mathrm{K}^*$, there is a Noether current 
\begin{equation}
\label{eqn:Noether}
\mathcal J^{\mu} =  \sqrt{|h|} h^{\mu \nu} \mathcal{V}^{-1} \mathcal P_{\nu} \mathcal{V} ,
\end{equation}
associated to the global $\mathrm{G}$ symmetry. We will see later that its values ``at infinity''  for $\mathcal{V}$ describing a half-BPS solutions can be related to the four charges describing the most general such solution.

\subsection{Solutions with space-like Killing vector}
\label{sec:SpacelikeKillingVector}

To give a flavor of the relevance of coset models, we will first quickly consider the case of solutions for which the preserved Killing vector $\kappa$ is space-like. After choosing suitable coordinates so that $\kappa = \partial_x$, the reduction of the Einstein-Maxwell Lagrangian (\ref{eqn:SugraAction4d}) yields in three dimensions, after dualisation, an Einstein plus scalar Lagrangian, with scalar part given by:
\be \label{eqn:lag3ds} \begin{split}
\mathcal{\tilde{L}}_{\mathrm{scal}}= -\,  \frac{\sqrt{ -h}}{4}\,  \Big( \frac{1}{2}\,  \p_{ \m} \phi\,  \p^{\m} \phi &+ 2\,  e^{ \phi} (\p_{ \m} \chi_e\,  \p^{\m} \chi_e+ \p_{ \m} \chi_m\,  \p^{\m} \chi_m) \\ 
&+ e^{2 \phi} (\p_{ \m} \psi +  \sqrt{2}\,  \chi_m \, \p_{\mu} \chi_e - \sqrt{2} \, \chi_e\, \p_{\mu} \chi_m)^2 \Big).
\end{split}
\ee

This reduced Lagrangian contains a dilaton $\phi$ and three axions: $\chi_e$ coming from the component $A_x$ of the Maxwell vector potential, $\chi_m$ coming from the dualisation of the Maxwell vector potential in three dimensions and $\psi$ arising from the dualization of the graviphoton. 
These four scalar fields parametrize the coset  space $\mathcal{C} = \su/(\mathrm{SU}(2)\times \mathrm{U}(1))$ \cite{Kinnersley,Julia:1980gr}. More concretely, the scalar dynamics given by (\ref{eqn:lag3ds}) is equivalent to a non-linear $\sigma$-model describing maps $\mathcal{V}$ from $\mathcal{M}_3$ to the homogeneous space $\mathcal{C}$ as described above. 
This map $\mathcal{V}$ is the composition of a map into the tangent space of $\mathcal{C}$ together with the exponential map from this tangent space to the coset. Naturally parametrized by the four scalar fields, it is therefore given by the expression
\be
\label{eqn:CosetElement} \begin{split}
\mathcal{V}&= e^{\frac{1}{2} \phi\, \mathbf{h_4}}\, e^{\sqrt{2} \,\chi_e\,    \mathbf{e_4} + \sqrt{2} \,\chi_{m} \,   \mathbf{e_5} + \sqrt{2}\, \psi \,   \mathbf{e_{4,5}}} \\
&= \left(\begin{array}{ccc}e^{\frac{\phi}{2}} & e^{\frac{\phi}{2}}(\sqrt{2} \, \chi_e + i \, \sqrt{2}\,  \chi_m) & e^{\frac{\phi}{2}}(\chi_e^{\  2} +  \chi_m^{\  2 }+ i\,  \sqrt{2} \, \psi) \\0 & 1 & \sqrt{2} \, \chi_e - i \, \sqrt{2}\,  \chi_m \\0& 0 & e^{- \frac{\phi}{2}} \end{array}\right)\, ,
\end{split}
\ee
where $   \mathbf{h_4},\,    \mathbf{e_4},\,    \mathbf{e_5}$ and  $   \mathbf{e_{4,5}}$ (see (\ref{eqn:su21gens})) are the generators of the Borel subalgebra $\mathfrak{b}_+=\mathfrak{a}\oplus \mathfrak{n}_+$.\footnote{Recall from the previous section that the Borel subalgebra $\mathfrak{b}$ of a non-split real form does not contain the full Cartan subalgebra $\mathfrak{h}$ but only its non-compact part $\mathfrak{a}$ (see Section \ref{subsec:restiwa} for more details). }  The scalar fields here depend on the coordinates $x^\mu$ of $\mathcal{M}_3$.

\subsection{Solutions with time-like Killing vector}
\label{sec:StationarySolutions}
Let us now repeat this discussion in a little bit more detail in a case that will be more interesting for us, namely BPS solutions. These solutions  preserve a time-like Killing vector, which is most easily seen by considering the fact that a BPS-solution necessarily preserve a Killing spinor $\epsilon$. Forming the supersymmetry generator $Q = Q^{\mu}\epsilon_{\mu}$, the supersymmetry algebra -- in which $Q$ squares to the generator of time-translation -- shows that the solution must be preserved under time-translation. The set of single centered BPS-solutions is a subset of a general set of generalized Reissner-Nordstr\"om black hole-like solutions to the equations of motion (\ref{eqn:Einstein}) and (\ref{eqn:Maxwell}) with mass $m$, NUT charge $n$, and electric and magnetic Maxwell charges $q$ and $h$. This solution can be written as~\cite{Argurio:2008zt}
\be\label{eqn:TaubNUTsolution} \begin{split}
\dd s^2 =& - \frac{{\tilde r}^2-n^2- 2\,m\,{\tilde r}+q^2+h^2}{{\tilde r}^2+n^2}\,  ( \, \dd t + 2 \,n \,\cos \theta\, \dd \phi \,)^2  \\ 
&+ \frac{\tilde{r}^2+n^2}{{\tilde r}^2-n^2- 2\,m\, {\tilde r}+q^2+h^2}\, \dd {\tilde r}^2 + ({\tilde r}^2+n^2)(\dd \theta^2+ \sin^2 \theta \, \dd \phi^2)\, , 
\end{split}
\ee
\begin{eqnarray} \label{eqn:TaubNUTsolution2}
& A_t& = \frac{q\,{\tilde r} +n\,h}{{\tilde r}^2+n^2}, \hspace{1cm}   A_{\phi}= \frac{2\, n\, q\,  {\tilde r} - h\, ({\tilde r}^2-n^2)}{{\tilde r}^2 +n^2}\, \cos \theta \, .
\end{eqnarray}
For our purpose it is convenient to introduce isotropic coordinates defined by ${\tilde r}= r+ m$, in which  the metric in (\ref{eqn:TaubNUTsolution}) becomes
\begin{equation}
\dd s^2=-\frac{\lambda(r)}{R^2(r)} (\dd t+2\,n \, \cos \theta \, \dd \phi \, )^2+ \frac{R^2(r)}{\lambda(r)} \dd r^2 +
R^2(r) (\dd \theta^2+ \sin^2 \theta \, \dd \phi^2)\, .
\ee
Here the functions $\lambda(r)$ and $R^2(r)$ are given by 
\begin{eqnarray}
&&\lambda(r)= r^2-\Delta^2, \qquad \Delta^2 \equiv m^2+n^2-q^2-h^2 \label{lambda}, \\
&& R^2(r)=r^2+2mr+m^2+n^2 \label{R2} .
\end{eqnarray}

We will be interested in the subclass of solutions (\ref{eqn:TaubNUTsolution}) which are BPS, namely the ones which preserve 1/2 of the supersymmetry. These solutions are characterised by the following  BPS condition among the four charges \cite{Kallosh:1994ba}  (see also \cite{Argurio:2008zt}):
\begin{equation}
\label{BPScondp}
\Delta^2 =0   \quad \Leftrightarrow \quad m^2+n^2=q^2+h^2.
\end{equation}
Using  (\ref{lambda}) and (\ref{BPScondp}), we have $\lambda_{BPS}=r^2$ and the BPS metric is:
\begin{equation}
\label{Isom}
\dd s^2_{BPS}=-\frac{r^2}{R^2(r)} (\dd t+2\,n \, \cos \theta \, \dd \phi \, )^2 + \frac{R^2(r)}{r^2} \sum_{a=1}^{3}\dd x_a^2,
\end{equation} 
where the $x_a$'s are the flat Euclidean space coordinates.
In the particular case $n=0$, one finds again the extremal Reissner-Nordstr\"om black hole (or an extremal $0$ brane in four dimensions) characterised by the harmonic function $1+\frac{m}{r}$ (see for instance \cite{Argurio:1998cp}). Note that upon dimensional reduction of (\ref{eqn:TaubNUTsolution}) on time the three-dimensional Euclidean metric is
\begin{equation}
\dd s^2_{3D}= \dd r^2 + (r^2-\Delta^2) \dd \Omega_2^2.
\label{redu3d}
\end{equation}
When the BPS condition (\ref{BPScondp}) is fulfilled, the line element (\ref{redu3d}) is just the line element for three-dimensional Euclidean flat space in spherical coordinates.

This solution is an example of a solution to the Einstein-Maxwell system which allows for a time-like Killing vector. 
These solutions are referred to as {\it stationary}. More generally, choosing suitable coordinates, in this case so that $\kappa = \partial_t$, a convenient metric ansatz for this type of solutions is\footnote{In the original work on time-like reductions to $D=3$ \cite{Breitenlohner:1987dg}, it was noted that by further assuming spherical symmetry for the three-dimensional metric $g_3$, the Einstein-scalar Lagrangian in $D=3$ describes the geodesic motion of a fiducial particle moving on (a cone over) the moduli space $\mathcal{C}^{*}=\mathrm{SU(2,1)}/(\mathrm{SL}(2,\mathbb{R})\times \mathrm{U(1)})$. The dynamics of the particle on $\mathcal{C}^{*}$ thus corresponds to motion in the space of stationary, spherically symmetric solutions to Einstein-Maxwell theory. This point of view has been extended recently in \cite{Gaiotto:2007ag,Bergshoeff:2008be} in the context of solution-generating techniques. Moreover, for the special case of BPS solutions this philosophy was elaborated upon in \cite{Pioline:2006ni,Neitzke:2007ke}, where it has been shown that the classical phase space of the particle dynamics coincides with the six-dimensional coset space $\mathcal{Z}=\mathrm{SU(2,1)/(U(1)}\times \mathrm{U(1)})$ (known as the twistor space of $\mathcal{C}$, see e.g. \cite{Gunaydin:2007qq}). This result has been used as a starting point for quantizing BPS black hole solutions by (radial) quantization of the particle dynamics on $\mathcal{C}^{*}$ \cite{Gunaydin:2007bg}.} 
\begin{eqnarray}
\label{redukill}
\dd s^{2}_{4D} = - e^{- \phi} (\dd t + \omega)^2+ e^{\phi} \dd s_{3D}^2\, ,
\end{eqnarray}
where $\dd s^2_{3D}$ is the invariant line element on $\mathcal{M}_3$ corresponding to the pullback metric $g_3$ of the four dimensional metric $g$ to $\mathcal{M}_3$.\footnote{In the case of extremal solutions this will give a flat $g_3$ metric, and for extremal solutions with horizon this will generally give $\mathcal{M}_3$ a topology homeomorphic to $\mathbb{R}^3\backslash \{0\}$.} Here we see explicitly the origin of the graviphoton, given as the 1-form $\omega$.  The dynamics of these particle-like solutions is now governed by (\ref{eqn:SugraAction4d}) reduced on the orbits of the Killing vector $\kappa$. Explicitly
\be \label{eqn:3dlagt} \begin{split}
\mathcal{L}_{3D}' =  \frac{1}{4} \sqrt{g_3} \,\Big[\ &\big(R_3 - \tfrac{1}{2}  \p_{ \m} \phi\,  \p^{\m} \phi+ \tfrac{1}{4} e^{-2 \phi} \, F^2_{(2)} \big)\\
&-\, \big(e^{- \phi} \tilde{F}^2_{(2)} - 2\,  e^{ \phi}\, \p_{ \m} \chi_e\,  \p^{\m} \chi_e  \big)   \   \Big]\, ,
\end{split}\ee
where $F_{(2)}= \dd\omega$ and $\tilde{F}_{(2)}= \dd A- \dd\chi_e\wedge \omega  $. After dualization of the two field strengths,
\begin{eqnarray}
\label{eqn:DualisingMaxwell}
\tilde{F}^{\lambda \nu }&=& - \frac{ \epsilon^{\mu \lambda \nu}}{\sqrt{g_3}}\, e^{\phi}\, \p_{\mu}\chi_m\, , \\
\label{eqn:DualisingEinstein}
F^{\lambda \nu }&=&  \frac{ \epsilon^{\mu \lambda \nu}}{\sqrt{g_3}}\, e^{2 \phi}\, \Big( 2\, (\chi_m\, \p_{\mu} \chi_e - \chi_e\,\p_{\mu} \chi_m ) + \sqrt{2} \, \p_{\mu} \psi \Big)\, ,
\end{eqnarray}
with $ \epsilon^{r \theta \phi} =-1 $, we can  rewrite the three-dimensional Lagrangian (\ref{eqn:3dlagt}) using only the metric in three dimensions and the scalar fields $\phi, \chi_e, \chi_m$ and $\psi$. We hence get
\be  \label{eqn:3dStationaryLagrangian} \begin{split}
 \mathcal{L}_{3D}=  \frac{1}{4}\, \sqrt{g_3} \, \Big[ \ R_3 &- \tfrac{1}{2}\,  \p_{ \m} \phi\,  \p^{\m} \phi + 2\,  e^{ \phi}\, (\p_{ \m} \chi_e\,  \p^{\m} \chi_e+ \p_{ \m} \chi_m\,  \p^{\m} \chi_m) \\
&- e^{2 \phi}\, (\p_{ \m} \psi +  \sqrt{2}\,  \chi_m \, \p_{\mu} \chi_e - \sqrt{2} \, \chi_e\, \p_{\mu} \chi_m)^2 \, \Big]\, .
\end{split}
\ee
One sees directly that $\chi_e$ and $\chi_m$ appear completely symmetrically. The duality between the two gravitational scalars is less apparent, but is in fact present as we will see in Section \ref{sec:GroupAction}. Note the change of signs in front of the kinetic terms for the Maxwell scalars in comparison with \eqref{eqn:lag3ds}, revealing that the two scalar actions for space-like and time-like reductions are related by a ``Wick rotation'' of the Maxwell scalars $\chi_e$ and $\chi_m$.

The scalar part of (\ref{eqn:3dStationaryLagrangian}) can now be identified with a non-linear $\sigma$-model constructed on the coset $\mathcal{C}^* = \su/(\mathrm{SL}(2, \mathbb{R})\times \mathrm{U}(1))$, where the change in the quotient group has its origin in the different kinetic terms. Hence the construction of this theory is the same as the one used for space-like reduction, except that when deriving the coset Lagrangian we replace the Cartan involution $\theta$ by the temporal involution $\Omega_4$ (defined in Section \ref{app:k*}), having as fixed subalgebra $\mathfrak{k}^* = \mathfrak{sl}(2,\mathbb{R}) \oplus \mathfrak{u}(1)$. \\

Generally we still write maps $\mathcal{V}$ in the Borel gauge by using the expression (\ref{eqn:CosetElement}). The solution (\ref{eqn:TaubNUTsolution}) and (\ref{eqn:TaubNUTsolution2})  in terms of our four scalar fields can then be rewritten as:
\be \begin{split}\label{eqn:BPSscalars}
\phi(r)&= \ln \Big(\frac{(r+m)^2 + n^2 }{ r^2}\Big)\, , \\
\chi_e(r)&= \frac{h n + q (m+r)}{(r+m)^2 + n^2 } \,,  \\
\chi_m(r)&= \frac{nq - h(m+r)}{(r+m)^2 + n^2}\, , \\
\psi(r)&=   -\frac{\sqrt{2} n r}{(r+m)^2 + n^2}\, .
\end{split}
\ee
Using the Noether current of the coset Lagrangian, we can now relate the three-dimensional ``conserved'' $\sigma$-model quantities to the four-dimensional charges. At infinity ($r\to\infty$) the coset element $\mathcal{V}$ parametrized by (\ref{eqn:BPSscalars}) tends towards the identity element ${\bf{1}}$ of $\mathrm{SU}(2,1)$, implying that $\mathcal{J} \rightarrow \mathcal{P}$ when $r\to\infty$. 

Furthermore, one can compute using \eqref{eqn:Noether} (see also \cite{Bossard:2009at})
\begin{equation}\label{eqn:charges} \begin{split} \
Q &= \int_{S^{\infty}} \mathcal{J} =  \int_{S^{\infty}} \mathcal{P}\\
&=  - m\,  \mathbf{h_4}+ n\, ( \mathbf{e_{4,5}}-  \mathbf{f_{4,5}})- \frac{q}{\sqrt{2}} \, ( \mathbf{e_4}-  \mathbf{f_4}) +\frac{h}{\sqrt{2}}\, ( \mathbf{e_5}+ \mathbf{f_5}) \\
&= \left(\begin{array}{ccc}-m & \frac{i(h+iq)}{\sqrt{2}} & in \\\frac{ih + q}{\sqrt{2}} & 0 & \frac{-ih-q}{\sqrt{2}} \\-in & \frac{-ih+q}{\sqrt{2}} & m\end{array}\right),
\end{split}
\end{equation} 
where $m,n,q$ and $h$ are the four-dimensional charges. For the derivation of the elements in $\mathfrak{p}^*$ which is the orthogonal complement of $\mathfrak{k}^*$ with respect to the Killing form, see Section \ref{app:k*}.

The form (\ref{eqn:charges}) is preserved by coset transformations belonging to $\mathrm{K}^*$ since $\mathcal{J}=\mathcal{P}\in \mathfrak{p}^*$ and the reductive homogeneous space decomposition ensures that $[\mathfrak{k}^*,\mathfrak{p}^*]\subset\mathfrak{p}^*$. As we will argue below, the transformations from $\mathrm{K}^*$ preserve the asymptotic conditions on the BPS solution and therefore act (linearly) on the four BPS charges. The transformations that belong to $\mathrm{K}^*$ also preserve the asymptotic condition $\mathcal{V}\to \bf{1}$ as $r\to\infty$. The $\mathrm{SU}(2,1)$ transformations that are not part of $\mathrm{K}^*$ violate this asymptotic condition on the coset element and also map the Noether current $\mathcal{J}$ out of $\mathfrak{p}^*$. This makes the identification of the four BPS charges from the Noether current less evident. However, as the physical fields are related to the scalar fields of the coset $\mathcal{C}^*$ (mostly) by duality relations these transformations do preserve the asymptotic conditions on the physical fields. In fact, we will show below that  Iwasawa decomposable  transformations of $\mathrm{SU}(2,1)$ outside $\mathrm{K}^*$ act as gauge transformations on the physical fields and do not change the BPS charges. For this reason it will turn out to be sufficient to use the Noether charges from (\ref{eqn:charges}) and their transformation under $\mathrm{K}^*$ to find the orbits of BPS solutions under $\mathrm{SU}(2,1)$.

\setcounter{equation}{0}
\section{Action of $\mathrm{SU}(2,1)$ on BPS solutions}
\label{sec:GroupAction}

Let us now proceed to discuss the action of SU(2,1) on the stationary solution (\ref{eqn:TaubNUTsolution}). The $\sigma$-model is $\mathrm G$-invariant by construction and acting with $\su$ on the coset with the natural action from the right, gives an action on the maps $\mathcal{V}$. We thus generate new solutions when lifting the transformed $\mathcal{V'}$ back to four dimensions, using the explicit form (\ref{eqn:CosetElement}) of $\mathcal{V}$ and the dualisation relations (\ref{eqn:DualisingMaxwell}) and (\ref{eqn:DualisingEinstein}). Furthermore, we know that every single centered extremal solution is uniquely fixed by the values of the scalar fields at infinity, in terms of the four charges $m,\, n,\, q$ and $h$. This induces a representation of $\su$ on these four charges. Now, as the coset space $\mathcal{C}^*$ in the case of stationary solutions is not a Riemannian symmetric space, there is not a single coordinate system covering the whole coset \cite{Bossard:2009at}. However, our $\sigma$-model describes maps to a given coordinate patch. If the action of $\mathrm G$ takes us outside of this patch, we have no way of relating the new $\mathcal{V}'$ to the four-dimensional fields. Constructing our coordinate system on $\mathcal{C}^*$ via the Borel gauge (i.e. treating $\mathcal{V}$ as the composition of the exponential map and a map from $\mathcal{M}_3$ to $\mathfrak{a}\oplus \mathfrak{n}_+$), we will only consider the subspace of $\mathrm G$ where the elements are decomposable in Iwasawa form. These elements are exactly the ones that preserve our coordinate patch. Hence we can consider the action of $\su$ in three different cases, one for each of the subgroups $\mathrm N,\mathrm A$ and $\mathrm K^*$ in the local Iwasawa decomposition.\footnote{A similar analysis was recently done for five-dimensional minimal supergravity which gives rise to a $\mathrm G_2$ $\sigma$-model when  this theory is reduced on two commuting Killing vectors \cite{Compere:2009zh}.} Our four-dimensional interpretation will differ in all of these cases. Solution generation in the case of Einstein-Maxwell theory has been considered also in~\cite{Neugebauer:1969wr,Kinnersley,Bossard:2009at}.

\subsection{Action of the nilpotent generators}

Let us begin with the analysis of the nilpotent group $\mathrm N$. As elements in $\mathrm N$ do not take us outside of the Borel gauge, the analysis of how the scalar fields change is simply done by multiplying $\mathcal{V}$ described by (\ref{eqn:CosetElement}) from the right by elements in the group $\mathrm N$ of nilpotent elements, i.e. if $n \in \mathrm N$, $\mathcal{V} \rightarrow \mathcal{V}' = \mathcal{V}n^{-1}$.

As described in Section \ref{subsec:su21andextension}, the Lie algebra $\mathfrak{n}_+$ of $\mathrm N$ is generated by the three nilpotent generators $ \mathbf{e_4}, \mathbf{e_5}$ and $ \mathbf{e_{4,5}}$. Under the three corresponding nilpotent 1-parameter subgroups (with real parameters $\alpha,\beta$ and $\gamma$), the scalar fields given by (\ref{eqn:BPSscalars}) transform as follows; Under the group generated by $ \mathbf{e_4}$: 
\be \begin{split} 
\chi_e &\rightarrow \ \chi_e - \frac{1}{\sqrt{2}} \alpha\, ,\\
\psi &\rightarrow \ \psi - \alpha \chi_m .
\end{split}\ee
Under the group generated by $ \mathbf{e_5}$,
\be \begin{split} 
\chi_m &\rightarrow \ \chi_m - \frac{1}{\sqrt{2}} \beta \, ,\\
\psi &\rightarrow \  \psi + \beta \chi_e .
\end{split}\ee
Finally, under the group generated by $ \mathbf{e_{4,5}}$,
\begin{equation}
\psi \rightarrow \ \psi -  \frac{1}{\sqrt{2}}\gamma .
\ee
Looking at the dualisation relations (\ref{eqn:DualisingMaxwell}) and (\ref{eqn:DualisingEinstein}) we see that these transformations simply vanish when lifting the fields back to four dimensions. We can hence interpret the symmetry group $\mathrm N$ as appearing from realizing an inherent redundancy in the formulation of the three-dimensional theory, and is therefore not visible in four dimensions. Equivalently, the action of the nilpotent group $\mathrm N$ corresponds to gauge transformations.

\subsection{Action of the non-compact Cartan generator}

The action of the abelian group $\mathrm A$, with Lie algebra $\mathfrak{a}$, is generated by the non-compact Cartan generator $ \mathbf{h_4} \in \mathfrak{p}^*$. The action of $\mathrm A$, parametrized by $d \in \mathbb{R}$ is
\be \begin{split} 
\phi &\rightarrow \  \phi-2d \, , \\
\chi_e& \rightarrow \ e^{d} \chi_e \, , \\
\chi_m& \rightarrow \  e^{d} \chi_m \, , \\
\psi &\rightarrow \  e^{2d} \psi\, .
\end{split}\ee
Lifting this transformation back into the four-dimensional metric and Maxwell potential, we see that it is just a coordinate transformation coming from a rescaling of the time and space coordinates $t \rightarrow e^d t$ and $r \rightarrow e^{-d} r$. The solution is therefore unchanged. 

\subsection{Action of $\mathrm K^*$}
\label{sec:ActionOnCharges}

We have now discussed from a physical perspective why the generators in $\mathrm A \mathrm N \subset \mathrm G$ act trivially on a given solution. By restricting to transformations that stay in our coordinate patch on $\mathrm G/\mathrm K^*$, what is left to consider is now the non-compact group $\mathrm K^*$. It turns out that it is $\mathrm K^*$ that realizes electromagnetic and gravitational duality. From the expression of the Noether charge (\ref{eqn:charges}), we see that $\mathrm K^*$ transforms non-trivially the set of conserved four-dimensional charges, and it is natural to ask precisely how this action is realized. This is done by extracting the transformed charges as the coefficients in front of the generators of $\mathfrak{p}^*$ just as in the expression (\ref{eqn:charges}). The algebra $\mathfrak{k}^*$ of $\mathrm{K}^*$ is generated by the elements $u$, and $t_i$, $i=1,2,3$, where the $t_i$'s generate an $\mathfrak{sl}(2,\mathbb{R})$, commuting with $u$. The definition of $\mathfrak{k}^*$ is described in Section \ref{app:k*}. Treating these four Lie algebra generators separately, as we did in the case of $\mathrm N$ above, we find that the 1-parameter subgroup generated by $u$, with parameter $a$ generates the transformation
\begin{equation}
\left(\begin{array}{c}m \\n \\q \\h\end{array}\right) \rightarrow \left(\begin{array}{cccc}\cos(a) & \sin(a) & 0 & 0 \\-\sin(a) & \cos(a) & 0 & 0 \\0 & 0 & \cos(a) & \sin(a) \\0 & 0 & -\sin(a) & \cos(a)\end{array}\right) \left(\begin{array}{c}m \\n \\q \\h\end{array}\right) \, ,
\end{equation}
under finite transformations generated by $t_1$ with parameter $b$,
\begin{equation}
\left(\begin{array}{c}m \\n \\q \\h\end{array}\right) \rightarrow \left(\begin{array}{cccc}\cos(b) & -\sin(b) & 0 & 0 \\\sin(b) & \cos(b) & 0 & 0 \\0 & 0 & \cos(b) & \sin(b) \\0 & 0 & -\sin(b) & \cos(b)\end{array}\right) \left(\begin{array}{c}m \\n \\q \\h\end{array}\right) \, ,
\end{equation}
under $t_2$ with parameter $c$,
\begin{equation}
\left(\begin{array}{c}m \\n \\q \\h\end{array}\right) \rightarrow \left(\begin{array}{cccc}\cosh(c) & 0 & -\sinh(c) & 0 \\0 & \cosh(c) & 0 & -\sinh(c) \\-\sinh(c) & 0 & \cosh(c) & 0 \\0 & -\sinh(c) & 0 & \cosh(c)\end{array}\right) \left(\begin{array}{c}m \\n \\q \\h\end{array}\right) \, ,
\end{equation}
and finally under $t_3$ with parameter $d$,
\begin{equation}
\left(\begin{array}{c}m \\n \\q \\h\end{array}\right) \rightarrow \left(\begin{array}{cccc}\cosh(d) & 0 & 0 &\sinh(d)  \\0 & \cosh(d) & -\sinh(d)  & 0\\ 0 & -\sinh(d)  & \cosh(d) & 0 \\\sinh(d) \ &0 & 0 & \cosh(d)\end{array}\right) \left(\begin{array}{c}m \\n \\q \\h\end{array}\right) .
\end{equation}

We see here that $\mathrm K^*$ realizes a linear representation $R$ on the charges, identified with $R= {\bf 2}_{1} \oplus {\bf 2}_{-1}$ (decomposed with respect to $\mathfrak{sl}(2,\mathbb{R})\oplus \mathfrak{u}(1)$, where the subscript indicates the charge under $\mathrm{U}(1)$), acting as two boosts and two rotations. In particular we see that $u + t_1= \tfrac{2}{3}\, \mathbf{h_5}$ acts as a rotation of electric and magnetic charges. This is in agreement with  discussion in Section \ref{sec:levdecompo}, considering the commutation relations (see (\ref{comutenc45}))
\begin{equation}
[\mathbf{h_5},r^a] =3\, \tilde{r}_a \quad, \quad [\mathbf{h_5},\tilde{r}^a] =- 3\, r_a\, ,
\end{equation} 
and the identification in the dictionary (Table \ref{tab:Dictionary}) stating that the generators $r^a$ and $\tilde{r}^a$ correspond to the electric and magnetic parts of the Maxwell field. We can also see that $u-t_1$ acts as gravitational duality rotation (see for instance \cite{Bunster:2006rt,Argurio:2008zt}).

\subsection{Describing $\mathrm K^*$ as a subgroup of $\mathrm{SO}(2,2)$}
\label{sec:SO22Embedding}

From group theoretic considerations one can derive the above conclusions using rather general arguments. Let $\Delta^2 :\mathbb{R}^4 \rightarrow \mathbb{R} $ be the homogeneous quadratic form  defined  by
\begin{equation}
\Delta^2(v) = m^2+n^2-q^2-h^2,
\end{equation}
for $v=(m,n,h,q)\in\mathbb{R}^4$, and let 
\begin{equation}
B = \{ v \in \mathbb{R}^4 \backslash \{0\}; \Delta^2(v) = 0\}
\end{equation}
be the set of zeros of $\Delta^2$. We know from Section \ref{sec:StationarySolutions} that $B$ is isomorphic to the set of single centered BPS-solutions via the maps (\ref{eqn:BPSscalars}). The set $B$ is by definition preserved by the group $\mathrm S = \mathrm{SO}(2,2)$. The corresponding algebra is the real form $\mathfrak{so}(2,2)$ of the complex algebra $D_2= A_1 \oplus A_1$. It is defined as the set of matrices
\begin{equation}
X  = \left(\begin{array}{cc}X_1 & X_2 \\{X_2}^T & X_3\end{array}\right),
\end{equation}
where all $X_i$ are real $2\times2$ matrices, and $X_1$ and $X_3$ are skew symmetric \cite{Helgason:1978}. It is therefore spanned by the six generators
\begin{eqnarray}
b_1 = &\left(\begin{array}{cccc}0 & -1 & 0 & 0 \\1 & 0 & 0 & 0 \\0 & 0 & 0 & 1 \\0 & 0 & -1 & 0\end{array}\right)  & b_2 = \left(\begin{array}{cccc}0 & 0 & -1 & 0 \\0 & 0 & 0 & -1 \\-1 & 0 & 0 & 0 \\0 & -1 & 0 & 0\end{array}\right) \nonumber\\
b_3 = & \left(\begin{array}{cccc}0 & 0 & 0 & 1 \\0 & 0 & -1 & 0 \\0 & -1 & 0 & 0 \\1 & 0 & 0 & 0\end{array}\right) & b_4 = \left(\begin{array}{cccc}0 & 1 & 0 & 0 \\-1 & 0 & 0 & 0 \\0 & 0 & 0 & 1 \\0 & 0 & -1 & 0\end{array}\right) \\
b_5 =&\left(\begin{array}{cccc}0 & 0 & 0 & -1 \\0 & 0 & -1 & 0 \\0 & -1 & 0 & 0 \\-1 & 0 & 0 & 0\end{array}\right) & b_6 = \left(\begin{array}{cccc}0 & 0 & 1 & 0 \\0 & 0 & 0 & -1 \\1 & 0 & 0 & 0 \\0 & -1 & 0 & 0\end{array}\right) \nonumber \\
\nonumber
\end{eqnarray}
and the choice of base here is to streamline the  analysis afterwards. In fact, $\mathfrak{so}(2,2) \cong \mathfrak{sl}(2,\mathbb{R}) \oplus  \mathfrak{sl}(2,\mathbb{R})$, which is easily seen in this basis as $b_1,b_2$ and $b_3$ generate one $\mathfrak{sl}(2,\mathbb{R})$ summand, and $b_4,b_5$ and $b_6$ the other. The two compact generators are $b_1$ and $b_4$.\\

The question of how $\mathrm K^*$ acts on the set of charges can then be transformed into the question of how $\mathrm K^*$ embeds into $\mathrm S$ as we know from the previous section that $\mathrm K^*$ preserves the BPS-condition. The Lie algebra isomorphism $\mathfrak{so}(2,2) \cong \mathfrak{sl}(2,\mathbb{R}) \oplus \mathfrak{sl}(2,\mathbb{R})$ induces the Lie group isomorphism $ \mathrm{SO}(2,2)_0 \cong \mathrm{SL}(2,\mathbb{R}) \times \mathrm{SL}(2,\mathbb{R})$. Here  $\mathrm{SO}(2,2)_0$ indicates the component connected to the identity. We also know that $ \mathfrak{sl}(2,\mathbb{R})$ contains two non-compact generators and one compact. Comparing with $\mathrm K^*$ whose Lie algebra we know contains two compact generators and two non-compact generators, forming $\mathfrak{k}^* = \mathfrak{sl}(2,\mathbb{R}) \oplus \mathbb{R}u$, the embedding $I : \mathrm K^* \hookrightarrow \mathrm S$ is therefore given by lifting the natural (up to automorphisms) differential $\mathrm{d}I_e : \mathfrak{k}^* \rightarrow \mathfrak{so}(2,2)$ at the identity $e\in\mathrm{K}^*$, mapping compact generators to compact generators. More concretely, if $b_{i=1,...6}$ are the generators of $\mathfrak{so}(2,2)$ 
\be \begin{split}
\mathrm{d}I_e(t_i)& = b_i \, ,\\
\mathrm{d}I_e(u) &= b_4 \, ,
\end{split}\ee
where $b_1,b_2$ and $b_3$ generate one $\mathfrak{sl}(2)$ summand in $\mathfrak{so}(2,2)$, and $b_4$ is the compact generator in the other. The normalization is not fixed, but is up to redefinition of the generators of the two Lie algebras. By looking at the action of $I(\mathrm K^*)$, we now see that $b_1-b_4$ generates an Ehlers $\mathrm{U}(1)$-group rotating $m,n$ into each other, $b_1+b_4$ generates a $\mathrm{U}(1)$ rotating $q,h$ and the non-compact $b_2$ and $b_3$ act as boosts. This is in complete agreement with the analysis in Section \ref{sec:ActionOnCharges} above.

\subsection{The space of BPS solutions}

Now as we know how $\su$, or more precisely, how $\mathrm K^*$ acts on the set of  single centered BPS-solutions we can ask the question about duality orbits. Namely, starting with one BPS-solution, can we generate all the others by acting with $\mathrm K^*$? If $\su$ is to be considered as a spectrum generating group \cite{Cremmer:1997xj}, this must clearly be the case. Here the fact that $\mathrm K^*$ is non-compact will be of crucial importance. In fact, we have the following result.

\begin{theorem}
\label{thm:modulispace}
The group $\mathrm K^*$ acts transitively on the set of single centered BPS-solutions, so that $B \cong \mathrm K^*/\mathbb{R}$.
\end{theorem}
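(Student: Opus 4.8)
The plan is to identify the set $B$ of single-centered BPS solutions with a homogeneous space for $\mathrm K^*$ and then exhibit the stabilizer of a conveniently chosen base point. First I would recall the identification, established around equations \eqref{eqn:BPSscalars}, that a single-centered BPS solution is completely determined by the quadruple of charges $v=(m,n,q,h)\in\mathbb{R}^4$ subject to the single BPS constraint \eqref{BPScondp}, i.e. by a nonzero point of the cone $B=\{v\in\mathbb{R}^4\setminus\{0\}:\Delta^2(v)=0\}$. So the claim reduces to showing that the linear $\mathrm K^*$-action on $\mathbb{R}^4$ worked out in Section~\ref{sec:ActionOnCharges} (equivalently, the embedding $I:\mathrm K^*\hookrightarrow\mathrm{SO}(2,2)$ of Section~\ref{sec:SO22Embedding}) restricts to a transitive action on $B$, and to compute the isotropy subgroup at one point.

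Second, I would choose the base point $v_0$ corresponding to the extremal Reissner--Nordstr\"om solution, $v_0=(m,0,0,0)$ with $m>0$ (a genuine BPS point since $m^2=0+0+0$ forces $m=0$... here one must instead take $v_0=(m,0,q,0)$ with $m^2=q^2$, $m,q>0$, or more symmetrically any point on the cone; I would pick $v_0=(1,0,1,0)$). Using the four explicit one-parameter flows of Section~\ref{sec:ActionOnCharges}: the boost generated by $t_2$ rescales the ``$(m,q)$'' block while the boost generated by $t_3$ and the rotations generated by $u$ and $t_1$ move charge between the $(m,q)$ and $(n,h)$ pairs. Concretely, given an arbitrary $v=(m,n,q,h)$ on the cone, I would first use the rotation $u+t_1=\tfrac23\mathbf{h_5}$ acting on $(q,h)$ and the Ehlers rotation $u-t_1$ acting on $(m,n)$ to bring the solution to the form $(m',0,q',0)$ with $m'^2=q'^2$ (possible because the $\mathrm{SO}(2)\times\mathrm{SO}(2)$ these generate acts on each $2$-plane as rotations, and any vector can be rotated to its first axis); then use the boost $t_2$ to rescale $(m',q')$ to $(1,0,1,0)$ — rescaling is possible precisely because the boost parameter ranges over all of $\mathbb{R}$, giving an unbounded multiplicative factor $e^{\pm c}$. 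This is where non-compactness of $\mathrm K^*$ is essential, exactly as flagged in the text. Surjectivity onto $B$ then follows by running the argument backwards.

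Third, for the stabilizer I would compute $\mathrm{Stab}_{\mathrm K^*}(v_0)$ directly in the $\mathrm{SO}(2,2)$ picture: the isotropy of a null vector under $\mathrm{SO}(2,2)$ is a well-known parabolic subgroup, but intersected with the image $I(\mathrm K^*)\cong(\mathrm{SL}(2,\mathbb{R})\times\mathrm{U}(1))$ one should get a one-dimensional subgroup. I expect the stabilizer to be generated by the combination of a boost and a rotation that fixes the chosen null direction — in the basis of Section~\ref{sec:SO22Embedding} this is the subgroup fixing $(1,0,1,0)$, which is one-dimensional and (after checking it is closed and connected, or passing to the relevant component) abelian, hence isomorphic to $\mathbb{R}$. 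Counting dimensions, $\dim\mathrm K^*-\dim\mathrm{Stab}=4-1=3=\dim B$ is consistent with transitivity, and the orbit-stabilizer theorem then gives $B\cong\mathrm K^*/\mathbb{R}$ as claimed; one should note that $\mathrm K^*$ has the form $\mathrm{SL}(2,\mathbb{R})\times\mathrm{U}(1)$ whose identity component is all of it except that $\mathrm{SL}(2,\mathbb{R})$ is connected, so the orbit is connected and, since $B$ itself is connected, exhausts $B$.

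The main obstacle I anticipate is the careful verification of the stabilizer: one must make sure the ``obvious'' one-parameter isotropy group is actually the full stabilizer inside $\mathrm K^*$ (not a larger discrete extension), which requires solving the fixed-point equations $gv_0=v_0$ explicitly in $\mathrm{SL}(2,\mathbb{R})\times\mathrm{U}(1)$ and confirming the solution set is exactly a one-parameter group; and one must be attentive to connectedness/component issues (whether $\mathrm K^*$ is connected, whether the quotient $\mathrm K^*/\mathbb{R}$ should be read as an identity-component statement) so that the homeomorphism $B\cong\mathrm K^*/\mathbb{R}$ is literally correct rather than correct only up to connected components. A secondary, more bookkeeping-level point is to double-check that the transitivity argument never needs to leave the Iwasawa-decomposable patch of $\mathrm{SU}(2,1)$ — but since we are working entirely within $\mathrm K^*$ acting linearly on charges, this is automatic.
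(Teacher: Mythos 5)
Your proof is correct and uses the same essential ingredients as the paper --- the compact torus inside $\mathrm K^*$ to handle the angular part of the cone, a non-compact boost to fix the overall scale, and a one-dimensional stabilizer --- but you organize the transitivity argument differently. The paper first passes to the projectivized cone $\mathrm{P}B$, identifies it with $\mathrm{P}\mathbb{R}^1\times\mathrm{P}\mathbb{R}^1\cong S^1\times S^1$ via an explicit Segre-type bijection $F$, lets $\mathrm{U}(1)\times\mathrm{U}(1)\subset\mathrm K^*$ act transitively there by complex multiplication, and only then uses the boost $\exp(-\lambda b_2)$ to rescale a diagonal representative $(k,k,k,k)$. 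You skip the projectivization entirely: since $u$ and $t_1$ generate independent $\mathrm{SO}(2)$ rotations of the $(m,n)$ and $(q,h)$ planes, and the BPS condition says exactly that these two planes carry equal norms, you rotate each pair to $(\rho,0)$ and then apply the $t_2$-boost, which acts on $(\rho,0,\rho,0)$ by the factor $e^{-c}$. This is more elementary and avoids having to write down $F$; what the paper's route buys is the clean statement $\mathrm{P}B\cong S^1\times S^1$ as the ruled quadric, which makes the transitivity of the torus manifest. Your stabilizer discussion is the one place left slightly open: you correctly flag that one must solve $gv_0=v_0$ explicitly and check there are no extra components and that the one-parameter group is non-compact. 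The paper settles this by exhibiting the stabilizer of $(k,k,k,k)$ as $\exp\bigl(c(b_1+b_3)\bigr)$, whose generator is nilpotent in the relevant $\mathfrak{sl}(2,\mathbb{R})$ factor, hence gives a unipotent subgroup $\cong\mathbb{R}$; at your base point $(1,0,1,0)$ the stabilizer is a conjugate of this, so the same conclusion holds.
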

\begin{proof}
Consider the set $B$. As it is defined by the homogeneous form $\Delta^2$, we can consider the projective descendant of $B$, namely $\mathrm{P}B =  \{ v \in \mathrm{P}\mathbb{R}^3 ; \Delta^2(v) = 0\}$, where $\mathrm{P}\mathbb{R}^3$ is the three-dimensional projective space. In analogy with the isomorphism $\mathrm{SO}(2,2)_0 \cong \mathrm{SL}(2,\mathbb{R}) \times \mathrm{SL}(2,\mathbb{R})$, we get the isomorphism $\mathrm{P}B \cong \mathrm{P}\mathbb{R}^1\times  \mathrm{P}\mathbb{R}^1$ via a bijection $F :  \mathrm{P}\mathbb{R}^1\times  \mathrm{P}\mathbb{R}^1 \rightarrow \mathrm{P}B$ given by the expression
\begin{equation}
F([x_0,x_1], [y_0,y_1]) = [x_0 y_0+x_1 y_1, y_0x_1-y_1x_0,x_0y_0-x_1y_1, x_0y_1+x_1y_0] .
\end{equation}
The action of $I(\mathrm K^*)$ on $B$ descends to an action on $\mathrm{P}B$, and hence to an action on $\mathrm{P}\mathbb{R}^1\times  \mathrm{P}\mathbb{R}^1$ by $F$. Furthermore, $\mathrm{P}\mathbb{R}^1 \cong S^1$ via the map $f([x_0,x_1]) = \arctan(x_0/x_1)$ (schematically), so that we have a diffeomorphism $\mathrm{P}B \cong S^1 \times S^1$. In fact, the Lie subgroup $\mathrm{U}(1) \times \mathrm{U}(1) \subset \mathrm K^*$, generated by the subalgebra $\mathbb{R} t_1 \oplus \mathbb{R} u$, acts transitively on these two circles by complex multiplication. We conclude that $\mathrm K^*$ acts transitively on $\mathrm{P}B$. Furthermore $\mathrm{U}(1) \times \mathrm{U}(1)\subset \mathrm K^*$ contains an element acting as $v \rightarrow -v$ for $v \in B$. Let us now turn to the action of $\mathrm K^*$ on $B$. Due to the above analysis, it is sufficient to consider charge vectors with all charges positive and equal. As $\mathrm K^*$ contains non-compact generators it is now in fact possible to reach all these charge vectors, being given one. The explicit 1-parameter Lie subgroup is $S(\lambda) = \exp\big(-\lambda b_2\big)$, acting so that $(k,k,k,k) \mapsto  \exp{\lambda} (k,k,k,k)$. This proves our assertion, noting that the 1-parameter subgroup stabilizing a diagonal vector  $(k,k,k,k)$ is $\mathrm{Stab}(c) \equiv \exp{c (b_1+b_3)} \cong \mathbb{R}$.
\end{proof}

If we consider this proof from the physical point of view, it may seem surprising that $\mathrm K^*$ is transitive on $\mathrm{P}B$ by only using $b_1$ and $b_4$ as these do not mix gravitational and electromagnetic degrees of freedom. This is in fact true as for four charges $m,n,q,h$ to fulfill the BPS-condition we need both non-zero gravitational and non-zero electromagnetic charges and to generate new solutions we can treat these two sectors separately.  Furthermore, we can compare the result of Theorem \ref{thm:modulispace} with the expression for the $1/2$-BPS strata in \cite{Bossard:2009at}, (equation (5.5)) and see that the two results are in full agreement.

\subsection{The quantum moduli space and string theory}

Our analysis so far has been performed purely at the classical level. In the full quantum theory it is expected that the classical moduli space  is affected by quantum corrections. These can be both of perturbative and of non-perturbative origin and they are not understood generally. The only exceptions are cases where there are additional duality symmetries that constrain them.

In general, electric and magnetic charges are subject to quantization in the sense of Dirac. For example, the $28+28$ electric and magnetic charges in type II string theory on a six-torus break the  classical continuous $\mathZ E_{7(7)}$ symmetry group to the discrete subgroup $\mathZ E_{7(7)}(\mbb{Z})$ \cite{HullTownsend}
\be
\mathZ E_{7(7)}(\mbb{Z})=\mathZ E_{7(7)}(\mbb{R})\cap\mathrm{Sp}(56, \mbb{Z})\,,
\ee
where $\mathrm{Sp}(56, \mbb{Z})$ is the symmetry group of the 56-dimensional symplectic lattice of electric and magnetic charges, associated with the 28 abelian vector fields in $D=4$. 

It has furthermore been speculated that after further reduction of this maximal supergravity theory on a space-like circle $S^1$ to $D=3$, the duality group should be enhanced to some discrete subgroup $\mathrm G(\mbb{Z})$ of the classical hidden symmetry group $\mathZ E_{8(8)}(\mbb{R})$ \cite{HullTownsend}. However, in three dimensions it is by no means clear how to define the group $\mathrm G(\mbb{Z})$, since there are no vector fields whose associated charge lattice provides a natural integral structure. Moreover, in $D=3$ one is forced to take into account gravitational effects since the moduli space includes components of the four-dimensional metric.
It was recently argued that the three-dimensional duality groups that arise in this way do not act nicely on the gravitational part of the moduli space, and there is therefore no natural candidate for a discrete subgroup $\mathrm G(\mbb{Z})$ which should be preserved in the quantum theory in $D=3$~\cite{Bossard:2009at}.

Returning to the ${\cal N}=2$ theory discussed in this paper, the situation is not very different at face value. However, we propose that the $c$-map \cite{CecottiFerraraGirardello,Behrndt,VandorenVroome} improves the situation. The $c$-map can be thought of as a type of $T$-duality in $D=3$, where it exchanges the moduli space obtained from the reduction of the Einstein-Maxwell sector with that obtained by the reduction of a universal hypermultiplet sector that can be added to the ${\cal N}=2$ theory in $D=4$ and that is present in any Calabi-Yau reduction of type IIA superstring theory \cite{CecottiFerraraGirardello}.\footnote{We ignore, i.e. set to zero, the effects of the other hyper- and vectormultiplets that arise in the reduction.} The point here is that the universal hypermultiplet in $D=4$ is also described classically by a coset space $\mathrm{SU}(2,1)/\mathrm{SU}(2)\times\mathrm{U}(1)$. The quantum corrections to this universal hypermultiplet moduli space are not fully understood, but recently~\cite{UHPaper} it has been proposed that a promising candidate for the discrete group $\mathrm{G}(\mathbb{Z})$ in this case is the so-called the Picard modular group $\mathrm{SU}(2,1;\mathbb{Z}[i])$, whose generators can be given an intuitive physical interpretation in terms of Peccei-Quinn symmetries, electric-magnetic duality and S-duality. Assuming this to be the correct quantum duality group of the universal hypermultiplet and the validity of the $c$-map at the quantum level would imply that the correct moduli space and quantum symmetry group of Einstein-Maxwell theory with one Killing vector is also encoded in the Picard group. A further verification of these claims is outside the scope of this work. See \cite{UHPaper} for more detailed discussions of these issues.

    \chapter{On infinite-dimensional symmetries of pure $\mathcal{N}=2$ supergravity }\label{chap:infinitedimsymsu21}

So far we have analyzed the role of the duality group SU(2,1) for understanding BPS solutions in $\mathcal{N}=2$ supergravity in four dimensions. This was done by performing a dimensional reduction to three dimensions, where the Lagrangian corresponds to Einstein gravity coupled to scalars parametrizing a Riemannian coset space $\mathcal{C}$ in the case of space-like reduction, and a pseudo-Riemannian coset space $\mathcal C^*$ in the case of time-like reduction.\\
 
Motivated by this, it is interesting to assume that the Einstein-Maxwell theory exhibits a hidden nonlinearly realized Kac-Moody symmetry group SU(2,1)$^{+++}$, formally arising in the reduction to zero dimensions~\cite{Julia:1982gx}, but as a conjectured symmetry of the full model~\cite{West:2001as}. The associated Kac-Moody algebra $\mathfrak{su}(2,1)^{+++}$ can be obtained by adding three nodes $\alpha_1, \alpha_2$ and $\alpha_3$ to the Tits-Satake diagram of $\asu$ displayed in Figure \ref{fig:su21} (see also Section \ref{subsec:su21andextension}). The Tits-Satake diagram of  $\mathfrak{su}(2,1)^{+++}$ is given in Figure \ref{fig2:su21+++}.\\

 \setcounter{equation}{0}
\section{On $\asuppp$} \label{sec:su21+++}

\begin{figure}[h!]
\begin{center}
\begin{pgfpicture}{0cm}{-1.5cm}{9cm}{2.5cm}
\pgfnodecircle{Node1}[stroke]{\pgfxy(2,0.5)}{0.25cm}
\pgfnodecircle{Node2}[stroke]
{\pgfrelative{\pgfxy(1.5,0)}{\pgfnodecenter{Node1}}}{0.25cm}
\pgfnodecircle{Node3}[stroke]
{\pgfrelative{\pgfxy(1.5,0)}{\pgfnodecenter{Node2}}}{0.25cm}
\pgfnodecircle{Node4}[stroke]
{\pgfrelative{\pgfxy(1.5,1)}{\pgfnodecenter{Node3}}}{0.25cm}
\pgfnodecircle{Node5}[stroke]
{\pgfrelative{\pgfxy(1.5,-1)}{\pgfnodecenter{Node3}}}{0.25cm}

\pgfnodebox{Node6}[virtual]{\pgfxy(2,0)}{$\alpha_{1}$}{2pt}{2pt}
\pgfnodebox{Node7}[virtual]{\pgfxy(3.5,0)}{$\alpha_{2}$}{2pt}{2pt}
\pgfnodebox{Node8}[virtual]{\pgfxy(5,0)}{$\alpha_{3}$}{2pt}{2pt}
\pgfnodebox{Node9}[virtual]{\pgfxy(6.5,-1)}{$\alpha_{4}$}{2pt}{2pt}
\pgfnodebox{Node10}[virtual]{\pgfxy(6.5,2)}{$\alpha_{5}$}{2pt}{2pt}

\pgfnodeconnline{Node1}{Node2} \pgfnodeconnline{Node2}{Node3}
\pgfnodeconnline{Node3}{Node4} \pgfnodeconnline{Node3}{Node5} 
\pgfnodeconnline{Node4}{Node5}
\pgfsetstartarrow{\pgfarrowtriangle{4pt}}
\pgfsetendarrow{\pgfarrowtriangle{4pt}}
\pgfnodesetsepend{5pt}
\pgfnodesetsepstart{5pt}
\pgfnodeconncurve{Node4}{Node5}{-10}{10}{1cm}{1cm}
\end{pgfpicture}
\caption {  \small {\sl Tits-Satake diagram of $\asuppp$.} }
\label{fig2:su21+++}
\end{center}
\end{figure}
In this section, we will use  the definitions and  properties of $\asuppp$ developed in Section \ref{subsec:su21andextension}  and we will  explain the construction of a non-linear $\s$-model on the infinite- dimensional coset space $\suppp/\mathrm K^{*\,+++}$, generalizing the finite-dimensional $\sigma$-model on G/K$^{*}$ considered in Section \ref{SU(2,1)SigmaModel}. Here $\mathrm K^{*\,+++}$ is the subgroup of $\suppp$ consisting of those generators which are pointwise fixed by the temporal involution $\Omega_1$, defined such that we may identify the index $1$ by a time coordinate (see Section \ref{PERTABsec:tempinv}). To this end we will use the level decomposition of the adjoint representation of $\asuppp$ into representation of an $\mathfrak{sl}(4, \mbb R)$ subalgebra defined by the nodes $\alpha_1$, $\alpha_2$ and $\alpha_3$ in Figure \ref{fig2:su21+++}. This level decomposition was performed in details in Section \ref{sec:levdecompo} and it is represented in Table \ref{tab2:levdecsu} up to level $4$.  We will see in Section \ref{sec:CosmologicalModel} that this representation content up to level 2 where the generator $r^{a_1 a_2}$ is projected out, can be associated with the bosonic field content of pure $\mathcal N=2$ supergravity in $D=4$. 

\begin{table}[h]
\begin{center}
\begin{tabular}{|c|c|}
\hline
$L= \ell_1 +\ell_2$ &  Generator of $\asuppp$ \\
\hline \hline
$0$  & $K^a_{\ b}$\\
$0$  & $i \, T$\\
$1$  & $r^{\, a} = R^a + \tilde{R}^a$\\
$1$  & $\tilde{r}^{\, a} = i (R^a - \tilde{R}^a) $\\
$2$  & $s^{\,s_{1} s_{2}} = - 2\, i\, S^{\, s_{1} s_{2}}$\\
$2$  & $r^{\,a_{1} a_{2}}=  2\, R^{\, a_{1} a_{2}}$\\
$3$  & $r^{\, a_{0}|a_{1} a_{2}}$\\
$3$   & $\tilde{r}^{\, a_{0}|a_{1} a_{2}}$\\
$4$  & $r^{\, a_{0}|a_{1} a_{2} a_{3}}$\\
$4$  & $r^{\, a_{1} a_{2} | a_{3} a_{4}}$\\
$4$  & $r^{\,s_{1} s_{2} |a_{3} a_{4} }$\\
$\vdots$ & $\vdots$\\
\hline
\end{tabular}
\caption{\sl \small Level decomposition of $\asuppp$ under $\mathfrak{sl}(4,\mathbb{R})$ up to level $4$. The indices  $a_{i}$ are antisymmetric  while the indices  $s_{i}$ are symmetric. Note that the generators from the level $L=3$ with mixed Young symmetries are subject to constraints.  }
\label{tab2:levdecsu}
\end{center}
\end{table}

\subsubsection{ Cartan and  temporal involutions} \label{sec:carttemporalin}

For the $\sigma$-models to be constructed in this  section we need to fix a (local) subgroup of $\mathrm{SU}(2,1)^{+++}$. We require two different choices, denoted $\mathrm{K}^{+++}$ and $\mathrm{K}^{*+++}$, leading to different coset spaces and that are defined by appropriate involutions at the level of the $\mathfrak{su}(2,1)^{+++}$ Lie algebra. The level decomposition discussed above does not depend on the choice of this subalgebra but the $\sigma$-model to be studied below does.

The first choice of subalgebra, $\mathfrak{k}^{+++}$, is defined by the Cartan involution $\theta$. Its action on $\asuppp$ may be read off from the Tits-Satake diagram of $\asuppp$ (see (\ref{eqn:thetasu21+++})). It has the following action on the generators of $\asuppp$, 
\be \label{eqn:thetainv} \begin{split} \begin{aligned}
\theta (r^a)&= - r_a, &\quad \theta(r_a)&= -r^a,\\
\theta (\tilde{r}^a)&= \tilde{r}_a, &\quad \theta(\tilde{r}_a)&= \tilde{r}^a,\\
\theta (s^{ab})&= s_{ab}, &\quad \theta(s_{ab})&= s^{ab},\\
\theta (r^{ab})&=- r_{ab}, &\quad \theta(r_{ab})&=- r^{ab},
\end{aligned}   \end{split} \ee
while on level $L=0$ it has the familiar action
\be \label{eqn:thinkab}
\theta (K^a_{\  b}) = -\, K^b_{\  a}\quad \theta(iT) = iT\,.
\ee
The Cartan decomposition therefore reads
\be
\asuppp = \mathfrak{k}^{+++} \oplus \mathfrak{p}^{+++},
\ee
where the subalgebra $\mathfrak{k}^{+++}$ is defined as the fixed point set under the Cartan involution, while $\mathfrak{p}^{+++}$ contains the generators which anti-invariant under $\theta$.
The generators of $\mathfrak{k}^{+++}$ reads
\be \label{eqn:k3p} \begin{split} 
\mathfrak{k}^{+++}&= \{ x \in \asuppp\,  :\,  \theta(x)= x \}
\\
&= \{ iT,\,j^{ab},\, (r^a- r_a),\,  (\tilde{r}^a+ \tilde{r}_a),\, (s^{ab}+ s_{ab}),\, (r^{ab}- r_{ab}), \ldots  \}\, ,
\end{split}
\ee
where $j^{ab}= K^a_{\ b }- K^{b}_{\ a }$, and those of $\mathfrak{p}^{+++}$ are
\be \label{eqn:p3p} \begin{split} 
\mathfrak{p}^{+++}&= \{ x \in \asuppp \, :\,  \theta(x)= - x \}
\\
&= \{ k^{ab},\,  (r^a+ r_a),\,  (\tilde{r}^a-\tilde{r}_a),\,  (s^{ab}- s_{ab}),\, (r^{ab}+ r_{ab}), \ldots  \}\, ,
\end{split}
\ee
where $k^{ab}= K^a_{\ b }+ K^{b}_{\ a }$.\\

The second choice of subalgebra, $\mathfrak{k}^{*+++}$, is introduced via the so-called temporal involution \cite{Englert:2003py}.
The possible existence of a Kac-Moody symmetry $\mathrm G^{+++}$ motivated the construction of a Lagrangian formulation explicitly invariant under $\mathrm G^{+++}$ . This Lagrangian $\mathcal{S}_{\suppp}$ is defined in a reparametrisation invariant way on a world-line parameter $\xi$, apriori unrelated to space-time, in terms of fields living in a coset $\suppp/ \mathrm K^{*\,+++}$. As the metric $g_{\mu \nu}$ at a fixed space-time parametrises the coset $\mathrm{GL}(D)/ \mathrm{SO}(1,D-1)$, the subgroup $ \mathrm K^{*\,+++}$ must contain the Lorentz group. As $\mathrm{SO}(1, D-1)$ is non-compact, we cannot use the Cartan involution $\theta$ to construct $\mathrm K^{*\,+++}$ that is now non-compact. Rather we will use the temporal involution $\Omega_i$ from which the required non-compact generators of $\mathrm K^{*\,+++}$ can be selected. The temporal involution $\Omega_i$ generalises the Cartan involution $\theta$ described in (\ref{eqn:thetainv}) and \eqref{eqn:thinkab}  to allow the identification of the index $i$ as a time coordinate. It is defined by
\be \label{eqn:temporal} \begin{split}
\Omega_i (iT) &= iT,\\
\Omega_i(K^a_{\ b })&= - \epsilon_a  \epsilon_b\,  K^b_{\ a}, \\
\Omega_i (r^a)&= - \epsilon_a\,  r_a, \\
\Omega_i (\tilde{r}^a)&= \epsilon_a\,  \tilde{r}_a,\\
\Omega_i  (s^{ab})&= \epsilon_a \epsilon_b\,  s_{ab}, \\
\ \Omega_i  (r^{ab})&=-\epsilon_a \epsilon_b \,  r_{ab}, 
\end{split}
\ee
with $\epsilon_a= -1$ if $a=i$ and $\epsilon_a= 1$ otherwise. \\

  \setcounter{equation}{0}
\section{On $\asupp \subset \asuppp$ and $\sigma$-models } \label{sec:su21++}
We now turn our attention to one-dimensional $\sigma$-models based on the group $\supp$. The $\mathfrak{g}^{++}$ content of the $\mathfrak{g}^{+++}$-invariant actions $\mathcal{S}_{\mathrm G_{+++}}$ has been analysed in reference \cite{Englert:2004ph} where it was shown that two distinct actions invariant under the overextended Kac-Moody algebra $\mathfrak{g}^{++}$ exist.  We will apply this analysis to $\mathfrak g=\asu$ and study the two actions invariant under $\supp$.\\

The first one $\mathcal{S}_{\supp_C}$ is called the cosmological $\sigma$-model and constructed from $\mathcal{S}_{\suppp}$ by performing a truncation putting consistently to zero some fields.  The corresponding $\asupp$  algebra is obtained from $\asuppp$ by deleting the node $\alpha_1$ from the Tits-Satake diagram of $\asuppp$ depicted in Figure \ref{fig2:su21+++}. The involution used to construct the action $\mathcal{S}_{\suppp}$  is the temporal involution $\Omega_1$ (defined in (\ref{eqn:temporal})) such that  coordinate $1$ is time-like. This implies that the truncated theory $\mathcal{S}_{\supp_C}$ carries a Euclidean signature in space-time. The $\mathcal{S}_{\supp_C}$ is the generalisation of the $\mathZ E_8^{++}=\mathZ E_{10}$ invariant action of reference \cite{Damour:2002cu} proposed in the context of M-theory and cosmological billiards. The parameter $\xi$ along the world-line will then be identified with the time coordinate and we will see in Section \ref{sec:CosmologicalModel} that this action restricted to a defined number of  levels is equal to the corresponding $\mathcal{N}=2$ supergravity in $D=4$ in which the fields depend only  on this time coordinate.

A second $\supp$-invariant action $\mathcal{S}_{\supp_B}$, called the brane model, is obtained from $\mathcal{S}_{\suppp}$ by performing the same consistent truncation {\it after} conjugation by the Weyl reflection $s_{\alpha_1}$ in $\asuppp$. Here, $s_{\alpha_1}$ is the Weyl reflection in the hyperplane perpendicular to  the simple root $\alpha_1$ corresponding to the node 1 of Figure \ref{fig2:su21+++}. The non-commutativity of the temporal involution $\Omega_1$ with the Weyl reflection \cite{Keurentjes:2004bv, deBuyl:2005it, Keurentjes:2004xx} implies that this second action is inequivalent to the first one (see Section \ref{sec:signa} where it is recalled the consequence of  $s_{\alpha_1}$ on the time identification).  In $\mathcal{S}_{\supp_B}$, $\xi$ is identified with a space-like direction.  For a generic $\mathrm G$, the $\mathrm G^{++}$-brane model describes intersecting extremal brane configurations smeared in all directions but one \cite{Englert:2003py, Englert:2004it}.

\subsection{Infinite-dimensional cosmological $\sigma$-model }
\label{sec:CosmologicalModel}

In this section we will analyze how well the suggestions in \cite{Damour:2002cu} apply to the pure $\mathcal{N}=2$ theory. More concretely, we will investigate what features of this theory can be described using a non-linear $\sigma$-models over an infinite-dimensional coset space, as a generalization of what we have seen in the case of the scalar Lagrangian (\ref{eqn:3dStationaryLagrangian}). In fact, we will find a correspondence between the supergravity fields and the parameters in a one-dimensional $\sigma$-model. For example, as we will see, the dynamics of some solutions to the supergravity equations of motion can be modelled by motion on a coset space $\supp/\mathrm K^{++}$, where $\mathrm K^{++}$ is the compact subgroup with Lie algebra $\mathfrak{k}^{++} \subset \mathfrak{su}(2,1)^{++}$. The results of this section will hence be a map between parts of the cosmological $\sigma$-model and parts of the supergravity. This confirms that the general conjecture describing supergravities with overextended Kac-Moody groups holds, to the same extent, also in the present case of pure $\mathcal{N}=2$ supergravity where the symmetry group is in a non-split form. In analogy with the discussion in Section \ref{sec:SpacelikeKillingVector} and \ref{sec:StationarySolutions}, the dynamics will be modelled by a non-linear $\sigma$-model of maps from $\mathcal{M}_1 \cong \mathbb{R}$ to $\supp/\mathrm K^{++}$. We will now formally construct this $\sigma$-model, and perform a check (as for example done in \cite{Damour:2004zy} in the case eleven-dimensional supergravity), to see if the corresponding equations of motion match with the dynamics given by the supergravity equations of motion (\ref{eqn:Einstein}) and (\ref{eqn:Maxwell}), when restricting to spatially constant solutions (in a sense to made more clear below). The action for the $\sigma$-model (given generally by (\ref{eqn:SigmaModelAction})) is
\begin{equation}
\mathcal{S}_{\supp_C} = \int_{\mathcal{M}_1} \frac{1}{n(t)} (\mathcal{P}(t) | \mathcal{P}(t)) \, \dd t\, ,
\end{equation}
where $n(t) = \sqrt{h}$ is the lapse function and necessary for reparametrization invariance on the world-line. The function $h(t)$ is the metric on the one-dimensional manifold $\mathcal{M}_1$ and $( \cdot | \cdot )$ is an invariant bilinear form of $\mathfrak{su}(2,1)^{++}$, formed for example by restriction from $\mathfrak{su}(2,1)^{+++}$. As described previously, $\mathcal{P}(t)$ is the component along the coset of the Maurer-Cartan form defined by maps into the coset. In the case of a one-dimensional base-manifold the $\sigma$-model equations of motion are (see e.g. (\ref{eqn:SigmaMotion}))
\begin{equation}
\label{eqn:cosmsigmamotion}
n\,  \partial_t (n^{-1} \, \mathcal{P}) + [\mathcal{Q}, \mathcal{P}] = 0 ,
\end{equation}
where $\mathcal P$ and $\mathcal Q$ are defined in \eqref{eqn:pandq}.
Now, as we are dealing with an infinite-dimensional coset space we cannot directly realize this model. What we will do is to use the level decomposition as described previously in Section \ref{sec:levdecompo}, and perform a truncation of the Kac-Moody algebra by throwing away all the generators above a certain level. This truncation can be shown to be a consistent truncation of the $\sigma$-model equations of motion \cite{Damour:2004zy}. Before performing this truncation however, we have to describe the level decomposition of $\asupp$ in terms of the decomposition of $\asuppp$, given in Section \ref{sec:levdecompo}. By defining $\mathfrak{su}(2,1)^{++}$ as a regular subalgebra of $\asuppp$, the level decomposition given in Table \ref{tab2:levdecsu} descends to $\asupp$ by restricting the indices to not run over $1$, or equivalently by generating the representations at every level by acting on the highest weight with the regular $\mathfrak{sl}(3, \mathbb{R})$ subalgebra of $\mathfrak{sl}(4, \mathbb{R})$. In this section, the $\mathfrak{sl}(4, \mathbb{R})$ indices $a,b...$ will therefore only take the values $2,\,3$ and $4$. 

We can hence realize a suitable truncation of $\asupp$ by setting all $\mathfrak{g}_L = 0$ for $|L| > 2$, and furthermore set the antisymmetric generator $r^{ab}$ at level $L=2$ to zero, as this generator has no clear interpretation in terms of supergravity quantities. We can therefore write general $\mathcal{P}$ and $\mathcal{Q}$ as
\begin{equation}
\mathcal{P} = \frac{1}{2} p_{ab} k^{ab} + \frac{1}{2} P_a (r^a + r_a) + \frac{1}{2} \tilde{P}_a (\tilde{r}^a - \tilde{r}_a) + \frac{1}{2} P_{ab}(s^{ab} - s_{ab})\,,
\end{equation}
and
\begin{equation}
\mathcal{Q} = \frac{1}{2} q_{ab} j^{ab} + \frac{1}{2} P_a (r^a - r_a) + \frac{1}{2} \tilde{P}_a (\tilde{r}^a + \tilde{r}_a) + \frac{1}{2} P_{ab}(s^{ab} + s_{ab}) ,
\end{equation}
where we have expanded in the basis given in (\ref{eqn:k3p}) and (\ref{eqn:p3p}), in parameters $p_{ab}, P_a$ and so on, depending only on the time coordinate $t$. We have chosen to put different parameters in front of the generators at level zero in the expressions for $\mathcal{P}$ and $\mathcal{Q}$ , considering that $k^{ab}$ and $j^{ab}$ are symmetric and anti-symmetric respectively. Using the commutation relations in Section \ref{subsec:su21andextension}, the equations of motion are now (given by inserting the expressions for $\mathcal{P}$ and $\mathcal{Q}$ in (\ref{eqn:levelzeromotion}) and (\ref{eqn:higherlevelmotion}))
\begin{eqnarray}
\label{eqn:pab}
n\, \partial_t (n^{-1} p_{ab} ) - q_{ca} {p^c}_b -  q_{cb}{p^c}_a + P_a P_b -\frac{1}{2} \delta_{ab} P_c P^c + \tilde{P}_a \tilde{P}_b \nonumber \\ 
- \frac{1}{2} \delta_{ab} \tilde{P}_c \tilde{P}^c - 2 \delta_{ab} P_{cd}P^{cd} + 4 P_{ac} {P_b}^c = 0\, ,
\end{eqnarray}
for the field $p_{ab}$, 
\begin{equation}
\label{eqn:Pa}
n\,  \partial_t (n^{-1} P_a) - p_{ac} P^c + q_{ac} P^c + 2 P_{ac} \tilde{P}^c = 0\, ,
\end{equation}
for the field $P_a$, 
\begin{equation}
\label{eqn:tildePa}
n \,\partial_t (n^{-1} \tilde{P}_a) - p_{ac} \tilde{P}^c + q_{ac} \tilde{P}^c - 2 P_{ac} P^c = 0\, ,
\end{equation}
for $\tilde{P}_a$ and finally
\begin{equation}
\label{eqn:Pab}
n\, \partial_t (n^{-1} P_{ab}) - 2 p_{ac} {P_b}^c +2 q_{ac} {P_b}^c = 0\, ,
\end{equation}
for $P_{ab}$. Regarding notation, we will in the following assume that indices are symmetrized or anti-symmetrized according to the tensor appearing linearly in expressions as these ones. For example, in (\ref{eqn:Pab}) the term $2 p_{ac} {P_b}^c$ is then short for $\frac{1}{2}(2 p_{ac} {P_b}^c + 2 p_{bc} {P_a}^c)$, the parameter $P_{ab}$ being a symmetric $\mathfrak{sl}(3, \mathbb{R})$ tensor.

\subsubsection{Dictionary} 

Let us now begin to compare the above $\sigma$-model dynamics with the dynamics given by our supergravity theory (\ref{eqn:SugraAction4d}). We will do this by choosing the parameters in $\mathcal{P}$ such that the $\sigma$-model equations of motion (\ref{eqn:pab})-(\ref{eqn:Pab}) match with the equations of motion on the supergravity side. Due to the construction of the $\sigma$-model, the natural framework for doing this is in the ADM-formalism where we will split the Einstein-Maxwell equations into dynamical equations and constraints/initial conditions. Concretely, we will only consider the dynamical supergravity equations. For the comparison it will be convenient to treat the spin connection $\omega_{ABC}$ and the field strength $F_{AB}$ as the fundamental fields of the Einstein-Maxwell theory and also considered as constant by letting them be space-independent. This is suitable because no spatial derivatives exist on the $\sigma$-model side. Locally we will also split the flat space coordinates $x^{a = 2,3,4}$ from the flat time coordinate $x^1$. The spin connection and the field strength transform under the Lorentz group $\mathrm{SO}(3,1)$ and we can use use this freedom, and a coordinate transformation, to put $\omega_{ABC}$ in a pseudo-Gaussian gauge by setting the metric shift functions to zero. This leads to $\omega_{ab1}$ being symmetric, and we also assume $\omega_{11c} = 0$, corresponding to space-independent gravity lapse $N$ . This gauge corresponds to a vielbein of the form
\begin{equation}
\label{eqn:vielbein}
{e_{\alpha}}^A = \left(\begin{array}{cc}N & 0 \\0 & {e_{\mu}}^a\end{array}\right)\,.
\end{equation}
In fact, the spin connection can be defined in terms of a tensor $\Omega_{ABC}$, called the anholonomy, by the relation
\begin{equation}
\label{eqn:anholonomy}
\omega_{ABC} = \frac{1}{2}(\Omega_{ABC} - \Omega_{BCA} +\Omega_{CAB}),
\end{equation}
and such that the anholonomy is given in terms of the vielbein by
\begin{equation}
{\Omega_{AB}}^C = 2\, {e_A}^{\alpha}{e_{B}}^{\beta}\partial_{[\alpha}{e_{\beta]}}^C.
\end{equation}
This pseudo-Gaussian gauge breaks the Lorentz group down to $\mathrm{SO}(3)$, acting on the spatial vielbein $ {e_{\mu}}^a$. Note also that we can rewrite the covariant derivative with respect to the spin connection using the vielbein, i.e.
\begin{eqnarray}
\label{eqn:SpinCovDerivative}
e^{-1} \partial_t (e \omega_{ab1})&  = & e^{-1}\,   \partial_t e\,  \omega_{ab1} +  \partial_t  \omega_{ab1}\nonumber \\
& = &  {e_{m}}^c \, \partial_t {e^{m}}_c\,  \omega_{ab1} +   \partial_t  \omega_{ab1}  \\
& = &  N {\omega^c}_{c1} \omega_{ab1} +  N \partial_1  \omega_{ab1} \nonumber .
\end{eqnarray}
Here we have used that $\partial_1 = N^{-1} \partial_t$. We will now consider the different parts of the supergravity equations of motion (\ref{eqn:Einstein}), (\ref{eqn:Maxwell}) and the Bianchi identity (\ref{eqn:MaxwellBianchi}), one at the time.

In addition to the gauges in the gravity sector, we also have to adopt a temporal gauge for the Maxwell field, corresponding with our choice of time coordinate to 
\be
A_1 = 0\,.
\ee

\begin{itemize}
\item{{\bf Ricci-tensor}}

First, let us consider the Ricci-tensor. Our Riemann-tensor written with flat indices is given in terms of the spin connection and the anholonomy by
\begin{equation} \begin{split}
\label{eqn:Riemann}
R_{ABCD} =\  & \partial_{A} \omega_{BCD} - \partial_{B} \omega_{ACD} + {\Omega_{AB}}^E \omega_{ECD} \\ &+ {\omega_{AC}}^E \omega_{BED} - {\omega_{BC}}^E \omega_{AED}\, .
\end{split} \end{equation}
From this expression we can derive the purely spatial Ricci-tensor appearing in (\ref{eqn:Einstein}) with our gauge choice,
\begin{equation}
\label{eqn:spatialRicci} 
R_{ab} = \partial_1 \omega_{ab1} +  \omega_{ab1} {\omega^c}_{c1} +  {\omega_{ab}}^d {\omega^c}_{dc} - \omega_{1ca} {\omega^c}_{b1} +  \omega_{ca1} {\omega_{1b}}^c +  {\omega^c}_{da} {\omega^d}_{bc}. 
\end{equation}
Now, to match with the $\sigma$-model equation (\ref{eqn:pab}), we make the ansatz $\omega_{ab1} = c_1\, p_{ab}$ and $\omega_{1ab} = c_2 \,q_{ab}$. Using this ansatz, one rewrites (\ref{eqn:spatialRicci}) as
\begin{equation}
\label{eqn:RewrittenSpatialRicci} \begin{split}
R_{ab} = \ &(Ne)^{-1} \partial_t ( e\, c_1 \,p_{ab}) - c_1\, c_2\, q_{ca}\,{p^c}_b -  c_1\, c_2 \,q_{cb}\,{p^c}_a \\ &+  {\omega_{ab}}^d {\omega^c}_{dc} + {\omega^c}_{da} {\omega^d}_{bc}\, . \end{split}
\end{equation}
Comparing with (\ref{eqn:pab}) we conclude that $c_1 = c_2 = N^{-1}$, and $n = e^{-1} N$ (multiply  (\ref{eqn:Einstein}) with $N^2$ to make the identification easier). Consider now the last term in  (\ref{eqn:RewrittenSpatialRicci}). Somehow we need to match $\omega_{abc}$ with the parameter $P_{ab}$. We do this by the ansatz
\begin{equation}
\label{eqn:omegaansatz}
\Omega_{abc} = c_3 \epsilon_{abd}{P^d}_c .
\end{equation}
There is here a mismatch in the number of degrees of freedom between these two tensors. From the symmetry of $P_{ab}$ we see that the anholonomy must obey a trace condition,
\begin{equation}
{\Omega_{ab}}^b = c_3 \epsilon_{abc} P^{cb} = 0 .
\end{equation}
This removes three of the nine degrees of freedom in the purely spatial anholonomy and with this condition its degrees of freedom equals the number of components of $P_{ab}$. It is generally assumed that this trace condition always can be imposed \cite{Damour:2002cu}. Observe that the tracelessness $\Omega_{abb} = 0$ is equivalent to $\omega_{bba} = 0$. Hence the second to last term in (\ref{eqn:RewrittenSpatialRicci}) vanishes. From the expression (\ref{eqn:anholonomy}) of the spin connection in terms of the anholonomy, the last term in (\ref{eqn:RewrittenSpatialRicci}) can be rewritten as
\begin{equation}
\label{eqn:SpinAndHolonomy}
{\omega^c}_{da} {\omega^d}_{bc} = \frac{1}{4} \Omega_{cda} {\Omega^{cd}}_b - \frac{1}{2} {\Omega_{da}}^c {\Omega^{d}}_{bc}- \frac{1}{2} \Omega_{adc} {\Omega_b}^{cd}.
\end{equation}
Let us take a closer look at the last term $\Omega_{adc} {\Omega_b}^{cd}$. The first two indices $a$ and $d$ in the first anholonomy has no matching index with the first two indices $b$ and $c$ of the second anholonomy. With our ansatz (\ref{eqn:omegaansatz}) this kind of index structure is impossible to match with  anything in the $\sigma$-model at low levels, as there is no such term in (\ref{eqn:pab}). This is a general phenomena when matching infinite-dimensional coset space $\sigma$-models with supergravity theories. In particular it is true also for the well studied case of eleven-dimensional supergravity. For example in \cite{Damour:2004zy} it is suggested that this term comes from terms in the $\sigma$-model that we in the current truncation have thrown away but the confirmation of this claim is still an open problem. Inserting our ansatz (\ref{eqn:omegaansatz}) in (\ref{eqn:SpinAndHolonomy}) we get (leaving the last term as it is)
\begin{equation}
{\omega^c}_{da} {\omega^d}_{bc} = - \frac{c^2}{2} \delta_{ab} P_{cd}P^{cd} + c^2 P_{ac} {P_b}^c - \frac{1}{2} \Omega_{adc} {\Omega_b}^{cd}. 
\end{equation}
Looking at (\ref{eqn:pab}) we get precise matching if ${c_3}^2 = 4N^{-2}$ and if we ignore the anomalous monomial in the anholonomy. The sign of $c_3$ remains unfixed, so we define $c_3 = 2N^{-1}c_3'$, where $|c_3'| = 1$. Let us now turn to the rest of the terms in the equation of motion (\ref{eqn:Einstein}) for the metric.

\item{$\mathbf{Maxwell\ field}$}

From the Einstein-Maxwell equation (\ref{eqn:Einstein}) we get when looking at the spatial part of the two monomials in the field strength (remembering that we multiplied with $N^2$),
\be \begin{split}
\frac{N^2}{2}\, \delta_{ab} F_{CD} F^{CD} - 2\, N^2 F_{aC}{F_b}^{C} = & \ \frac{N^2}{2} \delta_{ab}\, (-2 F_{1c}{F_1}^{c}  + F_{cd}F^{cd}) \\ 
&+ 2\, N^2 (F_{a1} F_{b1} -F_{ac} {F_b}^c). 
\end{split} \ee
A reasonable here ansatz is
\be
\label{eqn:fieldstrengthansatz} \begin{split}
F_{1c} &= c_4 P_c\, , \\
F_{ab} &= c_5 \epsilon_{abc} \tilde{P}^c\, ,
\end{split} \ee
giving
\be \begin{split} 
\frac{N^2}{2} \delta_{ab} F_{CD} F^{CD} - 2 N^2F_{aC}{F_b}^{C}  = &\quad   N^2 {c_4}^2(- \delta_{ab} P_c P^c + 2 P_a P_b)  \\
  & + N^2{c_5}^2(- \delta_{ab} \tilde{P}_c \tilde{P}^c + 2\tilde{P}_a \tilde{P}_b) .
\end{split} \ee
Comparing with (\ref{eqn:pab}) we see that we must put ${c_4}^2 = {c_5}^2 = \frac{1}{2N^2}$. As in the case for $c_3$, the signs of $c_4$ and $c_5$ is still unfixed, so we define $c_4 = \frac{1}{\sqrt{2}N} c_4'$ and $c_5 = \frac{1}{\sqrt{2}N} c_5'$ where again $|c_4'| = |c_5'| = 1$. Note that whether $F_{1c}$ should be proportional to $P_a$ or $\tilde{P}_a$ is up till now not fixed as they have appeared symmetrically so far. In fact, when we now turn to consider the equation of motion for $F_{AB}$ \eqref{eqn:Maxwell} and its Bianchi identity \eqref{eqn:MaxwellBianchi} it turns out that neither of these equations will fix this arbitrariness or the signs of the functions $c_i$. This is due to the symmetry between the roots $\alpha_4$ and $\alpha_5$ and can be interpreted physically as electromagnetic duality.

\item{{\bf Equations of motion and Bianchi identities for $F_{AB}$ }}

Finally we consider the equation of motion for the field strength $F_{AB}$ (\ref{eqn:Maxwell}). Explicitly the covariant derivative becomes
\begin{equation}
D^A F_{AB} = \partial^A F_{AB} - {\omega_{C}}^{AC}F_{AB} -  \omega_{ACB}F^{AC} = 0.
\end{equation}
Looking at the spatial dynamics, putting $B = b$ and splitting the sums over space and time we get
\begin{equation}
D^A F_{Ab} = -\partial_1 F_{1b} -  {\omega^e}_{e1} F_{1b} + \omega_{1cb}{F_1}^c -  \omega_{ab1}{F^a}_1 -  \omega_{acb}F^{ac} = 0 .
\end{equation}
Again we recognize the ``time'' covariant derivative from (\ref{eqn:SpinCovDerivative}). Hence we have
\begin{equation}
e^{-1} N^{-1} \partial_t (e  F_{1b})  = {\omega^e}_{e1} F_{1b} + \partial_1 F_{1b} . 
\end{equation}
Note also that $\omega_{acb}F^{ac} = \frac{1}{2} \Omega_{acb}F^{ac}$. With the expressions for the anholonomy (\ref{eqn:omegaansatz}), we derived above, we rewrite the equation for the field strength as
\begin{equation}
e^{-1} N^{-1} \partial_t (e N^{-1} {c_4}' P_b) + {c_4}' N^{-2} (q_{bc} P^c - p_{cb} P^c )+ 2{c_3}'{c_5}' N^{-2} P_{cb} \tilde P^c = 0. 
\end{equation}
This agrees with the $\sigma$-model equation (\ref{eqn:Pa}) if ${c_3}' {c_5}' = {c_4}'$.  Consider now the Bianchi-identity (\ref{eqn:MaxwellBianchi}). Letting $A=a$ we get
\be \begin{split}
\epsilon^{aBCD}D_B F_{CD} &=  \epsilon^{a1bc}D_1 F_{bc} + 2 \epsilon^{abc1} D_b F_{c1}\, , \\
&=  \epsilon^{abc} \partial_1 F_{bc} - 2 \epsilon^{abc}  \omega_{1db} {F^d}_c + 2\epsilon^{abc} \omega_{bd1} {F^d}_c - 2 \epsilon^{abc} \omega _{bdc} {F^d}_0    \\ 
& = 0.
\end{split}\ee
As $\epsilon^{abc} \omega_{bdc}  = \frac{1}{2} \epsilon^{abc} \Omega_{cbd}$,
\begin{equation} \nonumber
 \partial_1 ({c_5}' N^{-1} \tilde{P}_a) + {c_5}' N^{-2} ( q_{ab}\tilde{P}_b - p_{ab}\tilde{P}_b + N {w^b}_{b1}\tilde{P}_a )- 2 {c_3}' {c_4}' N^{-2} P_{ac} P_c =0 . 
\end{equation}
Again using (\ref{eqn:SpinCovDerivative}) and multiplying everything with $N^2$ we find
\begin{equation}
n\,  \partial_t(n^{-1} \tilde{P}_a) - p_{ab}\tilde{P}^b + q_{ab}\tilde{P}^b - 2\frac{{c_3}'{c_4}'}{{c_5}'} P_{ac} P^c = 0 .
\end{equation}
This is precisely the corresponding equation (\ref{eqn:tildePa}). 

\item{{\bf Riemann Bianchi}}

What remains to analyze is the last equation of the $\sigma$-model (\ref{eqn:Pab}), which is to be matched with the algebraic Bianchi identity (\ref{eqn:RiemannBianchi1}) for the Riemann tensor on the supergravity side. The component of (\ref{eqn:RiemannBianchi1}) to be considered is the symmetric purely spatial part. These turn out to be exactly equivalent in the current truncation, automatically by the above mapping of fields. This is a consistency check of our analysis.

\item{{\bf Summary}}

Hence, our analysis has given us an almost complete correspondence between the parameters of the truncated $\supp_C$-model and the dynamics of certain spatially constant solutions of pure $\mathcal{N} = 2$ supergravity. This result summarized in the Table \ref{tab:Dictionary} is what was expected from the structure of the low-lying $\mathfrak{sl}(3,\mathbb{R})$ representations. The map is similar to those already constructed for other supergravity theories, and succeeds and fails at the same points. We point out that in addition to the dynamical equations, there are in general constraint equations to be verified, for example the (spatial) diffeomorphism constraint and Gauss constraints. We expect that they are satisfied in the same way as for the maximally supersymmetric case~\cite{Damour:2007dt}.

\begin{table}[h]
\begin{center}
\begin{tabular}{|c|c|c|c|}
\hline
Level $L$ & Supergravity field & $\supp$ field & $ \asupp$ generator\\
\hline \hline
$0 $ &$ \omega_{abt} $ & $ p_{ab}$ & $k^{a b}$\\
$0 $ &$\omega_{tab}$ & $ q_{ab}$ & $j^{a b}$\\
$1$ &$F_{tc} $ & $ \frac{1}{\sqrt{2}} {c_4}' P_c $ & ${r^a}$\\
$1$ &$N F_{ab} $ & $\frac{1}{\sqrt{2}} {c_5}' \epsilon_{abc} \tilde{P}^c$ & $\tilde{r}^a$\\
$2$ &$N \Omega_{abc}$ & $2{c_4}' {c_5}' \epsilon_{abd}{P^d}_c $& $s^{a b}$\\
$-$ & $Ne^{-1}$& $n$ & -\\
\hline
\end{tabular}
\caption{\sl \small Correspondence between the bosonic fields in the supergravity theory and the Kac-Moody $\sigma$-model. The parameters ${c_4}'$ and ${c_5}'$ are unfixed and are $\pm 1$. All the supergravity quantities are assumed to be evaluated at a fixed spatial point.}
\label{tab:Dictionary}
\end{center}
\end{table}

\end{itemize}

 \subsection{The $\asuppp$ algebraic structure of BPS branes}

In this section, we show that the BPS solutions ({\ref{Isom}) which are upon dimensional reduction on time described in the $\asu$ $\sigma$-model by equation ({\ref{eqn:BPSscalars}), are in fact  completely algebraically described in $\asuppp$. In order to do so we now choose the time coordinate to be the direction $x_4$. More precisely, we show that the full space-time solution (\ref{Isom}) can be reconstructed (i.e. not only the part which correspond to scalars upon dimensional reduction) by demanding that the $\asu$ is regularly embedded\footnote{We recall that a subalgebra $\bar{\mathfrak{g}}\subset\mathfrak{g}$ is \emph{regularly embedded} in $\mathfrak{g}$ if the root vectors of $\bar{\mathfrak{g}}$ are root vectors of
$\mathfrak{g}$, and the simple roots of $\bar{\mathfrak{g}}$ are real roots of
$\mathfrak{g}$. Of particular relevance for our analysis is that, as a consequence, the Weyl group $\mathcal{W}(\bar{\mathfrak{g}})$ of $\bar{\mathfrak{g}}$ is a subgroup of $\mathcal{W}({\mathfrak{g}})$. For finite-dimensional Lie algebras the concept of a regular embedding was introduced by Dynkin in \cite{Dynkin:1957um}, and was subsequently extended to the infinite-dimensional case by Feingold and Nicolai \cite{Feingold}.} in $\asuppp$. The regular embedding is defined by erasing the nodes $\alpha_1, \alpha_2$ and $\alpha_3$  in Figure~\ref{fig2:su21+++}.

We first recall that we can describe the non-compact Cartan fields of $\asuppp$ in two bases, the $\mathfrak{gl}(4, \mathbb R)$ one described by the generators  $K^a_{\ a}$,   (see (\ref{eqn:kab})) and the Chevalley base given by the $\mathbf{h_m}, \ m=1 \dots 4$ (see (\ref{eqn:su21+++Chev})). The fields corresponding to the former are denoted $p_a$ and the ones corresponding to the latter denoted $q_a$ ($a=1 \dots4)$. The relation between these two bases is:
\begin{equation}
\sum_{a=1}^{4} \, p_a K^a_{\ a}=\sum_{a=1}^{4} \, q_a\,  \mathbf{h_a},
\label{basecartsu21}
\end{equation}
where the $p_a$'s encode the diagonal metric in $\asuppp$. We have indeed $p_a= \frac{1}{2} \ln g_{aa}$ where $g_{aa}$ is the four-dimensional metric. This follows for instance from \cite{Englert:2003zs} or also from the results of the preceding section, summarized in Table~\ref{tab:Dictionary}.

We are now in position to impose the regular embedding which amounts at the level of the Cartan to enforce
\begin{equation}
q_1=q_2=q_3 =0.
\label{embeddsu21}
\end{equation}
Using (\ref{basecartsu21}) the conditions (\ref{embeddsu21}) translate for the $p_a$'s into
\be
p_1=p_2=p_3=-p_4.
\ee
Consequently the regular embedding of $\asu$  in $\asuppp$ imply on the physical four-dimensional metric the following conditions:
\begin{equation}
\label{finalembsu21}
g_{1\, 1}=g_{2\, 2}=g_{3\, 3}=g_{4\, 4}^{-1},
\end{equation}
which is satisfied by the BPS metrics (\ref{Isom}). This completes the proof that the 
 four-dimensional BPS solutions are described  by the regular embedding of $\asu$  in $\asuppp$. It is worth noticing that this description is not valid for non-BPS solutions and it indicates again the special role played by BPS solutions in the $\mathfrak{g}^{+++}$ approach (see \cite{Englert:2003py}, \cite{Englert:2007qb}).

 \subsection{Weyl reflection in $\asuppp$ }
 In this section, we first discuss a definition of the Weyl group of $\asu$ and its action on BPS solutions of the $\mathcal{N}=2$ supergravity. Then, we will study  the Weyl group of $\asuppp$ and its possible consequence on the space-time signature.
 \subsubsection{Weyl reflection in $\asu$ }
 
First, we briefly recall how to construct the Weyl group of the complex  algebras $A_2$. The Weyl group $W$ of $A_2$ is generated by the two simple Weyl reflections $s_{\alpha_4}$, $s_{\alpha_5}$ associated respectively to   the simple roots $\alpha_4$ and $\alpha_5$ (see Figure \ref{fig2:su21+++}). The group contains six elements
\be
W_{A_2}=\{1,s_{\alpha_4},s_{\alpha_5},s_{\alpha_4}s_{\alpha_5},s_{\alpha_5}s_{\alpha_4},s_{\alpha_4}s_{\alpha_5}s_{\alpha_4}\}\, ,
\ee
and is isomorphic to the symmetric group $\mathrm{S}_3$ on three letters. We first note that among the six elements, three correspond to reflections: $s_{\alpha_4},s_{\alpha_5}$ and $s_{\alpha_4}s_{\alpha_5}s_{\alpha_4}$. The third transformation correspond to the Weyl reflection associated to the non-simple roots $\alpha_4+\alpha_5$ namely  $s_{\alpha_4}s_{\alpha_5}s_{\alpha_4}=s_{\alpha_4+\alpha_5}$. The action of $s_{\alpha_4+\alpha_5}$ on the simple roots of $A_2$ is:
\begin{equation} \label{eqn:weyl5} \begin{split}
s_{\alpha_4+\alpha_5}(\alpha_4)\, &= -\alpha_5\, , \\
s_{\alpha_4+\alpha_5}(\alpha_5)\, & = -\alpha_4\,  .
\end{split}
\end{equation}
The strategy used here to define the Weyl group of $\asu$ is to retain only the reflections of the Weyl group of $A_2$  associated to the roots which are invariant under the conjugation $\sigma$ fixing the real form $\asu$. Using (\ref{eqn:sigmaroots}) we deduce that the only invariant Weyl reflection is  $s_{\alpha_4+\alpha_5}$. Consequently,  we define the Weyl group of $\asu$ as being
$W_{ \asu } =\{ 1, s_{\alpha_4+\alpha_5} \}$. This is in agreement with the restricted root system describing $\asu$ given in Section \ref{app:restricted}. The restricted root system of $\asu$ is $(BC)_1$ \cite{Helgason:1978} and the Weyl group of $(BC)_1$ is generated by one restricted root $\lambda_2$  (see (\ref{eqn:restricted})) which precisely correspond to the root $\alpha_4+\alpha_5$ in $A_2$.\footnote{The fact that the restricted root system of $\mathfrak{su}(2,1)$ is of non-reduced type has interesting consequences for the behaviour of $D=4$ Einstein-Maxwell gravity in the vicinity of a space-like singularity (``BKL-limit''). For information on these aspects of Maxwell-Einstein gravity, we refer to \cite{HenneauxJulia,Henneaux:2007ej}.  }

We now determine the element $\mathcal{W}$ of  SU(2,1) corresponding to the Weyl transformation  $s_{\alpha_4+\alpha_5}$ and acting by conjugation on the coset element $\mathcal{V}$ namely: $\mathcal{V}^\prime=\mathcal{W} \, \mathcal{V} \, \mathcal{W}^{-1} $.  The conjugate action on  $\mathcal{V}$ implies a conjugate action on $\mathcal{P}=\tfrac{1}{2}\big( d\mathcal{V} \mathcal{V}^{-1}\,  - \,  \Omega_4(d\mathcal{V} \mathcal{V}^{-1}) \big)$, if  $\mathcal{W}$ pertains to the invariant subgroup under the temporal involution $\Omega_4$ namely $\mathrm{K}^{*} =\mathrm{SL}(2, \mathbb R) \times \mathrm{U(1)}$ (see (\ref{eqn:temporal}) and   Section \ref{app:k*}).  We will check below that it is indeed the case.  In order to find $\mathcal{W}$ we use  (\ref{eqn:weyl5}) which translate at the level of the $A_2$ algebra into 
\begin{equation} \label{eqn:weylal} \begin{split}
&\mathcal{W} \,  e_4 \, \mathcal{W}^{-1} = \epsilon\,  f_5,\\
&\mathcal{W} \, e_5 \, \mathcal{W}^{-1} = \epsilon\,  f_4,
\end{split}
\end{equation}
 while on the generators of $\asu$ we get (see \eqref{eqn:su21+++Chev})
\begin{equation} \label{eqn:weylalsu} \begin{split}
&\mathcal{W} \,  \mathbf{e_4} \, \mathcal{W}^{-1} = \epsilon\,  \mathbf{f_4}\, ,\\
&\mathcal{W} \, \mathbf{e_5} \, \mathcal{W}^{-1} =-  \epsilon\,  \mathbf{f_5}\, ,
\end{split}
\end{equation}
where $\epsilon$ is a plus or minus sign\footnote{\label{foot:sign}This arises since step operators are representations of the Weyl group up to signs.}.

 Demanding the equations (\ref{eqn:weylal}) to be satisfied and imposing  $\mathcal{W}^2=1$ determine  $\mathcal{W}$ univocally, we get:
\begin{equation}
\label{eqn:weylgf}
\mathcal{W}= \exp{[-\tfrac{\pi}{2} \, \mathbf{h_5}]} \, \exp{[\tfrac{\pi}{2} \, (\mathbf{e_{4,5}}+\mathbf{f_{4,5}})]},
\end{equation}
which fixes $\epsilon=-1$.
The generators $\mathbf{h_5}$ and $(\mathbf{e_{4,5}}+\mathbf{f_{4,5}})$ pertaining both to $\mathfrak{k^*}= \mathfrak{sl}(2,\mathbb{R})\oplus \mathfrak{u}(1)$  and  $\mathfrak{k}= \mathfrak{su}(2)\oplus \mathfrak{u}(1)$ (see Section \ref{app:k}), the element $\mathcal{W}$ belongs to both $\mathrm{K}^*$ and $\mathrm{K}$, ensuring the validity of the procedure to derive it.

We are know in the position to derive the effect of the Weyl transformation on the BPS solutions given by  (\ref{Isom}). Since the element $\mathcal{W} \in \mathrm{K}^*$, to see how the four charges transforms we can just conjugate by $\mathcal{W}$  the charge matrix  (\ref{eqn:charges}). We find that under $\mathcal{W}$  the charges transform as:
\begin{equation}
\label{eqn:transfochw}
(m,n,q,h) \ \stackrel{\mathcal W}{\longrightarrow}\ (-m,-n,q,h) .
\end{equation}
This Weyl transformation maps physical solutions with positive charges to unphysical solutions with negative charges.

  \subsubsection{Effect of Weyl reflections on space-time signature} \label{sec:signa}
  In this section we will focus on the Weyl group of $\asuppp$ and we will study the effect of Weyl reflections on the space-time signature $(1,3)$ of the $\mathcal N=2$ supergravity theory in $D=4$. First, recall that a Weyl transformation of a generator $T$ of a Lorentzian algebra $\mathfrak{g}^{+++}$ can be expressed as a conjugation by a group element $U_W$ of $\mathrm{G}^{+++}$: $T \longrightarrow  U_W\,  T \, U^{-1}_W $. Because of the non-commutativity of Weyl reflections with the temporal involution $\Omega_i$ (defined in (\ref{eqn:temporal}))
  \be \label{eqn:nocomu}
  U_W\, (\Omega_i T)\,  U_W^{-1}= \Omega '\,  (U_W T U^{-1}_W)\, , 
   \ee
 different Lorentz signatures $(t,s)$ (where $t(s)$ is the number of time (space) coordinates) can be obtained \cite{Keurentjes:2004bv, Englert:2004ph}. The analysis of signature changing has been done for all $\mathfrak{g}^{+++}$ that are very-extensions of a simple split Lie algebra $\mathfrak{g}$ \cite{deBuyl:2005it, Keurentjes:2005jw}. In these cases, Weyl reflections with respect to a root of gravity line\footnote{The gravity line is the set of the simple roots of the $\mathfrak {sl}(n, \mathbb R)$-part of $\mathfrak{g}^{+++}$.  It corresponds in the case of  $\asuppp$ to the roots $\alpha_1, \,\alpha_2$ and $\alpha_3$. } do not change the global Lorentz signature $(t,s)$ but it changes only the identification of the time coordinate. In fact, only Weyl reflections with respect to roots not belonging to the gravity line can change the global signature of the theory (see Section \ref{sec:tempinvolutintro}). We will now study the possible signature changing induced by Weyl reflections of the non-split real form $\asuppp$.\\
 
 The Weyl group of $\asuppp$ namely $W_{\asuppp}$ is generated by the Weyl reflection $s_{\alpha_4+\alpha_5}$ belonging to $W_{\asu}$ and by the simple Weyl reflections with respect to the roots of gravity line $s_{\alpha_1}, \, s_{\alpha_2},\, s_{\alpha_3}$. Because of the presence of the affine Weyl reflection $s_{\alpha_3}$,  the Weyl group $W_{\asuppp}$ becomes infinite-dimensional
 \be
 W_{\asuppp}=\{1, s_{\alpha_1}, \, s_{\alpha_2}, \, s_{\alpha_3},\,  s_{\alpha_4+ \alpha_5}, \ldots       \}\, .
 \ee
 
 \vspace{.3cm}
\noindent $ \bullet \ $  \textbf{The effect of the Weyl reflection } $\mathbf{s_{\alpha_1} }$

\vspace{.2cm}
 As is the case for split forms,  Weyl reflections with respect to the gravity line of $\asuppp$ will not change the global signature $(1,3)$ but it will only change the identification of time index. The roots of the gravity line are indeed not affected by arrows and  they are all non-compact roots as for split real form. Let us recall a simple example of the consequence of the Weyl reflection $s_{\alpha_1}$ on the space-time signature \cite{Englert:2004ph}. We start with the temporal involution $\Omega_1$ allowing the index $1$ to be the time index.  Applying (\ref{eqn:nocomu}) to
the Weyl reflexion $s_{\alpha_1}$  generates from 
$\Omega_i \equiv\Omega_1$ a new involution $\Omega' \equiv \Omega_2$ such that
\be \begin{split}\begin{aligned}
\label{permute}
U_1\, \Omega_1 K^2_{\ 1} \, U^{-1}_1&= \rho \, K^2_{\ 1} = \rho\, \Omega_2\, 
 K^1_{\ 2}, \\
U_1\, \Omega_1 K^1_{\ 3} \, U^{-1}_1&= \sigma \, K^3_{\ 2} =
\sigma \, \Omega_2\, 
 K^2_{\ 3} \, ,\\
U_1\, \Omega_1 K^i_{\ i +1} \, U^{-1}_1&= -\tau\,  K^{i+1}_{\ \, i} =
\tau\, \Omega_2 \,  K^i_{\ i +1}\quad i >2\, ,
\end{aligned} \end{split} \ee

\noindent where $\rho,\sigma,\tau$ are plus or minus signs (see footnote \ref{foot:sign}). The equations (\ref{permute}) illustrate the general result
that such signs always cancel in the determination of
$\Omega^\prime$ because they are identical in the Weyl transform of
corresponding positive and negative roots, as
their commutator is in the Cartan subalgebra which  forms a true
representation of the Weyl group. The content of (\ref{permute}) is
represented in Table \ref{tablin}. 
The signs below the generators of the gravity
line indicate the sign in front of the
 negative step operator obtained by the involutions $\Omega_1$ and $\Omega_2$ (see (\ref{eqn:temporal})): a
minus sign indicates that the indices in
$K^m_{\ m +1}$ are both either space or time indices while a plus sign
indicates that one index must be time and the other  space.
\begin{table}[t]
\begin{center}
\begin{tabular}{|c|ccc|c|}
\hline
&$K^1_{\ 2}$&$K^2_{\ 3}$&$K^3_{\ 4}$&time coordinate\\
\hline
$\Omega_1$&$+$&$-$&$-$&1\\
\hline$\,\Omega_2$&$+$&$+$&$-$&2\\
\hline
\end{tabular}
\caption{\sl \small Involution switches from $\Omega_1$  to
$\Omega_2$ in $\asuppp$ due to the Weyl reflection $s_{\alpha_1}$.}
\label{tablin}
\end{center}
\end{table}

The Table \ref{tablin} shows that
the  time coordinates in
$\asuppp$ must now be identified either with 2, or with all indices
$\neq 2$. We choose the first description, which leaves
unaffected coordinates attached to planes invariant under the Weyl
transformation. More generally, by Weyl reflections with respect to a root of the gravity line, it is possible to identify the time index to any $\mathfrak{sl}(4, \mathbb R)$ tensor index.

 \vspace{.3cm}
\noindent $ \bullet \ $  \textbf{The effect of the Weyl reflection } $\mathbf{s_{\alpha_4+\alpha_5} }$

\vspace{.2cm}

We will now study the effect of the particular Weyl reflection $s_{\alpha_4+\alpha_5}$ on the space-time signature $(1,3)$. We will first act with $s_{\alpha_4 +\alpha_5}$ on the generators of $A_2^{+++}$ to find then the transformation of the generators of $\asuppp$. Only the simple roots $\alpha_3, \alpha_4$ and $ \alpha_5$  are modified by this reflection. Its action  on the roots $\alpha_4$ and $\alpha_5$  is done in (\ref{eqn:weyl5}) while  on the root $\alpha_3$, it acts as
\be
s_{\alpha_4+\alpha_5} (\alpha_3)= \alpha_3 + 2 (\alpha_4+ \alpha_5) \,.
\ee
Note that the root $\alpha_3$ is transformed in a root of level $\ell=(2,2)$, the root $\alpha_4$ to a negative root of level $\ell=(0,-1)$ and the root $\alpha_5$ to a negative root of level $\ell=(-1,0)$ (see Table \ref{tab:levdeca2}).
The generators associated to roots $\alpha_3, \alpha_4$ and  $\alpha_5$ are modified respectively as 
\be \begin{split}
\mathcal{W}\, K^3_{\ 4} \mathcal{W}^{-1}&= \gamma\, \big[\overbrace{[K^3_{\ 4}, S^{44}   ]}^{2 S^{34}} ,S^{44} \big] \\
&= \gamma\, 2\,  R^{44|34}\, ,\\
\mathcal{W}\, R^{4} \mathcal{W}^{-1} &= \epsilon\, \tilde{R}_{4}\, ,\\
 \mathcal{W}\, \tilde{R}^{4} \mathcal{W}^{-1} &= \epsilon\, R_{4}\, .
\end{split} \ee
Using the Table \ref{tab2:levdecsu}, we find how the generators of $\asuppp$ transform under this Weyl reflection
\be \begin{split}
\mathcal{W}\, K^3_{\ 4} \mathcal{W}^{-1}&= \gamma\, [\overbrace{[K^3_{\ 4}, \tfrac{i}{2} s^{44}   ]}^{i s^{34}} , \tfrac{i}{2}s^{44}] \\
&= - \tfrac{1}{2} \gamma\, r^{44|34}\, ,\\
\mathcal{W}\, r^{4} \mathcal{W}^{-1} &= \epsilon\, r_{4}\, ,\\
 \mathcal{W}\, \tilde{r}^{4} \mathcal{W}^{-1} &= -  \epsilon\, \tilde{r}_{4}\, .
\end{split}\ee
If we apply (\ref{eqn:nocomu}) and (\ref{eqn:temporal}), we find the action of $\Omega'$ on these generators:
\be \label{eqn: detailsinv}\begin{split}
\gamma\,  \Omega' K^3_{\ 4} &= - \tfrac{1}{2}\, \Omega' \, (\mathcal{W} r^{44|34}   \mathcal{W}^{-1})= - \tfrac{1}{2}\mathcal{W} \underbrace{\Omega_i\,  r^{44|34}}_{- \epsilon_3 \epsilon_4 r_{44|34}}  \mathcal{W}^{-1} \\
&= \gamma (-) \epsilon_3 \epsilon_4 K^4_{\ 3}\, ,\\
\epsilon\,  \Omega' r^4 &= \Omega' ( \mathcal{W} r_4         \mathcal{W}^{-1})=   \mathcal{W} \underbrace{\Omega_i r_4 }_{- \epsilon_4 r^4}       \mathcal{W}^{-1}\, ,\\
&=  \epsilon  (-\epsilon_4) r_4\, ,\\
- \epsilon \Omega' \tilde{r}^4 &= \Omega' (\mathcal{W} \tilde{r}_4         \mathcal{W}^{-1})=   \mathcal{W} \underbrace{\Omega_i\,  \tilde{ r}_4 }_{ \epsilon_4 \tilde{r}^4}       \mathcal{W}^{-1}\, ,\\
&= - \epsilon  (\epsilon_4) \tilde{r}_4\, .
\end{split} \ee  
From (\ref{eqn: detailsinv}), one gets 
\be \label{eqn:omegaf} \begin{split} \begin{aligned}
\Omega' K^{3}_{\ 4}&=\  - \epsilon_{3}\,  \epsilon_{4} \, K ^{4}_{\ 3} & &= \Omega_i \,  K^{4}_{3}\, ,\\
\Omega' r^{4}&=\ - \epsilon_4\,  r_4& & = \Omega_i \, r^4\, ,\\
\Omega' \tilde{r}^{4}&= \ \epsilon_4\, \tilde{ r}_4 &&= \Omega_i\,  \tilde{r}^4\, .
\end{aligned} \end{split} \ee
We find in (\ref{eqn:omegaf}) that the involution $\Omega' $ acts exactly in the same way that the involution $\Omega_i$ defined by (\ref{eqn:temporal}). We can then conclude that the Weyl reflection $s_{\alpha_4+ \alpha_5}$ does not affect the signature $(1,3)$ of $\mathcal{N}=2$ supergravity theory in $D=4$.

 \setcounter{equation}{0}
\section{Embedding of $\asuppp$ in $E_{11}$}
\label{sec:Embedding}

In this final section, we find a regular embedding of $\asuppp$ in the split real form of $E_{11}$ \footnote{An embedding of $\mathfrak{su}(2,1)$ in $E_{8(8)}$ has been discussed in \cite{Gunaydin:2001bt}.} . This embedding will be derived using
elegant arguments from brane physics. We will relate between themselves different extremal brane configurations of eleven-dimensional supergravity and pure $\mathcal{N}=2$ supergravity in $D=4$. We first describe the brane setting we use.

\subsection{The brane setting}

We build an extremal brane configuration  allowed by the intersection rules \cite{Argurio:1997gt,Argurio:1998cp} leading upon dimensional reduction down to four to an extremal Reissner-Nordstr\"om  electrically charged black hole solution \cite{Maldacena:1996ky}  of $\mathcal{N}=2$ supergravity in $D=4$.

The configuration, that we denote by configuration {\bf\sf A}, built out of two extremal M5 branes and two extremal M2 branes  is the following (again we choose the direction 4 to be time-like):

\begin{table}[h]
\begin{center}
\begin{tabular}{|c|cccc|ccccccc|}
\hline
Branes &1& 2 & 3 & 4 & 5 & 6 & 7 & 8 & 9 & 10 & 11\\
\hline\hline
$A_1$=M5 &\, & \, & \, &$\bullet$ & \, & \, &$\bullet$ &$\bullet$  &$\bullet$ &$\bullet$& $\bullet$ \\
\hline
$A_2$=M5 &\, & \, & \, &$\bullet$ & $\bullet$ &$\bullet$ &\, &\,   &$\bullet$ &$\bullet$& $\bullet$ \\
\hline
$A_3$=M2 &\, & \, & \, &$\bullet$ & \, &$\bullet$ &\, &$\bullet$  &\, &\, & \, \\
\hline
$A_4$=M2 &\, & \, & \, &$\bullet$ &$\bullet$ & \, &$\bullet$ &\, &\, &\,& \, \\
\hline
\end{tabular}
\end{center}
\caption{\sl \small  Configuration \textbf{\sf{A}}: the extremal brane configuration leading to a four-dimensional extremal Reissner-Nordstr\"om  electrically charged black hole. The directions 1 to $4$ are non-compact (where 4 is time) and the directions 5 to 11 are compact.}
\label{tab:braneconf1}
\end{table}

This extremal configuration is generically characterised by four different harmonic functions in three dimensions, one for each brane. Here we choose the harmonic function to be the same for all the branes: $H= 1+\frac{q}{r}$ where $r$ is the radial coordinate in the four-dimensional non-compact space-time (we denote also $\phi \in [0, 2\pi] $ and $\theta \in [0, \pi]$ the usual angles, considering spherical coordinates). The metric of this intersecting branes configuration, depending only on the $q$ parameter  is:
\begin{equation}
\dd s^2_{11}=-H^{-2} \dd x_4^2 +H^2 (\dd x_1^2+\dd x_2^2+\dd x_3^2) + \sum_{i=5}^{11} \dd x_i^2.
\label{conf11}
\end{equation}
Upon dimensional reduction down to four dimensions the metric (\ref{conf11}) is the four-dimensional extremal  
Reissner-Nordstr\"om  electrically charged black hole solution of $\mathcal{N}=2$ supergravity in $D=4$
given by (\ref{Isom}) with $m=q$ and $n=h=0$ and with $t=x^4$.

The eleven-dimensional solution is characterised by four non-zero components $A^{(i)}, \  i=1 \dots 4$, of the three form potential, one for each brane. These are given by (see for instance \cite{Argurio:1998cp}):
\begin{equation}
\label{Acompo}
A^{(1)}= A_{\phi 5 6}, \qquad A^{(2)}= A_{\phi 78}, \qquad A^{(3)}=A_{468}, \qquad A^{(4)}=A_{457}.
\end{equation}
The corresponding non-vanishing components of the  field strengths are such that $\star F^{(1)}=\star F^{(2)}=F^{(3)}= F^{(4)}=\partial_r(H^{-1})$ where $\star$ denotes the Hodge dual in eleven dimensions.  As a consequence, if we want to interpret the configuration after dimensional reduction as an electric Reissner-Nordstr\"om black hole we have to identify the four-dimensional Maxwell field strength ${}^{(4)} F$ of (\ref{eqn:SugraAction4d}) as being the dimensional reduction of the diagonal eleven-dimensional field strength  ${}^{(11)}F^{diag} \equiv \star F^{(1)}+\star F^{(2)}+  F^{(3)}+ F^{(4)}$. This gives indeed back (\ref{Isom}) with $m=q$ and $n=h=0$.

Having the eleven-dimensional origin of the electrically charged extremal Reissner-Nordstr\"om black hole, we can now easily deduce the eleven-dimensional configuration corresponding to the magnetically charged extremal Reissner-Nordstr\"om by Hodge dualising {\it in four dimensions}  (i.e. the internal coordinates $x_i,  \ i=5 \dots  11$, playing now a passive role) and uplifting back to eleven dimensions. One immediately deduces that the non-zero components  ${\tilde A}^{(i)}$ of the dual configuration are:
\begin{equation}
\label{Acompo2}
{\tilde A}^{(1)}= A_{4 5 6}, \qquad {\tilde A}{(2)}= A_{478}, \qquad {\tilde A}^{(3)}=A_{\phi 68}, \qquad {\tilde A}^{(4)}=A_{\phi 57}.
\end{equation}
From (\ref{Acompo}), we deduce that the dual configuration, denoted with the letter {\bf\sf B}, is the one given in Table \ref{tab:braneconf2}.
\begin{table}[h]
\begin{center}
\begin{tabular}{|c|cccc|ccccccc|}
\hline
Branes &1& 2 & 3 & 4 & 5 & 6 & 7 & 8 & 9 & 10 & 11\\
\hline\hline
$B_1$=M2 &\, & \, & \, &$\bullet$ &$\bullet$ & $\bullet$ & \, &\, &\, &\,& \,\\
\hline
$B_2$=M2 &\, & \, & \, &$\bullet$ & \, &\, &$\bullet$ &$\bullet$  &\, &\,& \, \\
\hline
$B_3$=M5 &\, & \, & \, &$\bullet$ &$\bullet$ &\, &$\bullet$ &\, &$\bullet$ &$\bullet$ &$\bullet$ \\
\hline
$B_4$=M5 &\, & \, & \, &$\bullet$ &\, &$\bullet$ &\, &$\bullet$ &$\bullet$ &$\bullet$&$\bullet$ \\
\hline
\end{tabular}
\end{center}
\caption{\sl \small Configuration \textbf{\sf B}: the extremal brane configuration leading to a four-dimensional extremal Reissner-Nordstr\"om  magnetically charged black hole.}
\label{tab:braneconf2}

\end{table}

The knowledge of the two dual configurations in eleven dimensions will permit us to find an embedding of 
$\asuppp$ in $E_{11}$. In order to do that we first recall how branes are encoded in the algebraic structure of $E_{11}$.

\subsection{Description of the brane configuration in $E_{11}$}

We first briefly recall the algebraic structure of $E_{11}$. The Dynkin diagram is depicted in Figure
\ref{ffirst}. The Lorentzian Kac-Moody algebra $E_{11}$ contains a  subalgebra $\mathfrak{gl}(11, \mathbb R)$ such that $\mathfrak{sl}(11, \mathbb R) \cong A_{10}
\subset \mathfrak{gl}(11, \mathbb R) \subset
E_{11}$. We can again perform a level decomposition of $E_{11}$. The level $\ell$ here
is defined by the number of times the root $\alpha_{11}$ appears in the decomposition of the adjoint representation of $E_{11}$ into irreducible representation of $A_{10}$. The first levels up to $\ell=3$ are listed in Table \ref{tab:leve11} (for more details see Section \ref {subsec:leveldece11}). Here, the indices are vector indices of $\mathfrak{sl}(11,\mathbb{R})$ and hence take values $a=1,\ldots,11$.

\begin{table}[h]
\begin{center}
\begin{tabular}{|c|c|c|}
\hline
$\ell$ &$ \mathfrak{sl}(11,\mathbb{R})$ Dynkin labels& Generator of $E_{11}$\\
\hline \hline
$0$ &$[ 1,0,0,0,0,0,0,0,0,1] $ & $K^a_{\ b}$\\
$1$ &$[ 0,0,0,0,0,0,0,1,0,0]  $ & $R^{\,abc}$\\
$2$ &$[ 0,0,0,0,1,0,0,0,0,0]  $ & $R^{\, abcdef}$\\
$3$ &$[ 0,0,1,0,0,0,0,0,0,1]  $ & $\tilde{R}^{\, abcdefgh|i}$\\

\hline
\end{tabular}
\caption{\sl \small Level decomposition of $E_{11}$ under $\mathfrak{sl}(11,\mathbb{R})$ up to level $\ell=3$ and height $29$. }
\label{tab:leve11}
\end{center}
\end{table}
The positive Chevalley generators of
$E_{11}$ are
${\tilde e}_m=\delta_m^{a} K^a{}_{a+1},\ m=1,\ldots ,10$, and  ${\tilde e}_{11}= R^{\, 9\,  
10\,11}$
where
$R^{abc}$ is the level 1 generators in $E_{11}$. One gets for the Cartan generators
\be
\begin{split}
\begin{aligned}
\label{eqn:aa125} {\tilde h}_m&=\ \delta_m^{a}(K^a{}_a-K^{a+1}{}_{a+1}) \qquad \mathrm{for}  \
m=1,\dots,10\, ,\\
 {\tilde h}_{11}&=\ -\frac{1}{3}(K^1{}_1+\ldots +K^8{}_8) +\frac{2} 
{3}(K^9{}_9+
K^{10}{}_{10}+K^{11}{}_{11})\,.
\end{aligned}
\end{split}
\ee
We now recall how the extremal branes of eleven-dimensional supergravity are encoded in
the algebraic structure of $E_{11}$ (see \cite{Englert:2003py,Englert:2004it,Englert:2004ph,West:2004st}).

Each extremal brane $B_i$ corresponds  to one real root $\alpha_{B_i}$ (or one positive step operator) of $E_{11}$ and the description is always electric namely each M2 brane  is described by a definite component of the three form potential at level one and each M5 is described by a component of the six-form potential of level two. The non-zero component is the one with the indices corresponding to longitudinal directions of the extremal brane $B_i$.  The only other non-zero fields are the Cartan ones which encode the form of the metric \cite{Englert:2003py}. The intersection rules \cite{Argurio:1997gt} are neatly encoded through a pairwise orthogonality condition between the roots corresponding to each brane \cite{Englert:2004it}.

It is worthwhile to recall that such an algebraic description of extremal brane configurations extends to all space-time theories characterized by a $\mathfrak{g}^{+++}$ with simple $\mathfrak{g}$. Here, we will see that it also applies to  pure $\mathcal{N}=2$ supergravity in $D=4$ where $\mathfrak{g}^{+++}=\mathfrak{su}(2,1)^{+++}$, this will be crucial in the next subsection to uncover the embedding.

In Table \ref{tab:step}, we list the positive step operators corresponding to each brane entering in configuration {\bf\sf A} and {\bf\sf B}.
\begin{table}[h]

\begin{center}
\begin{tabular}{|c|c||c|c|}
\hline
Brane of conf. {\bf\sf A} &  step operator  &Brane of conf. {\bf\sf B} &   step operator  \\
\hline\hline
$A_1$ & $R^{4\, 7\, 8\, 9\, 10\, 11}$&  $B_1$ & $R^{4\,5\, 6}$  \\

$A_2$ & $R^{\, 4\, 5\, 6\, 9\, 10 \, 11}$ &  $B_2$ & $R^{4\, 7\, 8}$ \\

$A_3$ & $R^{4\, 6\, 8}$ &  $B_3$ &  $R^{4\, 5\, 8\, 9\, 10\, 11}$\\

$A_4$ & $R^{4\, 5\, 7}$ &  $B_4$ & $R^{4\, 6\,8\, 9\, 10\,11}$\\

\hline
\end{tabular}
\end{center}
\caption{\small The positive step operator corresponding to each brane of configurations {\bf\sf A} and {\bf \sf B}.}
\label{tab:step}
\end{table}
Since all the harmonic functions are the same in   configuration {\bf\sf A} and the dual one {\bf\sf B},  each one is characterized by an unique element of $E_{11}$.  We have
\begin{eqnarray}
\label{stepconA}
\rm{conf. \, \bf \sf{A}}& \Leftrightarrow & c\,  ( \epsilon_1 R^{4\, 7\, 8\, 9\, 10\, 11}+\epsilon_2 R^{4\, 5\, 6\, 9\, 10\, 11}+\epsilon_3 R^{4\, 6\, 8}+\epsilon_4 R^{4\, 5\, 7})\, , \\
\label{stepconB}
\rm{conf. \, \bf\sf{B}} &\Leftrightarrow& c\, (\epsilon_1 R^{4\, 5\, 6}+\epsilon_2 R^{4\, 7\, 8}+\epsilon_3 R^{4\, 5\, 7\, 9\, 10\, 11}+\epsilon_4 R^{4\, 6\, 8 \, 9 \, 10\, 11}) \, ,
\end{eqnarray}
where $c$ is a real constant and $\epsilon_i, i=1 \dots 4$ are signs. We will fix them in the next section.

\subsection{The regular embedding}

We are now in the position to find a regular embedding of $\asuppp$ in $E_{11}$.\footnote{An embedding of the split $\mathfrak{g}_2^{+++}$ in $E_{11}$ was found in \cite{Kleinschmidt:2008jj}. In this reference additional generators were added to take into account the higher rank forms that can be added consistently to the supersymmetry algebra in $D=5$ and to the tensor hierarchy~\cite{Gomis:2007gb,deWit:2008ta}.} We first discuss the non-compact Cartan generators of $\asuppp$: $\mathbf{h_i}$ with $i=1, \ldots, 4$.

\subsubsection{The non-compact Cartan generators of $\asuppp$  }

We first recall that we can describe the Cartan fields of $E_{11}$ in two bases, the $\mathfrak{gl}(11, \mathbb R)$ one and the Chevalley base given by the ${\tilde h}_m$ (see (\ref{eqn:aa125})). The relation between these two bases  (see \eqref{basecartsu21}) is:
\begin{equation}
\sum_{a=1}^{11} \, p_a K^a_{\ a}=\sum_{a=1}^{11} \, q_a\,  \tilde{h}_a\, .
\label{basecart}
\end{equation}
To find the non-compact Cartan generators of $\asuppp$ out of the eleven Cartan generators of $E_{11}$, we have simply to enforce 
\begin{equation}
p_a =0\, , \qquad a=5, \dots, 11.
\label{embc1}
\end{equation}
One can easily understand  this embedding condition in several different ways. In the brane context by noticing that the metric (\ref{conf11}) is characterized by $g_{aa}=1$ for all the longitudinal coordinates $(a=5 \dots 11)$. In a more general way this amounts to demanding that all the scalars coming from the dimensional reduction from eleven down to four should be zero. It is  a necessary condition to have a consistent truncation of eleven-dimensional supergravity to pure $\mathcal{N}=2$ supergravity in $D=4$.

Using (\ref{basecart}) we can translate the embedding condition (\ref{embc1}) in terms of the $q_a$'s using (\ref{eqn:aa125}), we find
\be \begin{split} \begin{aligned} \label{embc2}
q_a &= \tfrac{a-2}{3}\,  q_{11}, \qquad a=4, \dots, 8\, , 
\\
q_9&=\tfrac{4}{3}\,  q_{11},\\
q_{10}&=\tfrac{2}{3}\,  q_{11}.
\end{aligned} \end{split} \ee
Plugging back  (\ref{embc2}) into the Cartan fields of $E_{11}$ in the Chevalley basis, we find
\begin{equation}
\sum_{a=1}^{11} q_a\, {\tilde h}_a = q_1 \mathbf{h_1}+q_2 \mathbf{h_2}+q_3 \mathbf{h_3} +\frac{q_{11}}{3} \mathbf{h_4},
\label{embc4}
\end{equation}
where the $\mathbf{h_i}$ are the four non-compact Cartan generators of $\asuppp$ (see (\ref{eqn:su21+++Chev})).
This completes the discussion of the embedding for the non-compact Cartan generators.
\subsubsection{The other generators  of $\asuppp$  }

We now find  the embedding of the simple step operators and of the compact Cartan generator $\mathbf{h_5}$.
The simple step operators corresponding to the first three nodes of Figure \ref{fig2:su21+++} are of course trivially identified with the step operators of the first three nodes of Figure \ref{ffirst}.
We turn to the generators corresponding to the nodes 4 and 5 of Figure {\ref{fig2:su21+++}, respectively $r^4$ and $\tilde{r}^4$. An extremal Reissner-Nordstr\"om  electrically (resp. magnetically) charged black hole is a zero brane (the only longitudinal direction 4 being time-like). We recall that it is described in $\mathfrak{su}(2,1)^{+++}$ by the step operator $r^4$ (resp. $\tilde{r}^4$) \cite{Englert:2003py}. Consequently, using the brane picture expressions (\ref{stepconA}) and (\ref{stepconB}),  we have the identification 
\be \begin{split} \begin{aligned} \label{embr4}
r^4&=\tfrac{1}{\sqrt 2}\, ( \epsilon_1\,  R^{4\, 7\, 8\, 9\, 10\, 11}+\epsilon_2\,  R^{4\, 5\, 6\, 9\, 10\, 11}+\epsilon_3\,  R^{4\, 6\, 8}+\epsilon_4\, R^{4\, 5\, 7})\, ,  \\
\tilde{r}^4&=\tfrac{1}{\sqrt 2} \,  (\epsilon_1\,  R^{4\, 5\, 6}+\epsilon_2\,  R^{4\, 7\, 8}+\epsilon_3\,  R^{4\, 5\, 7\, 9\, 10\, 11}+\epsilon_4\,  R^{4\, 6\, 8 \, 9 \, 10\, 11}) \, ,
\end{aligned} \end{split} \ee
where the constant $c$ in (\ref{stepconA}) and (\ref{stepconB})  has been fixed to fulfill the normalization of    $r^4$ and $\tilde{r}^4$ in $\asuppp$ (see (\ref{eqn:bililevel1})). We still have to determine the signs $\epsilon_i$. We will fix them in the process of  determining  the compact Cartan $\mathbf{h_5}$ of  $\asuppp$.  The  commutation relations ({\ref{comutenc45}) imply that basically $\mathbf{h_5}$ interchanges the electric and magnetic configuration. The operator $\mathbf{h_5}$ embedded  in   $E_{11}$ should thus correspond, in the brane picture, to the operator interchanging configuration {\bf\sf A} and {\bf\sf B} (see Tables \ref{tab:braneconf1} and \ref{tab:braneconf2}). In order to map configuration {\bf\sf A} onto configuration {\bf\sf B}, brane by brane (i.e $B_i\rightarrow {\tilde B}_i,  \quad i=1 \dots 4$), we have to perform three operations: a double T-duality in the directions 9 and 10, an exchange of the direction 6 and 7 and an exchange of the direction 5 and 8. A double T-duality in the directions 9 and 10 (followed by the exchange of the directions 9 and 10) is described in $E_{11}$ by the Weyl reflection corresponding to the simple root $\alpha_{11}$ (see Figure \ref{ffirst}) \cite{Englert:2003zs,Elitzur:1997zn,Obers:1998rn}. The associated compact generator is: $R^{9\, 10\, 11}-R_{9\, 10\, 11}$.
The exchange of coordinates 6 and 7 (resp. 5 and 8) is generated by the compact generator 
$K^{6}_{\ 7}-K^{7}_{\ 6}$ (resp. $K^{5}_{\ 8}-K^{8}_{\ 5}$). We thus deduce that
\begin{equation}
\label{embedcartcom}
\mathbf{h_5}=K^{6}_{\ 7}-K^{7}_{\ 6}+K^{5}_{\ 8}-K^{8}_{\ 5}+R^{9\, 10\, 11}-R_{9\, 10\, 11},
\end{equation}
we have $(\mathbf{h_5} | \mathbf{h_5})= -6$ as it should (see (\ref{eqn:h5}), (\ref{eqn:bilinearzero})).

To fix the signs in (\ref{embr4}) we use the relation $\left[ r^4, \tilde{r}_4 \right]=\mathbf{h_5}$, we find
\be \begin{split} \begin{aligned}
r^4&=\ \tfrac{1}{\sqrt 2}\, (  R^{4\, 7\, 8\, 9\, 10\, 11}+ R^{4\, 5\, 6\, 9\, 10\, 11} - R^{4\, 6\, 8}+\ R^{4\, 5\, 7})
\label{embs1} \, ,\\
\tilde{r}^4&=\ \tfrac{1}{\sqrt 2} \,  ( R^{4\, 5\, 6}+ R^{4\, 7\, 8}- R^{4\, 5\, 7\, 9\, 10\, 11}+ R^{4\, 6\, 8 \, 9 \, 10\, 11})\, , \\
r_4&=\ \tfrac{1}{\sqrt 2}\, (  R_{4\, 7\, 8\, 9\, 10\, 11}+ R_{4\, 5\, 6\, 9\, 10\, 11} - R_{4\, 6\, 8}+\ R_{4\, 5\, 7})\, , \\
\tilde{r}_4&=\ \tfrac{-1}{\sqrt 2} \,  ( R_{4\, 5\, 6}+ R_{4\, 7\, 8}-R_{4\, 5\, 7\, 9\, 10\, 11}+ R_{4\, 6\, 8 \, 9 \, 10\, 11}) .
\end{aligned} \end{split} \ee
We can the check that the definitions (\ref{embedcartcom})-(\ref{embs1}) together with the $\mathbf{h_i}, \, i=1\dots 4$ (see (\ref{embc4})) satisfy all the relations of $\asuppp$. 

The expressions (\ref{embedcartcom})-(\ref{embs1}) and (\ref{embc2})-(\ref{embc4}) define thus  a regular embedding of  the non-split $\asuppp$ in the split form of $E_{11}$, proving the algebraic counterpart of the truncation of maximal supergravity to the $\mathcal{N}=2$ theory.


\setcounter{section}{0}

\renewcommand{\thesection}{\Alph{section}}

   \setcounter{equation}{0}
    \chapter*{Appendices of  Part II \markboth{Appendices of  Part II}{Appendices of  Part II}}
\addcontentsline{toc}{chapter}{Appendices of  Part II} 
 
 \hrule
\vspace{2cm}
 
 \setcounter{equation}{0}
\section{Signature changes and compensations} \label{appw}
Expressing a Weyl transformation $W$  as a conjugation by a group
element
$U_W$ of $E_{11}$ ($E_{10}$), one defines the involution
$\Omega^\prime$ operating on 
the conjugate elements by
\begin{equation}
\label{newinvolve}
\Omega^\prime (T^\prime)=U_W\,\Omega(\underbrace{U^{-1}_W T^\prime U_W}_{T}) \, U^{-1}_W\, ,
\end{equation}
where $T$ and $T^\prime$ are any conjugate pair of generators in
$E_{11}$ ($E_{10}$). The subgroup invariant under $\Omega$ is
conjugate to the subgroup invariant under $\Omega^\prime$. However, we have seen in Section \ref{sec:tempinvolutintro} that
Weyl reflexions in general do not commute with the temporal
involution~\cite{Keurentjes:2004bv,Keurentjes:2004xx}. 

The Weyl transformations on the gravity line of $E_{11}$ (or $E_{10}$)
simply changes the time coordinate but do not modify the global
signature $(1,10)$ (or $(1,9)$).  This need not be the case for Weyl
transformations from roots pertaining to higher levels. We shall
determine  the different signatures for the M2 and M5 sequences. 
In order to do that, we have to study the effect of the Weyl
reflexions $s_{\alpha_{11}}$ and $s_{-\alpha_{11}+\delta}$ on  the
involutions characterising  the M2 and the M5 we started with. 
We consider separately the two sequences. We will also see that the
nature of the compensation 
transformations ($\mathrm{SO}(2)$ or $\mathrm{SO}(1,1)$) is determined and follows from
this analysis. Finally we shall consider the signatures induced on the
gravity tower by the mapping (\ref{bragramap}), (\ref{bragra1})
and (\ref{bragra2}). 
\subsection{The brane towers}
\subsubsection{Signatures of the M2 sequence} \label{appw1}
We start with the conventional M2 described by the solution
(\ref{2M2}) of 11-dimensional supergravity 
with the signature (1,10, +). Here the first entry denotes the number
of timelike directions 
 (in our case the single direction 9), the second denotes the number
of spacelike directions and the third 
gives the sign of the kinetic energy term in the action ($+$ being the
usual one). We want to determine the space-time 
signature for all the solutions of the M2 sequence. The M2 of level 1
is characterised by 
the involution  $\Omega_{\ell_1}\equiv \Omega_9$ fixing 9 as a time coordinate.

To determine the signature at  level 5, we  perform the Weyl
transformation $s_{-\alpha_{11}+\delta}$ to find  
the corresponding  new involution $\Omega_{\ell_5}$.
The generators of the gravity line affected by this reflexion are $
K^2_{\ 3}$ and $K^8_{\ 9}$. From  
(\ref{newinvolve}) we have (from now on, we drop irrelevant signs
$\rho,\sigma,\tau$ appearing in (\ref{permute})) 
\begin{eqnarray}
\label{ww1a}
\Omega_{\ell_5}
 K^2{}_ 3 &=&U_{s_{-\alpha_{11}+\delta}}\, \Omega_{\ell_1} R^{2\, 4\,5 \,6 \,7 \,8 } \, U^{-1}_{s_{-\alpha_{11}+\delta}}\nn\\
 &=&- K^3{}_ 2 \\
\Omega_{\ell_5}
 K^8{}_ 9&=& U_{s_{-\alpha_{11}+\delta}}\, \Omega_{\ell_1} R_{3\, 4\,5 \,6 \,7 \,9 } \, U^{-1}_{s_{-\alpha_{11}+\delta}}
 \label{ww1ab}\nn\\
&=& +K^9{}_8\, .
\end{eqnarray}
The signature of the level 5 solution is thus unchanged, the only time coordinate is still 9.
The action of the involution $\Omega_{\ell_5}$ on $R^{9\, 10\, 11}$  follows from  $s_{-\alpha_{11}+\delta}(\alpha_{11})= 2\delta  - \alpha_{11}$. We get 
\begin{equation}
\label{ww1b}
\Omega_{\ell_5}
 R^{\,  9 \, 10\, 11}= U_{s_{-\alpha_{11}+\delta}}\, \Omega_{\ell_1} R^{[6]}_5\, U^{-1}_{s_{-\alpha_{11}+\delta}}= + R_{\,  9 \, 10\, 11}\, .
 \end{equation}
As 9 is a timelike coordinate, this yields the usual sign for this generator and one sticks to the signature $(1,10,+)$  as it should. We have 
$\Omega_{\ell_5}\, (R^{[6]}_2-R^{[6]}_{-2})=R^{[6]}_2-R^{[6]}_{-2}$ and the compensation  for the $\mathrm{SL}(2)$ of level 5 (see (\ref{matrix5})-(\ref{comp5})) lies in      its $\mathrm{SO}(2)$ subgroup.

We now  perform the Weyl reflexion $ s_{\alpha_{11}}$   to reach level 7. We have
\begin{eqnarray}
\Omega_{\ell_7} K^8{}_ 9 &=& U_{s_{\alpha_{11}}}\, \Omega_{\ell_5} R^{\,8 \,10 \,11 } \, U^{-1}_{s_{\alpha_{11}}} \nn \\ \label{ww2a}
&=& - K^9{}_8\\
 \label{ww2b}
\Omega_{\ell_7}
R^{\, 9\, 10\, 11} &=& U_{s_{\alpha_{11}}}\, \Omega_{\ell_5} R_{\, 9\, 10\, 11} \, U^{-1}_{s_{\alpha_{11}}}\nn\\
&=& +R_{\, 9\, 10\, 11}\, .
\end{eqnarray}
From (\ref{ww2a}), we deduce immediately that the time coordinates are now 10 and 11 and from
(\ref{ww2b}) we deduce that the sign of the kinetic terms is the `wrong' one, namely it corresponds to the (2,9,-) theory as it should \cite{Hull:1998vg, Hull:1998ym}. We have $\Omega_{\ell_7}\, (
R^{\, 9\, 10\, 11} +R_{\, 9\, 10\, 11})=R^{\, 9\, 10\, 11} +R_{\, 9\, 10\, 11}$ and  the compensation for the $\mathrm{SL}(2)$ of level 7 solution (\ref{fin7}) lies in its $\mathrm{SO}(1,1)$   subgroup.
\begin{table}[h]
\begin{center}
\begin{tabular}{|c|c|c|c|}
\hline
levels ($n>0$)&
times & $(t,s,\pm)$ & compensation \\
\hline\hline
1&9&$(1,10,+) $&$--$\\
\hline 
5&9&$(1,10,+) $&$\mathrm{SO}(2)$\\
\hline
 1+6n , n odd&10,11&$(2,9,-) $&$\mathrm{SO}(1,1)$\\
 -1+6(n+1), n odd &10,11&$(2,9,-) $&$\mathrm{SO}(2)$\\
\hline 
1+6n , n even&9&$(1,10,+) $&$\mathrm{SO}(1,1)$\\
 -1+6(n+1), n even &9&$(1,10,+) $&$\mathrm{SO}(2)$\\
\hline
\end{tabular}
 \caption{\sl \small Involution switches from $\Omega_{\ell_1}$ to
$\Omega_{\ell_{\pm 1+6n}}$ in the M2 sequence due to the application of
  the successive Weyl reflexions  $s_{-\alpha_{11}+\delta}$ and
  $s_{\alpha_{11}}$}  \label{tab:M2seq}
\end{center}
\end{table}

\noindent
We can now repeat the analysis to all levels of the M2 sequence. We use $ s_{-\alpha_{11}+\delta}$  to go from level $1+6n$ to level $-1+6(n+1)$. Replacing in (\ref{ww1a}), (\ref{ww1ab}) and  (\ref{ww1b})  
$\Omega_{\ell_1}$ by $\Omega_{\ell_{1+6n}}$ and $\Omega_{\ell_5}$ by $\Omega_{\ell_{-1+6(n+1)}}$, we see that the signature of the theory is unchanged. Furthermore,  analysing the action of   $\Omega_{\ell_{-1+6(n+1)}}$  on $R^{[6]}_2$, we conclude that the compensation for level  $-1+6(n+1)$ is always an $\mathrm{SO}(2)$ one.
We use $s_{\alpha_{11}}$  to go from level $-1+6(n+1)$ to level $1+6(n+1)$. Replacing in (\ref{ww2a}) and  (\ref{ww2b}) $\Omega_{\ell_5}$ by $\Omega_{\ell_{-1+6(n+1)}}$ and $\Omega_{\ell_7}$ by $\Omega_{\ell_{1+6(n+1)}}$, we see that theories $(1,10,+)$ and $(2,9,-)$ are interchanged. The action of $\Omega_{\ell_{1+6(n+1)}}$ on  $R^{[3]}_1$ shows that the compensation at level  $1+6(n+1)$ lies always in $\mathrm{SO}(1,1)$. The results are summarised in Table \ref{tab:M2seq}.
\subsubsection{Signatures of the M5 sequence} \label{appw2}
We start with the non-exotic M5 described by (\ref{2M5}) solution of eleven-dimensional  supergravity
with the signature (1,10, +) and time in 3. We want to determine the space-time
signature of all the solutions of the M5 sequence depicted in Figure \ref{ger1fig}. The M5 of level 2 is characterised by
the involution  $\Omega_{\ell_2} \equiv \Omega_3$ fixing 3 as a time coordinate.

To determine the signature of  level 4, we  perform the Weyl reflexion $ s_{\alpha_{11}}$  to find 
the  new involution $\Omega_{\ell_4}$.
The generator of the gravity line affected by this reflexion is $K^8{}_9$. From 
(\ref{newinvolve}) we have
\begin{align}
\label{ww52a}
\Omega_{\ell_4}
 K^8_{\ 9} & \ = \ U_{s_{\alpha_{11}}}\, \Omega_{\ell_2} R^{\,8 \,10 \,11 } \, U^{-1}_{s_{\alpha_{11}}}=\  - K^9_{\ 8} \\ \label{ww52b}
\Omega_{\ell_4}
R^{\, 9\, 10\, 11} &\ =\  U_{s_{\alpha_{11}}}\, \Omega_{\ell_2} R_{\, 9\, 10\, 11} \, U^{-1}_{s_{\alpha_{11}}}= \ -R_{\, 9\, 10\, 11}\, ,
\end{align}

From (\ref{ww52a}), we deduce immediately that there is no change of signature and thus 3 remains the only timelike direction. From
(\ref{ww52b}) the sign of the kinetic terms is unchanged and  the phase is still
(1,10,+) as it should. We have $\Omega_{\ell_4} (R^{\, 9\, 10\, 11} -R_{\, 9\, 10\, 11})=R^{\, 9\, 10\, 11} -R_{\, 9\, 10\, 11}$. Hence the coset characterising the level 4 solution is
$\mathrm{SL}(2)/\mathrm{SO}(2)$ and the compensation at level 4 (see (\ref{fin4})) lies in $\mathrm{SO}(2)$.
\begin{table}[h]
\begin{center}
\begin{tabular}{|c|ccccccccc|c|c|}
\hline
$\ell$ &$K^2_{\ 3}$&$K^3_{\ 4}$&$K^4_{\ 5}$&$K^5_{\ 6}$&$K^6_{\
7}$&$K^7_{\ 8}$&$K^8_{\ 9}$&$K^9_{\ 10}$&$K^{10}_{\ 11}$&
times & $(t,s,\pm)$ \\
\hline\hline
4&$+$&$+$&$-$&$-$&$-$&$-$&$-$&$-$&$-$&3&$(1,10,+) $\\
\hline 8& $-$&$+$&$-$&$-$&$-$&$-$&$+$&$-$&$-$&4,5,6,7,8&$(5,6,+)
$\\
\hline
\end{tabular}
\caption{\sl \small Involutions at level 4 and 8.}
\label{tab:signam5seql4}
\end{center}
\end{table}

\vskip -.5cm
To determine the signature of  level 8, we  perform the Weyl reflexion $s_{-\alpha_{11}+\delta}$ to find 
the  new involution $\Omega_{\ell_8}$.
We have
\begin{eqnarray}
\label{ww51a}
&&\Omega_{\ell_8}
 K^2{}_3=U_{s_{-\alpha_{11}+\delta}}\, \Omega_{\ell_4} R^{2\, 4\,5 \,6 \,7 \,8 } \, U^{-1}_{s_{-\alpha_{11}+\delta}}= - K^3{} _2 \nonumber \\
&&\Omega_{\ell_8}
 K^8{}_9 = U_{s_{-\alpha_{11}+\delta}}\, \Omega_{\ell_4} R_{3\, 4\,5 \,6 \,7 \,9 } \, U^{-1}_{s_{-\alpha_{11}+\delta}}= +K^9{}_8\, .
\end{eqnarray}
The flip of sign in  $K^2_{\  3}$ and $K^8_{\  9}$ 
illustrated in Table \ref{tab:signam5seql4} shows that the resulting theory comprises the 5 time coordinates 4, 5, 6, 7 and 8.
The involution $\Omega_{\ell_8}$ acts on $R^{9\, 10\, 11}$ according to
 \begin{equation}
\label{ww51b}
\Omega_{\ell_8}
 R^{\,  9 \, 10\, 11}=U_{s_{-\alpha_{11}+\delta}}\, \Omega_{\ell_4} R^{[6]}_5\, U^{-1}_{s_{-\alpha_{11}+\delta}}= - R_{\,  9 \, 10\, 11}\, .
 \end{equation}
The directions 9, 10 and 11 being spacelike, the action of the involution on $R^{\,  9 \, 10\, 11}$ yields the `right' kinetic energy terms. The new theory is $(5,6,+)$  as it should \cite{Hull:1998vg, Hull:1998ym}.
We have  $\Omega_{\ell_8}(R^{[6]}_2+R^{[6]}_{-2})=R^{[6]}_2+R^{[6]}_{-2}$. Hence the compensation
at level 8 (see (\ref{fin8})) lies in $\mathrm{SO}(1,1)$.

\begin{table}[h]
\begin{center}
\begin{tabular}{|c|c|c|c|}
\hline
levels ($n>0$)&
times & $(t,s,\pm)$ & compensation \\
\hline\hline
2&3&$(1,10,+) $&$--$\\
\hline 
4&3&$(1,10,+) $&$\mathrm{SO}(2)$\\
\hline
 2+6n , n odd&4,5,6,7,8&$(5,6,+) $&$\mathrm{SO}(1,1)$\\
 -2+6(n+1), n odd &4,5,6,7,8&$(5,6,+) $&$\mathrm{SO}(2)$\\
\hline 
2+6n , n even&3&$(1,10,+) $&$\mathrm{SO}(1,1)$\\
 -2+6(n+1), n even &3&$(1,10,+) $&$\mathrm{SO}(2)$\\
\hline
\end{tabular}
\caption{\sl \small Involution switches from $\Omega_{\ell_2}$ to
$\Omega_{\ell_{\pm 2+6n}}$ in the M5 sequence due to the application of the successive Weyl reflexions  $s_{\alpha_{11}}$ and $s_{-\alpha_{11}+\delta}$. }
\label{tab:signam5seq}
\end{center}
\end{table}

\vskip -.5cm
We can repeat the analysis to find the signature of all  the levels of the M5 sequence.
Again we find only the two signatures found at the lower levels alternating every two steps while, as for the M2 sequence, the nature of the compensation alternates at each step.
The results are summarised in Table \ref{tab:signam5seq}.

\subsection{The gravity towers} \label{appsg}
The gravity towers were obtained by performing Weyl transformations on the brane towers. In Section \ref{sec:gravitytowersec} we have showed that the M2 sequence is mapped to the wave sequence and the M5 sequence to the monopole sequence. We shall take advantage of this Weyl mapping to find the signatures of the gravity towers.

The brane towers comprise $4$ different signatures:
\begin{itemize}
\item $(1,10,+)$ with time in $9$ and $(2,9,-)$ with time in $10$ and $11$ for the M2 sequence,
\item  $(1,10,+)$ with time in $3$ and $(5,6,+)$ with time in $4,\, 5,\, 6,\, 7$ and $8$  for the M5 sequence.
\end{itemize}

We perform the Weyl mapping on the $4$ signatures in three steps.  The first Weyl transformation $W_{(1)}$ interchanges 9 and 3 on the gravity line. The second Weyl transformation is the Weyl reflexion $W_{(2)}\equiv s_{\alpha_{11}}$  and the last Weyl transformation $W_{(3)}$ interchanges 9 and 11 on the gravity line. 

The first and last Weyl transformations $W_{(1)}$ and $W_{(3)}$ permute tensor indices and do not alter the global signature. Only  $W_{(2)}$ can change the global signature $(t,s,\pm)$. There are $2$ simple roots affected by $W_{(2)}\equiv s_{\alpha_{11}}$: $\alpha_8$ and $\alpha_{11}$ defining respectively the generators $K^8_{\ 9}$ and $R^{9\, 10\, 11}$. 
From (\ref{newinvolve}) we get, dropping irrelevant signs, 
\begin{align}\label{k89}
\Omega' \, K^8_{\ 9}&\ =\  s_1\, K^9_{\ 8}=\  U_{W_{(2)}} \, \underbrace {\Omega \,  R^{8\, 10\, 11}}_{s_1\, R_{8\, 10\, 11}} \, U_{W_{(2)}}^{-1} \\
\label{r91011}
\Omega' \, R^{9\, 10\, 11}&\ = \ s_2\, R_{9\, 10\, 11}= \ U_{W_{(2)}} \, \underbrace {\Omega \,  R_{9\, 10\, 11}}_{s_2\, R_{9\, 10\, 11}} \, U_{W_{(2)}}^{-1}\, .
\end{align}
where $s_1$ and $s_2$ are signs. The possible change of signatures by the Weyl transformation $W_{(2)}$ will be deduced from the signs $s_1$ and $s_2$.
\subsubsection{Signatures of the wave sequence}
\noindent
{\bf $\bullet$ Mapping of the signature $(1,10,+)$ with time in $9$}

The global signature $(1,10,+)$ is not modified by $W_{(1)}$ but the time coordinate is no longer in $9$ but in $3$. From (\ref{k89}) and (\ref{r91011}), with $s_1=-1 $ and $s_2=-1$, we deduce that the signature is unchanged by the second Weyl reflexion $W_{(2)}$. The last transformation $W_{(3)}$ does not change the signature either. 

The signature $(1,10,+)$ with time in $9$ is thus mapped by the three successive Weyl transformations to the signature $(1,10,+)$ with time in $3$.

\medskip
\noindent
{\bf $\bullet$ Mapping of the signature $(2,9,-)$ with time in $10$ and $11$}

The first Weyl transformation $W_{(1)}$ does not modify the signature. We then perform the Weyl transformation $W_{(2)}$. From (\ref{k89}) with $s_1=+1$ we find that the time coordinate becomes $9$ and from (\ref{r91011}) with $s_2=+1$, we deduce the sign of the kinetic term. This sign is the `usual' one and the signature becomes $(1,10,+)$ with time coordinate $9$. The last Weyl reflexion $W_{(3)}$ does not change the global signature but puts the time coordinate  in $11$.

The signature $(2,9,-)$ with times in $10$ and $11$ is thus mapped by the three successive Weyl transformations to the signature $(1,10,+)$ with time in $11$.

\medskip
\noindent{\bf $\bullet$ Signatures of the wave sequence}

\begin{table}[h]
\begin{center}
\begin{tabular}{|c|c|c|}
\hline
levels ($n, n' >0$)&
times & $(t,s,\pm)$  \\
\hline\hline
0&3&$(1,10,+) $\\
\hline 
6n , n odd&11&$(1,10,+) $\\
6n, n even&3&$(1,10,+)  $\\
\hline 
6n' , n' odd&3&$(1,10,+) $\\
6n', n' even&11&$(1,10,+)  $\\
\hline
\end{tabular}
\caption{\sl \small Signatures of the wave sequence}
\label{tab:signawavseq}
\end{center}
\end{table}

The coset representatives of the M2 sequence $\mathcal{V}_{1+6n}$ (\ref{seqM21}) and $\mathcal{V}_{-1+6n}$ (\ref{seqM22}) are mapped respectively to the coset representatives of the wave sequence $\mathcal{V}_{6n}$ (\ref{seqG01}) and $\mathcal{V}_{6n'}$ (\ref{seqG02}). All the signatures of the wave sequence are summarised in Table \ref{tab:signawavseq}.

\subsubsection{Signatures of the monopole sequence}

\noindent
{\bf $\bullet$ Mapping of the signature $(1,10,+)$ with time in $3$}

The global signature $(1,10,+)$ is not modified by $W_{(1)}$ but the time coordinate is no longer in $3$ but in $9$. We then perform the Weyl transformation $W_{(2)}$. From (\ref{k89}) with $s_1=-1$, we find that the time coordinates become $10$ and $11$ and from (\ref{r91011}) with $s_2=+1$ we deduce the sign of the kinetic term. This sign is the `wrong' one and the signature becomes $(2,9,-)$. The last Weyl reflexion $W_{(3)}$ does not change the global signature $(2,9,-)$ but puts the time coordinates  in $9$ and $10$. 

The signature $(1,10,+)$ with time in $3$ is mapped by the three successive Weyl transformations to the signature $(2,9,-)$ with times in $9$ and  $10$.

\medskip
\noindent
{\bf $\bullet$ Mapping of the signature $(5,6,+)$ with times in $4, 5, 6, 7, 8$}

The first Weyl transformation $W_{(1)}$ does not modify the signature. From (\ref{k89}) and (\ref{r91011}) with $s_1=+1 $ and $s_2=-1$ we see that the signature is invariant  under the second Weyl reflexion $W_{(2)}$.  The last Weyl reflexion $W_{(3)}$ also leaves the signature unchanged. 

The signature $(5,6,+)$ with time in $4, 5, 6, 7, 8$ is left invariant by the Weyl mapping.

\begin{table}[h]
\begin{center}
\begin{tabular}{|c|c|c|}
\hline
levels ($n, n' >0$)&
times & $(t,s,\pm)$  \\
\hline\hline
3&9,10&$(2,9,-) $\\
\hline 
3+ 6n' , n' odd&4,5,6,7,8&$(5,6,+) $\\
3+6n', n' even&9,10&$(2,9,-)  $\\
\hline 
-3+6n , n odd&9,10&$(2,9,-) $\\
-3+6n, n even&4,5,6,7,8&$(5,6,+)  $\\
\hline
\end{tabular}
\caption{\sl \small Signatures of the monopole sequence}
\label{tab:signamonopolseq}
\end{center}
\end{table} 

\noindent
{\bf $\bullet$ Signatures of the monopole sequence}

The coset representatives of the M5 sequence $\mathcal{V}_{2+6n}$ (\ref{seqM51}) and $\mathcal{V}_{-2+6n}$ (\ref{seqM52}) are mapped respectively to the coset representatives of the monopole sequence $\mathcal{V}_{3+ 6n'}$ (\ref{seqG31}) and $\mathcal{V}_{-3+ 6n}$ (\ref{seqG32}). All the signatures of the monopole sequence are summarised in Table \ref{tab:signamonopolseq}.

 \setcounter{equation}{0}
 \section{Weyl transformations commute with compensations}\label{appc}
We will show  that the set of dualitites, compensations and Weyl transformations needed to express, in the M2 and M5 sequences, the Borel representative at a given level  in terms of the level~1 supergravity field  does not depend on the path chosen in Figure \ref{second} and Figure \ref{fig:sequenceb}. Equivalently we will prove that Weyl transformations and compensations do commute. The same proof can be done for the gravity tower.

We note that the  nature of the compensation matrix, i.e. $\mathrm{SO}(2)$ or
$\mathrm{SO}(1,1)$,  is unaltered by  the Weyl reflexions $s_{\alpha_{11}}$ or
$s_{-\alpha_{11}+\delta}$. In other words it is the same along a
column in Figure \ref{second}  and Figure \ref{fig:sequenceb}. 
Indeed (\ref{newinvolve}) shows that the Weyl reflexion mapping the level $k$ generator to the level $k+n$ acts on the involution $\Omega R_k = \epsilon R_{-k}$ where $\epsilon$ is a sign, to yield $\Omega^{\prime} R_{k+n} = U \Omega R_k U^{-1}= \epsilon R_{-k-n}$ with the same sign (irrelevant signs in the Weyl transformed have been dropped). Taking this fact into account, we will analyse simultaneously the $\mathrm{SO}(2)$ and $\mathrm{SO}(1,1)$ compensations.

We start at a given level from the Borel representative  given by (\ref{borel3n}) and  (\ref{borel6n}) for the M2 sequence and by (\ref{6Nborel}) and  (\ref{3Nborel}) for the M5 sequence. After a number of dualities and compensations we reach $\boldsymbol{R_i}\equiv R^{[6]}_{1-3n}$ or $\boldsymbol{R_i} \equiv R^{[3]}_{-1-3n}$  defining a Borel representative ${\cal V}_i$. Both cases are shown in Table \ref{t2case}. We will show that ${\cal V}_f$  defined by $\boldsymbol{R_f}$ in the Table is independent of the path joining $\boldsymbol{R_i}$ to $\boldsymbol{R_f}$. Hence compensations and Weyl transformations do commute.
  \begin{table}[top]
\begin{center}
\begin{tabular}{c|ccccc}
\emph{Level} & &  & &   &\\
-1+3p&$R^{[6]}_{-1+ 3p}$& $\dots$&$ \boldsymbol{R_i} \equiv R^{[6]}_{1-3n}$&$\stackrel{comp}{\longrightarrow} $&$R^{[6]}_{-1+3n}$  \\
{} & & & & & \\
 $  s_{\alpha_{11}}$& & & $\Downarrow$ & & $\downarrow$\\
{} & & & & &\\
1+3p&$R^{[3]}_{1+ 3p}$ &$\dots$ &$R^{[3]}_{-1-3n}$&$\stackrel{comp}{\Longrightarrow}$&$\boldsymbol{R_f} \equiv R^{[3]}_{1+3n}$\\
\hline
.& & &. & &. \\
.& & &. & &. \\
.& & &. & &. \\
\hline
1+3r&$R^{[3]}_{1+ 3r}$& $\dots$&$\boldsymbol{R_i}\equiv R^{[3]}_{-1-3n}$&$\stackrel{comp}{\longrightarrow} $&$R^{[3]}_{1+3n}$\\
{} & & & & & \\
$ s_{-\alpha_{11}+\delta}$  & & & $\Downarrow$ & & $\downarrow$\\
{} & & & & &\\
-1+3 (r+2) &$R^{[6]}_{-1+ 3(r+2)}$ & $\dots$&$R^{[6]}_{1-3(n+2)}$&$\stackrel{comp}{\Longrightarrow}$&$\boldsymbol{R_f}\equiv R^{[6]}_{-1+3(n+2)}$ \\
\end{tabular}
\caption{\sl \small Commutation of compensations and Weyl transformations: the paths depicted by
  simple arrows $``\rightarrow ''$  and by  double arrows
 $``\Rightarrow ''$ lead to the same result. The table applies to both  M2 and M5 sequences described in Figure \ref{second}  and Figure \ref{fig:sequenceb}. The Weyl transformations are    $ s_{\alpha_{11}}$ and $ s_{-\alpha_{11}+\delta}$, and  $1<p$,  $0<r$, $n< p,r$.}
 \label{t2case}
\end{center}
\end{table}

\subsection{case 1: $R_i \equiv R^{[6]}_{1-3n}$ and $R_f \equiv R^{[3]}_{1+3n}$}

Let ${\cal V}_a$ be a coset representative along a path joining $\boldsymbol{R_i}$ to $\boldsymbol{R_f}$.  We write the Cartan contribution ${\cal V}_a^{(0)}$ to ${\cal V}_a$ as
\begin{eqnarray}
 \label{reducedC}
\mathcal{V}_a^{(0)}= \exp \left [ \big (\frac{1}{2} \ln \mathcal{R}e \,  \mathcal{E}_a \big) \  \mathcal{C}_a + \Xi\, K^2{}_2 \right ]  \, .
\end{eqnarray}
Here we isolated in $\mathcal{V}_a^{(0)}$ a contribution $\Xi\, K^2{}_2$ left invariant along the paths by both compensations and Weyl transformations. The relevant contribution of the Cartan generators  is the transformed   $[(1/2)\ln\mathcal{R}e \,  \mathcal{E}_a  \  \mathcal{C}_a]$  of $[(1/2)\ln\mathcal{R}e \,  \mathcal{E}_i ) \  \mathcal{C}_i]$ at $\boldsymbol{R_i}$ where
$\mathcal{C}_i$ is the linear combination of the Cartan generators $h_{11}$ and $K^2{}_2$ pertaining to the $\mathrm{SL}(2)$ subgroup containing
$R^{[6]}_{1-3n}$. From (\ref{slb}), we have  $\mathcal{C}_i = \alpha( -h_{11}-n K^2{}_2)$, and $\alpha \,=\, +1\  [-1]$ if the compensation is in $\mathrm{SO}(2)$ [$\mathrm{SO}(1,1)$]. The Ernst potential $\mathcal{E}_a $ is invariant along a column of Table \ref{t2case}.

We now examine the transformations of $\mathcal{V}_a^{(0)}$ along the paths.

\begin{itemize}
\item{ path ``$\rightarrow$''}

After compensation, both for the $\mathrm{SO}(2)$ and $\mathrm{SO}(1,1)$ compensations, one gets\footnote{The action of the compensation on $ \mathcal{C}_i$ is obtained by  straightforward generalisation of the compensations performed in Sections \ref{subsec:meseqth} and \ref{subsec:m5seqth}.} $\mathcal{C}_c = - \, \mathcal{C}_i=  \alpha( h_{11}+n K^2{}_2)$. Performing  the Weyl transformation $s_{\alpha_{11}}$, $K^2{}_2$ is left invariant and the sign of $h_{11}$ changes. The final expression for  $\mathcal{C}_a$ is thus $\mathcal{C}_f= \alpha(- h_{11}+n K^2{}_2)$.

\item{ path ``$\Rightarrow$''}

Performing the Weyl reflexion $ s_{\alpha_{11}}$, we get $\mathcal{C}_{ s_{\alpha_{11}}} = \alpha(h_{11}-n K^2{}_2$). To perform the subsequent compensation, we must identify the Cartan generator of $\mathrm{SL}(2)$ subgroup containing $R^{[3]}_{-1-3n}$. From 
(\ref{sla}) this is precisely $\mathcal{C}_{ s_{\alpha_{11}}}$. After compensation we then obtain $\mathcal{C}_f=- \, \mathcal{C}_{ s_{\alpha_{11}}}=   \alpha (-h_{11}+n K^2{}_2)$.

\end{itemize}

The two different paths yield the same $\mathcal{C}_f$, {\em QED}. 

\noindent
The proof of path equivalence for the contribution of the step operators to ${\cal V}_a$
is immediate except for possible sign shifts in the Weyl transformations. Taking these into account, one easily verifies that this  affects in the same way both columns of Table \ref{t2case} connecting $\boldsymbol{R_i}$ to $\boldsymbol{R_f}$, as these columns list generators corresponding to opposite roots (see discussion after (\ref{permute})). This completes the proof.

\subsection{case 2: $R_i \equiv R^{[3]}_{-1-3n}$ and $R_f \equiv R^{[6]}_{-1+(3n+2)}$ }

Our starting point is (\ref{reducedC}).  In the relevant contribution of the Cartan generators, $\mathcal{C}_i$ is here the linear combination of  $h_{11}$ and $K^2{}_2$ pertaining to the $\mathrm{SL}(2)$ subgroup containing
$R^{[3]}_{-1-3n}$. From (\ref{sla}), we have  $\mathcal{C}_i = \alpha( h_{11}-n K^2{}_2)$.

\begin{itemize}

\item{ path ``$\rightarrow$''}

After compensation, both for the $\mathrm{SO}(2)$ and $\mathrm{SO}(1,1)$ compensations, one gets $\mathcal{C}_c = - \, \mathcal{C}_i=  \alpha(- h_{11}+n K^2{}_2)$. Performing  the Weyl transformation $s_{-\alpha_{11}+\delta}$, $K^2{}_2$ is left invariant and  $ h_{11} \rightarrow -h_{11}- 2\, K^2{}_2 $. The final expression for  $\mathcal{C}_a$ is thus $\mathcal{C}_f= \alpha (h_{11}+(n+2)K^2{}_2)$.

\item{ path ``$\Rightarrow$''}

Performing the Weyl reflexion $s_{-\alpha_{11}+\delta}$, we get $\mathcal{C}_{ s_{-\alpha_{11}+\delta}} = \alpha (-h_{11} -(n+2)K^2{}_2 )$. To perform the subsequent compensation, we must identify the Cartan generator of $\mathrm{SL}(2)$ subgroup containing $R^{[6]}_{1-3(n+2)}$. From 
(\ref{slb}) this is precisely $\mathcal{C}_{ s_{-\alpha_{11}+\delta}}$. After compensation we then obtain $\mathcal{C}_f=- \, \mathcal{C}_{ s_{-\alpha_{11}+\delta}}=   \alpha (h_{11}+(n+2)K^2{}_2)$.

\end{itemize}

The two different paths yield the same $\mathcal{C}_f$. It
can also be checked that not only the actions on the generators
commute  but that also the corresponding fields are transformed in the
same way. {\em QED}.

 \setcounter{equation}{0}
\section{Redundancy of the two  gravity towers}
\label{apprg}

We will show that there are redundancies at each level 3n of the wave sequence and of the monopole sequence. 

Let us consider first the wave sequence. The metric of the wave sequence solutions follow from the representatives (\ref{seqG01}) and (\ref{seqG02}). One has
\begin{align}
\label{wave6n}
\dd s^2_{[6n]}&= g_{ab} \, \dd x^a\, \dd x^b \nonumber \\
&=   {\cal F}_{2n-1}\bar{\cal F}_{2n-1} \Big [(\dd x^1)^2+(\dd x^2)^2 \Big ] + (-1)^{n+1} H^{-1}_{2n+1}(\dd x^3)^2  \\
& \quad +\Big [(\dd x^4)^2\dots+(\dd x^{10})^2 \Big ]\nn  \\
& \quad  + (-1)^{n} H_{2n+1} \Big [ \dd x^{11} -  \Big( (-1)^{n}   H^{-1}_{2n+1}+ (-1)^{n+1} \Big)\dd x^3 \Big]^2 \, , \nn \\
 \nonumber \\
 \label{wave6np}
\dd s^2_{[6 n']}&= \widetilde{g}_{ab} \, \dd x^a\, \dd x^b\nonumber \\
&= {\cal F}_{2n'-1}\bar{\cal F}_{2n'-1} \Big [(\dd x^1)^2+(\dd x^2)^2 \Big ]  + (-1)^{n'} H^{-1}_{2n'}(\dd x^3)^2 \\
& \quad+\Big [(\dd x^4)^2\dots+(\dd x^{10})^2 \Big ] \nn\\
& \quad + (-1)^{n' +1} H_{2n'} \Big [ \dd x^{11} -  \Big( (-1)^{n' +1}   H^{-1}_{2n'}+ (-1)^{n'} \Big)\dd x^3 \Big]^2  \, ,   \nonumber 
\end{align}
where $H_p = {\cal R}e\, {\cal E}_p$.  

We now show that these metric are identical at each level $6n=6n'$ up to interchange of the coordinates $3$ and $11$. Using the relation
\begin{eqnarray} \label{hdn}
{\cal E}(z)_{2n+1}= 2-{\cal E}(z)_{2n} \qquad \Longrightarrow \qquad H_{2n+1}= 2-H_{2n} \, ,
\end{eqnarray}
 in (\ref{wave6n})  and performing  the following coordinate transformation 
\begin{eqnarray} \label{tsfc}
\left \{ \begin{array}{ll} 
x_3^{\prime} = - x_{11} & \, , \\
x_{11}^{\prime} = x_3& \, ,  \\
x_a^{\prime} = x_a  & a \neq 3, 11 \, ,
 \end{array}  \right.
\end{eqnarray}
we get the transformed metric 
\begin{align}
ds^{\prime \, 2}_{[6n]}&=  {\cal F}_{2n-1}\bar{\cal F}_{2n-1} \Big [(\dd x^1)^2+(\dd x^2)^2 \Big ] + (-1)^{n+1} (2- H_{2n})^{-1} (\dd x^{11})^2 \nn \\
&\quad +\Big [(\dd x^4)^2\dots+(\dd x^{10})^2 \Big ] \\
 & \quad + (-1)^{n}(2- H_{2n}) \Big [ \dd x^{3} + \Big( (-1)^{n}  (2- H_{2n})^{-1}+ (-1)^{n+1} \Big)\dd x^{11} \Big]^2 \,. \nn
\end{align}
We thus conclude that $ds^{\prime \, 2}_{[6n]} =  \dd s^2_{[6n']} $ for $n = n'$ :
\begin{eqnarray}
\begin{array}{lll}
g^\prime _{11\, 11} &= (-1)^{n+1} H_{2n} &= \widetilde {g} _{11\, 11} \, ,   \\
 g^\prime _{3\, 3}& = (-1)^{n}(2- H_{2n})&= \widetilde {g} _{3\, 3}  \, , \\
g^\prime _{3\, 11} &= -1+ H_{2n} &= \widetilde {g} _{3\, 11}  \, , \\
g^\prime _{a\, a} & = \, \widetilde {g} _{a\, a} \quad a\neq 3, 11 \, .
 & \end{array}
\end{eqnarray}
Let us now  consider the monopole sequence. The metric of the monopole sequence solutions follow from the representatives (\ref{seqG31}) and (\ref{seqG32}). One has
\begin{align}
\label{m6n}
\dd s^2_{[3+ 6n']}&= g_{ab} \, \dd x^a\, \dd x^b \nonumber\\
&= {\cal F}_{2n'-1}\bar{\cal F}_{2n'-1}  H_{2n'+1}\Big [(\dd x^1)^2+(\dd x^2)^2 \Big ]\! + H_{2n'+1}(\dd x^3)^2 \\
& \quad +(-1)^{n'}  \Big [(\dd x^4)^2\dots +(\dd x^{8})^2 \Big ] +(-1)^{n'+1}  \Big [(\dd x^9)^2+(\dd x^{10})^2 \Big ]  \nn  \\
 &\quad  + H^{-1}_{2n'+1} \Big [ \dd x^{11} -  \Big( (-1)^{n'}   B_{2n'+1}\Big)\dd x^3 \Big]^2 \nn \\
 \nonumber \\
 \label{m63n}
\dd s^2_{- 3+ 6n}&= \widetilde{g}_{ab} \, \dd x^a\, \dd x^b\nonumber\\  
&= {\cal F}_{2n-1}\bar{\cal F}_{2n-1}  H_{2n}\Big [(\dd x^1)^2+(\dd x^2)^2 \Big ] + H_{2n}(\dd x^3)^2\\
&\quad +(-1)^{n+1}  \Big [(\dd x^4)^2\dots+(\dd x^{8})^2 \Big ] +(-1)^{n}  \Big [(\dd x^9)^2+(\dd x^{10})^2 \Big ]     \nonumber \\
 &\quad  + H^{-1}_{2n} \Big [ \dd x^{11} -  \Big( (-1)^{n+1}   B_{2n}\Big)\dd x^3 \Big]^2 \,. \nn
 \end{align}
 where $B_p = {\cal I}m\, {\cal E}_p$.
 
We now show that these metric are identical at each level $3+ 6n'= - 3+ 6n$, i.e. $n= n'+1$, up to interchange of the coordinates $3$ and $11$. Replacing in (\ref{m63n}) $n$ by $n'+1$, we get 
\be \label{m63n2} \begin{split}
\dd s^2_{[-3+ 6(n'+1)]} &=   {\cal F}_{2n'-1}\bar{\cal F}_{2n'-1} {\cal E}_{2n'+1}\bar{\cal E}_{2n'+1}   H_{2n'+2}\Big [(\dd x^1)^2+(\dd x^2)^2 \Big ] \\&\quad + H_{2n'+2}(\dd x^3)^2+(-1)^{n'}  \Big [(\dd x^4)^2\dots+(\dd x^{8})^2 \Big ]  \\
 &\quad + (-1)^{n'+1}  \Big [(\dd x^9)^2+(\dd x^{10})^2 \Big ]   + H^{-1}_{2n'+2} \Big [ \dd x^{11} -  \Big( (-1)^{n'}   B_{2n'+2}\Big)\dd x^3 \Big]^2 
\end{split} \ee
Using the relation
\be\label{hdn2} \begin{split}
{\cal E}(z)_{2n'+2}= ({\cal E}(z)_{2n'+1} )^{-1}\qquad &\Longrightarrow \qquad H_{2n'+2}= \frac{H_{2n'+1}}{ {\cal E}_{2n'+1}\bar{\cal E}_{2n'+1} } \, \cvv \\
&\Longrightarrow \qquad B_{2n'+2}= -\frac{ B_{2n'+1}}{ {\cal E}_{2n'+1}\bar{\cal E}_{2n'+1} } \, \cvv
\end{split} \ee
 in (\ref{m63n2})  and performing  the  transformation of coordinates (\ref{tsfc}), we get the transformed metric 
\begin{align}
ds^{2 \prime}_{[-3+ 6(n'+1)]} &=   {\cal F}_{2n'-1}\bar{\cal F}_{2n'-1}   H_{2n'+1}\Big [(\dd x^1)^2+(\dd x^2)^2 \Big ] + \frac{H_{2n'+1}}{ {\cal E}_{2n'+1}\bar{\cal E}_{2n'+1} }(\dd x^{11})^2\nonumber \\&\quad +(-1)^{n'}  \Big [(\dd x^4)^2\dots+(\dd x^{8})^2 \Big ]  
 + (-1)^{n'+1}  \Big [(\dd x^9)^2+(\dd x^{10})^2 \Big ] \nonumber \\  &\quad + (\frac{H_{2n'+1}}{ {\cal E}_{2n'+1}\bar{\cal E}_{2n'+1} })^{-1} \Big [ \dd x^{3} +  \Big( (-1)^{n'+1}   \frac{B_{2n'+1}}{ {\cal E}_{2n'+1}\bar{\cal E}_{2n'+1} }\Big)\dd x^{11} \Big]^2 \, . 
\end{align}
We thus conclude that $ds^{2 \prime}_{[-3+ 6(n'+1)]} =  \dd s^2_{[3+6n']}$:
\begin{eqnarray}
\begin{array}{lll}
g^\prime _{11\, 11} &=  H^{-1}_{2n'+1} &= g _{11\, 11} \, ,   \\
 g^\prime _{3\, 3}& = H^{-1}_{2n'+1}  (  {\cal E}_{2n'+1}\bar{\cal E}_{2n'+1} )&= g _{3\, 3}  \, ,\\
g^\prime _{3\, 11} &= (-1)^{n'+1}B_{2n'+1} H^{-1}_{2n'+1}  &= g _{3\, 11}  \, , \\
g^\prime _{a\, a} & =g _{a\, a} \quad a\neq 3, 11 .& 
 \end{array}
\end{eqnarray}

We see that there is only one gravity tower, the left and the right tower of Figure \ref{fig:mappingbranegrav}b being equivalent (except for the level 0 KK-wave) as each of them contains the full wave and monopole sequences.

\setcounter{equation}{0} 
\section{Level 4 by Buscher's duality } \label{appb}
We review the Buscher formulation \cite{Buscher:1987sk,
  Buscher:1987qj}  of T-duality in 10-dimensional superstring theories
for backgrounds admitting one Killing vector\footnote{Here we are
  interested  in NS-NS backgrounds  as it was discussed originally
  \cite{Buscher:1987sk, Buscher:1987qj}, thus the formula apply not
  only for type II but also to the bosonic string in 26 dimensions. We
  will not need here the generalisation to R-R backgrounds
  \cite{Bergshoeff:1995as}.} . 
In Buscher's  construction one starts with a manifold $\cal M$ with
  metric $g_{ij}$  in the string frame, dilaton background $\phi$ and
  NS-NS background potential  $b_{ij}$. If the background is invariant
  under $x^{10}$ translations, it becomes under T-duality in the
  direction 10 $(a,b=1, \dots ,9)$ 
\begin{eqnarray}
\label{busch}
&& \widetilde{g}_{10 \, 10} = 1/g_{10\, 10}\nn \\
&&  \widetilde{g}_{10 \, a}=b_{10\, a}/g_{10\, 10} \nonumber\\
&& \widetilde{g}_{ab} = g_{ab}-(g_{10\, a}g_{10 \, b}-b_{10\, a}b_{10\, b})/g_{10 \, 10} \nonumber\\
&& \widetilde{b}_{10 \, a}=g_{10\, a}/g_{10\, 10}  \\
&& \widetilde{b}_{ab} = b_{ab}-(g_{10\, a}b_{10 \, b}-b_{10\, a}g_{10\, b})/g_{10 \, 10} \nonumber\\
&&\widetilde{\phi}= \phi-{1\over 2} \ln g_{10\, 10} \, .\nonumber
\end{eqnarray}

We apply these transformations to a M5 brane with longitudinal spacelike directions $4 \dots 8$ and longitudinal timelike direction 3 to generate the level 4 BPS solution. We smear the M5 in the directions 9, 10, 11.  We perform a double T-duality in the directions 9 and 10. 
Upon dimensional reduction along $x^{11}$ to type IIA, this M5 yields a NS5 brane smeared in the directions 9,10 with non-compact transverse directions are
1 and 2. The  smeared NS5  is given in the string frame by 
\begin{align}
\label{sns5}
 \dd s^2_{NS5}&=  -(\dd x^3)^2 +(\dd x^4)^2+\dots  +(\dd x^8)^2 +
H \left[ (\dd x^1)^2  +(\dd x^2)^2 +(\dd x^9)^2  +(\dd x^{10})^2\right ] \nonumber \\
 H(r)&=\ln r \nonumber\\
 e^{\phi}&= H^{1/2} \nonumber\\
 \widetilde{F}_{r345678}&=  \partial_r (1/ H)\, ,
\end{align}
where $ r^2\equiv (x^1)^2+(x^2)^2$ and  $\widetilde{F}_{r345678}$ is the Hodge dual of the 3-form NS field strength $\dd b$. We use the dual because the smearing procedure is always performed in the `electric' description of a  brane.

We first perform the T-duality in the direction 10.  Performing a T-duality on a direction transverse to a NS5 yields a KK5 monopole with the Taub-NUT direction in this transverse direction. Thus the T-duality on the configuration (\ref{sns5}) generate a KK5 monopole with Taub-NUT direction 10 and smeared along 9. 
To find from Buscher's rule the transformed configuration, we need the value of the
non-zero $b$ field. Using the Hodge duality and (\ref{sns5}), we find that the non-zero component of $b$ is
\begin{equation}
\label{bfield}
b_{9\, 10} =  {\rm arctg}(x^2/x^1) \equiv B\, .
\end{equation}
Using  (\ref{busch}), we find  the metric of the smeared KK5 monopole, 
 \begin{equation}\begin{split}
 \label{skk5}
 \dd s^2_{s\mathrm{KK5}}=  &-(\dd x^3)^2 +(\dd x^4)^2+\dots  +(\dd x^8)^2 +
H \left [ (\dd x^1)^2  +(\dd x^2)^2 +(\dd x^9)^2\right]\\
&+H^{-1} \left[(\dd x^{10}) - B (\dd x^9)\right]^2
\end{split}\end{equation}
and $\phi=0$, as it should. We then perform the second T-duality in the direction 9, 
applying (\ref{busch}) on (\ref{skk5}). We get (in the string frame)
\begin{align}
\label{2tns5}
\dd s^2&=  -(\dd x^3)^2 +(\dd x^4)^2+\dots  +(\dd x^8)^2 +
H \left ( (\dd x^1)^2  +(\dd x^2)^2\right)\nn \\
&\qquad  + \widetilde{H} \left((\dd x^9)^2+(\dd x^{10})^2\right) \nonumber\\
e^\phi&= \widetilde{H}^{1/2} \\
 b_{9\, 10} &=\widetilde{B}, \nonumber
 \end{align}
 where $ \widetilde{H}= H/(H^2+B^2)$ and $ \widetilde{B}=-B/(H^2+B^2)$. 
Finally uplifting the configuration (\ref{2tns5}) back to eleven dimension, we find
 \begin{align}
 \label{Ml4}
\dd s^2&=  H\widetilde{H}^{-1/3} \left( (\dd x^1)^2  +(\dd x^2)^2\right) + \widetilde{H}^{-1/3}\left(-(\dd x^3)^2 +(\dd x^4)^2+\dots  
 +(\dd x^8)^2\right) \nonumber\\
& \qquad +\widetilde{H}^{2/3}\left((\dd x^9)^2+(\dd x^{10})^2+(\dd x^{11})^2\right) \nonumber\\
 A_{9\, 10\, 11}&=\widetilde{B}\, .
 \end{align}
 
 The solution (\ref{Ml4}) of eleven-dimensional  supergravity is exactly the level 4 solution  (\ref{metric4}) obtained starting with the level 2 solution (describing the double smeared M5) and performing a Weyl reflexion $s_{\alpha_{11}}$ to go to level 4. This result is in agreement with  the interpretation of the Weyl reflexion  $s_{\alpha_{11}}$ as a double T-duality in the directions 9,10 plus the interchange of the two directions \cite{Elitzur:1997zn, Obers:1998rn, Banks:1998vs, Englert:2003zs}. 

 \setcounter{equation}{0}
\section{Structure of the $A_1^+$ U-duality group}
\label{affapp}

Our U-duality group in two non-compact dimensions is the infinite
order Weyl group ${\cal{W}}$ of an affine group. The structure of such Weyl
groups is known in terms of translations and the finite Weyl group of
the underlying finite group of rank $r$ \cite{Kac:book}. The affine Weyl
group is the semi-direct product of translations $\mathbb{Z}$ and the
finite Weyl group. For the case of affine
$A_1^+$, which features prominently in this
paper, the affine Weyl group is simply
\begin{eqnarray}
{\cal{W}} = \mathbb{Z}_2 \ltimes \mathbb{Z}.
\end{eqnarray}

To derive this fact it is useful to denote the two simple roots of 
$A_1^+$ by $\alpha_1$ and $\alpha_2$. These can be identified to
$\alpha_{11}$ and $-\alpha_{11} +\delta$   for the M2 and the M5
sequences,  and to $\lambda$ and $-\lambda + \delta$ for the KK-wave
and the KK-monopole sequences.  The simple roots $\alpha_1$ and 
$\alpha_2$ are a basis of the root lattice (they span the ladder
diagrams of Figure \ref{ger1fig} and Figure \ref{fig:mappingbranegrav}). To describe the fundamental reflexions
$s_1,s_2$ in these two  roots, it is sufficient to give their action
on a basis: 
\be \begin{split}\begin{aligned}
s_1(\alpha_1) &= -\alpha_1,&\quad s_1(\alpha_2)&= \alpha_2 + 2
\alpha_1,\\ 
s_2(\alpha_1) &= \alpha_1+2\alpha_2,&\quad s_2(\alpha_2)&= -\alpha_2.
\end{aligned}\end{split} \ee
We take the $\mathbb{Z}_2$ to be generated by the horizontal
Matzner--Misner reflexion $s_2$. The Coxeter relations for this Weyl
group are $(s_1 s_2)^\infty = \text{id}$ and $(s_2 s_1)^\infty =
\text{id}$, in other words there are no mixed relations. Therefore all
Weyl group elements are of the form
\be
\label{cox}
(s_1 s_2)^n, \quad\quad (s_2 s_1)^n, \quad\quad s_1 (s_2 s_1)^n,\quad\quad s_2 (s_1 s_2)^n, \quad\quad\text{id}\, ,
\ee
for some $m,n\ge 0$. Defining $t^n = (s_1s_2)^n$
for $n\ge 0$ and $t^n=(s_2s_1)^{(-n)}$ for $n\le 0$ one deduces for the
$\mathbb{Z}_2$ generated by $s_2$ the structure
\begin{eqnarray}
s_2 t^n = t^{-n} s_2\, ,
\end{eqnarray}
illustrating the semi-directness of the product in this case. The set
of all elements of the infinite order Weyl group is thus 
\begin{eqnarray}
{\cal{W}}=\{ t^n : n\in\mathbb{Z}\} \cup \{ s_2 t^n : n\in\mathbb{Z}\}\, ,
\end{eqnarray}
with relations
\be \begin{split}
s_2s_2 &= 1,\\
s_2 t^n &= t^{-n} s_2,\\
t^n t^m &= t^{n+m}.
\end{split} \ee
The translations $t$ act vertically in the diagrams  of Figure \ref{ger1fig}a, Figure \ref{second}, Figure \ref{fig:sequenceb}, Figure \ref{fig:mappingbranegrav}a and Figure \ref{fig:mappingbranegrav}b. They connect the points lying both on the same tower and on the same sequence.

We can act with the Weyl group on any integrable representation $\rho:A_1^+\to End(V)$, by letting \cite{Kac:book}
\begin{eqnarray}\label{weyl}
U_i = \exp(\rho (f_i)) \exp(-\rho (e_i)) \exp(\rho (f_i))\in \mathrm{GL}(V).
\end{eqnarray}
Here $ e_i$ and $f_i$ are the simple Chevalley generators. It is not hard to see that this definition implies that $U_i$ is
actually an element of $\mathrm{SO}(V)$ in the sense that
$U_iU_i^T=\text{id}_V$ for $i=1,2$ where the {\em Chevalley}
transposed element is\footnote{The
  definition 
  (\ref{weyl}) does not necessarily imply $U_i U_i=\text{id}_V$. In
  order to arrive at the proper Weyl group one has to factor out the
  subgroup generated by $U_i U_i$ from the $\mathrm{GL}(V)$
  subgroup generated by the $U_i$, see \S3.8 in~\cite{Kac:book}.}
  \be
U_i^T=\exp(\rho(e_i))\exp(-\rho (f_i))\exp(\rho (e_i)).
\ee

From (\ref{weyl}) it is straightforward to show that Weyl reflexions
are always elements of the compact subgroup of the split real form of
the associated group, so in our case this means $\mathrm{K}_{10}^+$.

  \setcounter{equation}{0}
\section{Masses and U-duality} \label{appm}
We first review the KK6-monopole mass and then derive all the masses or actions ${\cal A}_{\ell}$ of the $E_9$ multiplet in the M-theory context from the Weyl reflexions interpreted as T-dualities.
Their spacelike or timelike nature  is discussed.

\subsection{The level $3$ solution}
\label{appm3}

The KK6-monopole mass can be derived from the M5 mass using the relation between 11-dimensional supergravity and type IIA theory and T-duality. We recall that the relations between the 11-dimensional parameters $R_{11}$ and the Planck  length  $l_p$ and the string parameters $g$ and $l_s$ are (we neglect all numerical factors):
\begin{eqnarray}
\label{1110}
l_p= g^{1/3}\, l_s \qquad ,\qquad R_{11}= g \, l_s \,.
\end{eqnarray}
On the other hand if one compactifies a direction of type IIA theory  on a circle of radius R, the T-duality along this direction acts as:
\begin{eqnarray}
\label{tdualp}
R \rightarrow \frac{l_s^2}{R}\qquad , \qquad g \rightarrow \frac{g l_s}{R} \cvp
\end{eqnarray}
We start with a M5 along the directions $4 \dots 8$ and 3 is the longitudinal timelike direction. The mass of this elementary M5 is given by
\begin{equation}
\label{mm5}
M_{\ell=2}={R_4 \dots R_8 \over l_p^6}\cvp
\end{equation}
We smear this M5 along the directions 10 and 11. Reducing along the eleventh direction, using (\ref{1110}), one obtains a NS5 in type IIA then performing a T-duality along the direction 10, using (\ref{tdualp}) one get the KK5-monopole of type IIA with  Taub-NUT direction 10 
\begin{equation}
\label{mkk5}
M_{\mathrm{KK5}}={R_4 \dots R_8  R^2_{10} \over g_s^2 \, l_s^8}\, \cvp
\end{equation}
Uplifting back to eleven dimension, using (\ref{1110}), one obtains a KK6 monopole with longitudinal directions $4 \dots 8, 11$ and with Taub-NUT direction 10 .
Using (\ref{1110}) and  (\ref{tdualp}), one find the mass of this KK6-monopole\footnote{The KK6-monopole discussed here has timelike direction is 3 and the Taub-NUT direction is 10. The level 3 KK6-monopole of the gravity tower in the mapping from the 6-tower in Chapters \ref{chap:infiniteudualgroup} and 4 have timelike directions 9 and 10 and Taub-NUT direction 11. Note that in Section \ref{subsec:kkwkkm} the non-exotic KK6-monopole has its single timelike direction in 4 and Taub-NUT direction in 11. }
\begin{equation}
\label{mkk}
M_{\ell=3}={R_4 \dots R_8 R_{11} R^2_{10} \over l_p^9}\cvp
\end{equation}

\subsection{The level $4$ solution}
\label{appm4}
To go to level 4 from the M5 considered in (\ref{mm5}) we perform as in Section \ref{subsec:m5seqth} the Weyl reflexion $s_{\alpha_{11}}$ interpreted here as a double T-duality plus exchange in the direction 9 and 10. We thus smear the KK5-monopole with mass given in (\ref{mkk5}) in the direction 9 and perform a further T-duality in this direction. From 
(\ref{1110}) and (\ref{tdualp}), we find the mass of the level 4 solution \begin{equation}
\label{ml4}
M_{\ell=4}={R_4 \dots R_8 R^2_9 R^2_{10}R^2_{11}  \over l_p^{12}}\cvp
\end{equation}

\subsection{The level $5$ solution}
\label{appm5}

As in Section \ref{subsec:meseqth}, to   reach the level $5$ solution we perform,
the Weyl reflexion $s_{-\alpha_{11}+\delta}$ sending the level $1$ generator $R_{1}^{[3]}$ to the level $5$ generator $R_{5}^{[6]}$. 
  
We  first decompose the Weyl reflexion $s_{-\alpha_{11}+\delta}$  in terms of simple Weyl reflexions
which have an interpretation in terms of permutations of coordinates and double T-duality plus exchange of the directions
9 and 10. This decomposition will permit us to compute the mass of the level 5 solution and also to check that the timelike direction 9 is unaffected by $s_{-\alpha_{11}+\delta}$.  We write
\begin{eqnarray} 
s_{-\alpha_{11}+\delta}= s_i \, s_j \ldots s_k\, ,
\end{eqnarray}
where $s_i \equiv s_{\alpha _i}$ is the simple Weyl reflexion corresponding to the simple root $\alpha_i$. To perform this decomposition, we can use the following lemma (whose proof is straightforward computing both sides of the equality) :

\noindent
{\bf Lemma}
\it 
{If a real root $\gamma$ can be written as the sum of two real roots $\gamma = \gamma_1 + \gamma_2$ such that $ (\gamma_1|\gamma_2)= -1$, then the Weyl reflexion $s_{\gamma }$ can be decomposed as $s_{\gamma } = s_{\gamma_1 }\, s_{\gamma_2 }\, s_{\gamma_1 }$  .} \rm

\noindent

One may write the root $- \alpha _{11}+ \delta$ as the sum $\gamma_1 + \gamma_2 $ where $\gamma_1$ is the root associated to $R^{345}$ and $\gamma_2$ is the root associated to $R^{6\, 7\, 8}$. Using the lemma, we get
\begin{eqnarray}
s_{-\alpha_{11}+\delta} = s_{\gamma_1+ \gamma_2 } = s_{\gamma_1 }\, s_{\gamma_2 }\, s_{\gamma_1 }\, ,
\end{eqnarray}
with
\be \begin{split}
s_{\gamma_1}&= \underbrace{T_{39}\, T_{4\, 10}\, T_{5\, 11}}_{\mathcal{R}_1}\,  s_{\alpha_{11}} T_{5\, 11}\, T_{4\, 10}\, T_{39} \\
s_{\gamma_2}&=\underbrace{ T_{69}\, T_{7\, 10}\, T_{8\, 11}}_{\mathcal{R}_2}\,  s_{\alpha_{11}} T_{8\, 11}\, T_{7\, 10}\, T_{69}\, ,
\end{split}\ee
where $T_{ij}$ is the  Weyl reflexion of the gravity line permuting
the $i$ and $j$ indices namely it permutes the compactified radii $R_i
\leftrightarrow R_j$ and $ s_{\alpha_{11}}$ is the simple Weyl
reflexion with respect to $\alpha_{11}$ interpreted in type IIA as a
double T-duality in the directions 9 and 10 followed by an exchange of
the two radii.  We can directly check that there are no timelike
T-dualities when we perform the Weyl reflexion $s_{\alpha_{11}}$. The
reason is that the $T_{69}$ and $T_{39}$ replace always the time
coordinate 9 by 3 or 6 before any double T-duality.

We can now compute the mass of the level $\ell=5$ solutions by applying the Weyl reflexion $s_{-\alpha_{11}+\delta}$ to the expression (\ref{massl1}) for the  mass of the M2 with longitudinal spacelike directions 10, 11 and smeared in all spacelike directions but two 1 and 2 . The M2 mass is given by
\begin{eqnarray}
\label{massl1}
M_{M2}=  \frac{R_{10}\, R_{11}}{l_p^{3}}\, \cvp
\end{eqnarray}
 To perform the sequence of simple  Weyl reflexions in the decomposition of $s_{-\alpha_{11}+\delta}$, we insist on two points :
\begin{itemize}
\item the permutations of the radii $R_i \leftrightarrow R_j$ are always performed in $11$ dimensions ;
\item the T-dualities are performed in $10$ dimensions.
\end{itemize}
Then, when reaching any $s_{\alpha_{11}}$ in a sequences of Weyl reflexions, we reduce the 11-dimensional theory to type IIA to perform the double T-duality plus exchange of radii,  and then do an uplifting to $11$ dimensions. 
Performing successive permutations  of radii, reduction on type IIA, T-duality, exchange of radii, uplifting to $11$ dimensions, and so on, we find the mass of the level $5$ solution 
\begin{eqnarray}
\label{massl5}
M_{M2}=  \frac{R_{10}\, R_{11}}{l_p^{3}}\quad \stackrel {s_{-\alpha_{11}+\delta}} {\longrightarrow } \quad M_{\ell=5}=  \frac{(R_3\, R_4 \ldots R_8)^{2}R_{10}\, R_{11}}{l_p^{15}} \, \cvp
\end{eqnarray}

\subsection{The level 6 solution}
\label{appm6}

To obtain the mass of the level 6 solution we start from the expression for the level 5 mass  (\ref{massl5}) and we use the brane to gravity tower map (\ref{bragramap})-(\ref{bragra2}).  In the string language this amounts to
perform  3 steps: first, exchange the radii $R_9$ and $R_3$ in 11 dimensions, second,  reduce to type IIA and, using (\ref{1110})-(\ref{tdualp}), perform a double T-duality in the directions 9 and 10 plus exchange of the radii $R_9$ and $R_{11}$, finally uplift back to 11 dimensions and exchange the radii $R_9$ and $R_{11}$. The result is\footnote{Note that as the previous cases no timelike T-duality has been performed.}
 \begin{eqnarray}
\label{massl6}
M_{\ell=6}=   \frac{(R_4 \ldots R_{10})^2 \, R_{11}^3}{l_p^{18}} \, \cvp
\end{eqnarray}

\subsection{Beyond level 6}
\label{appm7}

To cope with time-like T-dualities and exotic states involving more than one timelike direction, we first compactify time to a radius $R_t$ for the non-exotic states.  
We define the action ${\cal A}= M R_t$, where $M$ is the mass. Applying the relations (\ref{1110}) and (\ref{tdualp}) to ${\cal A}_{\ell}= M_{\ell} R_t$ for the U-dualities performed in Appendix \ref{appm} leaves all computations of masses unchanged. As $\cal A$ treats space and time symmetrically it is natural to assume that U-duality can be extended to timelike T-dualities by applying the relations (\ref{1110}) and (\ref{tdualp}) directly to $\cal A$ for both space and time radii.

We can now compute  $A_{\ell}$ for any level $\ell$ from  the $\ell \leq 6$ non-exotic solutions: we use the double T-duality plus inversion of radii encoded in the Weyl transformations $s_{\alpha_{11}}$ and $s_{-\alpha_{11}+\delta}$ used in Chapter \ref{chap:infiniteudualgroup} to reach {\em all} $E_9$ BPS states. It is easy to show by recurrence that one recovers in this way for all levels the results (\ref{act3}), (\ref{act6}) and (\ref{actKK}). 

Indeed, consider first the brane towers depicted in Figure \ref{ger1fig}a.  Perform the Weyl transformation $s_{\alpha_{11}}$ and assume that  the formula (\ref{act6}) is true for level $2+3m$. Using (\ref{1110}) and (\ref{tdualp}) one obtains the equation (\ref{act3}) for $n=m+1$. Perform the Weyl transformation $s_{-\alpha_{11}+\delta}$ and assume that the formula (\ref{act3}) is valid for level  $1+3m$. Using the decomposition of $s_{-\alpha_{11}+\delta}$ described in Appendix \ref{appm5} one finds  (\ref{act6}) for $n=m+1$. The validity of assumption at the levels $\ell< 6$ yields (\ref{act3}) and (\ref{act6}).

For the [8,1]-gravity tower, assume that the formula  (\ref{act6}) is valid for all level $2+3m$ and use the gravity tower map (\ref{bragramap})-(\ref{bragra2})   translated in string language as in Appendix \ref{appm6}. One finds (\ref{actKK}) for $n=m$, QED.

\bibliographystyle{jhep}
\bibliography{Refthesis} \markboth{Bibliography}{Bibilography}
\addcontentsline{toc}{chapter}{Bibliography}
\end{document}